\documentclass[11pt, reqno]{amsart}
\usepackage[left=2.73cm, right=2.73cm, top=3.2cm, bottom=3.2cm]{geometry}
\usepackage{amsmath,amssymb,amsthm,mathrsfs,mathtools,nicefrac,cancel,graphicx,xcolor,tikz,comment,listings,hyperref,enumerate}
\usepackage[ruled]{algorithm2e}
\usepackage[T1]{fontenc}
\usepackage[utf8]{inputenc}
\usepackage[shortlabels]{enumitem}
\usepackage{fancyhdr}
\usepackage{dirtytalk}
\usepackage{bm}
\usepackage{halloweenmath}

\usetikzlibrary{cd,decorations.pathreplacing,matrix,arrows,positioning,automata,shapes,shadows,calc,fadings,decorations,through,intersections}
\usepackage{float}
\setcounter{MaxMatrixCols}{200}

\AtBeginDocument{%
   \def\MR#1{}
}

\newtheorem{theorem}{Theorem}
\newtheorem{corollary}[theorem]{Corollary}
\newtheorem{lemma}[theorem]{Lemma}
\newtheorem{proposition}[theorem]{Proposition}

\newtheorem*{axioms}{Axioms}
\theoremstyle{definition} 
\let\olddefi\definition
\renewcommand{\definition}{\olddefi\normalfont}
\let\oldquestion\question
\renewcommand{\question}{\oldquestion\normalfont}
\newtheorem{example}[theorem]{Example}
\let\oldexample\example
\renewcommand{\example}{\oldexample\normalfont}
\newtheorem{remark}[theorem]{Remark}
\let\oldremark\remark
\renewcommand{\remark}{\oldremark\normalfont}
\newtheorem{claim}{\textit{Claim}}

\AtBeginEnvironment{example}{%
  \pushQED{\qed}%
}
\AtEndEnvironment{example}{\popQED\endexample}

\usepackage{todonotes}

\pagestyle{fancy}
\fancyhf{}
\fancyhead[CO]
{\textsc{The Weighted Top-Difference Distance}}
\fancyhead[CE]
{\textsc{Andrea Aveni}, \textsc{Ludovico Crippa}, \textsc{Giulio Principi}}
\fancyhead[RO,LE]{\thepage}

\setlength{\headheight}{13.0pt}

\hypersetup{
    pdftitle={Prova},
    pdfauthor={},
    pdfmenubar=false,
    pdffitwindow=true,
    pdfstartview=FitH,
    colorlinks=true,
    linkcolor=blue,
    citecolor=green,
    urlcolor=cyan
}

\uchyph=0

\providecommand{\MR}[1]{}

\providecommand{\MR}{\relax\ifhmode\unskip\space\fi MR }


\newcommand{\tim}{{\mkern-0mu\times\mkern-0mu}}
\newcommand{\bb}{{\boldsymbol{\beta}}}
\newcommand{\ga}{{\boldsymbol{\gamma}}}
\newcommand{\ddd}{{\dd^\mu_\bb}}
\newcommand{\pp}{{\boldsymbol{\phi}}}
\newcommand{\R}{\mathbb{R}}

\newcommand{\V}{{\mathbf V_n}}
\newcommand{\C}{[n]}

\newcommand{\LL}{\mathop{}\!\mathbb{S}}

\newcommand{\dd}{\mathrm{d}}
\newcommand{\Id}{\mathrm{I}}

\newcommand{\1}{\mathbf 1}
\DeclareMathOperator*{\argmin}{arg\,min}
\DeclareMathOperator*{\argmax}{arg\,max}

\usepackage{tikz,tikz-3dplot}
\tdplotsetmaincoords{60}{150}    

\title[Distances
on Rankings Axioms, Aggregation and Approximation]{On the Weighted Top-Difference Distance\\ Axioms, Aggregation, and Approximation}

\begin{document}

\author[A.~Aveni]{Andrea Aveni}
\address[A.~Aveni]{Department of Statistics, Duke University, 415 Chapel Dr, Durham 27705, USA}
\email{andrea.aveni@duke.edu}

\author[L.~Crippa]{Ludovico Crippa}
\address[L.~Crippa]{Stanford Graduate School of Business, Stanford University, 655 Knight Way, Stanford 94305, USA}
\email{lcrippa@stanford.edu}

\author[G.~Principi]{Giulio Principi}
\address[G.~Principi]{Department of Economics, New York University, 19 West 4th St, New York 10012, USA}
\email{gp2187@nyu.edu}

\date{\today}

\keywords{Linear orders, permutations, Kendall distance, consensus preferences.}
 \subjclass[2020]{Primary 06A06, 06A75; Secondary 06A07, 30L05}

\begin{abstract}
We study a family of distance functions
on rankings that allow for asymmetric treatments of alternatives and consider the distinct relevance of the top and bottom positions for ordered lists. We provide a full axiomatic characterization of our distance.
In doing so, we retrieve new characterizations of existing axioms and show how to effectively weaken them for our purposes. This analysis highlights the generality of our distance as it embeds many (semi)metrics previously proposed in the literature. Subsequently, we show that, notwithstanding its level of generality, our distance is still readily applicable. We apply it to preference aggregation, studying the features of the associated median voting rule. It is shown how the derived preference function  satisfies many desirable features in the context of voting rules, ranging from fairness to majority and Pareto-related properties. We show how to compute consensus rankings exactly, and provide generalized Diaconis-Graham inequalities that can be leveraged to obtain approximation algorithms. Finally, we propose some truncation ideas for our distances inspired by \cite{lu2010unavailable}. These can be leveraged to devise a Polynomial-Time-Approximation Scheme for the corresponding rank aggregation problem.

\end{abstract}
\maketitle
\thispagestyle{empty}

\section{Introduction}

The problem of aggregating preferences of heterogeneous individuals into a representative ranking, or rank aggregation in short, has sparked substantial interest across different research areas, such as Statistics, Economics, Computer Science, and Political Science. For instance, in Social Choice theory individuals are voters who cast their preferences over a pool of candidates: The goal of the aggregation is to identify \emph{consensus preferences}. Statisticians also study ranking data aggregation providing probabilistic models based on weighted distances over rankings, prominent examples are the Mallows model and its generalizations (\cite{mallows}, \cite{boutilier_stat}, \cite{weighted_distance_stat_software}). More recent instances of rank aggregation include collaborative filtering and recommender systems (e.g., \cite{pennock2000social}), which are pervasive in modern digital platforms. Movies and shopping recommendations to users are often expressed as an ordered list of suggested films and items, produced by combining movie and item ratings supplied by other users. Further, rank aggregation problems are associated with metasearch and spam detection engines. The objective here is to combine rankings of web pages or hyperlinked documents produced by different ranking algorithms to improve the users' searching experience (e.g., \cite{cohen1997learning}, \cite{dwork2001rank}). Voting and more generally rank aggregation mechanisms have been proposed as instrument for planning and coordination in the field of 
Multiagent Systems (e.g., \cite{ephrati1993multi}).


Several methods of aggregation rely on distance functions that evaluate the proximity between two rankings. These functions quantify the level of cohesion among a collection of rankings, such as an electorate. Specularly, distance-based methods also allow to evaluate the degree of dispersion of opinions of a group of individuals. One of the most popular distances over rankings is the Kendall distance, \cite{Kendall48}, which is defined as the minimum number of swaps of adjacent elements that transform one ranking into the other. 

The Kendall distance has two main shortcomings. On the one hand, it does not take into account the relative importance of positions where the swaps occur. Yet, in many applications swaps at the top positions carry more dissimilarity compared to swaps at lower positions. 
For instance, when comparing alternative webpages' rankings, one is often interested in the resulting traffic flow, as measured by the \emph{click-through rates} (CTR). However, several studies (e.g., \cite{agarwal2011location}, \cite{richardson2007predicting}, \cite{wang2019serial}) have indicated a significant disparity in the average CTR between the top-ranked result and the second-ranked result, which is notably larger than the corresponding difference among the CTRs of lower-ranked webpages. In addition, the presence of screening costs and time constraints naturally favors top-ranked results when processing long pieces of information. 

On the other hand, the Kendall distance also treats homogeneously all the alternatives over which rankings are cast. This hinders its applicability to  settings where the aggregator may have preferences over the available alternatives. For example, in the presence of sponsored links or in the presence of different displaying costs for the alternatives involved, the aggregator's preferences might play a crucial role in determining an aggregate ranking.

In this work, we introduce a class of distances, named weighted top-difference distances, that overcome these shortcomings while extending and connecting to several distances proposed in the literature. Our distance evaluates the proximity of two rankings in terms of three aspects: the difference between the maximal elements in each subset of alternatives, the size of such menu, and the relative importance of each alternative. We provide an axiomatic characterization of our class of distances. Such axiomatic foundation is built over a thorough study of betwenness axioms without any neutrality and invariance requirements. The results we provide not only lead to the full characterization of the proposed class of distances, but also highlight the implications of betweenness axioms taken singularly. The complete characterization of our distance is obtained by combining some betweenness and separability axioms, and it allows to have a better understanding of the existing axiomatizations of the Kendall distance and its weighted versions.

In addition to its generality, our distance is shown to be profitably applicable to rank aggregation problems. 
Given a set of alternatives and a profile of rankings over them, we apply our distance to study the properties of the induced median aggregation correspondence. Among the most significant properties, we highlight fairness conditions, such as neutrality and various specifications of Pareto dominance, and majority-related aspects, such as the majority and Condorcet principles.

From a computational standpoint our contribution is threefold. We show how to compute in polynomial time all the distances considered in this work and provide a linear integer program for the ranking aggregation problem with polynomially many variables and constraints. We provide a set of generalized Diaconis-Graham inequalities and a unified framework to analyze them, thus obtaining approximation algorithms in the spirit of \cite{diaconis1977spearman} and \cite{dwork2001rank}. We also propose a Polynomial-Time Approximation Scheme inspired by the truncation techniques presented in~\cite{lu2010unavailable}.

\subsubsection*{Organization of the paper} The next two sections are dedicated to the literature review and mathematical notation. In §\ref{section:distance} we present our distance, its axiomatic characterization and main properties. §\ref{section:voting} is devoted to median rank aggregation and its social choice theoretic properties. §\ref{section:computational} is dedicated to algorithmic and computational properties of both our distance and the induced voting rules presented in §\ref{section:voting}. In the Appendix we report the proofs of the results presented in the main text.

\section{Literature Review}\label{lit rev}

\subsubsection*{Rankings and metrics} As it concerns distance functions over rankings seminal works in the literature are \cite{Kendall48}, \cite{KemenySnell62}, \cite{Bogart73}, and \cite{Bogart75}. In these papers, the authors provide different formulations and characterizations of the Kendall metric. More recent advancements involve providing weighted versions of the Kendall metric as in \cite{Hassandez} and \cite{kumar2010generalized}. In particular, \cite{Hassandez} proposed a weighted version of the Kendall metric that overcomes one of its main shortcomings, that is its indifference to the varying relevance of different positions in the rankings. Starting from the same challenge, \cite{dissimiOkNishi} proposed a different solution, that they called the top-difference approach. Their distance evaluates the dissimilarity between any two rankings by considering the (possibly) distinct choices induced by the two rankings over all possible menus. This approach is not restricted to the case of linear orders, but extends to the larger domain of acyclic binary relations. In \cite{dissimiOkNishi} the authors proved that when restricted to linear orders, the neutral version of their distance is a particular case of the weighted Kendall metric of \cite{Hassandez}.

The distances we study embed both the Kendall metric and the dissimilarity semimetrics proposed by \cite{dissimiOkNishi} and \cite{Hassandez}. Weighted version of the Kendall metric are also advanced in \cite{kumar2010generalized} with a main focus on their computability. Differently from them, we provide an axiomatic foundation for the class of distances proposed in this work. As argued in §\ref{Connection}, our axiomatization yields additional insights on their distances as well.
A different approach was taken by \cite{KlamlerDistChoice} where the author proposed a distance directly on choice correspondences rather than rankings. As long as such choice correspondences are rationalizable by binary relations this approach is analogous to the one of \cite{dissimiOkNishi}.

\subsubsection*{Social Choice Theory}
The use of distance functions for rank aggregation is well-established in social choice theory. This approach was advanced by \cite{KemenySnell62}, and was later adopted by \cite{Blin}, \cite{CookSeiford78}, and \cite{YoungLev78}. In particular, \cite{YoungLev78} provided a full characterization of the median preference function induced by the Kendall metric (Kemeny voting rule). They show that such preference function is the only one satisfying three properties: reinforcing, neutrality, and the Condorcet principle. In §\ref{section:voting} we strengthen some of the results of \cite{YoungLev78} showing under which conditions the median preference function induced by our metric satisfies reinforcing, neutrality, and the Condorcet principle. In more recent papers \cite{Hassandez} and \cite{GILBERT202227}, the authors provided some results on median rank aggregation showing that some properties of the median approach proposed by \cite{YoungLev78} extend to more general distances. The use of distance functions for the rank aggregation problem is also analyzed in \cite{baldiga2013assent} and \cite{lu2010unavailable}, respectively \emph{assent-maximizing social welfare functions} and the \emph{unavailable candidate model}. Both models are encompassed by the class of distances advanced in this work. For a survey on distance-based methods see \cite{comsoc} and \cite{Cook2006}.
Many voting rules (e.g., Borda, Dodgson, Plurality, and Kemeny) can be retrieved through the concept of \emph{distance rationalization} (see \cite{lerer1985some}, \cite{campbell1986social}, \cite{meskanen2008closeness}, and \cite{DisRatVR}) which is based on a consensus class (\cite{nitzan1981some} and \cite{baigent1987metric}), and a distance function over consensus elections. Specifically, in this framework, a candidate is deemed a winner if it is ranked first in one of the nearest (with respect to the given distance) consensus elections within the consensus class. 

A closely related line of research focuses on the \emph{Maximum Likelihood Estimator} (MLE) approach (\cite{de2014essai} and \cite{young_1988_book}). The underlying assumptions concern the existence of a correct ranking of the alternatives (\emph{ground truth}) of which we only have noisy realizations: the goal is to compute the MLE based on the available observations. Tight connections between the rank aggregation problem and the MLE approach have been discovered by \cite{young_1988_book}, and, more recently, strengthened by \cite{conitzer2012common} and \cite{conitzer2009preference}. The design of voting rules that are MLEs for some noise model has been an active area of research, see for example, \cite{elkind2010good}, \cite{procaccia2012maximum}, \cite{mao2013better}, and \cite{caragiannis2014modal}.  For additional connections among the rationalizability framework, the MLE approach, and the rank aggregation problem, we refer the reader to Chapter $8$ of \cite{comsoc} and references therein.

\subsubsection*{Computability}
It is known that computing Kemeny rankings is NP-hard \cite{bartholdi1989voting}, even with only four voters \cite{dwork2001rank}. Bounds used in search-based methods to compute \emph{optimal} Kemeny rankings have been proposed by \cite{davenport2004computational} and \cite{conitzer2006improved}. The computational complexity of the Kemeny rank aggregation problem led to the subsequent study and development of approximation algorithms. We summarize some of them in the coming lines while deferring the reader to \cite{comsoc} for a more detailed description.
 Randomly selecting a voter's preference provides a $2$-approximation in expectation, and this procedure can be de-randomized by preserving the same approximation factor \cite{ailon2008aggregating}. Sorting the alternatives in ascending order of Borda score results in a $5$-approximation to the Kemeny rank aggregation problem \cite{coppersmith2006ordering}. Another polynomial-time $2$-approximation algorithm can be obtained by replacing the Kemeny distance with the Spearman’s footrule distance, see the seminal works by \cite{diaconis1977spearman} and \cite{dwork2001rank}. Improved approximation factors for the Kemeny rank aggregation problem are obtained in \cite{dwork2001rank} and \cite{van2009deterministic}. A Polynomial-Time-Approximation Scheme appeared in \cite{kenyon2007rank}.

 Additional evidence of the hardness concerning distance-based rank aggregation problems is provided by \cite{GILBERT202227} when analyzing setwise versions of the Kendall distance. Since the Kendall and its setwise versions are a special case of the weighted top-difference distances presented in this paper, finding an aggregate ranking within the class of our distances is an NP-hard problem. 
 
\section{Mathematical preliminaries}\label{section:math_prelim}

We report here the basic notation we employ in the next sections. We consider a set $C$ of~$n \geq 2$ elements which can be interpreted as alternatives or candidates. Our investigation revolves around rankings over~$C$ represented as lists that arrange candidates based on preference, with the foremost candidate being the most favored and the ultimate candidate the least favored. We arbitrarily label the candidates with the numbers $\left[n\right]:=\left\lbrace 1,\ldots,n\right\rbrace$, so that a ranking is represented by a \textit{permutation}; we use the terms permutations and rankings interchangeably. By a permutation we mean a bijective function $\sigma:\left[n\right]\to \left[n\right]$, with $\sigma_i$ denoting the candidate in the $i^{\mathrm{th}}$-position according to $\sigma$. Therefore, the set of all rankings over $[n]$, is the symmetric group denoted by $\LL_n$. We, sometimes, write permutations as ordered lists e.g., $\sigma=(\sigma_1,\ldots,\sigma_n)$. For all permutations $\sigma\in\LL_n$ and alternatives~$i,j\in[n]$, we write~$i>_\sigma j$ if $\sigma_i^{-1}<\sigma_j^{-1}$ with $<_{\sigma}$ bearing the analogous meaning. The identity permutation~$(1,\ldots,n)$ will be denoted by $\Id$. Often we employ the group structure of $\LL_n$, and we adopt the convention that for two permutations~$\sigma,\pi\in \LL_n$, $\sigma\pi:i\mapsto \sigma_{\pi_{i}}$. For all $\sigma\in \LL_n$, we denote by~$\sigma^{-1}$ the inverse permutation according to this product. Particularly relevant for our work are \textit{transpositions}. For distinct $i,j\in \left[n\right]$, we denote by $t_{i,j}$ the $i,j$-transposition, that is the permutation that agrees with the identity $\Id$, but swaps~$i$ and $j$, formally $t_{i,j}=(1,\ldots,i-1,j,i+1,\ldots,j-1,i,j+1,\ldots,n)$. Whenever $|i-j|=1$, we refer to~$t_{i,j}$ as \textit{adjacent}. It is noteworthy that, in general, for $\sigma\in \LL_n$, we have $\sigma t_{i,j}\neq t_{i,j}\sigma$. In particular,~$\sigma t_{i,j}$ is the permutation obtained from $\sigma$ by exchanging $\sigma_i$ with~$\sigma_j$, while $t_{i,j}\sigma$ is obtained from $\sigma$ by exchanging $i$ with $j$. Also we observe that for any distinct $i,j\in[n]$ and~$\sigma\in\LL_n$, we have that~$\sigma t_{i,j}=t_{\sigma_i,\sigma_j}\sigma$ and analogously $t_{i,j}\sigma=\sigma t_{\sigma^{-1}_i,\sigma^{-1}_j}$.\\
To distinguish products (of permutations) on the left from products on the right we use the notation~$\coprod_{k=1}^n\sigma_{k}:=\prod_{k=1}^n\sigma_{n+1-k}$.\\
Given any two permutations $\sigma,\pi\in\LL_n$, we define their \textit{inversion set} as the collection of all couples~$(i,j)$ on which $\sigma$ and $\pi$ disagree, that is
$$I(\sigma,\pi):=\{(i,j)\in[n]^2:i>_\sigma j\textrm{ and }i<_\pi j\}.$$
The \textit{Kendall distance} $\dd_K$ between $\sigma$ and $\pi$ is defined as the cardinality of $I(\sigma,\pi)$. Given a ranking~$\sigma\in\LL_n$ and a candidate $j\in[n]$, we denote by $j^{\uparrow,\sigma}$ the collection of alternatives preferred to $j$ by $\sigma$, and~$j^{\downarrow,\sigma}$ is defined analogously. More precisely,
$$j^{\uparrow,\sigma}:=\left\{i\in[n]:i>_\sigma j\right\},\ \ \ \ \ \ \ \ \ j^{\downarrow,\sigma}:=\left\{i\in[n]:i<_\sigma j\right\}.$$
\noindent To ease the notation we denote $j^{\uparrow}:=j^{\uparrow,\Id}$ and $j^{\downarrow}:=j^{\downarrow,\Id}$.
Given a subset of alternatives $S\subseteq C$ and a ranking $\sigma\in\LL_n$, we denote by
$M(S,\sigma)$ and by $m(S,\sigma)$ the maximum and minimum elements in $S$ with respect to $\sigma$. Finally, for all $\sigma,\pi\in\LL_n$, $$\bigtriangleup_S(\sigma,\pi):=M(S,\sigma)\bigtriangleup M(S,\pi)$$
where $\bigtriangleup$ denotes the symmetric difference of sets. We denote by $\mathcal{P}_n$ the collection of subsets of~$\left[n\right]$ with cardinality larger or equal than $2$.

We denote $\R_+:=[0,\infty)$ and $\R_{++}:=(0,\infty)$. When considering (signed) measures $\mu:2^{[n]}\to\R$ on the finite set $[n]$, to ease notation we may write $\mu(j)$ or $\mu_j$ instead of $\mu(\{j\})$. As we work in a finite setting, with some abuse of notation, we will often identify measures as elements of $\R^n$. The counting measure will be denoted by $\lvert\cdot\rvert$.
We denote by~$\mathcal{M}^{+}_n$ the set of non-negative measures over $2^{[n]}$, this set is identified with $\R^n_+$. The set of fully supported measures in $\mathcal{M}^+_n$ is denoted by~$\mathcal{M}_n^{++}$, identified with $\R_{++}^n$. Moreover, we define also the following sets:
\[
\mathcal{M}_n^{\geq}:=\left\lbrace \mu\in \R^n:\forall i\neq j,\ \mu_i+\mu_j\geq 0 \right\rbrace\ \textnormal{and}\ \mathcal{M}_n^{>}:=\left\lbrace \mu\in \R^n:\forall i\neq j,\ \mu_i+\mu_j> 0 \right\rbrace.
\]

\section{Distances on rankings}\label{section:distance}


\subsection{Definition}\label{def}
 Fix a measure $\mu:2^{\C}\to \R$ and a vector $\bb=\left(\beta_i\right)_{i=2}^{n}\in \mathbb{R}^{n-1}$. Given a function~$\dd:\LL_n\tim \LL_n\to \R$, we call it $\left(\bb,\mu \right)$-\textit{top difference distance} if
\begin{equation}\label{def:dbetamu}
\dd\left(\sigma,\pi\right)= \sum_{S\in \mathcal{P}_n}\beta_{| S|}\mu\left(\bigtriangleup_S\left(\sigma,\pi\right)\right).
\end{equation}
In this case $\dd$ is denoted by $\ddd$. Whenever, $\mu_i,\beta_i\geq 0$ for all $i$ it is readily observed that $\ddd$ is always a semimetric. If in addition~$\beta_2>0$ and $\mu\in\mathcal M^{++}_n$, then $\ddd$ is also a metric (See Lemma~\ref{Ehvoleeevi} and Corollary \ref{metric}). 
The subsets~$S\subseteq [n]$ can be interpreted as menus of alternatives. For each menu we measure through~$\mu$ the extent of disagreement between $\sigma,\pi$ by comparing their maximal elements in that menu. This is weighted according to the size $|S|$ of the menu. The choices of $\bb$ and $\mu$ are important to determine different properties of $\ddd$. Of particular interest are the following specifications of $\bb$ and $\mu$. If~$\mu$ is the counting measure, we denote $\ddd$ by $\dd_\bb$. This corresponds to the \textit{neutral} version of $\ddd$ where the candidates are treated symmetrically. When the relative weight of the size of the menus of candidates is constant, that is~$\bb=(1,1,\ldots,1)$ we denote $\ddd$ by $\dd^\mu$. When $\mu\in \mathcal{M}^+_n$, this corresponds to the semimetric proposed by \cite{dissimiOkNishi}. Different specifications of $\bb$ allow to restrict distance evaluations to binary comparisons, this occurs when $\bb=(1,0,\ldots,0)$. In this case, we denote the resulting semimetric by~$\dd_{K}^\mu$ , while if furthermore~$\mu=| \cdot |$, then it reduces to $\dd_K$, the Kendall distance. 

\subsection{Parameter space}\label{sect:semimetricdiscu}
In this section we answer to two basic questions. For which values of $\bb$ and $\mu$ is $\ddd$ a semimetric, and which additional restrictions guarantee that it is also a metric?
To this end, first define the following sets
\begin{gather*}
B^{\geq}_n:=\{\bb\in\R^{n-1}:\dd_\bb\textrm{ is a semimetric}\},\ B^{>}_n:=\{\bb\in\R^{n-1}:\dd_\bb\textrm{ is a metric}\},\\
R^+_n:=\R_+\times \{0\}^{n-2},\  R^{++}_n:=\R_{++}\times \{0\}^{n-2},\ R^-_n:=-R^+_n,\ \textnormal{and}\ R^{--}_n:=-R^{++}_n.
\end{gather*}
Notice that $\R_+^{n-1}\subseteq B^{\geq}_n$ and $\R_{++}\tim\R_{+}^{n-2}\subseteq B^{>}_n$ (See Lemma~\ref{Ehvoleeevi} and Corollary \ref{metric}). If $\mu=\mathbf{0}$ or~$\bb=\mathbf{0}$, then $\ddd$ is the \textit{trivial} semimetric constantly equal to $0$. The following proposition answers the questions above.
\begin{proposition}\label{prop:semimetric} If $n=2$, then
\begin{align*}
\{(\bb,\mu)\in\R^{3}:\ddd \textnormal{ is a semimetric}\}&=R^{++}_2\tim \mathcal{M}^{\geq}_2\sqcup R^{--}_2\tim (-\mathcal{M}^{\geq}_2)\sqcup \{0\}\tim \R^2,\\
\{(\bb,\mu)\in\R^{3}:\ddd \textnormal{ is a metric}\}&=R^{++}_2\tim \mathcal{M}^{>}_2\sqcup R^{--}_2\tim (-\mathcal{M}^{>}_2).
\end{align*}
If $n\geq 3$, then
    \begin{align*}
        \{(\bb,\mu)\in\R^{2n-1}:\ddd \textnormal{ is a semimetric}\}=&B_n^\geq\tim\mathcal M_n^+\cup  R^+_n\tim\mathcal M^\geq_n\\
        &\cup (-B_n^\geq)\tim(-\mathcal M_n^+)\cup R^-_n\tim(-\mathcal M^\geq_n)\\
        &\cup \{\mathbf{0}\}\tim \R^n \cup \R^{n-1}\tim\{\mathbf{0}\}
    \end{align*}
    and
    \begin{align*}\{(\bb,\mu)\in\R^{2n-1}:\ddd \textnormal{ is a metric}\}=&B^{>}_n\tim\mathcal M^{++}_n\cup R^{++}_n\tim\mathcal M^{>}_n\\
    &\cup(-B^{>}_n)\tim(-\mathcal M^{++}_n)\cup R^{--}_n\tim(-\mathcal M^{>}_n).
    \end{align*}
\end{proposition}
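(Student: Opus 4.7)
The case $n=2$ is elementary: $\mathcal{P}_2=\{\{1,2\}\}$ has a single member, so $\ddd$ equals $\beta_2(\mu_1+\mu_2)$ off the diagonal and zero on it, while triangle inequality is vacuous on the two-element set $\LL_2$. A direct sign analysis on $\beta_2(\mu_1+\mu_2)$ (strict for metric) yields the stated partition.

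For $n\geq 3$, sufficiency for $B^\geq_n\tim\mathcal M^+_n$ and $B^>_n\tim\mathcal M^{++}_n$ is covered by Lemma~\ref{Ehvoleeevi} and Corollary~\ref{metric}. For $R^+_n\tim\mathcal M^\geq_n$, only size-$2$ subsets contribute, giving $\ddd(\sigma,\pi)=\beta_2\sum_{(i,j)\in I(\sigma,\pi)}(\mu_i+\mu_j)$---a non-negative combination of pair-Hamming pseudometrics, hence itself a pseudometric. The negative analogues follow from the identity $\ddd=\dd^{-\mu}_{-\bb}$, and the trivial cases $\bb=\mathbf{0}$ or $\mu=\mathbf{0}$ give $\ddd\equiv 0$.

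Necessity rests on two computations. First, the adjacent-swap identity
\[
\ddd\bigl(\sigma,\sigma t_{b,b+1}\bigr)=(\mu_{\sigma_b}+\mu_{\sigma_{b+1}})\,c_b(\bb), \qquad c_b(\bb):=\sum_{k=2}^n\binom{n-b-1}{k-2}\beta_k,
\]
together with the freedom to choose $\sigma\in\LL_n$, $b\in[n-1]$, and the symmetry $\ddd=\dd^{-\mu}_{-\bb}$, forces that either $\bb$ or $\mu$ vanishes, or else (up to a sign-flip) $\mu\in\mathcal M^\geq_n$ with $c_b(\bb)\geq 0$ for every $b$. Second, to upgrade $\mathcal M^\geq_n$ to $\mathcal M^+_n$ when $\bb$ has nonzero higher-order components, I exhibit a parametric family of localized triangle-inequality violations. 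For distinct $i,j,l\in[n]$, an integer $p\in\{0,\ldots,n-3\}$, and fixed orderings $(a_1,\ldots,a_p)$ and $(r_1,\ldots,r_{n-p-3})$ exhausting $[n]\setminus\{i,j,l\}$, set
\[
\sigma=(a_1,\ldots,a_p,i,j,l,r_1,\ldots),\ \pi=(a_1,\ldots,a_p,j,i,l,r_1,\ldots),\ \tau=(a_1,\ldots,a_p,l,j,i,r_1,\ldots).
\]
A case analysis on the intersection of $S$ with the partition $\{a_1,\ldots,a_p\}\sqcup\{i,j,l\}\sqcup\{r_1,\ldots\}$ shows that the per-subset triangle-excess $\mu(\bigtriangleup_S(\sigma,\tau))-\mu(\bigtriangleup_S(\sigma,\pi))-\mu(\bigtriangleup_S(\pi,\tau))$ vanishes on every $S$ except those with $S\cap\{a_1,\ldots,a_p\}=\emptyset$ and $S\supseteq\{i,j,l\}$, where it equals $-2\mu_j$. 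Summing with the weights $\beta_{|S|}$ yields total $\ddd$-excess $-2\mu_j\,f_p(\bb)$ where $f_p(\bb):=\sum_{k\geq 3}\binom{n-p-3}{k-3}\beta_k$. Running the same construction with the roles of $i$ and $j$ exchanged produces excess $-2\mu_i\,f_p(\bb)$. When $\mu\in\mathcal M^\geq_n\setminus\mathcal M^+_n$, some $\mu_j<0$ and consequently some $\mu_i\geq|\mu_j|>0$ coexist, so triangle inequality forces both $f_p(\bb)\leq 0$ and $f_p(\bb)\geq 0$, hence $f_p(\bb)=0$ for every $p\in\{0,\ldots,n-3\}$. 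The coefficient matrix $\bigl[\binom{n-p-3}{k-3}\bigr]_{p,k}$ is, after reversing the row order, lower-triangular with unit diagonal, hence invertible; thus $\beta_k=0$ for every $k\geq 3$. Combined with $\beta_2=c_{n-1}(\bb)\geq 0$, this places $\bb\in R^+_n$. The metric characterization is obtained by sharpening all non-negativities to strict positivities throughout.

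The technical crux is the case analysis demonstrating that the triangle-excess localizes to subsets $S\supseteq\{i,j,l\}$ disjoint from the prefix. For each intersection type $|S\cap\{i,j,l\}|\in\{0,1,2,3\}$ (and separately according to whether $S$ meets $\{a_1,\ldots,a_p\}$), the shared prefix and suffix force $M(S,\sigma)$, $M(S,\pi)$, $M(S,\tau)$ to coincide in pairs in a way that collapses the excess to zero in every configuration except $|S\cap\{i,j,l\}|=3$ with $S$ disjoint from the prefix. The closing algebraic step of inverting the Pascal matrix is routine once the system $f_p\equiv 0$ has been established.
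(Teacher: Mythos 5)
Your core necessity construction is correct and genuinely different from the paper's. The paper (Lemma \ref{lemmadelculoslavato}) forces $\beta_3=\cdots=\beta_n=0$, when $\mu$ has entries of both signs, by taking $\pi$ \emph{antipodal} to $\sigma$ and $\omega=\sigma t_{a-1,a}$, which yields the triangle excess $2\mu(\sigma_a)\sum_{k=3}^{n+2-a}\binom{n-a}{k-2}\beta_k$; you instead use three permutations agreeing outside a three-element window. I checked your case analysis: the per-subset excess vanishes whenever $S$ meets the shared prefix or $|S\cap\{i,j,l\}|\le 2$, and equals $(\mu_i+\mu_l)-(\mu_i+\mu_j)-(\mu_j+\mu_l)=-2\mu_j$ when $S\supseteq\{i,j,l\}$ avoids the prefix, so the triangular system $f_p(\bb)=0$, $p=0,\dots,n-3$, does force $\beta_k=0$ for $k\ge 3$. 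Your witnesses are all within Kendall distance three of one another, and you derive the adjacent-swap coefficient $c_b(\bb)$ directly from the definition rather than through the forward reference to Theorem \ref{axiomatic_characterization_THM} on which the paper's Lemmas \ref{lemmadelculoslavato} and \ref{dioassassinodicazzullo} lean; both are genuine improvements. The $n=2$ case and the decomposition of $\dd^\mu_{(\beta_2,0,\dots,0)}$ into pair pseudometrics are fine.

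The gaps are concentrated around the set $B^\geq_n$, which is defined by ``$\dd_\bb$ is a semimetric'' rather than by an explicit inequality. First, on the sufficiency side you cite Lemma \ref{Ehvoleeevi} for all of $B^\geq_n\tim\mathcal M^+_n$, but that lemma only covers $\bb\in\R^{n-1}_+$; for $\bb\in B^\geq_n$ with some negative coordinate and a non-constant $\mu\in\mathcal M^+_n$, the triangle inequality for $\ddd$ involves cancellations between weights $\beta_{|S|}$ of both signs and is not delivered by anything you invoke. Second, and more seriously, on the necessity side your adjacent-swap analysis yields $c_b(\bb)\geq 0$ for every $b$, which is strictly weaker than $\bb\in B^\geq_n$: for $n=3$ and $\bb=(1,-1)$ one has $c_1=\beta_2+\beta_3=0$ and $c_2=\beta_2=1$, yet
\[
\dd_\bb(\Id,(3,2,1))=4>0+2=\dd_\bb(\Id,(2,1,3))+\dd_\bb((2,1,3),(3,2,1)),
\]
so $\dd_\bb$ is not a semimetric and $(1,-1)\notin B^\geq_3$. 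You therefore still owe an argument that a semimetric $\ddd$ with $\mu\in\mathcal M^+_n\setminus\{\mathbf 0\}$ forces $\dd_\bb$ itself (with the counting measure) to be a semimetric; without it, the chain from your derived inequalities to the stated decomposition in terms of $B^\geq_n$ is not closed. (The paper's own proof leans on the identification of $B^\geq_n$ with $\{\bb:c_b(\bb)\geq 0\ \forall b\}$ at exactly this point, so this is a place where a self-contained treatment genuinely has to do more work, not less.)
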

\noindent The proof of this result is based on the restrictions imposed by two of the defining properties of semimetrics. In particular, apart for the trivial cases, the non-negativity of the semimetric imposes that $\mu$ belongs to $\mathcal{M}^{\geq}_n$. Furthermore, if $\bb\notin \R_+\tim \{0\}^{n-2}$, the triangular inequality leads to the non-negativity of $\mu$, that is $\mu\in \mathcal{M}^+_n$.
Since $\dd^{-\mu}_{-\bb}=\ddd$, it is possible to cover all the possible semimetrics by restricting to $B_n^\geq\tim\mathcal M_n^\geq$.

\subsection{Properties of the weighted top-difference distance}\label{properties of the metric}
\subsubsection*{Uniqueness and neutrality}\label{sect_uniqueness} The first properties we discuss concern the uniqueness of the weights and measure of a given $\left(\bb,\mu \right)$-top difference distance.
\begin{proposition}\label{hoilmercedes}
For any $n \geq 3$ and $\mu\in \R^n\setminus\{\mathbf0\}$, we have that $\ddd=\dd^\mu_\ga$ if and only if $\bb=\ga$.
\end{proposition}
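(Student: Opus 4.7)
The non-trivial direction is $\ddd=\dd^\mu_\ga \Rightarrow \bb=\ga$. The plan is to evaluate both distances on pairs of permutations differing by a single adjacent transposition, which will yield a triangular linear system in the differences $\beta_k-\gamma_k$ that is immediately invertible.

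As a preliminary observation, I would first argue that, since $n\geq 3$ and $\mu\neq\mathbf{0}$, there exist distinct $a,b\in[n]$ with $\mu_a+\mu_b\neq 0$. Indeed, if every pairwise sum $\mu_i+\mu_j$ (with $i\neq j$) vanished, then for any three distinct indices $i,j,k$ we would get $\mu_i-\mu_k=(\mu_i+\mu_j)-(\mu_j+\mu_k)=0$, so $\mu$ is constant, and $2\mu_i=0$ would force $\mu=\mathbf{0}$, contradicting the hypothesis.

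Next, I compute $\ddd(\sigma,\sigma t_{i,i+1})$ for an arbitrary $\sigma\in\LL_n$ and $i\in\{1,\dots,n-1\}$. Since $\sigma t_{i,i+1}$ differs from $\sigma$ only by swapping the candidates at positions $i$ and $i+1$, one checks that $M(S,\sigma)\neq M(S,\sigma t_{i,i+1})$ exactly when $\{\sigma_i,\sigma_{i+1}\}\subseteq S$ and $S\cap\{\sigma_1,\dots,\sigma_{i-1}\}=\emptyset$, in which case $\bigtriangleup_S(\sigma,\sigma t_{i,i+1})=\{\sigma_i,\sigma_{i+1}\}$. The number of such $S$ of size $k$ equals $\binom{n-i-1}{k-2}$, and therefore
\[
\ddd(\sigma,\sigma t_{i,i+1})=\bigl(\mu_{\sigma_i}+\mu_{\sigma_{i+1}}\bigr)\sum_{k=2}^{n-i+1}\beta_k\binom{n-i-1}{k-2}.
\]
Equating this to the analogous expression for $\dd^\mu_\ga$, and using the preliminary observation to pick, for each $i$, a $\sigma$ with $\mu_{\sigma_i}+\mu_{\sigma_{i+1}}\neq 0$, gives, upon setting $m=n-i-1$,
\[
\sum_{j=0}^{m}(\beta_{j+2}-\gamma_{j+2})\binom{m}{j}=0\qquad\text{for every }m\in\{0,1,\dots,n-2\}.
\]
The $(n-1)\tim(n-1)$ matrix $\bigl(\binom{m}{j}\bigr)_{0\leq m,j\leq n-2}$ is lower triangular with $1$'s on the diagonal, hence invertible, so $\beta_k=\gamma_k$ for every $k\in\{2,\dots,n\}$.

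The main obstacle is the combinatorial identification of the subsets $S$ contributing to $\bigtriangleup_S(\sigma,\sigma t_{i,i+1})$; once this is pinned down, the rest is straightforward linear algebra. A subsidiary point is that $\mu$ may have negative entries, which is why the preliminary observation (and, with it, the hypothesis $n\geq 3$) is required to ensure that a permutation realizing $\mu_{\sigma_i}+\mu_{\sigma_{i+1}}\neq 0$ exists at every position $i$.
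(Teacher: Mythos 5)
Your proof is correct and follows essentially the same route as the paper's: evaluate both distances on adjacent transpositions to get $(\mu_{\sigma_i}+\mu_{\sigma_{i+1}})\sum_k(\beta_k-\gamma_k)\binom{n-i-1}{k-2}=0$, use $n\geq 3$ and $\mu\neq\mathbf{0}$ to find a pair with $\mu_a+\mu_b\neq 0$ at each position, and invert the resulting linear system. The only cosmetic difference is that you establish invertibility by noting the binomial matrix is unitriangular, whereas the paper invokes its Theorem on the bijection $\bb\mapsto\pp$ (proved there via an explicit alternating-sum inverse); both rest on the same fact.
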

Notice that, if $n=2$ and $\mu([2])\neq 0$, then the claim of Proposition \ref{hoilmercedes} holds.
Under stronger conditions a specular argument holds for the uniqueness of the measure.
\begin{proposition}\label{bigblackcockcomepiaceadandreino}
If $n \geq 3$, for all $\bb\in B^>_n\sqcup (-B^>_n)$, $\ddd=\dd^{\nu}_\bb$ if and only if $\mu=\nu$.
\end{proposition}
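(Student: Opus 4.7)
The plan is to set $\eta := \mu - \nu \in \R^n$ and show that the hypothesis forces $\eta = 0$. Since \eqref{def:dbetamu} is linear in its measure argument, $\dd^{\mu-\nu}_\bb = \dd^\mu_\bb - \dd^\nu_\bb$, so the equality $\ddd = \dd^\nu_\bb$ is equivalent to $\dd^\eta_\bb \equiv 0$. The sign identity $\dd^\eta_{-\bb} = -\dd^\eta_\bb$ reduces the $\bb \in -B^>_n$ branch to the $\bb \in B^>_n$ branch, which I henceforth assume; the converse implication in the proposition is immediate.

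For each $i \in \{1, \ldots, n-1\}$ consider the adjacent transposition $\pi^{(i)} := t_{i,i+1}$ and the pair $(\Id, \pi^{(i)})$. A short case analysis on $S \in \mathcal{P}_n$ yields $\bigtriangleup_S(\Id, \pi^{(i)}) = \{i, i+1\}$ exactly when $\{i, i+1\} \subseteq S \subseteq \{i, i+1, \ldots, n\}$, and $\bigtriangleup_S = \emptyset$ otherwise: the relative order of all elements outside the swapped block $\{i, i+1\}$ is preserved, so the two maxima can differ only when no element below that block belongs to $S$ and both $i, i+1$ do. Plugging this into \eqref{def:dbetamu} gives
\[
\dd^\eta_\bb(\Id, \pi^{(i)}) \;=\; (\eta_i + \eta_{i+1})\, D_i, \qquad D_i := \sum_{k=2}^{n-i+1}\binom{n-i-1}{k-2}\beta_k.
\]
Specialising the same identity to the counting measure gives $\dd_\bb(\Id, \pi^{(i)}) = 2D_i$; since $\bb \in B^>_n$ makes $\dd_\bb$ a metric and $\Id \neq \pi^{(i)}$, we conclude $D_i > 0$. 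Hence $\eta_i + \eta_{i+1} = 0$ for every $i \in \{1, \ldots, n-1\}$, which already forces $\eta_k = (-1)^{k-1}\eta_1$.

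To pin down $\eta_1$, use the non-adjacent transposition $t_{1,3}$. An analogous but slightly longer case analysis on $\bigtriangleup_S(\Id, t_{1,3})$ uncovers exactly three non-trivial subfamilies of $S$: $\bigtriangleup_S = \{1,3\}$ when $1, 3 \in S$; $\bigtriangleup_S = \{2,3\}$ when $2, 3 \in S$ and $1 \notin S$; and $\bigtriangleup_S = \{1,2\}$ when $1, 2 \in S$ and $3 \notin S$. Collecting terms and using the $D$-notation from the previous paragraph,
\[
\dd^\eta_\bb(\Id, t_{1,3}) \;=\; (\eta_1 + \eta_3)\, D_1 \,+\, E\bigl[(\eta_1 + \eta_2) + (\eta_2 + \eta_3)\bigr]
\]
for a suitable coefficient $E$. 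The bracket vanishes by the adjacency identities from the previous paragraph, and $D_1 > 0$ forces $\eta_1 + \eta_3 = 0$; combined with the alternating pattern this gives $2\eta_1 = 0$ and therefore $\eta = 0$.

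The main obstacle is the bookkeeping in the second case analysis, where three different candidate pairs can swap dominance depending on which of $\{1, 2, 3\}$ belong to $S$. Positivity of the $D_i$ is cheap because it is read off from the neutral distance via the metric hypothesis $\bb \in B^>_n$, and the $\bb \in -B^>_n$ branch is handled by the sign flip noted at the start.
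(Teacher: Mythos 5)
Your proof is correct, and it shares the paper's two load-bearing observations: the distance between a permutation and an adjacent transposition of it isolates a pairwise sum of the measure with coefficient $\phi_a$, and that coefficient is nonzero because $\bb\in B^>_n$ makes the neutral distance $\dd_\bb$ a metric. Where you genuinely diverge is in how the system of linear equations gets completed. The paper keeps the transposition adjacent but varies the base permutation: since for any distinct $i,j$ and any position $a$ there is some $\sigma$ with $\sigma_a=i$ and $\sigma_{a+1}=j$, the identity $\ddd(\sigma,\sigma t_{a,a+1})=\phi_a\left(\mu(\sigma_a)+\mu(\sigma_{a+1})\right)$ from Theorem \ref{scorpions} yields $\mu(i)+\mu(j)=\nu(i)+\nu(j)$ for \emph{every} pair at once, and the familiar $3\times 3$ system finishes the argument. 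You instead fix the base permutation to be $\Id$, obtain only the consecutive relations $\eta_i+\eta_{i+1}=0$, and recover the missing equation $\eta_1+\eta_3=0$ from the non-adjacent transposition $t_{1,3}$ by a direct case analysis of $\bigtriangleup_S(\Id,t_{1,3})$, whose cross terms cancel thanks to the adjacency relations (your three subfamilies and their coefficients $D_1$ and $E=D_2$ are all correct, as is the sign-flip reduction of the $-B^>_n$ branch). Your route is more self-contained in that it works straight from \eqref{def:dbetamu} and never needs the $\phi$-representation for an arbitrary base permutation; the price is the extra bookkeeping in the $t_{1,3}$ computation, which the paper's relabeling trick renders unnecessary.
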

However, it is possible to have $\ddd=\dd^\nu_\ga$ for distinct $\bb,\ga$ and $\mu,\nu$. Indeed, for any $\alpha\neq0$, if~$\nu=\alpha \mu$ and $\ga=\bb/\alpha$, then $\ddd=\dd_\ga^\nu$.

We now introduce an invariance property called \textit{neutrality}. A distance function is neutral if it treats all alternatives equally meaning that it is invariant under relabeling the candidates. Formally, a function $\dd:\LL_n\times \LL_n\to\R$ is \textit{neutral} if for every permutations $\sigma,\pi,\tau\in \LL_n$,
$$\dd(\tau\sigma,\tau\pi)=\dd(\sigma,\pi).$$  
\noindent 
This means that $\dd$ is invariant under the canonical left action of $\LL_n$ on itself. In particular, we have that multiplying a permutation on the left by $\tau$ is equivalent to relabelling the alternatives from $(1,\ldots,n)$ to $(\tau_1,\ldots,\tau_n)$.
Since the measure $\mu$ in $\ddd$ captures the relative importance of the alternatives, it is natural to expect that $\ddd$ is neutral if and only if $\mu$ gives the same importance to all options, that is when $\mu$ is proportional to the counting measure. Formally,
\begin{proposition}\label{Galletta}
  Let $n \geq 3$ and $\bb\in B^>_n\sqcup (-B^>_n)$. Then, $\ddd$ is neutral if and only if $\mu$ is constant.
\end{proposition}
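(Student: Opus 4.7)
The plan is to pin down how the left action of $\LL_n$ on itself interacts with $M(S,\cdot)$, translate neutrality into an equality of two $(\bb,\nu)$-top-difference distances for appropriate measures, and then invoke Proposition~\ref{bigblackcockcomepiaceadandreino} to force $\mu$ to be permutation-invariant on singletons.

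\textbf{The key combinatorial identity.} The first step is to verify
\[
M(S,\tau\sigma)=\tau\,M(\tau^{-1}S,\sigma) \qquad \forall\,\tau,\sigma\in\LL_n,\ S\in\mathcal{P}_n.
\]
This follows directly from the fact that $x>_{\tau\sigma}y$ if and only if $\tau^{-1}_x>_\sigma\tau^{-1}_y$, a quick computation using $(\tau\sigma)_i=\tau_{\sigma_i}$. Taking symmetric differences this yields
\[
\bigtriangleup_S(\tau\sigma,\tau\pi)=\tau\bigl(\bigtriangleup_{\tau^{-1}S}(\sigma,\pi)\bigr).
\]

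\textbf{Necessity.} Define the relabeled measure $\mu^\tau\in\R^n$ by $\mu^\tau_i:=\mu_{\tau_i}$, so that $\mu^\tau(A)=\mu(\tau A)$ for every $A\subseteq[n]$. Using the identity above and reindexing the sum by $T=\tau^{-1}S$ (which is a bijection on $\mathcal{P}_n$ preserving cardinalities), we obtain
\[
\ddd(\tau\sigma,\tau\pi)=\sum_{S\in\mathcal{P}_n}\beta_{|S|}\,\mu\bigl(\tau\bigtriangleup_{\tau^{-1}S}(\sigma,\pi)\bigr)=\sum_{T\in\mathcal{P}_n}\beta_{|T|}\,\mu^\tau\bigl(\bigtriangleup_T(\sigma,\pi)\bigr)=\dd^{\mu^\tau}_\bb(\sigma,\pi).
\]
If $\ddd$ is neutral, the left-hand side equals $\ddd(\sigma,\pi)$ for all $\sigma,\pi$, so $\ddd=\dd^{\mu^\tau}_\bb$ as functions. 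Since $\bb\in B^>_n\sqcup(-B^>_n)$, Proposition~\ref{bigblackcockcomepiaceadandreino} applies and yields $\mu=\mu^\tau$, i.e.\ $\mu_i=\mu_{\tau_i}$ for every $i\in[n]$ and every $\tau\in\LL_n$. Since $\tau$ can be chosen to send any $i$ to any $j$ (e.g.\ $\tau=t_{i,j}$), this forces $\mu$ to be constant on singletons, hence constant.

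\textbf{Sufficiency.} Conversely, if $\mu$ is constant then $\mu(\tau A)=\mu(A)$ for every $A$ and $\tau$, so $\mu^\tau=\mu$ and the computation above gives $\ddd(\tau\sigma,\tau\pi)=\dd^\mu_\bb(\sigma,\pi)=\ddd(\sigma,\pi)$. The main obstacle is really only the bookkeeping around the left action at the level of the maxima $M(S,\cdot)$; once the identity $\bigtriangleup_S(\tau\sigma,\tau\pi)=\tau\bigtriangleup_{\tau^{-1}S}(\sigma,\pi)$ is in hand, the conclusion is an immediate consequence of the uniqueness of the measure established earlier.
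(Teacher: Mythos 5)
Your proof is correct, and the necessity direction takes a genuinely different route from the paper's. The paper proves necessity by a direct computation on adjacent transpositions: it evaluates $\ddd(\Id,t_{a+1,a+2})$, $\ddd(t_{a,a+1},t_{a,a+1}t_{a+1,a+2})$, and $\ddd(t_{a,a+2},t_{a,a+2}t_{a+1,a+2})$, each of which equals $\phi_{a+1}$ times a pairwise sum of $\mu$-values, and uses $|\phi_{a+1}|>0$ (guaranteed by $\bb\in B^>_n\sqcup(-B^>_n)$) to extract $\mu(a)=\mu(a+1)=\mu(a+2)$. You instead establish the conjugation identity $\bigtriangleup_S(\tau\sigma,\tau\pi)=\tau\bigl(\bigtriangleup_{\tau^{-1}S}(\sigma,\pi)\bigr)$ once, read off the clean functional identity $\ddd(\tau\sigma,\tau\pi)=\dd^{\mu^\tau}_\bb(\sigma,\pi)$, and then let Proposition~\ref{bigblackcockcomepiaceadandreino} (uniqueness of the measure, which is exactly where the hypotheses $n\geq3$ and $\bb\in B^>_n\sqcup(-B^>_n)$ enter) do the work. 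Your argument is more structural and handles both directions with a single computation, making transparent why neutrality is equivalent to $\LL_n$-invariance of $\mu$ on singletons; the paper's version is more self-contained in that it does not route through the uniqueness proposition, only through the one-transposition formula $\ddd(\sigma,\sigma t_{a,a+1})=\phi_a(\mu(\sigma_a)+\mu(\sigma_{a+1}))$. Your sufficiency argument coincides with the paper's, which proves the same identity in the form $M(\sigma(S),\sigma\pi)=\sigma(M(S,\pi))$ and then reindexes the sum over menus.
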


\subsubsection*{Computability}\label{Formula Maggica subsection}

At first glance, the computation of the distance $\ddd$ between any two linear orders~$\sigma$ and $\pi$ seems to be highly inefficient. The main reason being that it requires to compute the maximal elements according to both $\sigma$ and $\pi$ for $O(2^n)$ subsets. Following the methods developed by \cite{dissimiOkNishi}, we now provide a convenient reformulation of $\ddd$ that allows to compute the distance between two permutations in polynomial time. Furthermore, as it turns out in our proofs, the formulas provided in this section will play a crucial role also beyond its computational convenience.

\par\medskip
Fix a measure $\mu$, weights $\bb\in \R^{n-1}$, and $\sigma,\pi\in \LL_n$. Notice that
\begin{align*}
  \ddd(\sigma,\pi)&=\sum_{S\in \mathcal{P}_n}\beta_{|S|}\mu\left(\bigtriangleup_S(\sigma,\pi)\right)\\
  &=\sum_{S\in \mathcal{P}_n}\sum_{x\in S}\beta_{|S|}\mu\left( x \right)\1_{\bigtriangleup_S(\sigma,\pi)}(x)\\
  &=\sum_{x\in [n]}\left(\sum_{S\in \mathcal{P}_n:x\in S}\beta_{|S|}\1_{\bigtriangleup_S(\sigma,\pi)}(x)\right)\mu\left( x \right)\\
  &=\sum_{x\in [n]}\left(\sum_{k=2}^n\beta_{k}\sum_{S:x\in S\subseteq [n], |S|=k}\1_{\bigtriangleup_S(\sigma,\pi)}(x)\right)\mu\left( x \right)\\
  &=\sum_{x\in [n]}\left(\sum_{k=2}^n\beta_{k}\theta_{x,k}(\sigma,\pi)\right)\mu\left( x\right)
\end{align*}
where $\theta_{x,k}(\sigma,\pi)$ is the number of all subsets $S$ of cardinality $k$ such that $x\in \bigtriangleup_S(\sigma,\pi)$. We now \textit{compute} $\theta_{x,k}(\sigma,\pi)$ for each $x$ and $k$. To this end fix $x\in [n]$ and define the following
\begin{align*}
  A_{x}(\sigma,\pi)&:=x^{\downarrow,\sigma}\cap x^{\downarrow,\pi},&&&
  B_{x}(\sigma,\pi)&:=x^{\uparrow,\sigma}\cap x^{\downarrow,\pi},&&& 
  C_{x}(\sigma,\pi)&:=x^{\downarrow,\sigma}\cap x^{\uparrow,\pi}\\
  a_{x}(\sigma,\pi)&:=| A_{x}(\sigma,\pi)|,&&&
  b_{x}(\sigma,\pi)& :=| B_{x}(\sigma,\pi)|,&&&
  c_{x}(\sigma,\pi)&:=| C_{x}(\sigma,\pi)|.
\end{align*}
Let $S$ be such that $|S|=k$, then $x\in M(S,\sigma)\setminus M(S,\pi)$ if and only if $S=\{x\}\sqcup K\sqcup L$ for some non empty $L\subseteq C_x(\sigma,\pi)$ of cardinality $j\in\{1,\ldots,k-1\}$ and some $K\subseteq A_x(\sigma,\pi)$ with cardinality~$k-1-j$. The number of sets $S$ of cardinality $k$ such that $x\in M(S,\sigma)\setminus M(S,\pi)$ is given by
\begin{align*}
\sum_{j=1}^{k-1}\binom{c_x}{j}\binom{a_x}{k-1-j}&=\sum_{j=0}^{k-1}\binom{c_x}{j}\binom{a_x}{k-1-j}-\binom{c_{x}}{0}\binom{a_{x}}{k-1}\\
&=\binom{a_x+c_x}{k-1}-\binom{a_{x}}{k-1}
\end{align*}
where the last equality follows from the Chu-Vandermonde identity.\footnote{We recall the Chu-Vandermonde identity:
\[
\binom{m+\ell}{r}=\sum_{j=0}^r\binom{m}{j}\binom{\ell}{r-j}.
\]} Similarly, the number of sets~$S$ of cardinality $k$ such that $x\in M(S,\pi)\setminus M(S,\sigma)$ is
$$\binom{a_x+b_x}{k-1}-\binom{a_{x}}{k-1}.$$
Thus,
$$\theta_{x,k}(\sigma,\pi)=\binom{a_x+b_x}{k-1}+\binom{a_x+c_x}{k-1}-2\binom{a_{x}}{k-1}.$$
Now we define
\[
f_\bb:t\mapsto \sum_{k=2}^n\beta_k\binom{t}{k-1}.
\]
Then, we have that
\begin{align*}\sum_{k=2}^n\beta_k\theta_{x,k}(\sigma,\pi)=f_\bb(a_x(\sigma,\pi)+b_x(\sigma,\pi))+f_\bb(a_x(\sigma,\pi)+c_x(\sigma,\pi))-2f_\bb(a_x(\sigma,\pi)).
\end{align*}
Next, notice that $A_x(\sigma,\pi)\sqcup B_x(\sigma,\pi)=x^{\downarrow,\pi}$; whence $a_x(\sigma,\pi)+b_x(\sigma,\pi)=|x^{\uparrow,\pi}|$.
Similarly, $a_x(\sigma,\pi)+c_x(\sigma,\pi)=| x^{\downarrow,\sigma}|$. Combining all these computations yield
\begin{equation}\label{DioMerda}
\ddd(\sigma,\pi)=\sum_{x\in [n]}\left(f_\bb(| x^{\downarrow,\pi}|)+f_\bb(| x^{\downarrow,\sigma}|)-2f_\bb(| x^{\downarrow,\sigma}\cap x^{\downarrow,\pi} |)\right)\mu(x).
\end{equation}

In particular, by using the Pascal's identity, the above formula implies that $\ddd$ can be computed in time $O(n^2)$. Furthermore, the simplicity of the presented formulas will come in handy in the social choice application and in the algorithmic part of this work.

\subsection{Axiomatic characterization}\label{Sect_Main_Axiomatiz}
In this section we provide an axiomatic foundation for the class of weighted top-difference distances presented above. As a pivotal step to our analysis, we first \emph{isolate} the implications of some \emph{betweenness} axioms, previously studied in the literature \emph{only in conjunction} with invariance properties.

It is well known that for any two permutations $\sigma,\pi\in\mathbb S_n$, their Kendall distance $\dd_K(\sigma,\pi)$ is equal to the minimum number of adjacent transpositions that transform $\sigma$ into $\pi$. Therefore, the set of all \textit{paths} of adjacent transpositions that transform $\sigma$ into $\pi$ in $\dd_K(\sigma,\pi)$-steps is not empty. Formally, the following is nonempty for any $\sigma,\pi\in\mathbb S_n$,
$$T(\sigma,\pi):=\left\{\left(t_{a_k,a_k+1}\right)_{k=1}^{\dd_K(\sigma,\pi)}:\pi=\sigma \prod_{k=1}^{\dd_K(\sigma,\pi)}t_{a_k,a_k+1}\right\}.$$

Given an element $\mathbf t\in T(\sigma,\pi)$ and a $k\in[\dd_K(\sigma,\pi)]$, we denote by $a_k(\mathbf t)$ the smallest index of the $k^{\textrm{th}}$ transposition. Moreover, we denote by $i_k(\mathbf t)<j_k(\mathbf t)$ the unique elements in $[n]$ that are exchanged by the $k^{\textrm{th}}$ transposition. 
It follows that if $\mathbf t\in T(\sigma,\pi)$, then,
$$\pi=\sigma \prod_{k=1}^{\dd_K(\sigma,\pi)}t_{a_k(\mathbf t),a_k(\mathbf t)+1}=\left(\coprod_{k=1}^{\dd_K(\sigma,\pi)}t_{i_{k}(\mathbf t),j_{k}(\mathbf t)}\right)\sigma.$$

To ease the notation, when $\mathbf t$ is clear from the context we write $a_k,i_k,j_k$ omitting their dependence from $\mathbf t$. Before providing our axioms we need to also define the employed betweenness condition. Given distinct $\sigma,\omega,\pi\in \mathbb{S}_n$, we say that $\omega$ is \textit{in between} $\sigma$ and $\pi$ if, over all menus of size two, when $\sigma$ and $\pi$ agree, $\omega$ agrees with them; formally,
$$\forall i \neq j\in \left[n\right],\,\,\,\, M(\{i,j\},\sigma)=M(\{i,j\},\pi) \implies M(\{i,j\},\omega)=M(\{i,j\},\sigma).$$ 
If this happens we write $\sigma - \omega - \pi$. The rankings $\pi^1,\pi^2,\ldots,\pi^s$ are said to be \textit{on a line}, denoted by~$\pi^1-\pi^2-\cdots-\pi^s$, if for every $i,j$, and $k$ for which $1\leq i< j< k \leq s$, we have that $\pi^i-\pi^j-\pi^k$.

\noindent We now introduce the axioms that we will employ in the subsequent characterizations. We provide an interpretation of all these axioms immediately before each result in which they are employed.
\begin{axioms}\color{white}Dio Merda\color{black}\\\vspace{-0.5cm}
\begin{enumerate}[label=\textnormal{\textbf{A}.\arabic*}]
\item \label{A1} For all $\sigma,\omega,\pi\in \mathbb{S}_n$, if $\sigma-\omega-\pi$, then $\dd(\pi,\sigma) = \dd(\pi,\omega) + \dd(\omega,\sigma)$.
\item \label{A2} For all distinct $i,j,k,\ell\in[n]$, $\dd(\Id,t_{i,j})+\dd(\Id,t_{k,\ell})=\dd(\Id,t_{i,k})+\dd(\Id,t_{j,\ell})$.
\item \label{A3} For all $\sigma,\pi\in \mathbb{S}_n$ disagreeing on more than one pair of elements there exists $\omega\in \mathbb{S}_n$, distinct from $\sigma$ and $\pi$ such that $\sigma-\omega-\pi$ and $\dd(\sigma,\pi) = \dd(\sigma,\omega) + \dd(\omega,\pi)$.
\item \label{A4} For all distinct $\sigma,\pi\in\LL_n$ and $a\in[n-1]$, $$I(\sigma,\sigma t_{a,a+1})=I(\pi,\pi t_{a,a+1})\ \Longrightarrow\ \dd(\sigma,\sigma t_{a,a+1})=\dd(\pi,\pi t_{a,a+1}).$$
\item \label{A2ter} For all $\sigma,\pi,\rho,\tau\in\LL_n$ and distinct $a,b\in[n-1]$ with $$I(\sigma,\sigma t_{a,a+1})=I(\pi,\pi t_{b,b+1})\textrm{ and }I(\rho,\rho t_{a,a+1})=I(\tau,\tau t_{b,b+1})$$
we have that
\[
\dd(\sigma,\sigma t_{a,a+1})\dd(\tau,\tau t_{b,b+1})=\dd(\pi,\pi t_{b,b+1})\dd(\rho,\rho t_{a,a+1}).
\]
\item \label{A2'} For all $\sigma,\pi,\rho,\tau\in \LL_n$ and $a\in [n-1]$ with $$I(\sigma,\sigma t_{a,a+1})\sqcup I(\pi,\pi t_{a,a+1})=I(\rho,\rho t_{a,a+1})\sqcup I(\tau,\tau t_{a,a+1}),$$
we have that
$$\dd(\sigma,\sigma t_{a,a+1})+\dd(\pi,\pi t_{a,a+1})=\dd(\rho,\rho t_{a,a+1})+\dd(\tau,\tau t_{a,a+1}).$$
\end{enumerate}
\end{axioms}
Axiom \ref{A1} axiom requires the triangle inequality to hold as an equality for \emph{all} permutations that lie on a line. We now completely characterize the distances satisfying this axiom. 

\begin{theorem}\label{ronniecolemanleggero}
  A semimetric $\dd$ satisfies axiom \ref{A1} if and only if there exists a symmetric function $g:[n]^2\to\R_{+}$ with $g(i,i)=0$ for all $i\in [n]$ and such that for any $\mathbf{t}\in T(\sigma,\pi)$,
  \[
  \dd(\sigma,\pi)= \sum_{k=1}^{\dd_K(\sigma,\pi)}{}g(i_k(\mathbf{t}),j_k(\mathbf{t}))=\sum_{(i,j)\in I(\sigma,\pi)}g(i,j).
  \]
\end{theorem}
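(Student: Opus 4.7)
The plan is to prove the two directions separately.

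\emph{($\Leftarrow$).} Suppose $\dd(\sigma, \pi) = \sum_{(i,j) \in I(\sigma, \pi)} g(i, j)$ for some symmetric $g \geq 0$ with $g(i,i) = 0$. If $\omega$ is between $\sigma$ and $\pi$, then on any pair where $\sigma, \pi$ agree, $\omega$ must agree with both, so the pair contributes to neither $I(\sigma, \omega)$ nor $I(\omega, \pi)$; on any pair where $\sigma, \pi$ disagree, $\omega$ sides with exactly one endpoint, contributing to exactly one of the two. Hence $I(\sigma, \pi) = I(\sigma, \omega) \sqcup I(\omega, \pi)$, and summing $g$ over the disjoint decomposition gives \ref{A1}.

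\emph{($\Rightarrow$).} Assume $\dd$ is a semimetric satisfying \ref{A1}, and set $h(\sigma, a) := \dd(\sigma, \sigma t_{a, a+1})$. Iterating \ref{A1} along any reduced word $\mathbf{t} \in T(\sigma, \pi)$ yields
\[
\dd(\sigma, \pi) \;=\; \sum_{k=1}^{\dd_K(\sigma, \pi)} h\bigl(\sigma^{(k-1)}(\mathbf{t}),\, a_k(\mathbf{t})\bigr),
\]
since each partial product lies on a geodesic and is therefore in betweenness position with respect to $\sigma$ and $\pi$. The substantive task is to show that $h(\sigma, a)$ depends only on the unordered pair $\{\sigma_a, \sigma_{a+1}\}$. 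I would prove this through two local invariances: (a) a \emph{disjoint context change}, $h(\sigma, a) = h(\sigma t_{c, c+1}, a)$ whenever $c \notin \{a-1, a, a+1\}$; and (b) a \emph{one-step block move}, showing that sliding the adjacent pair $\{i, j\} = \{\sigma_a, \sigma_{a+1}\}$ by one position preserves its $h$-value. Both are obtained by the same trick: one writes each of two carefully chosen distances via two different geodesics (for (a), $\dd(\sigma, \sigma t_{c, c+1}t_{a, a+1})$ and $\dd(\sigma t_{c, c+1}, \sigma t_{a, a+1})$ as length-two decompositions; for (b), analogous length-three decompositions on three consecutive positions). The two resulting path-independence equations take the form ``invariance difference $= R$'' and ``invariance difference $= -R$'' for some auxiliary quantity $R$ built from $h$-values at pairs different from the one of interest; summing them forces the invariance difference to vanish. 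Iterating (a) and (b)---and invoking symmetry of $\dd$ to flip the block direction---connects any two contexts sharing the same unordered swap pair.

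Define $g(i, j) := h(\sigma, a)$ for any $(\sigma, a)$ with $\{\sigma_a, \sigma_{a+1}\} = \{i, j\}$, and $g(i, i) := 0$. By the pair invariance this is well-defined and symmetric, and non-negative since $\dd \geq 0$. Combining the path-additivity with this definition gives $\dd(\sigma, \pi) = \sum_k g(i_k(\mathbf{t}), j_k(\mathbf{t}))$ along any $\mathbf{t}$; since a reduced decomposition resolves each inversion of $I(\sigma, \pi)$ exactly once, the multiset of swap pairs along $\mathbf{t}$ coincides with $\{\{i,j\} : (i,j) \in I(\sigma, \pi)\}$, yielding $\sum_{(i, j) \in I(\sigma, \pi)} g(i, j)$. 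The main obstacle is isolating invariance (b): \ref{A1} delivers a wealth of path-independence identities, but most of them are automatically satisfied once one groups $h$-values by unordered swap pair; the delicate point is selecting the correct pair of length-three distance computations so that their equations, when summed, genuinely annihilate the extraneous terms and leave only the block-move invariance.
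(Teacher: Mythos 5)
Your proposal is correct and follows essentially the same route as the paper: the forward direction rests on exactly the paper's two local invariances (the ``diamond'' identity for a disjoint context change and the ``hexagon'' identity on three consecutive positions for the one-step block slide, each obtained by equating two geodesic decompositions of the same distance and cancelling), after which $g$ is well-defined on unordered pairs and the formula follows from path-additivity along a reduced word. The only cosmetic difference is in the converse, where you use the disjoint decomposition $I(\sigma,\pi)=I(\sigma,\omega)\sqcup I(\omega,\pi)$ directly instead of exhibiting a geodesic through $\omega$; both are valid.
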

\noindent Thus, Axiom \ref{A1} guarantees a simple expression for the distance function, only depending on the $\binom n2$ numbers $g(i,j)$ for $1\leq i<j\leq n$. 
\begin{remark}\label{remarkmetrico}
Some scholars considered the following stronger version of Axiom \ref{A1}: For all~$\sigma,\omega,\pi\in \mathbb{S}_n$, $\dd(\pi,\sigma) = \dd(\pi,\omega) + \dd(\omega,\sigma)$ if
and only if $\sigma-\omega-\pi$. Using this stronger version in Theorem~\ref{ronniecolemanleggero} would require to work with a metric instead of a semimetric. The statement of the theorem would be the same apart from $\dd$ being required to be a metric and $g$ to take strictly positive values outside of the diagonal.
\end{remark}
\noindent In order to achieve additive separability of the function $g$, we apply Axiom \ref{A2}. The resulting semimetric turns out to be the \textit{binary} version of $\dd^{\mu}_{\bb}$, i.e., $\dd^{\mu}_{K}$, as shown below.
\begin{theorem}\label{roar}
  A semimetric $\dd$ satisfies axioms \ref{A1} and \ref{A2} if and only if there exists $\mu\in \mathcal{M}^{\geq}_n$ such that for all $\sigma,\pi\in\LL_n$, and $\mathbf t\in T(\sigma,\pi)$,
  \[
\dd(\sigma,\pi)=\sum_{k=1}^{\dd_K(\sigma,\pi)}{}\mu(i_k(\mathbf t))+\mu(j_k(\mathbf t))=\sum_{(i,j)\in I(\sigma,\pi)}\mu(i)+\mu(j)=\sum_{S\in \mathcal{P}_n:|S|=2}\mu \left( \triangle_{S}(\sigma,\pi) \right).
  \]
\end{theorem}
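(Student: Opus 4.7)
By Theorem~\ref{ronniecolemanleggero}, axiom~\ref{A1} alone forces $\dd(\sigma,\pi)=\sum_{(i,j)\in I(\sigma,\pi)}g(i,j)$ for some symmetric $g\colon[n]^2\to\R_{+}$ with $g(i,i)=0$. Because the target form in the statement is exactly what this becomes when $g(i,j)=\mu_i+\mu_j$---and because $g\ge 0$ off the diagonal is then equivalent to $\mu\in\mathcal M_n^{\geq}$---the forward direction reduces to the claim that axiom~\ref{A2} forces $g$ to be additively separable off the diagonal.

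The plan is to translate~\ref{A2} into a Cauchy-type functional equation on $g$ alone. Writing out $I(\Id,t_{i,j})=\{(i,i+1),\dots,(i,j),(i+1,j),\dots,(j-1,j)\}$ for $i<j$ gives
\[
\dd(\Id,t_{i,j})=g(i,j)+\sum_{m=i+1}^{j-1}\bigl(g(i,m)+g(m,j)\bigr),
\]
and substituting this, together with its analogues for the other three terms in~\ref{A2} and the instances obtained by permuting the roles of $(i,j,k,\ell)$, the long intermediate sums cancel and one is left with the separable identity
\[
g(i,j)+g(k,\ell)=g(i,k)+g(j,\ell)
\]
for every four-tuple of distinct indices in $[n]$. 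This is the classical equation characterizing additively separable symmetric functions on unordered pairs.

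I then solve the Cauchy equation by the usual method. Fix a reference index $r\in[n]$, set $h_i:=g(i,r)$ for $i\ne r$, and apply the identity with the fourth slot equal to $r$ to conclude that $g(i,j)-h_j$ is independent of $j\in[n]\setminus\{i,r\}$. The symmetry $g(i,j)=g(j,i)$ then pins down a common additive constant $C$, giving $g(i,j)=h_i+h_j+C$ for all $i,j\ne r$ with $i\ne j$. Defining $\mu_i:=h_i+C/2$ for $i\ne r$ and $\mu_r:=-C/2$ yields $g(i,j)=\mu_i+\mu_j$ for every $i\ne j$; non-negativity of $g$ off the diagonal is precisely $\mu\in\mathcal M_n^{\geq}$. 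The three expressions in the theorem then coincide: the first two from Theorem~\ref{ronniecolemanleggero} applied to this choice of $g$, and the third because $\triangle_{\{i,j\}}(\sigma,\pi)=\{i,j\}$ exactly when $(i,j)\in I(\sigma,\pi)$ and $\triangle_{\{i,j\}}(\sigma,\pi)=\emptyset$ otherwise.

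For the converse I would verify that $\dd_K^\mu$ defined by the displayed formula satisfies both axioms: it is a semimetric by Proposition~\ref{prop:semimetric}; axiom~\ref{A1} holds because $\sigma-\omega-\pi$ is equivalent to the disjoint decomposition $I(\sigma,\pi)=I(\sigma,\omega)\sqcup I(\omega,\pi)$, so the weighted inversion counts add along lines; and~\ref{A2} reduces to an elementary algebraic identity once $g(i,j)$ is replaced everywhere by $\mu_i+\mu_j$. The main obstacle is the combinatorial bookkeeping that turns~\ref{A2} into the clean Cauchy equation: each $\dd(\Id,t_{i,j})$ is a sum of $2(j-i)-1$ values of $g$ rather than a single one, so several instances of~\ref{A2} must be combined carefully so that the intermediate contributions cancel without introducing spurious vanishing constraints on $g$.
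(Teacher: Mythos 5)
Your overall architecture---reduce via Theorem~\ref{ronniecolemanleggero} to showing that Axiom~\ref{A2} is equivalent to additive separability of $g$, then solve the Cauchy equation $g(i,j)+g(k,\ell)=g(i,k)+g(j,\ell)$---is exactly the paper's. Your solution of the Cauchy equation (reference index $r$, $h_i:=g(i,r)$, symmetry pinning down the additive constant) is a legitimate variant of the paper's explicit formula $\mu(i):=(g(i,j)-g(k,j)+g(i,k))/2$, and your identification of the three displayed expressions is correct.

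The gap is precisely the step you flag as ``the main obstacle'': the claim that the literal expansions of $\dd(\Id,t_{i,j})$ can be combined so that ``the long intermediate sums cancel,'' leaving the clean Cauchy equation. They cannot, because each single instance of~\ref{A2} is an equality that must hold on its own and cannot be undone by adding further instances. Take $n=4$ and $(i,j,k,\ell)=(1,2,3,4)$. Your expansion gives $\dd(\Id,t_{1,2})=g(1,2)$, $\dd(\Id,t_{3,4})=g(3,4)$, $\dd(\Id,t_{1,3})=g(1,3)+g(1,2)+g(2,3)$, and $\dd(\Id,t_{2,4})=g(2,4)+g(2,3)+g(3,4)$, so this one instance of~\ref{A2} reads $0=g(1,3)+2g(2,3)+g(2,4)$; with $g\geq0$ this already forces $g(1,3)=g(2,3)=g(2,4)=0$, exactly the ``spurious vanishing constraint'' you say must be avoided. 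The same computation breaks your converse: for generic $\mu\in\mathcal M_4^{\geq}$ the semimetric $\sum_{(i,j)\in I(\sigma,\pi)}(\mu_i+\mu_j)$ violates the literal reading of~\ref{A2}, since the two sides of that instance differ by $\mu_1+3\mu_2+3\mu_3+\mu_4$. The paper's proof instead treats $\dd(\Id,t_{i,j})$ in~\ref{A2} as the single-inversion value $g(i,j)$, i.e.\ the common value of $\dd(\sigma,\pi)$ over pairs with $I(\sigma,\pi)=\{(i,j)\}$ (well defined by Lemma~\ref{andrealosveglio}); under that reading the Cauchy equation is immediate and the converse is the trivial identity $(\mu_i+\mu_j)+(\mu_k+\mu_\ell)=(\mu_i+\mu_k)+(\mu_j+\mu_\ell)$. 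So your literal expansion is not a bookkeeping difficulty to be resolved later---it is incompatible with the theorem's conclusion, and the argument must be set up on the single-inversion quantities from the start. (You also omit the cases $n\leq3$, where~\ref{A2} is vacuous and separability of $g$ must be exhibited directly, as the paper does for $n=3$.)
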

Therefore, in order to characterize $\ddd$ without the restriction to binary comparisons, we need to weaken Axiom \ref{A1}. Such weakening is shown to be provided by Axiom \ref{A3} which requires the triangular inequality to hold as equality only for \emph{some} permutation between $\sigma$ and $\pi$. First, it is possible see (Proposition \ref{pazzia} in the Appendix) that \ref{A3} is equivalent to a metric $\dd$ being a graphic distance on the permutahedron, that is the polytope with vertices the permutations in~$\LL_n$ and as edges the couples of permutations with Kendall distance equal to one.\footnote{The reader can consult Figure \ref{fecipermutaedriche} right before the Appendix.} In \cite{Hassandez}, the authors prove that by joining Axiom \ref{A3} with neutrality (which they refer to as \textit{left-invariance}), it is possible to characterize the class of weighted Kendall metrics.\footnote{We notice that Axiom \ref{A3}, \textit{left-invariance} and \textit{right-invariance}, together force the metric to be a multiple of the $\dd_{K}$. Indeed all the metrics satisfying just left and right invariance must satisfy $\dd(\sigma,\pi)=\dd(\tau\sigma\tau^{-1},\tau\pi\tau^{-1})$.}
Axiom \ref{A4} requires the length of each edge of the permutahedron to depend exclusively on the elements that are transposed and by the positions at which they occur. By combining them, we obtain

\begin{theorem}\label{main_g_thm}
  A semimetric $\dd$ satisfies Axioms \ref{A3} and \ref{A4} if and only if there exists $g:[n-1]\tim [n]^2\to\R_+$ symmetric in the last two entries with $g_{\cdot}(i,i)=0$ for all $i\in [n]$ such that for all~$\sigma,\pi\in \LL_n$,
  \begin{equation}\label{LuanaCerini2}
  \dd(\sigma,\pi)=\min_{\mathbf t\in T(\sigma,\pi)}\sum_{k=1}^{\dd_K(\sigma,\pi)}g_{a_k(\mathbf t)}(i_k(\mathbf t),j_k(\mathbf t)).
  \end{equation}
\end{theorem}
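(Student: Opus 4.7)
The plan is to view the formula as assigning the semimetric $\dd$ the role of a shortest-path length on the permutahedron, restricted to Kendall-geodesic paths, with edge costs $g_a(\cdot,\cdot)$. Axiom \ref{A4} will be used to define these edge costs unambiguously, and Axiom \ref{A3} will provide the inductive split needed to extend the definition from single edges to arbitrary pairs of permutations.

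For the only-if direction, I would define $g$ from $\dd$ as follows. Given $a\in[n-1]$ and distinct $i,j\in[n]$, pick any $\sigma\in\LL_n$ with $\{\sigma_a,\sigma_{a+1}\}=\{i,j\}$ and set $g_a(i,j):=\dd(\sigma,\sigma t_{a,a+1})$, letting $g_a(i,i):=0$; symmetry in the last two entries is inherited from symmetry of $\dd$, and non-negativity is automatic. Well-definedness is immediate from \ref{A4}, since two such $\sigma$'s produce the same $I(\sigma,\sigma t_{a,a+1})$. I then prove the formula by induction on $d:=\dd_K(\sigma,\pi)$. The cases $d\le 1$ are immediate from the definition of $g$. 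For $d\ge 2$, Axiom \ref{A3} yields $\omega\notin\{\sigma,\pi\}$ with $\sigma-\omega-\pi$ and $\dd(\sigma,\pi)=\dd(\sigma,\omega)+\dd(\omega,\pi)$. Betweenness forces the partition $I(\sigma,\pi)=I(\sigma,\omega)\sqcup I(\omega,\pi)$, whence $\dd_K(\sigma,\omega)+\dd_K(\omega,\pi)=d$ with both strictly less than $d$. The inductive hypothesis applied to each half, together with the fact that concatenating two Kendall-geodesic paths through $\omega$ gives an element of $T(\sigma,\pi)$, delivers $\dd(\sigma,\pi)\ge\min_{\mathbf t\in T(\sigma,\pi)}\sum_k g_{a_k(\mathbf t)}(i_k(\mathbf t),j_k(\mathbf t))$. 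For the reverse inequality, fix any $\mathbf t\in T(\sigma,\pi)$: iterating the triangle inequality along its first step and applying the inductive hypothesis to the tail $(t_{a_2,a_2+1},\ldots)\in T(\sigma t_{a_1,a_1+1},\pi)$ yields $\dd(\sigma,\pi)\le\sum_k g_{a_k(\mathbf t)}(i_k(\mathbf t),j_k(\mathbf t))$; minimizing over $\mathbf t$ closes the induction.

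For the if direction, assume $\dd$ is a semimetric given by the minimum formula for some admissible $g$. Axiom \ref{A4} is immediate: when $\dd_K=1$, the set $T(\sigma,\sigma t_{a,a+1})$ is a singleton, so the formula gives $\dd(\sigma,\sigma t_{a,a+1})=g_a(\sigma_a,\sigma_{a+1})$, which depends only on the unordered pair $\{\sigma_a,\sigma_{a+1}\}$. For \ref{A3}, given $\sigma,\pi$ with $d:=\dd_K(\sigma,\pi)\ge 2$, choose a minimizing $\mathbf t\in T(\sigma,\pi)$ and set $\omega:=\sigma t_{a_1(\mathbf t),a_1(\mathbf t)+1}$; then $\omega\notin\{\sigma,\pi\}$. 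Since $\mathbf t$ is Kendall-shortest, the pair $\{\sigma_{a_1},\sigma_{a_1+1}\}$ must lie in $I(\sigma,\pi)$, and it is the only pair on which $\omega$ differs from $\sigma$, so $\sigma-\omega-\pi$. The formula gives $\dd(\sigma,\omega)+\dd(\omega,\pi)\le g_{a_1}(i_1,j_1)+\sum_{k=2}^d g_{a_k}(i_k,j_k)=\dd(\sigma,\pi)$, and combining this with the triangle inequality of the semimetric $\dd$ forces equality.

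The main technical obstacle is the inductive step in the only-if direction: one must be certain that concatenating two minimizing paths through $\omega$ still produces a path in $T(\sigma,\pi)$, and not merely in the larger family of all walks from $\sigma$ to $\pi$. This is where betweenness delivered by \ref{A3} is indispensable, since the additivity $\dd_K(\sigma,\pi)=\dd_K(\sigma,\omega)+\dd_K(\omega,\pi)$ is precisely what guarantees that the concatenated path is Kendall-shortest; otherwise the minimum in the formula would be taken over a strictly smaller set than the inductive hypothesis controls, and the argument would collapse. This combinatorial rigidity of the permutahedron is exactly what makes \ref{A3} the right weakening of \ref{A1}.
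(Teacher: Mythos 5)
Your proposal is correct and follows essentially the same route as the paper: the paper isolates the induction on Kendall distance (using Axiom \ref{A3} to split $\dd$ additively across a betweenness point and concatenating geodesics) in a separate Proposition~\ref{pazzia}, and then layers Axiom \ref{A4} on top to label each edge of the permutahedron by a well-defined weight $g_a(i,j)$, exactly as you do in one pass. The converse direction is also handled the same way, by reading off \ref{A4} from the one-step case and \ref{A3} from a minimizing path.
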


\noindent Now assuming Axioms \ref{A2ter} and \ref{A2'}, that impose proportionality and additive separability conditions, we are able to retrieve $\ddd$.
\begin{theorem}\label{axiomatic_characterization_THM}
For a non-trivial semimetric $\dd$, the following are equivalent:
\begin{enumerate}[label=(\roman*)]
\item $\dd$ satisfies Axioms \ref{A3}, \ref{A2ter}, and \ref{A2'}.
\item There are $\pp=\left(\phi_a\right)_{a=1}^{n-1}\in \mathbb{R}^{n-1}_{+}$ and $(\bb,\mu)\in B_n^{\geq}\tim\mathcal{M}^{+}_n\cup R^n_+\tim \mathcal{M}^{\geq}_n$ such that for all $\sigma,\pi\in \LL_n$,
   \begin{equation}\label{LuanaCerini_mainthm}
   \dd(\sigma,\pi)=\min_{\mathbf t\in T(\sigma,\pi)}\sum_{k=1}^{\dd_K(\sigma,\pi)}\phi_{a_k(\mathbf t)}\left(\mu(i_k(\mathbf t))+\mu(j_k(\mathbf t))\right)=\sum_{S \in \mathcal{P}_n} \beta_{|S|}\mu\left(\triangle_{S}(\sigma,\pi)\right).
   \end{equation}
   Moreover, $\mu$ can be chosen so that $\sum_i\mu(i)=1$.
\end{enumerate}
In addition, given any measure $\mu$ there exists a bijection $F$ mapping $\bb$ into $\pp$. 
\end{theorem}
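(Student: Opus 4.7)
My plan is to first dispatch the easy implication (ii) $\Rightarrow$ (i). Since $I(\sigma,\sigma t_{a,a+1}) = \{(\sigma_a,\sigma_{a+1})\}$ is always a singleton, the minimum in (\ref{LuanaCerini_mainthm}) at a one-step neighbor is attained trivially and gives $\dd(\sigma,\sigma t_{a,a+1}) = \phi_a(\mu(\sigma_a) + \mu(\sigma_{a+1}))$. From this factorized form, Axiom \ref{A2ter} (a proportionality across position indices) is immediate, and Axiom \ref{A2'} reduces to the identity $(\mu(i)+\mu(j)) + (\mu(k)+\mu(\ell)) = (\mu(i)+\mu(k)) + (\mu(j)+\mu(\ell))$. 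Axiom \ref{A3} comes from the minimization itself: any intermediate vertex along an optimal path in $T(\sigma,\pi)$ is a between-permutation realizing the triangle equality.

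For the substantive direction (i) $\Rightarrow$ (ii), the first task is to show that \ref{A2'} implies \ref{A4}: a suitable choice of $\sigma,\pi,\rho,\tau$ pairing the inversion data on the two sides forces $\dd(\sigma,\sigma t_{a,a+1})$ to depend only on $I(\sigma,\sigma t_{a,a+1})$. Combining with \ref{A3}, Theorem \ref{main_g_thm} yields the edge-weight representation
\[
\dd(\sigma,\pi) = \min_{\mathbf t\in T(\sigma,\pi)}\sum_{k=1}^{\dd_K(\sigma,\pi)} g_{a_k(\mathbf t)}(i_k(\mathbf t),j_k(\mathbf t))
\]
for some symmetric $g_a:[n]^2\to\R_+$ vanishing on the diagonal.

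Next I would extract the multiplicative and additive structure of $g$. Instantiating \ref{A2ter} with $\sigma,\pi,\rho,\tau$ such that $\{\sigma_a,\sigma_{a+1}\} = \{\pi_b,\pi_{b+1}\} = \{i,j\}$ and $\{\rho_a,\rho_{a+1}\} = \{\tau_b,\tau_{b+1}\} = \{k,\ell\}$ gives $g_a(i,j)g_b(k,\ell) = g_b(i,j)g_a(k,\ell)$, so the ratio $g_a/g_b$ is constant wherever it is defined. Fixing a reference position $a^*$ with $g_{a^*}\not\equiv 0$ (by nontriviality of $\dd$) and setting $h := g_{a^*}$, $\phi_a := g_a(i^*,j^*)/h(i^*,j^*)$ for a distinguished pair $(i^*,j^*)$, a short case analysis produces $g_a(i,j) = \phi_a\,h(i,j)$ globally, with $\phi_a \ge 0$. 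Using \ref{A2'} at position $a^*$ on four distinct alternatives then yields the tetrahedron condition $h(i,j) + h(k,\ell) = h(i,k) + h(j,\ell)$, and the standard argument produces $\mu \in \R^n$ with $h(i,j) = \mu(i) + \mu(j)$; nonnegativity of $h$ off the diagonal places $\mu \in \mathcal{M}_n^{\geq}$.

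To pass from the path-minimization form to the menu-sum form and identify $\bb$, I would use identity (\ref{DioMerda}) at $\pi = \sigma t_{a,a+1}$: only the terms $x \in \{\sigma_a,\sigma_{a+1}\}$ contribute, producing $\phi_a = f_\bb(n-a) - f_\bb(n-a-1) = \sum_{k=2}^{n}\beta_k\binom{n-a-1}{k-2}$. Reading $a = n-1,\ldots,1$ in turn, this is a triangular linear system in $\bb$ with unit diagonal, hence invertible; its inverse defines the bijection $F\colon\bb\mapsto\pp$ (uniformly in $\mu$). Membership of $(\bb,\mu)$ in the parameter region stated in (ii) is then an immediate consequence of Proposition \ref{prop:semimetric}, and the normalization $\sum_i\mu_i = 1$ is obtained by a positive rescaling absorbed into $\pp$. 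The main obstacle I anticipate is the degeneracy bookkeeping in the factorization step: since \ref{A2ter} only constrains products, excluding the pathology where different $g_a$'s vanish on disjoint subsets of pairs requires invoking nontriviality of $\dd$ carefully to force a single common $h$ across all positions.
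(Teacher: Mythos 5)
Your treatment of the axiomatic core --- deriving \ref{A4} from \ref{A2'}, invoking Theorem \ref{main_g_thm}, and factoring the edge weights as $g_a(i,j)=\phi_a(\mu(i)+\mu(j))$ via the multiplicative constraint from \ref{A2ter} together with the tetrahedron identity from \ref{A2'} --- follows the paper's route (Theorem \ref{coro_separazioneperianale}), with only a cosmetic reordering: you factor out $\phi_a$ first and then decompose additively, whereas the paper builds a measure $\mu_a$ at each position and then uses \ref{A2ter} to align them into a single $\mu$. Your degeneracy worry is handled in the paper exactly as you anticipate, by using non-triviality to fix a reference pair with $g_x(p,q)>0$. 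Your triangularity observation for the system $\phi_a=\sum_{k}\beta_k\binom{n-a-1}{k-2}$ is also a cleaner way to see that $F$ is bijective than the paper's explicit inclusion--exclusion inverse in Theorem \ref{scorpions}.

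The genuine gap is the second equality in \eqref{LuanaCerini_mainthm}. Matching the two representations at adjacent transpositions only shows that $D^\mu_\pp$ and $\dd^\mu_\bb$ assign the same lengths to the \emph{edges} of the permutahedron; it does not show they agree on pairs at Kendall distance at least $2$. For that you must prove that $\dd^\mu_\bb$ itself equals the minimum over paths of the sum of its edge lengths, i.e.\ that the menu-sum function is a graphic distance satisfying Axiom \ref{A3}. This is precisely where the paper spends its effort: Lemma \ref{dsatisfiesA3} is a nontrivial computation with formula \eqref{DioMerda} exhibiting, for each $\sigma\neq\pi$, a specific adjacent swap (at $i^*=\max X$) where the triangle inequality is tight, and Theorem \ref{scorpions} then runs an induction on $\dd_K(\sigma,\pi)$ to conclude $\dd^\mu_\bb=D^\mu_\pp$ globally. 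As written, your argument establishes only $\dd=D^\mu_\pp$, not the menu-sum representation, so the proof is incomplete without an analogue of this step.
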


\subsubsection*{Relation between $\bb$ and $\pp$} We now explore in more detail the relationship between the weights~$\bb$ and $\pp$, when $\ddd$ is assumed to be a non-trivial semimetric.  Let~$F:\bb\mapsto\pp$ be defined as
  $$F:\bb\mapsto \left(\sum_{k=2}^n\beta_k\binom{n-a-1}{k-2}\right)_{a=1}^{n-1}.$$
  As we show in the proof of Theorem \ref{scorpions}, $F$ is the bijective linear map associating each $\bb$ to the weights $\pp$ as in \eqref{popo}. Recall that if $\mu\in\mathcal M_n^{++}$, then for all $\bb\in B^>_n$, $\ddd$ is a metric. Moreover, define~$\Phi_n:=F(B^>_n)$. It is easy to verify\footnote{If $\boldsymbol\phi\in\R_{++}^{n-1}$, then for any $a\in[n-1]$ and distinct $i,j\in[n]$, $(\mu(i)+\mu(j))\phi_a>0$ and thus, by equation \eqref{popo}, $\dd_{F^{-1}(\boldsymbol\phi)}^\mu(\sigma,\tau)>0$ for any distinct $\sigma,\tau$. Conversely, if there is some $\phi_a\leq0$, then $\dd_{F^{-1}(\boldsymbol\phi)}^\mu(\sigma,\sigma t_{a,a+1})=(\mu(\sigma_a)+\mu(\sigma_{a+1}))\phi_a\leq0$.} that $\Phi=\R_{++}^{n-1}$.
  An interesting subset of~$B^>_n=F^{-1}(\R_{++}^{n-1})$ is provided by
  $$B_n^*:=\R_{++}\tim\R_+^{n-2}\subseteq B^>_n.$$ 
  Indeed, negative values of $\beta_j$ would be difficult to interpret as the importance of disagreements in menus of size $j$. Moreover, $\Phi_n^*:=F(B_n^*)\subseteq \Phi_n$ corresponds to the set of positive and totally monotone sequences. In particular, if $\bb\in B_n^*$, then $F(\bb)$ is non-increasing, meaning that swaps in the top positions are more significant than swaps in the bottom ones. Formally,
  \begin{proposition}\label{vintoalfantamorto2017GrazieTotoRiiiina}
    $$\Phi_n^*=\left\{\boldsymbol\phi\in\R_{++}^{n-1}:\forall k\in[0,n),\forall j\in[1,n-k),\ (-1)^k(\Delta^k \boldsymbol\phi)_j\geq0\right\},$$
  where
  $$(\Delta^k f)_j:=\sum_{i=0}^k(-1)^{i+k}\binom{k}{i}f(j+i).$$
  \end{proposition}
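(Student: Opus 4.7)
The plan is to invert the linear map $F$ explicitly via the forward difference operator and then read off the sign conditions coordinate by coordinate. The central computational step is establishing the identity
\[
(-1)^r(\Delta^r \boldsymbol\phi)_a \;=\; \sum_{\ell=r+2}^{n}\beta_\ell\binom{n-a-r-1}{\ell-r-2}
\]
for every $r\in[0,n)$ and $a\in[1,n-r)$. I would prove this by induction on $r$ starting from the defining formula $\phi_a=\sum_{k=2}^n\beta_k\binom{n-a-1}{k-2}$ and using Pascal's identity in the form $\binom{m}{\ell}-\binom{m-1}{\ell}=\binom{m-1}{\ell-1}$ at each step. The inductive increment reduces the upper binomial entry by one and the lower entry by one, while contributing the sign $-1$; this is precisely what accumulates to $(-1)^r$ and shifts the summation range from $k\geq 2$ to $\ell\geq r+2$.

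For the forward implication, suppose $\bb\in B_n^*$, so $\beta_2>0$ and $\beta_\ell\geq 0$ for $\ell\geq 3$. Every binomial $\binom{n-a-r-1}{\ell-r-2}$ on the right-hand side is non-negative whenever $a\leq n-r-1$, which is exactly the admissible range $a\in[1,n-r)$. Hence $(-1)^r(\Delta^r\boldsymbol\phi)_a\geq 0$. Setting $r=0$ and observing that $\phi_a\geq \beta_2\binom{n-a-1}{0}=\beta_2>0$ shows $\boldsymbol\phi\in\R_{++}^{n-1}$, so $F(\bb)$ sits in the right-hand side set.

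For the converse, suppose $\boldsymbol\phi\in\R_{++}^{n-1}$ satisfies all the alternating-sign conditions. The key trick is to evaluate the identity at the extreme value $a=n-r-1$: then $\binom{n-a-r-1}{\ell-r-2}=\binom{0}{\ell-r-2}$ vanishes unless $\ell=r+2$, in which case it equals $1$. Consequently
\[
\beta_{r+2}\;=\;(-1)^r(\Delta^r\boldsymbol\phi)_{n-r-1}\qquad\text{for every }r\in\{0,1,\ldots,n-2\}.
\]
Each such right-hand side is non-negative by hypothesis, so $\beta_\ell\geq 0$ for $\ell\geq 2$; and strict positivity $\beta_2=\phi_{n-1}>0$ follows from $\boldsymbol\phi\in\R_{++}^{n-1}$. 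Therefore $\bb\in B_n^*$ and $F(\bb)=\boldsymbol\phi$, proving $\boldsymbol\phi\in\Phi_n^*$.

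I do not expect a serious obstacle: the whole argument is linear algebra over the basis $\{\binom{\cdot}{k-2}\}_k$, plus the classical observation that the finite-difference operator is dual to this basis. The only bookkeeping subtlety is keeping track of the index reversal $a\mapsto n-a-1$, which is what produces the factor $(-1)^k$ and thereby explains why the set on the right looks like \emph{alternating} rather than \emph{absolute} monotonicity; I would flag this briefly to aid the reader, and point out the analogy with Hausdorff's characterization of totally monotone sequences.
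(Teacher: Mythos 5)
Your proof is correct, and its converse direction coincides with the paper's: both evaluate at the extreme index $a=n-r-1$ to recover $\beta_{r+2}=(-1)^r(\Delta^r\pp)_{n-r-1}$ and read off the sign of $\bb$; note that in this direction one must first invoke the bijectivity of $F$ (established in Theorem~\ref{scorpions}) to set $\bb:=F^{-1}(\pp)$ before the identity applies --- you do this implicitly, so make it explicit. Where you genuinely differ is the forward direction. The paper proves only the extreme-index identity in closed form and then propagates non-negativity inward by induction on the offset, using the recursion $(\Delta^k f)_{j+1}-(\Delta^k f)_j=(\Delta^{k+1}f)_j$ to write $(-1)^j(\Delta^{j-2}\pp)_{n+1-j-(m+1)}$ as a sum of two terms that are non-negative by the inductive hypothesis. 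You instead establish the single closed-form identity
\[
(-1)^r(\Delta^r\pp)_a=\sum_{\ell=r+2}^{n}\beta_\ell\binom{n-a-r-1}{\ell-r-2},
\]
valid for all $r\in[0,n)$ and $a\in[1,n-r)$; your Pascal-identity induction on $r$ is sound, with the $\ell=r+2$ term dropping out at each step exactly as you describe. This identity subsumes both of the paper's steps: the forward implication becomes the observation that a non-negative combination of binomials with non-negative upper index is non-negative, and the converse is the specialization $a=n-r-1$, where $\binom{0}{\ell-r-2}$ kills all terms but $\ell=r+2$. Your route is slightly more informative --- it exhibits each $k$-th difference explicitly as a non-negative combination of the tail weights $(\beta_\ell)_{\ell\geq k+2}$ --- at the cost of carrying a two-parameter identity through the induction rather than the paper's one-parameter recursion.
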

  
  \begin{example}
    The extremal elements of $\Phi_n^*$ are the exponentials $\phi_k=c\alpha^{-k}$ for $\alpha\geq 1$, $c>0$. In this case we have that
    (if $\bb=F^{-1}(\pp)$) 
    \begin{align*}
    \beta_k&=\sum_{j=0}^{k-2}(-1)^{k+j}\binom{k-2}j\phi_{n-1-j}\\
    &=c(-1)^k\alpha^{1-n}\sum_{j=0}^{k-2}\binom{k-2}j(-\alpha)^{j}\\
    &=c\alpha^{1-n}(\alpha-1)^{k-2}
    \end{align*}
    Thus $\bb$ also takes an exponential form with scale $\alpha-1\geq0$. (If $\alpha=1$ we adopt the convention~$0^0=1$ and set $\beta_2=c$.)
  \end{example}

\subsubsection*{Some special cases}\label{Connection}

As we noticed in §\ref{def}, our distance $\ddd$ relates to well-known dissimilarity functions established in the literature. In particular,
\begin{itemize}
  \item If $\bb=(1,\ldots,1)$, then we recover the distance described in \cite{dissimiOkNishi}.
  \item If $\bb=(1,\ldots,1,0,\ldots,0)$ we get the metrics proposed in \cite{GILBERT202227}.
  \item If $\bb = \left((\alpha-1)^{k-2}\right)_{k=2}^n$ for some $\alpha>1$ and $\mu$ is the counting measure, then we recover the \emph{unavailable candidate model} of \cite{lu2010unavailable}.
  \item If $\bb=(1,0,\ldots,0)$ and $\mu$ is the counting measure, then we recover the Kendall distance proposed in \cite{Kendall48} and axiomatized in \cite{Kemeny59}.
  \item If $\mu$ is the counting measure, then we recover the weighted Kendall distance proposed and axiomatized by \cite{Hassandez}.
  \item In \cite{kumar2010generalized} a particular case of the class of metrics characterized in Theorem \ref{roar} is considered. More specifically, the authors set $g(i,j)=w(i)w(j)$ for some function $w:[n]\to\R_{++}$.
\end{itemize}

\subsubsection*{Isometric embedding}\label{sect:isoembedding}Isometric embedding results can be used to translate least squares methods of approximation from one space to another. In the context of finite metric space, this approach turns out to be important to avoid the difficulties inherent to the lack of convexity in optimization problems. In the working paper version of their work \cite{dissimiOkNishi}, the authors provided a result showing the impossibility to embed the set of partial orders\footnote{Reflexive, transitive, and antisymmetric binary relations.} over $[n]$ in any strictly convex normed linear space, whenever such set is endowed with $\dd^{\mu}$ and $n\geq 3$. Here, we strengthen their result showing that it applies to a larger family of metrics
defined on the smaller collection of linear orders. Before that we recall two definitions. We say that a metric space $(X,d_X)$ can be \textit{isometrically embedded} in a metric space $(Y,d_Y)$ if there exists $f:X\to Y$ such that $d_Y(f(x),f(y))=d_X(x,y)$ for all $x,y\in X$. A (real) normed linear space $(X,\|\cdot\|)$ is \textit{strictly convex} if for all $\lambda\in (0,1)$ and distinct $x,y\in X$ with $\| x\|=\| y \|=1$ we have 
\[
\| \lambda x+(1-\lambda)y\|<1.
\]
\begin{proposition}\label{propo:iso_embedding}
Let $n\geq3$ and $\dd$ be any metric on $\LL_n$ satisfying axioms $\ref{A3}$. Then, $(\LL_n,\dd)$ cannot be isometrically embedded in any strictly convex (real) normed linear space. 
\end{proposition}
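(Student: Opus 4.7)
The plan is to reduce to $n = 3$ and, invoking Proposition~\ref{pazzia} to realize $\dd$ as the graphic distance on the hexagonal permutahedron, exhibit that no isometric embedding into a strictly convex space can exist. For the reduction, the six permutations of $\LL_n$ fixing every element of $\{4, \ldots, n\}$ form a copy of $\LL_3$ on which $\dd$ still satisfies~\ref{A3}, and any isometric embedding of $(\LL_n, \dd)$ restricts to one on this copy. Henceforth I assume $n = 3$, label $\sigma_1, \ldots, \sigma_6$ the hexagon in cyclic order, and set $e_k := \dd(\sigma_k, \sigma_{k+1 \bmod 6})$.

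Suppose for contradiction that $f \colon (\LL_3, \dd) \to (X, \|\cdot\|)$ is an isometric embedding into a strictly convex normed space. Strict convexity gives the key rigidity: if $\|x-z\| = \|x-y\| + \|y-z\|$ with $x \neq z$, then $y$ is the \emph{unique} point of $[x,z]$ at parameter $\|x-y\|/\|x-z\|$; hence whenever $\omega$ lies on a Kendall-shortest path realizing $\dd(\sigma, \pi)$, the image $f(\omega)$ is locked onto $[f(\sigma), f(\pi)]$ at the corresponding parameter. I would then split on whether some antipodal pair admits two equal-weight Kendall-shortest paths. In \emph{Case~1} (without loss of generality the pair is $(\sigma_1, \sigma_4)$, with $L := e_1+e_2+e_3 = e_4+e_5+e_6$), all four intermediates $\sigma_2, \sigma_3, \sigma_5, \sigma_6$ land on the segment $[f(\sigma_1), f(\sigma_4)]$, and a short linear computation gives $\|f(\sigma_3) - f(\sigma_5)\| = |e_3 - e_4|$, whereas the graphic distance yields $\dd(\sigma_3, \sigma_5) = e_3 + e_4$ via the Kendall-shortest path $\sigma_3 - \sigma_4 - \sigma_5$; these agree only if $\min\{e_3, e_4\} = 0$, contradicting positivity. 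In \emph{Case~2} each antipodal pair has a unique Kendall-shortest path; since any two of the three chosen paths share at least two hexagonal vertices, the rigidity propagates to place all six of $f(\sigma_1), \ldots, f(\sigma_6)$ on a common line. Parameterizing this line and computing the coordinates of each $f(\sigma_k)$ along two distinct shortest paths either produces the identity $e_i = \sum_{k \neq i} e_k$ for some edge (when exactly one hexagonal edge is uncovered by the chosen paths), which then contradicts the strict triangle inequality for $\dd$ on a suitably chosen pair such as $(\sigma_1, \sigma_5)$, or it produces conflicting positions equivalent to $\sum_{k=1}^6 e_k = 0$, directly violating positivity.

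The main obstacle is the case analysis in Case~2, whose three chosen shortest paths a priori range over $2^3$ orientation patterns (clockwise vs.\ counterclockwise for each antipodal pair). The unifying observation is that in each pattern the union of the three paths either covers exactly five of the six hexagonal edges (triggering the first identity) or covers all six with internally inconsistent orientations (triggering the second); once the collinearity step is in place, the closing contradiction follows uniformly across patterns.
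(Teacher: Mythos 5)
Your proof is correct in outline but takes a genuinely different route from the paper's. The paper never passes through Proposition~\ref{pazzia} or antipodal geodesics: it observes that on the hexagonal copy of $\LL_3$ each vertex is the \emph{unique} permutation between its two cyclic neighbours, so Axiom~\ref{A3} forces the six consecutive-triple triangle equalities directly; strict convexity then pins each image onto the segment spanned by its neighbours' images, and chaining the resulting convex-combination identities around the cycle (with one image normalized to the origin) collapses all six images to a single point, contradicting injectivity. Your argument instead routes through the graphic-distance representation and the two arcs joining each antipodal pair, with a case split on ties. What the paper's approach buys is the absence of any case analysis and no need for the global shortest-path formula — only the local ``unique in-between element'' property and elementary algebra of coefficients in $(0,1)$. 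What yours buys is a more geometric picture (collinearity of all six images), at the cost of the $2^3$ orientation patterns in Case~2. Two points in your write-up deserve tightening. First, in Case~1 the identity $\dd(\sigma_3,\sigma_5)=e_3+e_4$ is correct, but only because the minimum in Proposition~\ref{pazzia} runs over Kendall-geodesics (paths of length equal to the Kendall distance), so vertices at Kendall distance two admit a unique such path; without that remark one might worry about the longer arc. Second, the phrase ``contradicts the strict triangle inequality for $\dd$'' is not quite right — $\dd$ satisfies no strict triangle inequality, and the identity $e_i=\sum_{k\neq i}e_k$ is by itself consistent with the metric axioms. What actually closes that sub-case is that the linear coordinates force $\|f(\sigma_1)-f(\sigma_5)\|=e_1+e_2+e_3+e_4$ while the graphic distance gives $\dd(\sigma_1,\sigma_5)=e_5+e_6$; substituting $e_6=e_1+\cdots+e_5$ yields $e_5=0$, contradicting positivity. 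With those two repairs (and the analogous orientation-clash check in the all-edges-covered patterns, which does go through), your argument is complete.
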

\noindent In particular, we have that for all $n\geq 3$, when  $(\LL_n,\ddd)$ is a metric space, it cannot be isometrically embedded in any strictly convex (real) normed linear space.

\section{Social choice theory}\label{section:voting}

We study the properties of the median preference function induced by the class of distances introduced in the previous sections. To this end we start providing a primer on the terminology used in Social Choice theory.

\subsubsection*{Notation and terminology}
Throughout the section we fix a nonempty finite set $C=[n]$ of candidates with $n\geq2$, over which decision makers (or \textit{voters}) express their preferences. Each voter is represented by a ranking $\sigma\in \LL_n$. A profile of voters $V=(v^1,\ldots,v^m)$ (also called, \textit{electorate}) is a finite ordered list of rankings over $[n]$.\footnote{With abuse of notation we denote by $|V|$ the dimension of $V$.} We define by 
\[
\mathbf{V}_n:=\bigcup_{k=1}^{\infty}\LL_n^k
\]
the set of all possible lists of voters over $C$. Often, we may write 
\[
\left(s_1 v^1,s_2 v^2,\ldots,s_m v^m\right)
\]
to identify a profile of voters where for each $j=1,\ldots,m$, the share of voters whose preferences are expressed by $v^j$ is $s_j\in \left[0,1\right]$.\footnote{Sometimes we replace $s_j$ with $n_j$, where the latter represents the number of voters in the electorate with preference $\sigma^j$.} Pivotal to our analysis are \textit{preference correspondences}. In particular, a preference correspondence $P:\V\rightrightarrows \LL_n$ associates to each profile of voters a subset of $\LL_n$. Given a preference correspondence $P$ and a profile of voters $V$ we sometimes refer to $P(V)$ as the set of \textit{consensus preferences}. Each preference correspondence $P$ induces a \textit{social choice correspondence}~$W_P:\V\rightrightarrows C$ defined as 
\[
W_P(V):=\bigcup\limits_{v\in P(V)} M\left(C,v\right),
\]
namely the set of \textit{winners} identified by all the consensus preferences.\\

\noindent An active part of the literature on Social Choice theory focuses on \textit{distance-based voting methods}. Specifically, a semimetric $\dd:\LL_n\tim \LL_n\to \R_+$ induces a \textit{median preference correspondence} as follows:
\[
P_{\dd}:V\mapsto \argmin\limits_{\sigma\in \LL_n}\sum_{j=1}^{|V|}\dd(\sigma,v^j).
\]
In the rest of this section, we focus on the median preference correspondence induced by $\ddd$. To ease notation we set $P^\mu_\bb:=P_{\ddd}$ and $P_\bb:=P_{\dd_\bb}$ and we define $W^\mu_\bb$ and $W_\bb$ analogously. Given~$V\in \V$, $\sigma\in \LL_n$, we also conveniently denote the cumulative distance by

\begin{equation}\label{aggregate distance}
 \ddd(\sigma,V):=\sum_{j=1}^{|V|}\ddd(\sigma,v^j). 
\end{equation}
When we refer to the Kemeny-Young preference correspondence, we mean $P_{\dd_K}$.
\subsection{Social choice properties}

We proceed to study the main properties of the preference and social choice correspondences~$P^\mu_\bb$ and~$W^\mu_\bb$. In doing so we briefly recall the definitions of such properties. Fix a preference correspondence $P$. We say that $P$ satisfies \textit{anonymity} if for all~$(v^1,\ldots,v^m)\in \V$ and permutations $\sigma\in \LL_m$ we have 
\[
P(v^1,\ldots,v^m)=P(v^{\sigma_1},\ldots,v^{\sigma_m}).
\]
Anonymity requires that voters' labels do not matter when determining the set of consensus preferences. A specular requirement is \textit{neutrality}. In particular, $P$ is \textit{neutral} if for all $V\in \V$ and permutations $\sigma\in \mathbb{S}_{n}$ we have\footnote{By $\sigma (v^1,\ldots,v^m)=(\sigma v^1,\ldots,\sigma v^m)$, similarly, $\sigma \{w^1,\ldots,w^k\}=\{\sigma w^1,\ldots,\sigma w^k\}$.}
\[
P(\sigma V)=\sigma P(V).
\]
While anonymity is a property that immediately holds for all median preference correspondences, neutrality is not always satisfied. 
It is known in the literature that $P_{\dd}$ is neutral whenever $\dd$ is neutral (see Lemma 2 and its proof in \cite{conitzer2009preference}). As it concerns our metric, we have that if $\beta_2>0$ and $\mu$ is proportional to the counting measure, then $P^\mu_\bb$ is neutral. On the other hand, if $\mu$ is not proportional to the counting measure, then $P_\dd^{\mu}$ might not be neutral, as shown in the next example.
\begin{example}\label{ex:neutrality}
Let $n=3$, $V=((1,2,3),(3,1,2),(2,3,1))$, $\bb=(1,0)$ and, $\mu\in\mathcal M^{++}_3$ be such that~$\mu_3>\max\{\mu_1,\mu_2\}$. Now, if $\sigma=(3,2,1)$, then $P^\mu_\bb(V)=\{(2,3,1)\}$ and $P^\mu_\bb(\sigma V)=\{(1,3,2)\}\neq \{(2,1,3)\}=\sigma P^\mu_\bb(V)$ showing that $P^\mu_\bb$ is not neutral.\footnote{The relative computations are reported in the Appendix.}
\end{example}
\noindent
However, we found that a metric $\dd$ satisfying Axioms \ref{A3} and \ref{A4} is neutral if and only if the corresponding $P_\dd^f$ is neutral for any increasing and strictly convex function $f:\R_+\to\R_+$.\footnote{We are defining $$P^f_\dd:V\mapsto\argmin_{\sigma\in\LL_n}\sum_{i=1}^{|V|}f(\dd(\sigma,v^i)).$$}

A powerful property that stands at that basis of the characterizations of the Kemeny-Young method (\cite{YoungLev78}) and positional voting rules (\cite{YOUNG_Borda}) is \textit{reinforcing}.\footnote{Some authors also refer to such property as \textit{consistency}.} In words, reinforcing requires that whenever two separate committees share the same consensus preferences, then deliberating together their consensus preferences should not change. Formally, a preference correspondence satisfies \textit{reinforcing} if
\begin{equation}\label{eq:reinforcingdef}
P(V_1)\cap P(V_2)\neq \emptyset \Longrightarrow P(V_1\oplus V_2)=P(V_1)\cap P(V_2),
\end{equation}
where $\oplus$ is the concatenation operator.\footnote{More explicitly,
\[
(v^1,\ldots,v^k)\oplus (w^1,\ldots,w^k):=(v^1,\ldots,v^k,w^1,\ldots,w^k).
\]} As it is relevant for our analysis we distinguish between reinforcing for $P$ and $W_P$. The latter is defined by swapping $P$ with $W_P$ in \eqref{eq:reinforcingdef}, and we refer to it as \textit{reinforcing for winners}.
\begin{proposition}\label{reinf_rank}
If $\ddd$ is a semimetric, then, $P^\mu_\bb$ satisfies reinforcing.
\end{proposition}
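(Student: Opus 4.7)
The plan is to argue directly from the additive definition of the cumulative distance in \eqref{aggregate distance}, which makes reinforcing essentially a tautology for any median correspondence induced by a semimetric. No properties of $\ddd$ beyond $\ddd \geq 0$ will actually be needed.

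First, I would observe the key identity: for every $\sigma \in \LL_n$ and every $V_1, V_2 \in \V$,
\[
\ddd(\sigma, V_1 \oplus V_2) \;=\; \ddd(\sigma, V_1) + \ddd(\sigma, V_2),
\]
which is immediate from the definition of $\oplus$ and equation \eqref{aggregate distance}.

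Next I would fix any $\sigma^* \in P^\mu_\bb(V_1) \cap P^\mu_\bb(V_2)$, which exists by hypothesis. I would prove the two inclusions of \eqref{eq:reinforcingdef} separately. For $P^\mu_\bb(V_1) \cap P^\mu_\bb(V_2) \subseteq P^\mu_\bb(V_1 \oplus V_2)$: for any $\tau \in P^\mu_\bb(V_1) \cap P^\mu_\bb(V_2)$ and any $\sigma \in \LL_n$, the identity above gives
\[
\ddd(\tau, V_1 \oplus V_2) = \ddd(\tau, V_1) + \ddd(\tau, V_2) \leq \ddd(\sigma, V_1) + \ddd(\sigma, V_2) = \ddd(\sigma, V_1 \oplus V_2),
\]
where the inequality uses the optimality of $\tau$ for each $V_i$ separately. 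Hence $\tau \in P^\mu_\bb(V_1 \oplus V_2)$, and in particular the minimum value on $V_1 \oplus V_2$ equals $\ddd(\sigma^*, V_1) + \ddd(\sigma^*, V_2)$.

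For the reverse inclusion $P^\mu_\bb(V_1 \oplus V_2) \subseteq P^\mu_\bb(V_1) \cap P^\mu_\bb(V_2)$: take any $\tau \in P^\mu_\bb(V_1 \oplus V_2)$. Then
\[
\ddd(\tau, V_1) + \ddd(\tau, V_2) = \ddd(\tau, V_1 \oplus V_2) \leq \ddd(\sigma^*, V_1 \oplus V_2) = \ddd(\sigma^*, V_1) + \ddd(\sigma^*, V_2).
\]
But $\ddd(\tau, V_i) \geq \ddd(\sigma^*, V_i)$ for $i=1,2$ by optimality of $\sigma^*$ on each component, so both inequalities must hold with equality. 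This forces $\tau \in P^\mu_\bb(V_1) \cap P^\mu_\bb(V_2)$, completing the proof.

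There is no genuine obstacle here; the only thing to be careful about is that the nonemptiness of $P^\mu_\bb(V_1) \cap P^\mu_\bb(V_2)$ is used to certify the minimum value $\ddd(\sigma^*, V_1) + \ddd(\sigma^*, V_2)$, and that every $P^\mu_\bb(V)$ is nonempty because $\LL_n$ is finite. The argument makes no use of the specific structure of $\ddd$ beyond additivity across the electorate, so the same proof literally applies to $P_\dd$ for any semimetric $\dd$.
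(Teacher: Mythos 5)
Your proof is correct and follows essentially the same route as the paper's: both rest on the additivity $\ddd(\sigma,V_1\oplus V_2)=\ddd(\sigma,V_1)+\ddd(\sigma,V_2)$ and on the nonempty intersection certifying that the minimum over the concatenated profile equals the sum of the separate minima. The only cosmetic difference is that the paper establishes the reverse inclusion by contrapositive, whereas you force equality directly; the content is identical.
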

\begin{proposition}\label{prop:reinforcingwinners}
Let $n \geq 3$ and $\ddd$ be a semimetric. If ${\beta}_k = 0$ for all $2 \leq k \leq n-1$, then $W^\mu_\bb$ satisfies reinforcing for winners. 
Moreover, if $\bb\in \R^{n-1}_+$ and $\mu$ is proportional to the counting measure, then $W^\mu_\bb$ satisfies reinforcing winners if and only if ${\beta}_k = 0$ for all $2 \leq k \leq n-1$.  
\end{proposition}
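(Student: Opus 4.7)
\emph{Part 1 (sufficient condition).} Under the hypothesis $\beta_k=0$ for all $2\leq k\leq n-1$, the sum in \eqref{def:dbetamu} collapses to the single term with $S=[n]$, yielding
\[\ddd(\sigma,\pi)=\beta_n\,\mu\bigl(\bigtriangleup_{[n]}(\sigma,\pi)\bigr)=\beta_n\bigl(\mu(\sigma_1)+\mu(\pi_1)\bigr)\1[\sigma_1\neq\pi_1],\]
using $M([n],\sigma)=\sigma_1$. In particular $\ddd(\sigma,V)$ depends on $\sigma$ only through its top $\sigma_1$, so
\[P^\mu_\bb(V)=\{\sigma\in\LL_n:\sigma_1\in W^\mu_\bb(V)\}.\]
If $c\in W^\mu_\bb(V_1)\cap W^\mu_\bb(V_2)$, any $\sigma$ with $\sigma_1=c$ lies in $P^\mu_\bb(V_1)\cap P^\mu_\bb(V_2)$, so this intersection is non-empty. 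Proposition~\ref{reinf_rank} then yields $P^\mu_\bb(V_1\oplus V_2)=P^\mu_\bb(V_1)\cap P^\mu_\bb(V_2)$, and taking tops on both sides gives $W^\mu_\bb(V_1\oplus V_2)=W^\mu_\bb(V_1)\cap W^\mu_\bb(V_2)$.

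\emph{Part 2 (necessary condition).} The ``if'' direction is immediate from Part~1. For the ``only if'', assume $\bb\in\R^{n-1}_+$, $\mu\propto\lvert\cdot\rvert$, and $\beta_{k^*}>0$ for some $k^*\in\{2,\dots,n-1\}$; the goal is to exhibit profiles $V_1,V_2$ for which reinforcing for winners fails. The strategy is to engineer the combined profile so that $W^\mu_\bb(V_1\oplus V_2)=[n]$, and to split it so that $W^\mu_\bb(V_1)\cap W^\mu_\bb(V_2)$ is a non-empty proper subset of $[n]$. The first step is guaranteed whenever $V_1\oplus V_2$ is invariant under the $\LL_n$ action on rankings (for instance, the uniform multiset $\LL_n$ itself): by neutrality of $\ddd$ (Proposition~\ref{Galletta}, available because $\mu\propto\lvert\cdot\rvert$), $\ddd(\sigma,V_1\oplus V_2)$ is then constant in $\sigma$, forcing every candidate to be a winner.

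\emph{Main obstacle.} The delicate step is carrying out the split so that the two $W$-sets properly but non-trivially overlap, uniformly in $\bb$. Since $\ddd$ blends a weighted-Kendall contribution (controlled by $\beta_2,\dots,\beta_{n-1}$) with a plurality-like contribution (controlled by $\beta_n$), the split must be engineered so that both contributions jointly exclude at least one candidate from the intersection while preserving another. A case analysis on which $\beta_k$ are strictly positive—mirroring the classical Kemeny counterexample and perturbing it to absorb the $\beta_n$ term—produces the required $V_1,V_2$; alternatively, one invokes Young's characterization of positional scoring rules: since $W^\mu_\bb$ satisfies anonymity and neutrality under our assumptions and continuity follows by a replication argument, reinforcing for winners would force $W^\mu_\bb$ to be a positional scoring rule, and a direct inspection of \eqref{def:dbetamu} shows that this happens precisely when $\beta_k=0$ for all $2\leq k\leq n-1$.
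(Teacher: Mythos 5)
Your Part 1 is correct and is essentially the paper's argument: when $\beta_k=0$ for all $2\le k\le n-1$ the cumulative distance depends on a ranking only through its top element, so $P^\mu_\bb(V)$ is exactly the set of rankings whose top lies in $W^\mu_\bb(V)$, the intersection $P^\mu_\bb(V_1)\cap P^\mu_\bb(V_2)$ is nonempty whenever the winner sets intersect, and Proposition~\ref{reinf_rank} finishes the job.

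Part 2, however, is a plan rather than a proof, and the entire difficulty of the proposition lives in the step you defer. You never exhibit the profiles $V_1,V_2$, and neither of your suggested routes closes the gap. Saying that ``a case analysis \dots produces the required $V_1,V_2$'' asserts the conclusion; what must be proved is that for \emph{every} $\bb\in\R^{n-1}_+$ with some $\beta_{k^*}>0$, $2\le k^*\le n-1$, a concrete pair of profiles witnesses the failure, and verifying this requires controlling where the optimal rankings place the relevant candidates --- a genuinely nontrivial localization argument. The appeal to Young's characterization also does not work as stated: to deduce $\beta_k=0$ for $2\le k\le n-1$ you would need to show that $W^\mu_\bb$ is \emph{not} a positional scoring rule whenever some intermediate weight is positive, which is a statement about the argmin correspondence rather than about the distance and is not obtainable by ``direct inspection'' of \eqref{def:dbetamu}; it is of comparable difficulty to the original claim (and one would further have to check that reinforcing for winners, continuity, etc.\ match the hypotheses of that theorem). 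For comparison, the paper proves the converse by taking $V_1$ to be a Condorcet cycle on $\{1,2,3\}$ padded with $4,\ldots,n$ and $V_2=((1,2,3,\ldots,n),(2,1,3,\ldots,n))$, showing through a chain of inequalities and repeated applications of Lemma~\ref{split optimum} that any consensus ranking must place $\{1,2,3\}$ in the top three positions with the remaining candidates in identity order, computing $W_\bb(V_1)=\{1,2,3\}$ and $W_\bb(V_2)=\{1,2\}$, and then verifying by explicit algebra that $2\notin W_\bb(V_1\oplus V_2)$ as soon as some $\beta_k>0$ with $2\le k\le n-1$. That localization-of-the-optimum step is the substantive content missing from your proposal.
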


An other important requirement is \textit{monotonicity}. Monotonicity of a preference correspondence requires that upranking a winning candidate cannot harm him. Formally, for a profile of voters~$V\in \V$ and $c\in C$, we denote by $V^c$ an \textit{upranking}. More specifically, $V^c$ is obtained by upranking $c$ for some voters, coeteris paribus.\footnote{An upranking of candidate $k$ for $\sigma\in \LL_n$ is a ranking $\pi$ such that for all $i,j\in[n]\setminus\{k\}$ we have that $i>_\sigma j$ iff $i>_\pi j$ and there is some $i\neq k$ such that $i>_{\sigma}k$ and $i<_{\pi}k$.
} We say that $P$ is \textit{monotone}, whenever
\[
c\in W_P(V)\Longrightarrow \forall V^c, \;c\in W_P(V^c).
\]
\begin{proposition}\label{prop:monotonicity}
Let $\bb\in\R_{+}^{n-1}$ and $\mu \in \mathcal{M}^+_n$. Then, $P^\mu_\bb$ is monotone.
\end{proposition}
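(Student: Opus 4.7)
The plan is to fix some $\sigma^*\in P^\mu_\bb(V)$ with $M(C,\sigma^*)=c$ (which exists because $c\in W^\mu_\bb(V)$) and prove that $\sigma^*\in P^\mu_\bb(V^c)$ for every upranking $V^c$; this immediately gives $c\in W^\mu_\bb(V^c)$. Since $\ddd(\cdot,V^c)-\ddd(\cdot,V)$ decomposes as a sum over the voters whose ranking changed, and since any single-voter upranking of $c$ factors as a chain of adjacent swaps $v\mapsto v\,t_{a,a+1}$ each of which lifts $c$ by one position, it suffices to prove a single-voter, single-step comparison. Telescoping, and combining with the inequality $\ddd(\sigma^*,V)\le\ddd(\tau,V)$ that comes from $\sigma^*\in P^\mu_\bb(V)$, will yield $\ddd(\sigma^*,V^c)\le \ddd(\tau,V^c)$ for every $\tau\in\LL_n$.

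The key lemma to establish is: if $v_{a+1}=c$, $v_a=d$, and $v':=v\,t_{a,a+1}$, then for every $\tau\in\LL_n$,
\[
\ddd(\sigma^*,v')-\ddd(\sigma^*,v)\ \le\ \ddd(\tau,v')-\ddd(\tau,v).
\]
To prove it I would invoke the defining symmetric-difference formula $\ddd(\sigma,\pi)=\sum_{S\in\mathcal P_n}\beta_{|S|}\,\mu(\triangle_S(\sigma,\pi))$. The point is that $M(S,v)$ and $M(S,v')$ coincide outside the collection
\[
\mathcal S^*:=\bigl\{S\in\mathcal P_n:\{c,d\}\subseteq S\text{ and }M(S,v)=d\bigr\},
\]
while for $S\in\mathcal S^*$ one has $M(S,v)=d$ and $M(S,v')=c$. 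A three-case split on $M(S,\sigma)$ shows that each $S\in\mathcal S^*$ contributes to $\ddd(\sigma,v')-\ddd(\sigma,v)$ the value $-(\mu(c)+\mu(d))\beta_{|S|}$ if $M(S,\sigma)=c$, $+(\mu(c)+\mu(d))\beta_{|S|}$ if $M(S,\sigma)=d$, and $(\mu(c)-\mu(d))\beta_{|S|}$ otherwise. For $\sigma^*$ the first (most favorable) case applies uniformly, since $c\in S$ and $\sigma^*$ ranks $c$ first forces $M(S,\sigma^*)=c$ for all $S\in\mathcal S^*$.

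Subtracting the expressions for $\tau$ and $\sigma^*$, the difference collapses to
\[
2(\mu(c)+\mu(d))\!\!\!\sum_{\substack{S\in\mathcal S^*\\ M(S,\tau)=d}}\!\!\beta_{|S|}\;+\;2\mu(c)\!\!\!\sum_{\substack{S\in\mathcal S^*\\ M(S,\tau)\notin\{c,d\}}}\!\!\beta_{|S|},
\]
which is non-negative thanks to the hypotheses $\bb\in\R_+^{n-1}$ and $\mu\in\mathcal M_n^+$. I do not anticipate a substantive obstacle here; the only care needed is in identifying $\mathcal S^*$ correctly and carrying out the case analysis, after which non-negativity is immediate from the sign conditions. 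The conceptual content is simply that a ranking placing $c$ at the top benefits maximally from every menu in which $c$ supplants some other candidate as the voter's maximum.
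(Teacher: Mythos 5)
Your proof is correct, and it reaches the conclusion by a genuinely different technical route than the paper's. Both arguments share the same skeleton — fix $\sigma^*\in P^\mu_\bb(V)$ ranking $c$ first, reduce the upranking to a chain of single-voter adjacent swaps, and show that $\sigma^*$ benefits weakly more than any $\tau$ from each swap — but the execution differs. The paper works in the coordinates of the closed-form identity \eqref{DioMerda}: it writes the objective via $f_\bb$ and the cardinalities $\lvert k^{\downarrow,\sigma}\cap k^{\downarrow,v}\rvert$, introduces the correction term $g(\sigma)=2\bigl(f_\bb(\lvert k^{\downarrow,\sigma}\cap k^{\downarrow,w}\rvert)-f_\bb(\lvert k^{\downarrow,\sigma}\cap k^{\downarrow,v}\rvert)\bigr)\mu(k)$, and needs the monotonicity and increasing-increments properties of $f_\bb$ (Lemmas \ref{DioMajale} and \ref{DioCinghiale}) to show that rankings placing $c$ first maximize $g$ while attaining equality in the bound $h^\mu_\bb(\sigma,V^k)\geq h^\mu_\bb(\sigma,V)-g(\sigma)$. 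You instead work directly from the defining sum over menus, isolate the family $\mathcal S^*$ of menus on which the voter's maximum flips from $d$ to $c$, and do a three-way case split on $M(S,\tau)$; the resulting per-menu comparison needs only the sign hypotheses $\bb\in\R_+^{n-1}$ and $\mu\in\mathcal M_n^+$, with no appeal to convexity of $f_\bb$. Your reorganization of the final step — proving the difference inequality $\ddd(\sigma^*,v')-\ddd(\sigma^*,v)\le\ddd(\tau,v')-\ddd(\tau,v)$ and then telescoping against optimality of $\sigma^*$ for $V$ — is cleaner than the paper's argmin bookkeeping via Claim \ref{simple claim}, and it is self-contained at the level of the definition. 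What the paper's version buys in exchange is that it stays in the $f_\bb$ coordinates used throughout the computational sections, so the same machinery is reused elsewhere. One small point worth making explicit in a final write-up: an upranking of $c$ does factor into adjacent transpositions each lifting $c$ by exactly one position (the relative order of $[n]\setminus\{c\}$ is unchanged by definition), so the telescoping over intermediate rankings is legitimate.
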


Arguably the most iconic aspects of voting rules are related to \textit{majority}. For a given profile of voters $V\in \V$ and $c\in C$, we denote by $n(V,c)$ the number of voters that consider $c$ as their most favourite alternative \textit{minus} the number of voters who do not. A preference correspondence~$P$ satisfies the \textit{majority principle} if for all $V\in \V$ and $c\in C$,
\[
n(V,c)\geq 0\Longrightarrow c\in W_P(V).
\]
\begin{proposition}\label{prop:majority}
If $\ddd$ is a semimetric, then $P^\mu_\bb$ satisfies the majority principle.
\end{proposition}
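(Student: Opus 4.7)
The plan is to take an arbitrary minimizer $\sigma^{*}\in P^{\mu}_{\bb}(V)$, build from it a ranking $\tilde\sigma^{*}$ that places $c$ at the top, and show that $\tilde\sigma^{*}$ achieves (weakly) smaller cumulative distance; hence $\tilde\sigma^{*}\in P^{\mu}_{\bb}(V)$ and $c\in W^{\mu}_{\bb}(V)$. Concretely, define the operator $L_{c}:\LL_{n}\to\LL_{n}$ that sends $\sigma$ to the permutation obtained by moving $c$ to the first position while preserving the relative order of all other candidates, and set $\tilde\sigma:=L_{c}(\sigma)$. The decisive structural observation is that for every $S\in\mathcal P_{n}$ with $c\notin S$ one has $M(S,\sigma)=M(S,\tilde\sigma)$, while for every $S\in\mathcal P_{n}$ with $c\in S$ one has $M(S,\tilde\sigma)=c$.

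Using this observation I would carry out the following key computation for an arbitrary $\sigma\in\LL_{n}$ and any voter $v$ that ranks $c$ at the top, i.e.\ $v\in V_{c}$. For each $S$ not containing $c$, the factors $\triangle_{S}(\sigma,v)$ and $\triangle_{S}(\tilde\sigma,v)$ coincide, so their contributions cancel in $\ddd(\sigma,v)-\ddd(\tilde\sigma,v)$. For each $S$ containing $c$, since $M(S,v)=c=M(S,\tilde\sigma)$, the term $\mu(\triangle_{S}(\tilde\sigma,v))$ vanishes and $\triangle_{S}(\sigma,v)=\{c,M(S,\sigma)\}\triangle\{c\}=\triangle_{S}(\sigma,\tilde\sigma)$. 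Summing over $S$,
\[
\ddd(\sigma,v)-\ddd(\tilde\sigma,v)=\sum_{S\ni c}\beta_{|S|}\mu\!\left(\triangle_{S}(\sigma,\tilde\sigma)\right)=\ddd(\sigma,\tilde\sigma),
\]
where the last equality uses that $\triangle_{S}(\sigma,\tilde\sigma)=\emptyset$ whenever $c\notin S$.

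For any voter $v\in V_{\bar c}$ I would simply invoke the triangle inequality for the semimetric $\ddd$ to get $\ddd(\sigma,v)-\ddd(\tilde\sigma,v)\geq -\ddd(\sigma,\tilde\sigma)$. Summing over the electorate,
\[
\ddd(\sigma,V)-\ddd(\tilde\sigma,V)\geq \bigl(|V_{c}|-|V_{\bar c}|\bigr)\ddd(\sigma,\tilde\sigma)=n(V,c)\,\ddd(\sigma,\tilde\sigma)\geq 0,
\]
by the hypothesis $n(V,c)\geq 0$. Applying this to $\sigma=\sigma^{*}\in P^{\mu}_{\bb}(V)$ gives $\ddd(\tilde\sigma^{*},V)\leq\ddd(\sigma^{*},V)$; minimality of $\sigma^{*}$ forces equality, so $\tilde\sigma^{*}\in P^{\mu}_{\bb}(V)$ and, since $c$ occupies the first position of $\tilde\sigma^{*}$, $c\in W^{\mu}_{\bb}(V)$.

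The only mildly subtle step is the exact identity $\ddd(\sigma,v)-\ddd(\tilde\sigma,v)=\ddd(\sigma,\tilde\sigma)$ for $v\in V_{c}$: it relies on the menu-by-menu bookkeeping above, which in turn depends crucially on the fact that $L_{c}$ preserves the relative order of candidates different from $c$. Everything else is the triangle inequality (available because $\ddd$ is assumed to be a semimetric) and the counting of voters encoded by $n(V,c)$.
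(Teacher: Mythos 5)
Your argument is correct and is essentially the paper's proof: the paper likewise takes an optimal $\omega$, moves $c$ to the top preserving the relative order of the rest (after a relabeling so that this target is $\Id$), proves the exact menu-by-menu identity $\ddd(\omega,v)=\ddd(\omega,\Id)+\ddd(\Id,v)$ for voters who top-rank $c$, uses the triangle inequality for the remaining voters, and concludes via $(|V_c|-|V_{\bar c}|)\,\ddd(\omega,\Id)=n(V,c)\,\ddd(\omega,\Id)\geq 0$. The only blemish is notational: in the case $S\ni c$ you should write $\triangle_S(\sigma,v)=\{M(S,\sigma)\}\triangle\{c\}$ rather than $\{c,M(S,\sigma)\}\triangle\{c\}$, but the identity $\triangle_S(\sigma,v)=\triangle_S(\sigma,\tilde\sigma)$ you draw from it is the correct one.
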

The majority principle readily implies that $P^\mu_\bb$ also satisfies unanimity. We prove that actually more holds true. In particular, we provide a \textit{blockwise} version of Pareto dominance. To this end we need some further notation. For all $\sigma\in \LL_n$ and $i\leq k\leq n$, we define
\[
\sigma(\left[i,k\right]):=\left\lbrace \sigma_i,\ldots,\sigma_{k}\right\rbrace,
\]
\noindent
the \emph{set} of candidates in positions from $i$ to $k$ according to $\sigma$.
A preference correspondence $P$ satisfies
\begin{itemize}
\item \textit{blockwise Pareto winners} if for all $V=\left(v^1,\ldots,v^{m}\right)\in \V$,
\small
\[ \forall k \in [n], \; \left(
\forall j,j'\in \left[m\right], v^j(\left[k\right])=v^{j'}(\left[k\right])\Rightarrow \forall v \in P(V),\ v(\left[k\right])=v^1(\left[k\right]) \right)
\]
\normalsize
\item \textit{blockwise Pareto losers} if for all $V=\left(v^1,\ldots,v^{m}\right)\in \V$,
\small
\[ \forall k \in [n], \; \left(
\forall j,j'\in \left[m\right], v^j(\left[k\right]^{\mathsf{c}})=v^{j'}(\left[k\right]^{\mathsf{c}})\Rightarrow \forall v \in P(V),\ v(\left[k\right]^{\mathsf{c}})=v^1(\left[k\right]^{\mathsf{c}}) \right)
\]

\normalsize
\end{itemize}
\begin{proposition}\label{prop:blockpareto}
For all $\bb\in\R^{n-1}_+$ with $\beta_2 >0$, $P_\bb$ satisfies blockwise Pareto winners and losers.
\end{proposition}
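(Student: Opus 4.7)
The plan is to prove blockwise Pareto winners by an adjacent-swap local-improvement argument, from which blockwise Pareto losers follows at once: for each fixed $k$, the hypothesis $v^j([k]^{\mathsf c}) = v^{j'}([k]^{\mathsf c})$ and the conclusion $v([k]^{\mathsf c}) = v^1([k]^{\mathsf c})$ are precisely the complements of the corresponding statements for winners at the same level, so the two conditions are logically identical. Fix a profile $V = (v^1, \ldots, v^m)$, a level $k \in [n]$, and assume all voters share the same top-$k$ set $T$. Let $\sigma^{\ast} \in P_{\bb}(V)$, and suppose for contradiction that $\sigma^{\ast}([k]) \neq T$. Then $\sigma^{\ast}$ places some element of $T^{\mathsf c}$ in its first $k$ positions and some element of $T$ below position $k$; scanning $\sigma^{\ast}$ top to bottom, we locate adjacent positions $a, a+1$ with $y := \sigma^{\ast}_{a} \in T^{\mathsf c}$ and $x := \sigma^{\ast}_{a+1} \in T$ (if no such adjacency existed, all $T$-elements would form a prefix of $\sigma^{\ast}$, forcing $\sigma^{\ast}([k]) = T$). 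Set $\sigma' := \sigma^{\ast} t_{a,a+1}$. By the defining assumption on $V$, every voter $v^j$ satisfies $x >_{v^j} y$, since $x \in T$ is among the top $k$ of $v^j$ while $y$ is not.

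The heart of the proof is to show $\dd_{\bb}(\sigma', V) < \dd_{\bb}(\sigma^{\ast}, V)$. Applying formula (\ref{DioMerda}) with $\mu = \lvert \cdot \rvert$ termwise, I observe that $z^{\downarrow, \sigma^{\ast}} = z^{\downarrow, \sigma'}$ for every $z \notin \{x, y\}$, so only the $z \in \{x, y\}$ summands change. Explicitly, $x^{\downarrow, \sigma'} = x^{\downarrow, \sigma^{\ast}} \cup \{y\}$, $y^{\downarrow, \sigma'} = y^{\downarrow, \sigma^{\ast}} \setminus \{x\}$, and $|y^{\downarrow, \sigma^{\ast}}| = |x^{\downarrow, \sigma^{\ast}}| + 1$. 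Using $y \in x^{\downarrow, v^j}$ and $x \notin y^{\downarrow, v^j}$, the $f_\bb(|x^{\downarrow, \sigma}|)$ and $f_\bb(|y^{\downarrow, \sigma}|)$ pieces cancel between the $z = x$ and $z = y$ contributions, and, writing $I_x := |x^{\downarrow, \sigma^{\ast}} \cap x^{\downarrow, v^j}|$,
\[
\dd_{\bb}(\sigma', v^j) - \dd_{\bb}(\sigma^{\ast}, v^j) = -2 \bigl[ f_{\bb}(I_x + 1) - f_{\bb}(I_x) \bigr].
\]
Pascal's identity gives $f_{\bb}(t+1) - f_{\bb}(t) = \sum_{\ell = 2}^{n} \beta_\ell \binom{t}{\ell - 2} \geq \beta_2 > 0$ for every $t \geq 0$. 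Hence each voter contributes at most $-2 \beta_2$, and summing yields $\dd_{\bb}(\sigma', V) - \dd_{\bb}(\sigma^{\ast}, V) \leq -2 m \beta_2 < 0$, contradicting the minimality of $\sigma^{\ast}$.

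I expect the main obstacle to be the bookkeeping in the middle paragraph: one has to recognize that only two summands of (\ref{DioMerda}) move under the swap, and then use the identity $|y^{\downarrow, \sigma^{\ast}}| = |x^{\downarrow, \sigma^{\ast}}| + 1$ to cancel the $|x^{\downarrow, \sigma}|$-dependent pieces between the $z = x$ and $z = y$ contributions. Once this cancellation is spotted and the clean expression $-2[f_{\bb}(I_x + 1) - f_{\bb}(I_x)]$ is isolated, the assumption $\beta_2 > 0$ together with $\bb \geq 0$ finishes the proof immediately. The role of $\beta_2 > 0$ is essential: if $\beta_2 = 0$, the forward difference $f_\bb(1) - f_\bb(0)$ vanishes and the strict inequality collapses, consistent with the fact (easily checked on small profiles when $\bb$ is supported on large menu-sizes) that blockwise Pareto need not hold.
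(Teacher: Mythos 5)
Your proof is correct, and it takes a genuinely different and more economical route than the paper's. The paper's proof also argues by contradiction from a supposed optimum $v$ with $v([k])\neq[k]$, but it performs a single \emph{non-adjacent} transposition $t_{a,b}$ of a carefully chosen pair (the $v$-highest misplaced top-$k$ candidate against the $v$-lowest misplaced outsider); since that swap moves candidates that may be far apart in $v$, the authors must track how $|j^{\downarrow}\cap j^{\downarrow,v}|$ changes for every $j$ strictly between the two swapped positions, which forces a five-case analysis and an appeal to both the monotonicity of $f_\bb$ (Lemma \ref{DioMajale}) and its increasing increments (Lemma \ref{DioCinghiale}). Your adjacent-swap version sidesteps all of this: only the two summands $z\in\{x,y\}$ of \eqref{DioMerda} move, the $f_\bb(|x^{\downarrow,\sigma}|)$-pieces cancel against the $f_\bb(|y^{\downarrow,\sigma}|)$-pieces via $|y^{\downarrow,\sigma^*}|=|x^{\downarrow,\sigma^*}|+1$, and the per-voter change collapses to $-2[f_\bb(I_x+1)-f_\bb(I_x)]\le -2\beta_2<0$, needing only the first forward difference of $f_\bb$ rather than its convexity. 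Your existence argument for the adjacency (no $T^{\mathsf c}$-over-$T$ adjacency forces $T$ to be a prefix) and the reduction of blockwise Pareto losers to winners via complementation of the images under a bijection are both sound; the latter is in fact more explicit than the paper, whose written proof addresses only the winners case. The one thing the paper's heavier machinery buys is reuse: the same general-transposition computation pattern recurs in the proofs of Theorem \ref{general approximation thm} and elsewhere, whereas your argument is self-contained to this proposition.
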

\noindent At this point one may ask whether $P_{\bb}$ satisfies an even stronger version of the blockwise Pareto property. That is, whether for all $(v^j)_{j=1}^m\in \V$
\[
\forall k,i \in [n], \; \left(
\forall j,j'\in \left[m\right], v^j(\left[k,i\right])=v^{j'}(\left[k,i\right])\Rightarrow \forall v \in P(V),\ v(\left[k,i\right])=v^1(\left[k,i\right]) \right)
\]
This does not hold in general as shown in the following example.
\begin{example}
$C=\left\lbrace 1,2,3,4 \right\rbrace$ and $V=\left((1,2,3,4),(3,2,1,4),(4,2,3,1)\right)$. Clearly, $2$ is in the second position for all voters, though it can be shown that for $\bb=(1,1,\ldots,1)$,
$P_\bb(V)=\{(2,3,1,4)\}.$
\end{example}

A further condition is a partitionwise version of the Pareto property. In particular, we say that a preference correspondence $P$ has the \textit{partitionwise Pareto property} if for any increasing sequence~$(k_i)_{i=0}^\ell$ with $k_0=0$ and $k_\ell=n$, and any profile of voters $V=(v^j)_{j=1}^m$, it holds
\begin{gather*}
  \forall i\in[\ell],\forall j,j'\in[m],\ v^j((k_{i-1},k_{i}])=v^{j'}((k_{i-1},k_{i}])\\
  \Downarrow\\
  \forall i\in[\ell],\forall v\in P(V),\  v((k_{i-1},k_{i}])=v^1((k_{i-1},k_{i}]).
\end{gather*}

\begin{corollary}\label{coro:partitionwisepareto}
  For all $\bb\in\R^{n-1}_+$ with $\beta_2 > 0$, $P_\bb$ satisfies the partitionwise Pareto property.
\end{corollary}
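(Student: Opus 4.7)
The plan is to reduce the claim to an iterated application of Proposition~\ref{prop:blockpareto}. The key observation is that the partitionwise hypothesis for the partition $(k_{i-1},k_i]_{i=1}^{\ell}$ is equivalent to the family of cumulative hypotheses: for each $i\in[\ell]$, all voters agree on the top-$k_i$ set, that is, $v^j([k_i])=v^{j'}([k_i])$ for all $j,j'\in[m]$. This equivalence is immediate from $[k_i]=\bigsqcup_{s\leq i}(k_{s-1},k_s]$: agreement on each block of the partition yields agreement on each initial segment, while conversely the blocks are recovered as set-differences $(k_{i-1},k_i]=[k_i]\setminus[k_{i-1}]$.

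Given this reformulation, I would invoke Proposition~\ref{prop:blockpareto} (blockwise Pareto winners), applicable since $\bb\in\R^{n-1}_+$ with $\beta_2>0$, separately at each threshold $k_i$ for $i\in[\ell]$. The proposition asserts that \emph{for each} $k\in[n]$ on which all voters agree on the top-$k$ set, any $v\in P_\bb(V)$ must satisfy $v([k])=v^1([k])$. Applying this to every $k=k_i$ gives $v([k_i])=v^1([k_i])$ for all $i\in[\ell]$ and all $v\in P_\bb(V)$.

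Finally, taking consecutive set differences yields
\[
v((k_{i-1},k_i])=v([k_i])\setminus v([k_{i-1}])=v^1([k_i])\setminus v^1([k_{i-1}])=v^1((k_{i-1},k_i])
\]
for every $i\in[\ell]$, which is precisely the conclusion of the partitionwise Pareto property. There is no substantive obstacle: the corollary is essentially a syntactic rephrasing of blockwise Pareto winners applied at each cumulative cutoff induced by the partition, with the boundary cases $k_0=0$ and $k_\ell=n$ trivially satisfied.
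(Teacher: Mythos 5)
Your proof is correct and follows essentially the same route as the paper: both arguments note that block-by-block agreement is equivalent to agreement on each initial segment $[k_i]$, invoke the blockwise Pareto winners property of Proposition~\ref{prop:blockpareto} at each threshold $k_i$, and recover the blocks by taking consecutive set differences. No gaps.
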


A stronger majority condition is the \textit{Condorcet principle.} For all $i,j\in [n]$ and $V\in \V$, we denote by~$n_{i,j}(V)$ the number of voters who prefer $i$ to $j$ \textit{minus} the number of voters who do not. A preference correspondence $P$ satisfies the \textit{Condorcet principle} if 
$$n_{i,j}(V)>0\Longrightarrow \not\exists \sigma\in P(V): (j,i)\in \mathrm{adj}(\sigma)$$\vspace{-0.45cm}
\begin{equation}\label{eq:condorcet_def_P}
  \textrm{ and }
\end{equation}\vspace{-0.35cm}
$$n_{i,j}(V)=0\Longrightarrow\left(\left(\exists\sigma\in P(V): (i,j)\in\mathrm{adj}(\sigma) \right)\Longleftrightarrow\left(\exists\pi\in P(V):(j,i)\in\mathrm{adj}(\pi)\right)\right),
$$
where by adj$(\sigma)$ we denote the set of all couples $(i,j)$ adjacent in $\sigma$ and with $i>_\sigma j$.
Notice that a much stronger requirement in the style of the Condorcet principle would be the following,
\begin{equation}\label{eq:example_Condorcet}
n_{i,k}(V)>0\Longrightarrow \forall \sigma\in P(V),\ i>_{\sigma} k.
\end{equation}
The distinction between \eqref{eq:condorcet_def_P} and \eqref{eq:example_Condorcet} lies on the adjacency between $i$ and $k$ which must be satisfied for the Condorcet principle to hold, but it is not required by \eqref{eq:example_Condorcet}. Even though also this latter and stronger requirement appears to be meaningful, it is not satisfied even by the Kemeny-Young's rule as shown in the upcoming example. 
\begin{example}
  Let $C=\left\lbrace 1,2,3\right\rbrace$ and $V=((1,2,3),(2,3,1),(3,1,2))$. 
  Suppose that $\mu$ is the counting measure and that $\bb=(1,0)$. Then, $n_{1,2}(V)=1,\ n_{3,1}(V)=1,\ n_{2,3}(V)=1$.
Notwithstanding, we have that 
$P_\bb(V)=\left\lbrace (1,2,3),(2,3,1),(3,1,2)\right\rbrace $
and thus $P_\bb$ does not satisfy condition \eqref{eq:example_Condorcet}.
\end{example}
Given a profile of voters $V\in \V$ we say that a candidate $i$ is \textit{Condorcet for} $V$, if $i$ is preferred to all other candidates in binary comparisons by a majority of voters, i.e., $n_{i,j}(V) \geq 0$, for all~$j \neq i$. The set of Condorcet candidates for a profile $V$, will be denoted by $\mathcal{C}(V)$. A social choice correspondence~$W$ satisfies the \textit{Condorcet principle} if $\mathcal{C}(V)\subseteq W(V)$ for all $V\in \V$. The following result provides a full characterization of the Condorcet principles for both $P^\mu_\bb$ and $W^\mu_\bb$.
\begin{proposition}\label{Condorcet_characterization_PW}
Suppose $n \geq 3$, $\bb\in \R^{n-1}_+$ with $\beta_2>0$, and $\mu\in \mathcal{M}^{++}_n$. The following are equivalent:
  \begin{enumerate}[label=(\roman*)]
  \item \label{it1:condch} $W_\bb^\mu$ satisfies the Condorcet principle.
  \item \label{it2:condch} $P_\bb^\mu$ satisfies the Condorcet principle.
  \item \label{it3:condch} $\beta_k=0$ for all $k\geq 3$.
  \end{enumerate}
\end{proposition}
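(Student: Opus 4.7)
My plan is to prove the equivalence via the cycle (iii)$\Rightarrow$(ii), (iii)$\Rightarrow$(i), and $\neg$(iii)$\Rightarrow\neg$(i)$\wedge\neg$(ii), with the last step obtained from a single explicit counterexample that simultaneously witnesses both failures.

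For the easy direction, hypothesis (iii) reduces $\ddd$ to the binary weighted form $\beta_2\,\dd^\mu_K$. For $\sigma\in\LL_n$ with $\sigma_a=j,\sigma_{a+1}=i$, a direct aggregation of pairwise contributions over voters yields
\[
\ddd(\sigma,V)-\ddd(\sigma t_{a,a+1},V)=\beta_2(\mu_i+\mu_j)\,n_{i,j}(V).
\]
This identity immediately implies both Condorcet conditions for $P^\mu_\bb$: if $n_{i,j}(V)>0$ the swap strictly reduces the aggregate distance, so $\sigma\notin P^\mu_\bb(V)$; if $n_{i,j}(V)=0$ the swap preserves it, giving the ``iff'' of the second condition. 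For winners, given $c\in\mathcal{C}(V)$ and any $\sigma\in P^\mu_\bb(V)$ with $c$ at position $p>1$, applying the identity to the candidate $k$ immediately above $c$ (which satisfies $n_{c,k}(V)\geq 0$) shows that the swap weakly decreases the distance, producing $\sigma'\in P^\mu_\bb(V)$ with $c$ at position $p-1$. Iterating yields some minimiser with $c$ on top, so $c\in W^\mu_\bb(V)$.

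For $\neg$(iii)$\Rightarrow\neg$(i)$\wedge\neg$(ii), I first treat $n=3$ via the profile $V_N:=N\cdot(1,3,2)+(3,2,1)+N\cdot(2,3,1)$, in which candidate $3$ is the strict Condorcet winner with margin exactly one over each of $1,2$. A direct computation using \eqref{def:dbetamu} shows
\[
\ddd((2,3,1),V_N)-\ddd((3,2,1),V_N)=(\beta_2+\beta_3)(\mu_2+\mu_3)-2N\beta_3\mu_3,
\]
with an analogous expression comparing $(1,3,2)$ to $(3,2,1)$. Since $\beta_3>0$, for $N$ sufficiently large both differences are negative, and further pairwise comparisons (against $(1,2,3),(2,1,3),(3,1,2)$) show that the minimiser $\sigma^*$ of $\ddd(\cdot,V_N)$ is either $(1,3,2)$ or $(2,3,1)$, both of which place $3$ in position $2$. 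This simultaneously violates (i), since $3\notin W^\mu_\bb(V_N)$, and the first condition of (ii), since $(k,3)\in\mathrm{adj}(\sigma^*)$ for some $k\in\{1,2\}$ with $n_{3,k}(V_N)>0$. For $n\geq 4$, I append the common tail $(4,\ldots,n)$ to each voter; since all voters rank $1,2,3$ above $4,\ldots,n$, every minimiser preserves this tail (by a pairwise-Pareto argument extending Proposition~\ref{prop:blockpareto} to $\mu\in\mathcal{M}_n^{++}$), and restricting to such rankings formula \eqref{DioMerda} reduces $\ddd$ to a three-candidate weighted top-difference distance with effective weights $\tilde\beta_2=\sum_{k\geq 2}\beta_k\binom{n-3}{k-2}>0$ and $\tilde\beta_3=\sum_{k\geq 3}\beta_k\binom{n-3}{k-3}$; the latter is strictly positive because $\beta_{k^*}>0$ for some $k^*\in[3,n]$, reducing the problem to the $n=3$ case.

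The main obstacle is extending Proposition~\ref{prop:blockpareto} to $\mu\in\mathcal{M}_n^{++}$ in the common-tail setting. A swap argument should work: if $\sigma\in P^\mu_\bb(V_N)$ has some $k\in\{4,\ldots,n\}$ above some $i\in\{1,2,3\}$, pushing $k$ downward via adjacent swaps strictly decreases $\ddd(\sigma,V_N)$ because every voter unanimously prefers $i$ over $k$ and $\beta_2>0$ ensures a strictly positive contribution at each swap. A secondary technical point is the identity $\tilde\beta_3=\Delta^2 f_\bb(n-3)=\sum_{k\geq 3}\beta_k\binom{n-3}{k-3}$, which follows from the finite-difference identity $\Delta\binom{t}{j}=\binom{t}{j-1}$ applied twice to $f_\bb(t)=\sum_{k\geq 2}\beta_k\binom{t}{k-1}$.
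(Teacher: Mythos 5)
Your forward direction (iii)$\Rightarrow$(i)$\wedge$(ii) is essentially the paper's argument: the identity $\ddd(\sigma,V)-\ddd(\sigma t_{a,a+1},V)=\beta_2(\mu_i+\mu_j)n_{i,j}(V)$ is exactly equation \eqref{culo2} there, and your winner argument (bubbling a Condorcet candidate upward through weakly improving swaps) is the same. Your $n=3$ counterexample $V_N$ is also correct as far as it goes: I verified that $3$ is a strict Condorcet winner with margin one, that your displayed difference $(\beta_2+\beta_3)(\mu_2+\mu_3)-2N\beta_3\mu_3$ is right, and that for large $N$ the minimiser is $(1,3,2)$ or $(2,3,1)$, which simultaneously kills (i) and the first clause of (ii). (The paper instead uses a tied pair, $n_{1,2}(V)=0$, and violates the second clause of (ii); both are legitimate.)

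The genuine gap is the $n\geq 4$ reduction, precisely at the point you flag as ``the main obstacle.'' Your adjacent-swap argument for the common-tail property does not work for general $\mu\in\mathcal{M}^{++}_n$. When you swap $k\geq 4$ downward past $\ell\in\{1,2,3\}$, a menu $S$ containing $k$, $\ell$, and a further element $i\in\{1,2,3\}$ ranked below them in $\sigma$ has its contribution for a voter with $M(S,v^j)=i\neq\ell$ change from $\mu_k+\mu_i$ to $\mu_\ell+\mu_i$, i.e.\ it \emph{increases} by $\mu_\ell-\mu_k$ whenever $\mu_\ell>\mu_k$. If $\beta_{|S|}$ is large relative to $\beta_2$ and $\ell$ is rarely the voters' choice on such menus, these positive terms can swamp the guaranteed gain $\beta_2(\mu_k+\mu_\ell)$ from the binary menu, so a single adjacent swap need not decrease the aggregate distance. (Concretely, with $n=4$, $\mu_3$ large, $\mu_4$ small, $\beta_3\gg\beta_2$, and voters who mostly prefer $2$ to $3$, swapping $(1,4,3,2)\mapsto(1,3,4,2)$ increases the contribution of $\{2,3,4\}$ by $\mu_3-\mu_4$ per such voter.) Note also that Proposition~\ref{prop:blockpareto} is stated and proved only for the counting measure, and Lemma~\ref{split optimum} requires a strict majority agreeing on the top positions, which your profile (an $N$--$N$--$1$ split) does not provide; so neither existing tool closes the gap. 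The paper sidesteps this entirely by choosing a profile for which, after the harmless relabeling $\mu_1\leq\cdots\leq\mu_n$, the identity permutation minimises $\sum_j\mu(\triangle_A(\cdot,v^j))$ \emph{for every menu $A$ simultaneously}; the interchange $\min_\sigma\sum_A\geq\sum_A\min_\sigma$ then certifies $\Id$ as a global minimiser with no structural lemma about minimisers, and a single menu containing $\{1,2,3\}$ with $\beta_{|A|}>0$ rules out every ranking placing $2$ above $1$. Your profile does not admit this trick (its binary menus and its triple pull in different directions), so you would need either a correct global proof of the common-tail property for non-neutral $\mu$ or a redesigned profile.
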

\noindent Such characterization strenghtens the results of \cite{YoungLev78} highlighting the restrictions to binary comparisons imposed by the Condorcet principle.
\section{Computational Aspects}\label{section:computational}
In this section we investigate the computational aspects of the preference aggregation problem. In §\ref{ILP subsection}, we advance an integer program formulation to compute a consensus preference exactly. Given the impossibility of this approach to scale efficiently with the number of alternatives ($n$) and voters~($m$), we then turn our attention to approximation algorithms. Specifically, in §\ref{Diaconis subsection}, we provide a set of generalized Diaconis-Graham inequalities and a unified framework to analyze them. These allow to obtain approximation algorithms for the corresponding preference aggregation problem. Finally, in §\ref{PTAS subsection}, we propose some truncation ideas building upon in \cite{lu2010unavailable}. By leveraging them, we retrieve a Polynomial-Time Approximation Scheme (PTAS) for the rank aggregation problem.
\subsection{Integer program formulation}\label{ILP subsection}
We recall that for all $\sigma,\pi \in \LL_n$, $\ddd(\sigma,\pi)$ is computable in polynomial time via the following identity
 \begin{equation}\label{Formula Maggica Again}
\ddd(\sigma,\pi) = \sum_{x \in [n]} \left(f_\bb(|x^{\downarrow,\sigma}|) + f_\bb(|x^{\downarrow,\pi}|) - 2f_\bb(|x^{\downarrow,\sigma} \cap x^{\downarrow,\pi}|) \right) \mu(x).
 \end{equation}
\noindent
Observe that any permutation $\sigma \in \LL_n$ can be associated to a binary square matrix $P^{\sigma} \in \{0,1\}^{n \tim n}$ such that, for all $i \neq j$, $P^{\sigma}_{i,j} = 1$ if and only if $i >_{\sigma} j$. Conversely any $P\in \{0,1\}^{n \tim n}$ which satisfies the following properties
\begin{equation}\tag{C.1}\label{Set of constraints 1}
  \begin{aligned}
  &P_{i,i} = 0, \;\textnormal{for all $i$} \in [n] \; \\
  &P_{i,j} + P_{j,i} = 1, \;\textnormal{ for all $i$} \neq j \in [n] \;\textnormal{(completeness)} \\
  &P_{i,j} + P_{j,k} -1 \leq P_{i,k},\;\textnormal{ for all $i,j,k \in [n]$ distinct}  \;\textnormal{ (transitivity)},
  \end{aligned}
\end{equation}
determines a unique linear order. Therefore, $\sigma\mapsto P^{\sigma}$ is a well-defined bijection. Given this formulation, we can see that $|{i}^{\downarrow,\sigma}| = \sum_{j=1}^{n} P^{\sigma}_{i,j}$ and $|{i}^{\downarrow,\sigma} \cap {i}^{\downarrow,\pi}| = \sum_{j=1}^{n} P^{\sigma}_{i,j} P^{\pi}_{i,j}$.

Consider a profile of $m$ voters $V\in \mathbf{V}_n$ and their corresponding matrices $(P^{{v}})_{v=1}^{m}$. By using (\ref{Formula Maggica Again}) and the matrix representation, we can obtain the following function that is equivalent (for optimization purposes) to the objective function of the preference aggregation problem.\footnote{Observe that the middle term depends only of $P^v$ and not on $P$; therefore, we can neglect it when computing an aggregate ranking.}
\begin{equation}\label{matrici ovunque}
\begin{aligned}
&\sum_{v = 1}^{m} \sum_{i = 1}^{n} \left(f_\bb\left(\sum_{j=1}^{n} P_{i,j}\right) - 2f_\bb\left(\sum_{j=1}^{n} P_{i,j} P^{{v}}_{i,j} \right) \right) \mu({i}) = \\
& \sum_{v = 1}^{m} \sum_{i = 1}^{n} \sum_{r = 0}^{n-1} \sum_{s = 0} ^{n-1} \left( f_\bb(r+s) - 2 f_\bb(s) \right) \mu(i) \mathbf{1} \begin{bmatrix}
      \sum_{j=1}^{n} (1-P_{i,j}^{v})P_{i,j} = r \\
      \sum_{j=1}^{n} P_{i,j}^{v}P_{i,j} = s 
     \end{bmatrix}.
\end{aligned}
\end{equation}
We linearize the this function by introducing binary variables $Q^{v,i}_{r,s}$ satisfying the following constraints
\begin{equation}\tag{C.2}\label{Set of constraints 2}
  \begin{aligned}
  &Q^{v,i}_{r,s} \leq 1+\frac{1}{n}\left( \sum_{j=1}^{n} P_{i,j}(1-P_{i,j}^{{v}}) - r \right) \; \textnormal{and} \; Q^{v,i}_{r,s} \leq 1 + 
  \frac{1}{n}\left( r- \sum_{j=1}^{n} P_{i,j}(1-P_{i,j}^{v}) \right)\\
&Q^{v,i}_{r,s} \leq 1 + \frac{1}{n}\left( \sum_{j=1}^{n} P_{i,j}P_{i,j}^{v} - s \right), \; Q^{v,i}_{r,s} \leq 1 + \frac{1}{n} \left( s- \sum_{j=1}^{n} P_{i,j}P_{i,j}^{{v}} \right), \; \textnormal{and} \sum_{r=0}^{n} \sum_{s=1}^{n} Q^{v,i}_{r,s} = 1. \\
  \end{aligned}
\end{equation}
The resulting optimization problem turns out to be 
\begin{equation}
\begin{aligned}
\min_{P,Q}& && \sum_{v = 1}^{m} \sum_{i = 1}^{n} \sum_{r = 0}^{n-1} \sum_{s = 0} ^{n-1} \left( f_\bb(r+s) - 2 f_\bb(s) \right) \mu(i) Q^{v,i}_{r,s} \\
\textrm{s.t.}& &&P \in \{0,1 \}^{n \tim n},\; \forall v \in [m], \forall i \in [n], Q^{v,i} \in \{0,1\}^{n\tim n}, \; \textnormal{\ref{Set of constraints 1}}, \; \textnormal{and} \;\textnormal{\ref{Set of constraints 2}}.
\end{aligned}
\end{equation}

\noindent Notably, this linear integer program is completely specified by $O(m n^3)$ variables and constraints.

\subsection{Generalized Diaconis-Graham inequalities}\label{Diaconis subsection}
Here, we provide a set of generalized Diaconis-Graham inequalities and a unified framework to study them. This allows us to retrieve approximation algorithms for the corresponding rank aggregation problems (see also \cite{dwork2001rank}). The celebrated result by \cite{diaconis1977spearman} states that the Kendall and Spearman's footrule distances are within a factor of two.\footnote{That is $\dd^{\textnormal{apx}} \leq \dd_{K} \leq 2 \dd^{\textnormal{apx}}$, where $\dd^{\textnormal{apx}}$ denotes the Spearman's footrule distance.} We generalize this result to the family of $(\bb,\mu)$-top difference distances. We first focus on the \emph{neutral} case and then we argue how to extend our results when neutrality fails.

\noindent For all $\bb \in \mathbb{R}_{+}^{n-1}$, the $\bb$-\textit{Spearman's footrule distance} is defined as 
\[ \dd^{\textnormal{apx}}_\bb:(\sigma,\pi)\mapsto \sum_{x \in [n]} \big | f_\bb(|x^{\downarrow,\sigma}|) - f_\bb(|x^{\downarrow,\pi}|) \big |. \]
Notably, differently from \eqref{DioMerda}, to compute $\dd^{\textnormal{apx}}_\bb(\sigma,\pi)$ we do not need to evaluate all the intersections~$x^{\downarrow,\sigma}\cap x^{\downarrow,\pi}$. This makes the computation of the $\bb$-Spearman's footrule distance relatively easier than computing our distance.

The main result of this subsection is the following theorem which provides a unified analysis of generalized Diaconis-Graham inequalities. 
\begin{theorem}\label{general approximation thm}
Let $\bb \in \mathbb{R}_{+}^{n-1}$ with $\beta_2> 0$. Then, for all $\sigma, \pi \in \mathbb{S}_{n}$,
\[ \dd_\bb^{\textnormal{apx}}(\sigma, \pi) \leq \dd_\bb(\sigma, \pi) \leq \gamma_\bb \dd_\bb^{\textnormal{apx}}(\sigma, \pi),\]
\noindent
where $ \gamma_\bb := \max_{h \in [n]} \left ( 1+\frac{f_\bb(n-h)}{f_\bb(n-(h-1))} \right ).$
\end{theorem}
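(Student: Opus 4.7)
The plan is to use the closed form of $\dd_\bb$ provided by Formula~\eqref{Formula Maggica Again} (with $\mu$ the counting measure), establish the left inequality pointwise in $x$, and then reduce the right inequality to a weighted-level combinatorial estimate.

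\textbf{Setup and left inequality.} Write $a_x:=|x^{\downarrow,\sigma}|$, $b_x:=|x^{\downarrow,\pi}|$, and $c_x:=|x^{\downarrow,\sigma}\cap x^{\downarrow,\pi}|$. Formula~\eqref{Formula Maggica Again} gives $\dd_\bb(\sigma,\pi)=\sum_{x\in[n]}(f_\bb(a_x)+f_\bb(b_x)-2f_\bb(c_x))$, while by definition $\dd_\bb^{\textnormal{apx}}(\sigma,\pi)=\sum_{x\in[n]}|f_\bb(a_x)-f_\bb(b_x)|$. Since $\bb\in\R_{+}^{n-1}$ with $\beta_2>0$, $f_\bb$ is strictly increasing and $f_\bb(0)=0$. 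Because $c_x\le\min(a_x,b_x)$, monotonicity produces $f_\bb(c_x)\le\min(f_\bb(a_x),f_\bb(b_x))$ and therefore the pointwise bound $|f_\bb(a_x)-f_\bb(b_x)|\le f_\bb(a_x)+f_\bb(b_x)-2f_\bb(c_x)$; summing over $x$ yields $\dd_\bb^{\textnormal{apx}}\le\dd_\bb$.

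\textbf{Right inequality — reduction and level decomposition.} Setting $m_x:=\min(a_x,b_x)$, $M_x:=\max(a_x,b_x)$, the algebraic identity $f_\bb(a_x)+f_\bb(b_x)-2f_\bb(c_x)=|f_\bb(a_x)-f_\bb(b_x)|+2(f_\bb(m_x)-f_\bb(c_x))$ rewrites the target as
\[2\sum_{x\in[n]}(f_\bb(m_x)-f_\bb(c_x))\le (\gamma_\bb-1)\,\dd_\bb^{\textnormal{apx}}.\]
I would then introduce $w_\ell:=f_\bb(\ell+1)-f_\bb(\ell)\ge 0$ and the counts $P_\ell:=|\{x:c_x\le\ell<m_x\}|$ and $Q_\ell:=|\{x:m_x\le\ell<M_x\}|$. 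Exchanging orders of summation gives $\sum_x(f_\bb(m_x)-f_\bb(c_x))=\sum_\ell w_\ell P_\ell$ and $\dd_\bb^{\textnormal{apx}}=\sum_\ell w_\ell Q_\ell$, so the inequality becomes $2\sum_\ell w_\ell P_\ell\le(\gamma_\bb-1)\sum_\ell w_\ell Q_\ell$. Note that $\gamma_\bb-1=\max_{s\in\{0,\ldots,n-1\}}f_\bb(s)/f_\bb(s+1)$ is exactly the worst adjacent-level weight ratio of $f_\bb$; moreover, the union bound $x^{\downarrow,\sigma}\cup x^{\downarrow,\pi}\subseteq[n]\setminus\{x\}$ forces $c_x\ge m_x+M_x-(n-1)$, so that the $P$-slice $[c_x,m_x-1]$ of each $x$ is controlled by $(n-1)-M_x$. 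The final step is to charge each $P_\ell$-contribution against $Q_{\ell'}$-contributions at suitable levels $\ell'\le\ell$, with the weight loss uniformly bounded by $\gamma_\bb-1$.

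\textbf{Main obstacle.} The hardest part is that $2\sum_\ell w_\ell P_\ell\le(\gamma_\bb-1)\sum_\ell w_\ell Q_\ell$ cannot be proved pointwise in $x$: whenever $a_x=b_x$, the element $x$ may contribute to $\sum_\ell w_\ell P_\ell$ without contributing to $\sum_\ell w_\ell Q_\ell$ at all. A global amortisation is therefore necessary, exploiting the structural fact that $c_x<m_x$ forces both $A'_x:=x^{\uparrow,\sigma}\cap x^{\downarrow,\pi}$ and $B'_x:=x^{\downarrow,\sigma}\cap x^{\uparrow,\pi}$ to be non-empty; every $y\in B'_x$ satisfies $a_y<b_y$ and every $z\in A'_x$ satisfies $a_z>b_z$, so each "balanced" $x$ automatically comes paired with witnesses that do contribute to $\dd_\bb^{\textnormal{apx}}$. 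Distributing the $P$-weight of each balanced $x$ across these witnesses, while checking that the resulting level-wise weight losses are uniformly bounded by $\gamma_\bb-1$, is the delicate combinatorial step and constitutes the main technical difficulty.
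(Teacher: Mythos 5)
Your left inequality and your reduction of the right inequality are both correct: the identity $f_\bb(a_x)+f_\bb(b_x)-2f_\bb(c_x)=|f_\bb(a_x)-f_\bb(b_x)|+2\left(f_\bb(m_x)-f_\bb(c_x)\right)$ does hold, the level decomposition $\sum_x\left(f_\bb(m_x)-f_\bb(c_x)\right)=\sum_\ell w_\ell P_\ell$ and $\dd_\bb^{\textnormal{apx}}=\sum_\ell w_\ell Q_\ell$ is sound, and $\gamma_\bb-1=\max_{s\in\{0,\ldots,n-1\}}f_\bb(s)/f_\bb(s+1)$ is the right reformulation of the constant. The problem is that the inequality $2\sum_\ell w_\ell P_\ell\leq(\gamma_\bb-1)\sum_\ell w_\ell Q_\ell$ \emph{is} the theorem, and you do not prove it: you describe a charging scheme and then explicitly defer its verification as ``the main technical difficulty.'' That is a genuine gap, not a routine detail. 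Moreover, the obstacle is larger than you state. You locate the difficulty only in the balanced elements ($a_x=b_x$), but the pointwise comparison already fails for unbalanced $x$: in the Kendall case $f_\bb(t)=t$ one can have $a_x=3$, $b_x=2$, $c_x=0$, so $x$ contributes $2(m_x-c_x)=4$ to the left side and only $|a_x-b_x|=1$ to $\dd_\bb^{\textnormal{apx}}$, while $\gamma_\bb-1<1$. So the amortisation must redistribute weight globally across essentially all elements, not merely pair each balanced $x$ with witnesses in $A'_x\cup B'_x$; and your witness claim (``every $y\in B'_x$ satisfies $a_y<b_y$'') is only valid under the balancedness hypothesis $a_x=b_x$ (it follows from $a_y<a_x=b_x<b_y$), not for general $x$ as written. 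Nothing in the proposal controls how many distinct $x$ charge the same witness, nor at which levels $\ell'$ the witness's $Q$-mass sits relative to the levels $\ell$ being charged, and these are exactly the points where the constant $\gamma_\bb$ has to emerge.

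For comparison, the paper avoids the static charging problem entirely: it sorts $\sigma$ into $\Id$ by a bottom-up sequence of transpositions $\sigma^0,\sigma^1,\ldots,\sigma^K=\Id$, computes the increments $\Delta_k=\lVert\sigma^k\rVert_\bb-\lVert\sigma^{k+1}\rVert_\bb$ and $\Delta_k^{\textnormal{apx}}$ exactly, shows $\Delta_k\leq0$ whenever $\Delta_k^{\textnormal{apx}}=0$, and bounds $\Delta_k/\Delta_k^{\textnormal{apx}}\leq\gamma_\bb$ on the remaining steps by comparing $\sigma^k$ to an auxiliary permutation $\omega$ that agrees with $\Id$ off the transposed block; telescoping then gives the ratio bound. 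The amortisation you are missing is performed there dynamically, one transposition at a time. Your decomposition could in principle lead to an alternative proof, but as it stands the crucial combinatorial estimate is asserted rather than established.
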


Theorem \ref{general approximation thm} implies that for each choice of the weights $\bb$, the $\bb$-top difference metric and~$\bb$-Spearman's footrule distance remain within a factor of two. Moreover, Theorem \ref{general approximation thm} can be applied to specific weights in order to obtain tighter bounds. In particular,

\begin{corollary}
For all permutations $\sigma, \pi \in \mathbb{S}_{n}$, it holds
\begin{itemize}
\item For $\bb = (1,0,\ldots,0)$,
$\dd_\bb^{\textnormal{apx}}(\sigma, \pi) \leq \dd_\bb(\sigma, \pi) \leq 2 \dd_\bb^{\textnormal{apx}}(\sigma, \pi)$, 
\item For $\bb = (1,1,\ldots,1)$, $\dd_\bb^{\textnormal{apx}}(\sigma, \pi) \leq \dd_\bb(\sigma, \pi) \leq \frac{3}{2} \dd_\bb^{\textnormal{apx}}(\sigma, \pi)$,
\item For $\beta_k = p^{n-k}(1-p)^k$ with $p \in (0,1)$, $\dd_\bb^{\textnormal{apx}}(\sigma, \pi) \leq \dd_\bb(\sigma, \pi) \leq (1+p) \dd_\bb^{\textnormal{apx}}(\sigma, \pi)$
\item For $\beta_k = k$, $\dd_\bb^{\textnormal{apx}}(\sigma, \pi) \leq \dd_\bb(\sigma, \pi) \leq \frac{3}{2} \dd_\bb^{\textnormal{apx}}(\sigma, \pi)$.


\end{itemize}
\end{corollary}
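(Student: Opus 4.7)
The plan is to obtain each bound as a direct specialization of Theorem~\ref{general approximation thm}: the left inequality is automatic, so it suffices to upper bound
$$\gamma_\bb=\max_{h\in[n]}\left(1+\frac{f_\bb(n-h)}{f_\bb(n-h+1)}\right)$$
by computing the generating function $f_\bb(t)=\sum_{k=2}^n\beta_k\binom{t}{k-1}$ in closed form for each weight vector, and then checking a single monotonicity-type inequality on the ratios $f_\bb(t)/f_\bb(t+1)$.

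For the first case, $\bb=(1,0,\ldots,0)$ gives $f_\bb(t)=t$, so the ratio is $(n-h)/(n-h+1)\leq (n-1)/n<1$, whence $\gamma_\bb<2$. For $\bb=(1,\ldots,1)$ the identity $\sum_{j=0}^{n-1}\binom{t}{j}=2^t$ (valid for $0\leq t\leq n-1$) yields $f_\bb(t)=2^t-1$, and a short binomial computation gives $f_\bb(n)=2^n-2$; in either range the telescoping inequality $2(2^t-1)\leq 2^{t+1}-1$ implies $f_\bb(t)/f_\bb(t+1)\leq 1/2$, so $\gamma_\bb\leq 3/2$. The case $\beta_k=k$ is similar: writing $\sum_{k=2}^n k\binom{t}{k-1}=\sum_{j=1}^{n-1}(j+1)\binom{t}{j}$ and using the classical identities $\sum_j j\binom{t}{j}=t\,2^{t-1}$ and $\sum_j\binom{t}{j}=2^t-1$, I would obtain $f_\bb(t)=2^{t-1}(t+2)-1$ for $t\leq n-1$ and check by cross-multiplying that $2f_\bb(t)\leq f_\bb(t+1)$ reduces to $2^{t}\geq t+1$, which is elementary; the boundary $t=n$ is handled analogously.

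For the exponential-weight case $\beta_k=p^{n-k}(1-p)^k$, I would apply the binomial theorem with $x=(1-p)/p$ to rewrite
$$f_\bb(t)=p^{n-1}(1-p)\sum_{j=1}^{n-1}\left(\tfrac{1-p}{p}\right)^j\binom{t}{j}=(1-p)\bigl(p^{n-1-t}-p^{n-1}\bigr)\quad\text{for }t\leq n-1,$$
and then the ratio at $h\geq 2$ simplifies to $(p^{h-1}-p^{n-1})/(p^{h-2}-p^{n-1})\leq p$ by an elementary manipulation (cross-multiply and use $p\in(0,1)$). The boundary case $h=1$ needs $f_\bb(n)$ computed separately: the sum truncates at $k=n$, so one uses $\sum_{j=0}^n\binom{n}{j}((1-p)/p)^j=(1/p)^n$ minus the $j=n$ term to get a closed form, and then verifies the analogous ratio bound. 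Together these give $\gamma_\bb\leq 1+p$.

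The main technical obstacle is the uniform bound on $f_\bb(t)/f_\bb(t+1)$ across all admissible $t$: even though each ratio tends to the ``bulk'' value (respectively $1$, $1/2$, $p$, $1/2$) for large $t$, one has to verify monotonicity of the ratio in $t$ or handle the boundary $t=n-1$ (where the truncation of the binomial sum at $k=n$ changes the formula) as a separate case. In each instance this reduces to an elementary inequality that can be verified by cross-multiplication, so no substantively new machinery is needed beyond Theorem~\ref{general approximation thm} itself.
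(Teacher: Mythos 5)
Your overall strategy is the intended one: the paper gives no separate proof of this corollary, treating it as a computation of $\gamma_\bb$ from Theorem \ref{general approximation thm}, and your closed forms for $f_\bb$ and your ratio bounds for the first, second, and fourth bullets are correct (including the boundary evaluation at $t=n$, which works out in those cases). The genuine gap is in the exponential case at the boundary $h=1$. There you propose to compute $f_\bb(n)$ and ``verify the analogous ratio bound,'' but that verification fails: using your own closed form one finds that $f_\bb(n-1)/f_\bb(n)\leq p$ is equivalent to $(1-p)^n+p^n\leq p^{n-1}$, which is false for small $p$. Concretely, for $n=3$ and $p=0.1$ one gets $f_\bb(2)=0.891$ and $f_\bb(3)=2.43$, so $f_\bb(n-1)/f_\bb(n)\approx 0.367>p$, and hence $\max_{h\in[n]}\bigl(1+f_\bb(n-h)/f_\bb(n-h+1)\bigr)\approx 1.37>1+p=1.1$. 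So the third bullet does not follow from the literal constant $\gamma_\bb$ as defined in the theorem statement.

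The repair is to observe that the $h=1$ term is never actually used: in the proof of Theorem \ref{general approximation thm}, inequality \eqref{DIOCANEEEEEEEEE} bounds $\Delta_k/\Delta_k^{\textnormal{apx}}$ by $1+f_\bb(n-(r+1))/f_\bb(n-r)$ with $r\geq1$, so only the ratios indexed by $h=r+1\in[2,n]$ ever appear (equivalently, $f_\bb$ is only evaluated at arguments $\leq n-1$ in \eqref{DioMerda}, since $|x^{\downarrow,\sigma}|\leq n-1$). The effective approximation constant is therefore $\max_{h\in[2,n]}\bigl(1+f_\bb(n-h)/f_\bb(n-h+1)\bigr)$, and for that restricted maximum your computation $(p^{h-1}-p^{n-1})/(p^{h-2}-p^{n-1})\leq p$ does give $1+p$. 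Without this observation your argument for the third bullet cannot be completed; with it, the other three bullets go through unchanged (their $h=1$ terms happen to satisfy the same bounds anyway, as you checked).
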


In their seminal work, \cite{dwork2001rank} proposed to obtain polynomial-time approximation algorithms for the Kemeny rank aggregation problem by minimizing the sum of Spearman's footrule distances via a minimum cost perfect matching algorithm. 

Following a similar spirit, we leverage on Theorem \ref{general approximation thm} to provide approximation algorithms for our rank aggregation problems. First, we show that, by minimizing the sum of $\bb$-Spearman's footrule distances, we obtain $\gamma_\bb$-approximations for the corresponding rank aggregation problem. Using the notation introduced in §\ref{section:voting}, we let
\[
P^{\textnormal{apx}}_\bb(V):=\underset{\sigma \in \LL_n}{\textnormal{argmin}} \; \dd^{\textnormal{apx}}_\bb(\sigma,V),
\]
for all $V\in \V$ and $\bb\in \R^{n-1}$.
\begin{corollary}\label{corollary approximation}
Let $\bb \in \mathbb{R}^{n-1}_{+}$ with $\beta_2 > 0$, $V\in \V$, and $\sigma^{\textnormal{apx}} \in P^{\textnormal{apx}}_\bb(V)$.
\noindent
Then,
\[ \min_{\sigma \in \LL_n} \dd_\bb(\sigma,V) \leq \dd_\bb(\sigma^{\textnormal{apx}},V) \leq \gamma_\bb \min_{\sigma \in \LL_n} \dd_\bb(\sigma,V).\]
\end{corollary}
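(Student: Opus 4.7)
The result is a textbook chaining argument using Theorem \ref{general approximation thm}; no technical obstacle is expected. The left inequality, $\min_{\sigma\in\LL_n}\dd_\bb(\sigma,V)\leq\dd_\bb(\sigma^{\textnormal{apx}},V)$, is immediate since $\sigma^{\textnormal{apx}}$ is just one feasible point. All work is in the right inequality.

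First I would lift the pointwise bounds of Theorem \ref{general approximation thm} to voter profiles. Since $\dd_\bb(\sigma,V)=\sum_{j=1}^{|V|}\dd_\bb(\sigma,v^j)$ and likewise for $\dd_\bb^{\textnormal{apx}}$, summing the two-sided inequality $\dd_\bb^{\textnormal{apx}}(\sigma,v^j)\leq \dd_\bb(\sigma,v^j)\leq \gamma_\bb \dd_\bb^{\textnormal{apx}}(\sigma,v^j)$ over $j$ gives
\[
\dd_\bb^{\textnormal{apx}}(\sigma,V)\leq \dd_\bb(\sigma,V)\leq \gamma_\bb \dd_\bb^{\textnormal{apx}}(\sigma,V)\qquad\text{for every }\sigma\in\LL_n.
\]

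Next, let $\sigma^\star\in\argmin_{\sigma\in\LL_n}\dd_\bb(\sigma,V)$ be an exact optimum for the true aggregation problem. The key chain combines three facts: (a) the upper bound of the previous display applied to $\sigma^{\textnormal{apx}}$, (b) the optimality of $\sigma^{\textnormal{apx}}$ for the surrogate problem (by definition of $P_\bb^{\textnormal{apx}}$), and (c) the lower bound of the previous display applied to $\sigma^\star$. Concretely,
\[
\dd_\bb(\sigma^{\textnormal{apx}},V)\;\leq\;\gamma_\bb\,\dd_\bb^{\textnormal{apx}}(\sigma^{\textnormal{apx}},V)\;\leq\;\gamma_\bb\,\dd_\bb^{\textnormal{apx}}(\sigma^\star,V)\;\leq\;\gamma_\bb\,\dd_\bb(\sigma^\star,V)\;=\;\gamma_\bb\min_{\sigma\in\LL_n}\dd_\bb(\sigma,V),
\]
which is the desired bound.

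The only subtlety worth flagging is verifying that the hypotheses of Theorem \ref{general approximation thm}, namely $\bb\in\R_+^{n-1}$ with $\beta_2>0$, are exactly the ones assumed in the statement of the corollary, so the theorem applies off the shelf; the constant $\gamma_\bb=\max_{h\in[n]}(1+f_\bb(n-h)/f_\bb(n-(h-1)))$ appearing in the conclusion is inherited unchanged from there. No additional computation is needed.
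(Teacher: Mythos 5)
Your proposal is correct and matches the paper's own proof essentially verbatim: both chain the two-sided bound of Theorem \ref{general approximation thm} (summed over voters) with the optimality of $\sigma^{\textnormal{apx}}$ for the surrogate objective, i.e.\ $\dd_\bb(\sigma^{\textnormal{apx}},V)\leq\gamma_\bb\dd_\bb^{\textnormal{apx}}(\sigma^{\textnormal{apx}},V)\leq\gamma_\bb\dd_\bb^{\textnormal{apx}}(\sigma^{*},V)\leq\gamma_\bb\dd_\bb(\sigma^{*},V)$. Nothing further is needed.
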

\noindent
The Hungarian algorithm can be implemented to retrieve a permutation minimizing $\dd^{\textnormal{apx}}_\bb(\cdot,V)$.

\begin{lemma}\label{hungarian lemma}
 Let $\bb \in \mathbb{R}_{+}^{n-1}$, $V\in \V$, and $\sigma^{\textnormal{apx}} \in P^{\textnormal{apx}}_\bb(V)$. Then, $\sigma^{\textnormal{apx}}$ is computable in $O(m n^3)$.
\end{lemma}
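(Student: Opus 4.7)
The plan is to reduce the minimization of $\dd^{\textnormal{apx}}_\bb(\cdot,V)$ over $\LL_n$ to a balanced assignment problem and then invoke the Hungarian algorithm. The key observation is that, by definition,
\[
\dd^{\textnormal{apx}}_\bb(\sigma,\pi)=\sum_{x\in[n]}\bigl|f_\bb(|x^{\downarrow,\sigma}|)-f_\bb(|x^{\downarrow,\pi}|)\bigr|,
\]
and the summand at $x$ depends on $\sigma$ only through $|x^{\downarrow,\sigma}|=n-\sigma^{-1}_x$, i.e.\ through the \emph{position} that $\sigma$ assigns to $x$. Hence, for any profile $V=(v^1,\ldots,v^m)$, the aggregate objective decomposes additively over alternatives and positions:
\[
\dd^{\textnormal{apx}}_\bb(\sigma,V)\;=\;\sum_{x\in[n]} c\bigl(x,\sigma^{-1}_x\bigr),
\qquad
c(x,p)\;:=\;\sum_{j=1}^{m}\bigl|f_\bb(n-p)-f_\bb(|x^{\downarrow,v^{j}}|)\bigr|.
\]
Minimizing this sum over $\sigma\in\LL_n$ is exactly the problem of finding a bijection $p:[n]\to[n]$ (namely $p=\sigma^{-1}$) that minimizes $\sum_{x}c(x,p(x))$: a balanced linear assignment problem with cost matrix $C=(c(x,p))_{x,p\in[n]}$.

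The implementation I would carry out has three steps. First, precompute the values $f_\bb(t)$ for $t=0,\ldots,n-1$; using Pascal's identity on the binomial coefficients in $f_\bb(t)=\sum_{k=2}^n\beta_k\binom{t}{k-1}$, the full table costs $O(n^2)$. Second, for each voter $v^j$ and each $x\in[n]$, read off $|x^{\downarrow,v^j}|$ in constant time after a single $O(n)$ pass through $v^j$; this yields the $mn$ scalars $f_\bb(|x^{\downarrow,v^j}|)$ in time $O(mn)$. Third, for each of the $n^2$ entries $c(x,p)$, sum the $m$ absolute differences in $O(m)$ time, so building $C$ costs $O(mn^2)$. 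Finally, feed $C$ to the Hungarian algorithm, which returns an optimal assignment in $O(n^3)$; the permutation $\sigma^{\textnormal{apx}}$ is recovered by setting $\sigma^{\textnormal{apx}}_{p(x)}=x$.

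Summing the stages gives a total running time of $O(n^2+mn+mn^2+n^3)=O(mn^2+n^3)\subseteq O(mn^3)$, matching the claimed bound. There is no real obstacle here beyond the initial reduction: once additive separability of $\dd^{\textnormal{apx}}_\bb$ in the positions is noted, the problem is a textbook assignment instance and the complexity bookkeeping is routine.
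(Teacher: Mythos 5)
Your proposal is correct and follows essentially the same route as the paper: both reduce the minimization of $\dd^{\textnormal{apx}}_\bb(\cdot,V)$ to a balanced assignment problem between alternatives and positions with cost $c(x,p)=\sum_j\bigl|f_\bb(n-p)-f_\bb(|x^{\downarrow,v^j}|)\bigr|$, and both solve it with the Hungarian algorithm after precomputing $f_\bb$ via Pascal's identity. Your bookkeeping for building the cost matrix ($O(mn^2)$ by caching the $|x^{\downarrow,v^j}|$ first) is slightly tighter than the paper's $O(mn^3)$, but both land within the claimed $O(mn^3)$ bound.
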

\noindent
By combining Theorem \ref{general approximation thm} with Corollary \ref{corollary approximation} and Lemma \ref{hungarian lemma}, we obtain the following. 
\begin{theorem}\label{theoremungulato}
Let $\bb \in \mathbb{R}_{+}^{n-1}$ with $\beta_2> 0$ and $V \in \mathbf{V}_n$. Then, there exists a $\gamma_\bb$-approximation algorithm for the $\bb$-rank aggregation problem running in time $O(m n ^3)$.
\end{theorem}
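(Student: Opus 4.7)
The plan is to assemble the three immediately preceding results, since Theorem \ref{theoremungulato} is essentially their composition. The workflow I would adopt is: (i) use Lemma \ref{hungarian lemma} to compute, in time $O(mn^3)$, a permutation $\sigma^{\textnormal{apx}}\in P^{\textnormal{apx}}_\bb(V)$, i.e., a minimizer of the aggregate $\bb$-Spearman's footrule objective $\dd^{\textnormal{apx}}_\bb(\,\cdot\,,V)$; (ii) invoke Corollary \ref{corollary approximation}, whose conclusion
\[
\min_{\sigma\in\LL_n}\dd_\bb(\sigma,V)\ \leq\ \dd_\bb(\sigma^{\textnormal{apx}},V)\ \leq\ \gamma_\bb\min_{\sigma\in\LL_n}\dd_\bb(\sigma,V)
\]
says precisely that $\sigma^{\textnormal{apx}}$ is a $\gamma_\bb$-approximation to the $\bb$-rank aggregation problem.

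Concretely, I would describe the algorithm explicitly: form the $n\times n$ cost matrix
\[
C_{c,p}\ :=\ \sum_{v=1}^{m}\Bigl|\,f_\bb(n-p)\,-\,f_\bb(|c^{\downarrow,v}|)\,\Bigr|,
\]
interpreted as the total $\bb$-Spearman's footrule contribution of placing candidate $c$ in position $p$; then run the Hungarian algorithm on $C$, which outputs a permutation minimizing $\sum_{v}\dd^{\textnormal{apx}}_\bb(\,\cdot\,,v)$ by the decomposability identity used to prove Lemma \ref{hungarian lemma}. The sizes of $c^{\downarrow,v}$ and the relevant values of $f_\bb$ can be precomputed, assembling $C$ in $O(mn^2)$ time, and the Hungarian solver contributes an additional $O(n^3)$, for an overall bound dominated by $O(mn^3)$, as claimed.

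The correctness and the approximation guarantee then follow immediately: Theorem \ref{general approximation thm} gives the pointwise sandwich $\dd_\bb^{\textnormal{apx}}\leq \dd_\bb\leq \gamma_\bb\,\dd_\bb^{\textnormal{apx}}$, which, summed over $V$, yields the bound of Corollary \ref{corollary approximation} when applied to $\sigma^{\textnormal{apx}}$ and any optimizer of $\dd_\bb(\,\cdot\,,V)$. Since the hypothesis $\beta_2>0$ is exactly what is needed to make $\gamma_\bb$ finite (it ensures $f_\bb$ does not vanish on the positive integers relevant to the $\max$), the conclusion holds.

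The genuine obstacles are not in the present statement but in its two inputs. The hard analytic step lies inside Theorem \ref{general approximation thm}, whose proof must exploit the structure of $f_\bb$ to turn the $O(n^2)$ expansion of $\dd_\bb$ via \eqref{DioMerda} into a term-by-term comparison with $\dd^{\textnormal{apx}}_\bb$ and extract the explicit constant $\gamma_\bb$. The algorithmic step in Lemma \ref{hungarian lemma} requires verifying that the $\bb$-Spearman's footrule minimization decomposes additively into an assignment problem whose cost matrix is exactly $C$ above. Once those two are in place, the proof of Theorem \ref{theoremungulato} is the one-line composition sketched in the first paragraph.
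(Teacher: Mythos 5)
Your proposal is correct and matches the paper's proof exactly: the paper proves Theorem \ref{theoremungulato} by the same one-line composition of Theorem \ref{general approximation thm}, Corollary \ref{corollary approximation}, and Lemma \ref{hungarian lemma}, with the bipartite cost matrix and Hungarian-algorithm details you describe appearing verbatim inside the proof of Lemma \ref{hungarian lemma}. Your identification of where the real work lies (the sandwich inequality of Theorem \ref{general approximation thm} and the assignment-problem decomposition) is also accurate.
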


Finally, we observe that the machinery developed above can accommodate some amount of non-neutrality, as captured by the \textit{range} of $\mu$. 

\begin{corollary}\label{corollariocinghiale}
Let $\bb \in \mathbb{R}^{n-1}_{+}$ with $\beta_2 > 0$, and $\mu \in \mathcal{M}_n^{++}$ with $u \leq \mu_i \leq U$, for some positive $u$ and~$U$. Then, for all $\sigma, \pi \in \LL_n$, it holds
\[ \dd^{\textnormal{apx}}_{\mu,\bb}(\sigma,\pi) \leq \ddd(\sigma,\pi) \leq \gamma_\bb \frac{U}{u} \dd^{\textnormal{apx}}_{\mu,\bb}(\sigma,\pi),\]
\noindent
where $\dd^{\textnormal{apx}}_{\mu,\bb}(\sigma,\pi) := \sum_{x \in [n]} \big | f_\bb(|x^{\downarrow,\sigma}|) - f_\bb(|x^{\downarrow,\pi}|) \big |\mu(x)$.
\end{corollary}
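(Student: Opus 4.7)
\medskip

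\noindent\textbf{Proof proposal.} The plan is to reduce the two-sided bound to the neutral version proved in Theorem~\ref{general approximation thm}, extracting the measure $\mu$ through the uniform bounds $u\leq\mu_i\leq U$. For each $x\in[n]$, abbreviate $a_x:=|x^{\downarrow,\sigma}\cap x^{\downarrow,\pi}|$, and set
\[
T_x:=f_\bb(|x^{\downarrow,\sigma}|)+f_\bb(|x^{\downarrow,\pi}|)-2f_\bb(a_x),\qquad A_x:=\bigl|f_\bb(|x^{\downarrow,\sigma}|)-f_\bb(|x^{\downarrow,\pi}|)\bigr|,
\]
so that by formula \eqref{DioMerda} we can write $\ddd(\sigma,\pi)=\sum_{x}T_x\,\mu(x)$ and $\dd^{\textnormal{apx}}_{\mu,\bb}(\sigma,\pi)=\sum_{x}A_x\,\mu(x)$.

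\medskip

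For the lower bound I would first establish the pointwise inequality $A_x\leq T_x$ for every $x\in[n]$. Since $\bb\in\R_{+}^{n-1}$, the function $f_\bb(t)=\sum_{k=2}^n\beta_k\binom{t}{k-1}$ is non-decreasing in $t$, and the inclusions $x^{\downarrow,\sigma}\cap x^{\downarrow,\pi}\subseteq x^{\downarrow,\sigma},x^{\downarrow,\pi}$ yield $f_\bb(a_x)\leq\min\{f_\bb(|x^{\downarrow,\sigma}|),f_\bb(|x^{\downarrow,\pi}|)\}$. A straightforward \textquotedblleft max~$+$~min~$-$~$2$min~$\geq$~max~$-$~min\textquotedblright{} computation then gives $A_x\leq T_x$. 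Multiplying by the non-negative weight $\mu(x)$ and summing over $x$ proves $\dd^{\textnormal{apx}}_{\mu,\bb}(\sigma,\pi)\leq\ddd(\sigma,\pi)$.

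\medskip

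For the upper bound the plan is a three-step sandwich. Using $\mu(x)\leq U$ for the summands (all of which are non-negative by the previous step), I would bound
\[
\ddd(\sigma,\pi)\;=\;\sum_{x\in[n]}T_x\,\mu(x)\;\leq\; U\sum_{x\in[n]}T_x\;=\;U\,\dd_\bb(\sigma,\pi).
\]
Applying Theorem~\ref{general approximation thm} in its neutral form gives $\dd_\bb(\sigma,\pi)\leq\gamma_\bb\,\dd^{\textnormal{apx}}_\bb(\sigma,\pi)=\gamma_\bb\sum_{x}A_x$. Finally, the lower bound $\mu(x)\geq u>0$ yields $\sum_{x}A_x\leq u^{-1}\sum_{x}A_x\mu(x)=u^{-1}\dd^{\textnormal{apx}}_{\mu,\bb}(\sigma,\pi)$. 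Chaining the three inequalities produces $\ddd(\sigma,\pi)\leq\gamma_\bb\,(U/u)\,\dd^{\textnormal{apx}}_{\mu,\bb}(\sigma,\pi)$, which is the desired estimate.

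\medskip

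The only non-mechanical point is the pointwise inequality $A_x\leq T_x$, which could in principle fail for non-monotone weight systems; the hypothesis $\bb\in\R_+^{n-1}$ (ensuring monotonicity of $f_\bb$) is exactly what makes the argument work. Everything else is a direct combination of Theorem~\ref{general approximation thm} with the crude envelopes $u\leq\mu\leq U$, which explains the appearance of the factor $U/u$ as the price to pay for leaving the neutral setting.
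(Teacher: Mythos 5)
Your proposal is correct and follows essentially the same chain as the paper's proof, namely $\dd^{\textnormal{apx}}_{\mu,\bb}\leq\ddd\leq U\dd_\bb\leq\gamma_\bb U\dd^{\textnormal{apx}}_\bb\leq\gamma_\bb(U/u)\dd^{\textnormal{apx}}_{\mu,\bb}$; the only difference is that you spell out the first inequality via the pointwise bound $A_x\leq T_x$, whereas the paper simply cites ``the same arguments underlying Theorem~\ref{general approximation thm}.'' No gaps.
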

We address the question of whether it is possible to obtain competitive ratios that do not depend on the support of $\mu$ in the following section.

\subsection{Polynomial-Time-Approximation schemes}\label{PTAS subsection}
In this final subsection, by building upon \cite{lu2010unavailable}, we provide a PTAS for the $(\bb,\mu)$-rank aggregation problem (Algorithm \ref{MyopicTop}). We defer the reader to the discussion paragraph below Theorem \ref{PTAS} for a comparison with \cite{lu2010unavailable}.
 
Recall that our class of metrics penalize more mismatches happening over the most favorite alternatives compared to disagreements over least favorite ones. The intuition behind the Algorithm~\ref{MyopicTop} is simple yet effective: as disagreements over least favorite alternatives are less important, we can ``truncate'' our metrics to account only for mismatches over the top positions and minimize the sum of these truncated ``distances''. We informally refer to the corresponding minimization problem as truncated rank aggregation problem. Alternatives in the remaining positions are arbitrarily ordered. By appropriately choosing the truncation depth, we can ensure simultaneously a polynomial running time for the algorithm and a small error due to the arbitrary ordering for the least preferred alternatives.
We now introduce the truncation that we are going to exploit subsequently. Let $\mu \in \mathcal{M}^+_n$, $\bb \in \mathbb{R}_{+}^{n-1}$, and $b,t \in \mathbb{N}$ such that $1\leq b \leq t \leq n$. The $(\mu,\bb,b,t)$-\textit{top-difference truncation} is defined as
\begin{equation}\label{Truncation}
\ddd(\cdot,\cdot;b,t):(\sigma,\pi)\mapsto \sum_{i=b}^{t}\left[ f_\bb(n-i) \left(\mu({\sigma_{i}}) + \mu({\pi_{i}})\right) - 2 f_\bb(| {\sigma_{i}}^{\downarrow,\sigma} \cap {\sigma_{i}}^{\downarrow,\pi}|)\mu({\sigma_{i}})\right].
\end{equation}
Notice that, for all permutations $\sigma,\pi,\omega,\tau \in \LL_n$ agreeing on positions $b \leq i \leq t$, we have that~$\ddd(\sigma,\pi;b,t) = \ddd(\omega,\tau;b,t)$. For simplicity we denote $\ddd(\sigma,V;b,t) := \sum_{j=1}^{|V|} \ddd(\sigma,v^j;b,t)$, for all $V=(v^j)\in \V$. It is observed that $\dd^{\mu}_\bb(\cdot,\cdot;1,n)=\dd^\mu_\bb(\cdot,\cdot)$. Given the above notation, Algorithm \ref{MyopicTop} works as follows

\vspace{0.25cm}

\begin{algorithm}[H]\label{MyopicTop}
{\small
\SetKwInput{Init}{Input}
\SetKwInOut{Out}{Output}
\SetKwInput{Par}{Parameters}
 \Init{$K \geq 1$, $V = (v^1,\ldots,v^m)$, $\bb \in \mathbb{R}_{+}^{n-1}$, and $\mu\in \mathcal{M}_n$.}
 Set $Q \leftarrow [n]$\\
 Set $i \leftarrow 1$\\
 \While{$ \exists c \in Q: \left|\left\{ j \in [m] \mid \forall a\in Q \setminus \{c\},\ c >_{v^j} a  \right\}\right| > \frac{m}{2} $}{
 \vspace{0.1cm}
 ${\sigma_{i}} \leftarrow c$\\
 $Q \leftarrow Q \setminus \{c\}$\\
 $i \leftarrow i+1$
 }
 $K \leftarrow \min\{K,|Q|\}$ \\
 $\sigma^{\textnormal{apx}} \in \argmin_{\omega \in \LL_n } \ddd(\omega,V;i,i+K-1)$ s.t. $\omega_{k} = \sigma_{k}$, for $k < i$. \\
\Out{$\sigma^{\textnormal{apx}}$}
 \caption{MyopicTop \cite{lu2010unavailable}}
 }
\end{algorithm}
\vspace{0.25cm}
\noindent Algorithm \ref{MyopicTop} first searches for majority candidates to be set in the top positions of the consensus ranking. Then, it solves the truncated rank aggregation problem. Before passing to the main theorem of this section we need some notation. We define
$$\boldsymbol{\mathcal{M}}:=\left\{\left(\mu^{(k)}\right)_{k=2}^\infty\in \bigtimes_{k=2}^\infty\mathcal M^+_k: 0<\inf_{k,i}\mu^{(k)}_i\leq\sup_{k,i}\mu^{(k)}_i<\infty\right\}.$$
The next result provides a polynomial time approximation scheme for the rank aggregation problem. 

\begin{theorem}\label{PTAS}
Let $(\beta_{k})_{k=2}^{\infty}\in \mathbb{R}^{\mathbb{N}}_+$ with $\beta_2 > 0$ and satisfying
\begin{equation}\label{growth rate}
  \lim_{K\to\infty} \sup_{t \geq K \vee 2} \frac{\sum_{j=2}^{t-K} \beta_{j} \binom{t-K}{j}}{\sum_{j=0}^{t-2} \beta_{j+2} \binom{t-2}{j}} = 0.
\end{equation}
Let $(\mu^{(k)})_{k=2}^{\infty} \in \boldsymbol{\mathcal{M}}$, and suppose that $0 < u \leq \mu^{(k)}_i \leq U < \infty$, for all $i$ and $k$. Consider the function

\[g: \frac{1}{\epsilon} \longmapsto \min \left\lbrace K \in \mathbb{N} \bigg| \sup_{t \geq K \vee 2} \frac{\sum_{j=2}^{t-K} \beta_{j} \binom{t-K}{j}}{\sum_{j=0}^{t-2} \beta_{j+2} \binom{t-2}{j}} \leq \epsilon \right\rbrace. \]

Fix $\epsilon > 0$. Then, for all $n \geq 2$ and $V \in \mathbf{V}_n$, 

\[ \dd^{\mu^{(n)}}_{\bb_{2:n}}(\sigma^{\textnormal{apx}},V)\leq (1 + \epsilon)\dd^{\mu^{(n)}}_{\bb_{2:n}}(\sigma^{*},V) ,\]
\noindent
where $\sigma^{*} \in P^{\mu^{(n)}}_{\bb_{2:n}}(V)$, and $\sigma^{\textnormal{apx}}$ is the ranking obtained by running Algorithm \ref{MyopicTop} with input $K = g(\frac{12U}{u \epsilon})$, $V$, $\bb_{2:n}$ and $\mu^{(n)}$. Furthermore, Algorithm \ref{MyopicTop} runs in time $O\left(n^{g\left(\frac{12 U }{ u \epsilon}\right)+1}mg\left(\frac{12 U }{ u \epsilon}\right) \right)$.
\end{theorem}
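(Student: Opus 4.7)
The plan is to decompose the proof into three phases mirroring the algorithm: correctness of the greedy majority peeling, an additive-to-multiplicative tail bound powered by the growth-rate hypothesis \eqref{growth rate}, and a runtime count. For the first phase, let $Q$ denote the current set of unassigned candidates inside the \texttt{while} loop and suppose $c\in Q$ satisfies the loop condition. I would verify that $c$ occupies the top of the $Q$-block in every optimal ranking $\sigma^{*}\in P^{\mu^{(n)}}_{\bb_{2:n}}(V)$ by an iterated adjacent-swap argument: if some $c'\in Q$ is currently ranked above $c$ in $\sigma^{*}$, then by hypothesis strictly more than $m/2$ voters rank $c$ above $c'$, so the swap only affects the summands of formula \eqref{DioMerda} with $x\in\{c,c'\}$ and strictly decreases the objective, contradicting optimality of $\sigma^{*}$. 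Iterating peels off the entire majority prefix, so the first $i_0-1$ coordinates of $\sigma^{\textnormal{apx}}$ and $\sigma^{*}$ coincide; what remains is a rank-aggregation instance over the $N=|Q|$ remaining candidates and positions $i_0,\ldots,n$.

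For the second phase, for every ranking $\sigma$ extending the peeled prefix I would write
\[
\dd^{\mu^{(n)}}_{\bb_{2:n}}(\sigma,V) \;=\; \mathcal T(\sigma) \;+\; R(\sigma),\qquad \mathcal T(\sigma):=\dd^{\mu^{(n)}}_{\bb_{2:n}}(\sigma,V;\,i_0,i_0+K-1),
\]
so that $R(\sigma)$ gathers the contributions of positions $\geq i_0+K$. Since $\sigma^{\textnormal{apx}}$ globally minimizes $\mathcal T$ among completions of the peeled prefix, $\mathcal T(\sigma^{\textnormal{apx}})\leq\mathcal T(\sigma^{*})$, hence
\[
\dd^{\mu^{(n)}}_{\bb_{2:n}}(\sigma^{\textnormal{apx}},V)-\dd^{\mu^{(n)}}_{\bb_{2:n}}(\sigma^{*},V) \;\leq\; R(\sigma^{\textnormal{apx}})-R(\sigma^{*}) \;\leq\; R(\sigma^{\textnormal{apx}}).
\]
I would then expand $R(\sigma^{\textnormal{apx}})$ via \eqref{Truncation}: any candidate placed in a position $\geq i_0+K$ of $\sigma^{\textnormal{apx}}$ has at most $N-K$ alternatives below it, which makes its contribution match, after the polynomial identity $f_{\bb_{2:n}}(s)=\sum_{k}\beta_{k}\binom{s}{k-1}$, the numerator of the ratio in \eqref{growth rate}. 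Bounding $\mu^{(n)}(x)\leq U$, the choice $K=g(12U/(u\epsilon))$ gives
\[
R(\sigma^{\textnormal{apx}}) \;\leq\; \tfrac{u\epsilon}{12U}\cdot C(V),
\]
for a quantity $C(V)$ controlled, via the same polynomial identity, by the denominator of the growth-rate ratio summed across voters.

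The multiplicative step requires the matching lower bound $C(V)\leq \tfrac{12U}{u}\,\dd^{\mu^{(n)}}_{\bb_{2:n}}(\sigma^{*},V)$. After the peeling, no candidate in $Q$ is a strict-majority winner, so for the first residual position every choice leaves at least $\lfloor m/2\rfloor$ voters disagreeing with the top assignment; combined with $\mu^{(n)}(x)\geq u$ and the expansion \eqref{DioMerda}, this produces exactly a denominator-type lower bound on $\dd^{\mu^{(n)}}_{\bb_{2:n}}(\sigma^{*},V)$. Putting the two inequalities together yields $R(\sigma^{\textnormal{apx}})\leq \epsilon\,\dd^{\mu^{(n)}}_{\bb_{2:n}}(\sigma^{*},V)$, giving the claimed $(1+\epsilon)$-approximation. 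For the runtime, the peeling loop makes at most $n$ iterations of $O(nm)$ pairwise comparisons; the truncated minimization enumerates $O(n^{K})$ ordered $K$-tuples, each evaluated in $O(Kmn)$ via \eqref{Truncation} and a pre-tabulation of $f_{\bb_{2:n}}$, producing the stated $O\!\left(n^{g(12U/(u\epsilon))+1}\,m\,g(12U/(u\epsilon))\right)$ complexity.

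The hardest step will be the calibration of the constant $12$: translating the additive tail bound into the multiplicative one requires aligning the three terms of \eqref{DioMerda}, namely $f_{\bb_{2:n}}(|x^{\downarrow,\sigma}|)$, $f_{\bb_{2:n}}(|x^{\downarrow,v^j}|)$, and $f_{\bb_{2:n}}(|x^{\downarrow,\sigma}\cap x^{\downarrow,v^j}|)$, with the single numerator/denominator pair appearing in \eqref{growth rate}. I expect the factor $12$ to arise from a triangle-inequality-style decomposition of those three terms, combined with the worst-case spread $U/u$ of $\mu^{(n)}$ used twice (once in the upper bound on $R$, once in the lower bound on the optimal cost). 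Making that calibration rigorous, including the post-peeling lower bound that ensures the denominator of the growth-rate ratio is realized up to constants, is the main technical obstacle.
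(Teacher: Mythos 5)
Your architecture coincides with the paper's: peel majority candidates, compare truncated objectives, control the tail via \eqref{growth rate}, and lower-bound the optimal cost using the failure of the majority condition at the first residual position (this is Lemma \ref{lemmino per lower boundino}, with $\phi_{q+1}$ equal to the denominator of \eqref{growth rate} at $t=n-q$); the runtime count also matches. The genuine gap is in the central chain: passing from $R(\sigma^{\textnormal{apx}})-R(\sigma^{*})$ to $R(\sigma^{\textnormal{apx}})$ silently assumes $R(\sigma^{*})\geq 0$, and the truncated tail \eqref{Truncation} is \emph{not} sign-definite when $\mu$ is non-constant. A single summand equals $f_\bb(n-i)\bigl(\mu(\sigma_i)+\mu(\pi_i)\bigr)-2f_\bb\bigl(|\sigma_i^{\downarrow,\sigma}\cap\sigma_i^{\downarrow,\pi}|\bigr)\mu(\sigma_i)$, and when the intersection attains its maximal size $n-i$ this collapses to $f_\bb(n-i)\bigl(\mu(\pi_i)-\mu(\sigma_i)\bigr)$, which is negative whenever $\mu(\pi_i)<\mu(\sigma_i)$. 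The paper therefore keeps $\lvert\ddd(\sigma^{*},V;q+K+1,n)\rvert$ and bounds it by $4mU\sum_{i>q+K}f_\bb(n-i)$ via the triangle inequality, on top of the bound $\ddd(\bar\sigma,V;q+K+1,n)\leq 2mU\sum_{i>q+K}f_\bb(n-i)$ from Lemma \ref{lemmino per upper boundino}; dividing the total $6mU\sum_i f_\bb(n-i)$ by $\tfrac{um}{2}\phi_{q+1}$ is exactly what produces $12=2(2+4)$. Your sketch only accounts for the $R(\sigma^{\textnormal{apx}})$ piece, which by itself would yield a constant of $4$; the missing ingredient behind the $12$ is the absolute-value bound on the optimal tail, not a second use of the spread $U/u$ (which enters only once, through the lower bound).

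A secondary soft spot is the peeling phase. You claim that \emph{every} optimal ranking carries the majority prefix, justified by adjacent swaps driven by pairwise majorities; but for general $\bb$ a pairwise majority does not determine the sign of an adjacent swap (this is essentially Proposition \ref{Condorcet_characterization_PW}): swapping $c$ and $c'$ alters $\mu(\bigtriangleup_A(\cdot,v^j))$ on every menu $A$ containing both and nothing above them, and for $|A|\geq 3$ the change is governed by $M(A,v^j)$, not by whether $c>_{v^j}c'$. The paper only needs, and proves via Proposition \ref{prop:majority}, Lemma \ref{split optimum}, and Remark \ref{remark for majority}, that \emph{some} optimal ranking carries the peeled prefix, which suffices for the comparison $\mathcal T(\sigma^{\textnormal{apx}})\leq\mathcal T(\sigma^{*})$; if you want your stronger statement you would have to replace the swap argument with the betweenness decomposition used in the proof of Proposition \ref{prop:majority}. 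The remaining steps (the hockey-stick identification of $\sum_{i>q+K}f_\bb(n-i)$ with the numerator of \eqref{growth rate}, and the enumeration cost $O(n^{K+1}mK)$) are correct as sketched.
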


\noindent We remark that the non-polynomial dependence on $\epsilon$ suffices for a PTAS, as $\epsilon$ is a fixed but arbitrary constant. The following are simple applications of Theorem \ref{PTAS}:\footnote{The computations are reported in Appendix \ref{comps}.}

\begin{itemize}
  \item If $\beta_j = (\alpha - 1)^{j}$ for $\alpha > 1$, then $g(1/\epsilon) = \left\lceil \log_{\alpha}\left(\frac{\alpha^2}{(\alpha-1)^2 \epsilon}\right) \right\rceil$.
  \item If $\beta_j = j+1$, then $g(1/\epsilon) = \lceil \log_2(4/\epsilon) \rceil$.
  \item If $\beta_j = 1 + (-1)^j$, then $g(1/\epsilon) = \lceil \log_2(4/\epsilon) \rceil $.
\end{itemize}

\noindent\paragraph{\textbf{Discussion on PTAS}}{The \emph{unavailable candidate model} proposed by \cite{lu2010unavailable} is a particular case of the distances considered in this work; it corresponds to the choice of exponential weights and counting measure. Our analysis extends their result along two dimensions: on the one hand, Theorem \ref{PTAS} applies to all metrics satisfying Condition \eqref{growth rate} and not just exponential weights. This allows us to identify sufficient conditions on the growth rate of the weights $\bb$ for this approach to work. See the list above for several examples. Further, by using the analogy established in Theorem~\ref{axiomatic_characterization_THM}, Algorithm~\ref{MyopicTop} turns out to be also a PTAS for the family of weighted Kendall distances. 

On the other hand, our analysis applies to the case where neutrality fails. Specifically, we show that it is possible to obtain competitive ratios that do not degrade with the support of $\mu$ at the cost of increasing the running time of the algorithm. Finally, we stress out that the formulation of the weighted top-difference truncation and its applicability heavily rely on the identity \eqref{DioMerda}.



\section{Future research}
In this paper we studied a class of distances focusing on its axiomatic foundations, social choice implications, and computational features. We deem that there is much room for future research.

On the social choice aspects, it would be interesting to understand how to characterize the median voting rule induced by our distance. That is, to provide conditions for a preference correspondence~$P$ which hold if and only if $P=P_{\bb}^{\mu}$ for some $\bb$ and $\mu$. In addition, there are many properties of preference correspondences that we did not investigate. Strategy-proofness would be a very interesting property to study, as well as the distance rationalization aspects linked to the weighted top-difference distance.

From a more computational standpoint, it remains an open problem to show the existence of practical approximation algorithms whose competitive ratio does not degrade with the support of~$\mu$. Even though we did not pursue this avenue, it is an interesting direction to study the complexity of strategic manipulation (e.g., \cite{bartholdi1989computational}, \cite{conitzer2003universal}, \cite{conitzer2003many}), as well as  elicitation of voters' preferences (e.g., \cite{conitzer2002vote}, \cite{goel2019knapsack}, \cite{benade2021preference}) under the distances proposed in this work.

\section*{Acknowledgements}
We are grateful to Andrea Agazzi, Shreya Arya, Arjada Bardhi, Fabio Maccheroni, Erik Madsen, Sayan Mukherjee, Debraj Ray, Daniela Saban, Giacomo Mantegazza, Ilan Morgenstern, Katerina Papagiannouli, Rohit Kumar, and Lucrezia Villa for useful discussions. We are particularly indebted to Efe Ok for suggesting to study the social choice aspects of the metric developed in \cite{dissimiOkNishi}. Andrea Aveni gratefully acknowledges the financial support of Duke University, Max Plank Institute and the Humboldt fellowship. 
Ludovico Crippa gratefully acknowledges the financial support of the Stanford Graduate School of Business. 
Giulio Principi gratefully acknowledges the financial support of the MacCracken fellowship. 
\bibliographystyle{plain}
\bibliography{bibliography}

\begin{figure}[ht!]
\centering
\scalebox{0.65}{
  \begin{tikzpicture}[scale=5,tdplot_main_coords]
  \tikzstyle{vertex}=[minimum size=1pt,fill=white]
  \tikzstyle{1edge} = [draw,line width=2pt,-,blue!90,opacity=1]
  \tikzstyle{1edge+} = [draw,line width=2pt,-,blue!90,dashed]
  \tikzstyle{2edge} = [draw,line width=2pt,-,red!90]
  \tikzstyle{2edge+} = [draw,line width=2pt,-,red!90,dashed]
  \tikzstyle{3edge} = [draw,line width=2pt,-,green!90]
  \tikzstyle{3edge+} = [draw,line width=2pt,-,green!90,dashed]
    \coordinate (1234) at (1, 2, 3, 4);
    \coordinate (2134) at (2, 1, 3, 4);
    \coordinate (2314) at (3, 1, 2, 4);
    \coordinate (2341) at (4, 1, 2, 3);
    \coordinate (3241) at (4, 2, 1, 3);
    \coordinate (3214) at (3, 2, 1, 4);
    \coordinate (3124) at (2, 3, 1, 4);
    \coordinate (1324) at (1, 3, 2, 4);
    \coordinate (1342) at (1, 4, 2, 3);
    \coordinate (3142) at (2, 4, 1, 3);
    \coordinate (3412) at (3, 4, 1, 2);
    \coordinate (3421) at (4, 3, 1, 2);
    \coordinate (4321) at (4, 3, 2, 1);
    \coordinate (4312) at (3, 4, 2, 1);
    \coordinate (4132) at (2, 4, 3, 1);
    \coordinate (1432) at (1, 4, 3, 2);
    \coordinate (1423) at (1, 3, 4, 2);
    \coordinate (4123) at (2, 3, 4, 1);
    \coordinate (4213) at (3, 2, 4, 1);
    \coordinate (4231) at (4, 2, 3, 1);
    \coordinate (2431) at (4, 1, 3, 2);
    \coordinate (2413) at (3, 1, 4, 2);
    \coordinate (2143) at (2, 1, 4, 3);
    \coordinate (1243) at (1, 2, 4, 3);

    \draw[1edge+] (1234) -- (2134)node [rectangle,minimum size=5pt,midway, fill=white] {$g_1(1,2)$};
    \draw[2edge+] (2134) -- (2314)node [rectangle,minimum size=5pt,midway, fill=white] {$g_2(1,3)$};
    \draw[1edge+] (2314) -- (3214)node [rectangle,minimum size=5pt,midway, fill=white] {$g_1(2,3)$};
    \draw[2edge+] (3214) -- (3124)node [rectangle,minimum size=5pt,midway, fill=white] {$g_2(1,2)$};
    \draw[1edge+] (3124) -- (1324)node [rectangle,minimum size=5pt,midway, fill=white] {$g_1(1,3)$};
    \draw[2edge+] (1324) -- (1234)node [rectangle,minimum size=5pt,midway, fill=white] {$g_2(2,3)$};
    \draw[3edge+] (2143) -- (2134)node [rectangle,minimum size=5pt,midway, fill=white] {$g_3(3,4)$};
    \draw[3edge+] (2314) -- (2341)node [rectangle,minimum size=5pt,midway, fill=white] {$g_3(1,4)$};
    \draw[3edge+] (3241) -- (3214)node [rectangle,minimum size=5pt,midway, fill=white] {$g_3(1,4)$};
    \draw[3edge+] (3124) -- (3142)node [rectangle,minimum size=5pt,midway, fill=white] {$g_3(2,4)$};
    \draw[3edge+] (1324) -- (1342)node [rectangle,minimum size=5pt,midway, fill=white] {$g_2(2,4)$};
    \draw[3edge+] (1234) -- (1243)node [rectangle,minimum size=5pt,midway, fill=white] {$g_2(3,4)$};
    \draw[3edge] (4321) -- (4312)node [rectangle,minimum size=5pt,midway, fill=white] {$g_3(1,2)$};
    \draw[2edge] (4312) -- (4132)node [rectangle,minimum size=5pt,midway, fill=white] {$g_2(1,3)$};
    \draw[3edge] (4132) -- (4123)node [rectangle,minimum size=5pt,midway, fill=white] {$g_3(2,3)$};
    \draw[2edge] (4123) -- (4213)node [rectangle,minimum size=5pt,midway, fill=white] {$g_2(1,2)$};
    \draw[3edge] (4213) -- (4231)node [rectangle,minimum size=5pt,midway, fill=white] {$g_3(1,3)$};
    \draw[2edge] (4231) -- (4321)node [rectangle,minimum size=5pt,midway, fill=white] {$g_2(2,3)$};
    \draw[1edge] (4123) -- (1423)node [rectangle,minimum size=5pt,midway, fill=white] {$g_1(1,4)$};
    \draw[2edge] (1423) -- (1243)node [rectangle,minimum size=5pt,midway, fill=white] {$g_2(2,4)$};
    \draw[1edge] (1243) -- (2143)node [rectangle,minimum size=5pt,midway, fill=white] {$g_1(1,2)$};
    \draw[2edge] (2143) -- (2413)node [rectangle,minimum size=5pt,midway, fill=white] {$g_2(1,4)$};
    \draw[1edge] (2413) -- (4213)node [rectangle,minimum size=5pt,midway, fill=white] {$g_1(2,4)$};
    \draw[1edge] (4312) -- (3412)node [rectangle,minimum size=5pt,midway, fill=white] {$g_1(3,4)$};
    \draw[2edge] (3412) -- (3142)node [rectangle,minimum size=5pt,midway, fill=white] {$g_2(1,4)$};
    \draw[1edge] (3142) -- (1342)node [rectangle,minimum size=5pt,midway, fill=white] {$g_1(1,3)$};
    \draw[2edge] (1342) -- (1432)node [rectangle,minimum size=5pt,midway, fill=white] {$g_2(3,4)$};
    \draw[1edge] (1432) -- (4132)node [rectangle,minimum size=5pt,midway, fill=white] {$g_1(1,4)$};
    \draw[1edge] (4321) -- (3421)node [rectangle,minimum size=5pt,midway, fill=white] {$g_1(3,4)$};
    \draw[2edge] (3421) -- (3241)node [rectangle,minimum size=5pt,midway, fill=white] {$g_2(2,4)$};
    \draw[1edge] (3241) -- (2341)node [rectangle,minimum size=5pt,midway, fill=white] {$g_1(2,3)$};
    \draw[2edge] (2341) -- (2431)node [rectangle,minimum size=5pt,midway, fill=white] {$g_2(3,4)$};
    \draw[1edge] (2431) -- (4231)node [rectangle,minimum size=5pt,midway, fill=white] {$g_1(2,4)$};
    \draw[3edge] (3421) -- (3412)node [rectangle,minimum size=5pt,midway, fill=white] {$g_3(1,2)$};
    \draw[3edge] (1423) -- (1432)node [rectangle,minimum size=5pt,midway, fill=white] {$g_3(2,3)$};
    \draw[3edge] (2413) -- (2431)node [rectangle,minimum size=5pt,midway, fill=white] {$g_3(1,3)$};
    \node[vertex] at (1234) {$1234$};
    \node[vertex] at (2134) {$2134$};
    \node[vertex] at (2314) {$2314$};
    \node[vertex] at (2341) {$2341$};
    \node[vertex] at (3241) {$3241$};
    \node[vertex] at (3214) {$3214$};
    \node[vertex] at (3124) {$3124$};
    \node[vertex] at (1324) {$1324$};
    \node[vertex] at (1342) {$1342$};
    \node[vertex] at (3142) {$3142$};
    \node[vertex] at (3412) {$3412$};
    \node[vertex] at (3421) {$3421$};
    \node[vertex] at (4321) {$4321$};
    \node[vertex] at (4312) {$4312$};
    \node[vertex] at (4132) {$4132$};
    \node[vertex] at (1432) {$1432$};
    \node[vertex] at (1423) {$1423$};
    \node[vertex] at (4123) {$4123$};
    \node[vertex] at (4213) {$4213$};
    \node[vertex] at (4231) {$4231$};
    \node[vertex] at (2431) {$2431$};
    \node[vertex] at (2413) {$2413$};
    \node[vertex] at (2143) {$2143$};
    \node[vertex] at (1243) {$1243$};
  \end{tikzpicture}}
  \label{fecipermutaedriche}
  \caption{The permutahedron on $[4]$.}
\end{figure}

\clearpage
\appendix
\counterwithin{theorem}{subsection}

\section*{Appendix}
\addcontentsline{toc}{section}{Appendix}
\renewcommand{\thesubsection}{\Alph{subsection}}

\subsection{Toolkit lemmas}

\begin{lemma}\label{Ehvoleeevi}
  For all $\bb\in\R_+^{n-1}$ and $\mu\in \mathcal{M}_n^+$, we have that $\ddd$ is a semimetric.
\end{lemma}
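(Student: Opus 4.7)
The plan is to verify the four defining properties of a semimetric: non-negativity, vanishing on the diagonal, symmetry, and the triangle inequality. The first three are immediate from the definition; the only property that requires work is the triangle inequality, which I would reduce to a pointwise inequality indexed by menus $S \in \mathcal{P}_n$.

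\textbf{Non-negativity, vanishing on the diagonal, and symmetry.} Since $\beta_{|S|} \geq 0$ for every $S$ and $\mu$ takes non-negative values on subsets of $[n]$, every summand in \eqref{def:dbetamu} is non-negative, hence $\ddd(\sigma,\pi)\geq 0$. When $\sigma=\pi$, for each $S\in\mathcal{P}_n$ one has $M(S,\sigma)=M(S,\pi)$ and therefore $\triangle_S(\sigma,\pi)=\varnothing$, which gives $\ddd(\sigma,\sigma)=0$. Symmetry follows because the symmetric difference of sets is itself symmetric, so $\triangle_S(\sigma,\pi)=\triangle_S(\pi,\sigma)$ for every $S$.

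\textbf{Triangle inequality.} I would show that for each $S\in\mathcal{P}_n$ and every $\sigma,\pi,\tau\in\LL_n$,
\[
\mu\bigl(\triangle_S(\sigma,\tau)\bigr)\leq \mu\bigl(\triangle_S(\sigma,\pi)\bigr)+\mu\bigl(\triangle_S(\pi,\tau)\bigr),
\]
after which multiplying by $\beta_{|S|}\geq 0$ and summing over $S$ delivers the triangle inequality for $\ddd$. Fix $S$ and write $a:=M(S,\sigma)$, $b:=M(S,\pi)$, $c:=M(S,\tau)$. Since $M(S,\cdot)$ is a singleton, $\triangle_S(\sigma,\pi)$ equals $\{a,b\}$ if $a\neq b$ and is empty otherwise, and analogously for the other pairs, so $\mu(\triangle_S(\sigma,\pi))$ equals either $0$ or $\mu(a)+\mu(b)$. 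A short case analysis on whether $a,b,c$ coincide suffices: if $a=c$ the left-hand side is $0$ and we are done; otherwise the left-hand side equals $\mu(a)+\mu(c)$, which is matched exactly when $b\in\{a,c\}$, and is bounded above by $\mu(a)+2\mu(b)+\mu(c)$ when $b\notin\{a,c\}$ using $\mu(b)\geq 0$.

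\textbf{Main obstacle.} There is no real obstacle: all four axioms follow cleanly. The only point that warrants care is the pointwise triangle inequality for $\mu\circ\triangle_S$, and even this reduces to the three-way case split above precisely because $M(S,\cdot)$ is a singleton, so the symmetric differences have at most two elements and $\mu\geq 0$ immediately handles the dominant case.
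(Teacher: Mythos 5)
Your proposal is correct and follows essentially the same route as the paper: both reduce the triangle inequality to a pointwise inequality for each menu $S$ and then sum with the non-negative weights $\beta_{|S|}$. The only cosmetic difference is that the paper invokes the general containment $M(S,\sigma)\bigtriangleup M(S,\pi)\subseteq (M(S,\sigma)\bigtriangleup M(S,\omega))\cup(M(S,\omega)\bigtriangleup M(S,\pi))$ together with subadditivity of $\mu$, whereas you carry out the equivalent three-way case split on the maxima directly.
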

\begin{proof}
  Non-negativity and symmetry are straightforward. Moreover, if $\sigma=\pi$, then for all $S\subseteq [n]$, we have that $M(S,\sigma)=M(S,\pi)$ thus $\bigtriangleup_S(\sigma,\pi)=\emptyset$. Hence, $\ddd(\sigma,\pi)=0$. In conclusion, we need to show the triangle inequality. For all $S\subseteq [n]$ and $\sigma,\omega,\pi\in \mathbb{S}_n$,
  $$M(S,\sigma)\bigtriangleup M(S,\pi)\subseteq (M(S,\sigma)\bigtriangleup M(S,\omega))\sqcup(M(S,\omega)\bigtriangleup M(S,\pi)).$$
  Thus, we conclude that
  $\ddd(\sigma,\pi)\leq \ddd(\sigma,\omega)+\ddd(\omega,\pi)$.\end{proof}

\begin{corollary}\label{metric}
  For all $\bb\in\R_+^{n-1}$ with $\beta_2>0$ and $\mu \in \mathcal{M}_n^{++}$, we have that $\ddd$ is a metric.
\end{corollary}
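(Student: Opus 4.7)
The plan is to leverage Lemma~\ref{Ehvoleeevi} and then upgrade the semimetric to a metric by verifying the one missing axiom, namely that $\ddd(\sigma,\pi)=0$ implies $\sigma=\pi$. All other defining properties of a metric (non-negativity, symmetry, triangle inequality, and vanishing on the diagonal) are immediate consequences of Lemma~\ref{Ehvoleeevi} under the stronger hypotheses $\beta_2>0$ and $\mu\in\mathcal M_n^{++}$.

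For the positivity step, I would argue by contrapositive: assume $\sigma\neq\pi$ and show $\ddd(\sigma,\pi)>0$. Since $\sigma$ and $\pi$ are distinct permutations, they must disagree on the relative order of some pair of alternatives; that is, there exist distinct $i,j\in[n]$ with $i>_\sigma j$ but $j>_\pi i$. Consider the two-element menu $S=\{i,j\}\in\mathcal P_n$. Then $M(S,\sigma)=i$ and $M(S,\pi)=j$, so $\bigtriangleup_S(\sigma,\pi)=\{i,j\}$ and hence
\[
\mu\bigl(\bigtriangleup_S(\sigma,\pi)\bigr)=\mu(i)+\mu(j).
\]
Because $\mu\in\mathcal M_n^{++}$, both $\mu(i)$ and $\mu(j)$ are strictly positive, so this menu contributes $\beta_2(\mu(i)+\mu(j))>0$ to the sum defining $\ddd(\sigma,\pi)$ in~\eqref{def:dbetamu}. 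Every other summand is of the form $\beta_{|S|}\mu(\bigtriangleup_S(\sigma,\pi))$ with $\beta_{|S|}\geq 0$ and $\mu(\bigtriangleup_S(\sigma,\pi))\geq 0$, hence non-negative. Therefore $\ddd(\sigma,\pi)\geq \beta_2(\mu(i)+\mu(j))>0$.

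There is essentially no obstacle here: the whole argument is a direct inspection of the formula, and the hypothesis $\beta_2>0$ combined with $\mu\in\mathcal M_n^{++}$ was tailored precisely to guarantee that the obligatory pair of disagreement produces a strictly positive contribution. Combining the semimetric axioms from Lemma~\ref{Ehvoleeevi} with this strict positivity completes the proof that $\ddd$ is a metric.
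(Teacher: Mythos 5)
Your proposal is correct and matches the paper's own argument: the paper likewise invokes Lemma~\ref{Ehvoleeevi} for the semimetric properties and then observes that distinct $\sigma,\pi$ disagree on some pair $\{i,j\}$, so that $\ddd(\sigma,\pi)\geq\beta_2(\mu_i+\mu_j)>0$. No meaningful difference in approach.
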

\begin{proof}
  By Lemma \ref{Ehvoleeevi}, it is sufficient to prove that $\ddd(\sigma,\pi)=0$ implies $\sigma=\pi$. Indeed, if $\sigma\neq\pi$, then we can find $i,j\in [n]$ distinct so that $M(\{i,j\},\sigma)\neq M(\{i,j\},\pi)$, then $$\ddd(\sigma,\pi)\geq \beta_2\mu(\bigtriangleup_{\{i,j\}}(\sigma,\pi))\geq\beta_2(\mu_i+\mu_j)>0.$$\end{proof}

\begin{lemma}\label{DioMajale}
  If $\bb\in \R_+^{n-1}$ and $\beta_2>0$, then $f_\bb:\mathbb N\to\R$ is increasing and non-negative. If in addition $\beta_2>0$, then $f_\bb$ is strictly increasing and positive.
\end{lemma}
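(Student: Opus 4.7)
The plan is to expand the definition of $f_{\bb}$, namely $f_{\bb}(t) = \sum_{k=2}^{n} \beta_k \binom{t}{k-1}$, and to establish the four claimed properties by direct computation using the non-negativity of binomial coefficients on $\mathbb{N}$ and Pascal's identity. I read the statement as: under $\bb \in \R_{+}^{n-1}$ alone, $f_\bb$ is non-decreasing and non-negative; under the additional assumption $\beta_2 > 0$, it is strictly increasing and strictly positive on $\mathbb{N}_{>0}$ (note $f_\bb(0) = 0$).

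First I would prove non-negativity. Since each $\beta_k \geq 0$ and $\binom{t}{k-1} \geq 0$ for every $t \in \mathbb{N}$ and $k \geq 2$, the sum defining $f_\bb(t)$ is a non-negative combination of non-negative reals, hence $f_\bb(t) \geq 0$.

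Next I would establish monotonicity by computing the forward difference. Using Pascal's identity $\binom{t+1}{k-1} - \binom{t}{k-1} = \binom{t}{k-2}$, I obtain
\[
f_{\bb}(t+1) - f_{\bb}(t) \;=\; \sum_{k=2}^{n} \beta_{k}\left[\binom{t+1}{k-1} - \binom{t}{k-1}\right] \;=\; \sum_{k=2}^{n} \beta_{k} \binom{t}{k-2}.
\]
Since every summand is non-negative under $\bb \in \R_{+}^{n-1}$, the difference is non-negative and $f_\bb$ is increasing. When $\beta_2 > 0$, the $k=2$ term equals $\beta_{2}\binom{t}{0} = \beta_{2} > 0$, so the difference is strictly positive for every $t \in \mathbb{N}$, yielding strict monotonicity.

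Finally, for strict positivity under $\beta_2 > 0$, I would isolate the $k=2$ contribution: $f_{\bb}(t) \geq \beta_{2}\binom{t}{1} = \beta_{2} t$, which is strictly positive whenever $t \geq 1$. There is no real obstacle here; the only subtle point is that $f_\bb(0) = 0$, so ``positive'' must be understood on $\mathbb{N}_{>0}$, which is consistent with how $f_\bb$ is invoked in the formula \eqref{DioMerda} (the arguments $|x^{\downarrow,\sigma}|$, $|x^{\downarrow,\pi}|$, and $|x^{\downarrow,\sigma}\cap x^{\downarrow,\pi}|$ may vanish but strict positivity is only needed off zero to recover the metric property).
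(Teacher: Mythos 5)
Your proof is correct and follows essentially the same route as the paper's: non-negativity via the bound $f_\bb(t)\geq\beta_2\binom{t}{1}=t\beta_2$, monotonicity via Pascal's identity giving $f_\bb(t+1)-f_\bb(t)=\sum_{k=2}^n\beta_k\binom{t}{k-2}\geq\beta_2$, and the strict versions when $\beta_2>0$. Your added remark that strict positivity only holds for $t\geq 1$ (since $f_\bb(0)=0$) is a correct and sensible reading of the statement, which the paper leaves implicit.
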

\begin{proof}
  We have that
  $f_\bb(t)\geq \beta_2\binom{t}{1}=t\beta_2\geq 0$ for all $t\geq 1$. As for the monotonicity, we have that
  $$f_\bb(t+1)-f_\bb(t)=\sum_{k=2}^n\beta_k\left(\binom{t+1}{k-1}-\binom{t}{k-1}\right)=\sum_{k=2}^n\beta_k\binom{t}{k-2}\geq \beta_2\binom{t}{0}=\beta_2\geq 0$$
  for all $t\geq 1$. Thus, $f_\bb$ is increasing. If $\beta_2>0$, then the previous inequalities yields the second part of the claim.
\end{proof}
\begin{lemma}\label{DioCinghiale} If $\bb\in \mathbb{R}^{n-1}_+$, then $f_\bb$ has increasing increments. If in addition $\beta_2>0$, then $f_\bb$ has strictly increasing increments.
\end{lemma}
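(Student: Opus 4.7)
The plan is to prove both halves by a direct computation of the second forward difference $\Delta^2 f_\bb(t) := f_\bb(t+2) - 2f_\bb(t+1) + f_\bb(t)$. First I would apply Pascal's identity $\binom{s+1}{r} = \binom{s}{r} + \binom{s}{r-1}$ twice in succession to each binomial $\binom{\cdot}{k-1}$ inside $f_\bb$. This collapses the telescoped combination term by term into
\[
\binom{t+2}{k-1} - 2\binom{t+1}{k-1} + \binom{t}{k-1} \;=\; \binom{t}{k-3},
\]
and summing against the coefficients yields the closed form
\[
\Delta^2 f_\bb(t) \;=\; \sum_{k=2}^n \beta_k \binom{t}{k-3}.
\]
With the standard convention $\binom{t}{-1}=0$, the $k=2$ summand drops out, and every remaining summand is a product of nonnegative numbers because $\bb \in \R_+^{n-1}$. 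Hence $\Delta^2 f_\bb(t) \geq 0$ for all $t \geq 0$, which is exactly the statement that the first differences $\Delta f_\bb(t)$ form a non-decreasing sequence; this settles the first half.

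For the strict half, the same display localizes strict positivity to the indices $k \geq 3$: setting $k_0 := \min\{k \in \{3,\ldots,n\} : \beta_k > 0\}$, one obtains $\Delta^2 f_\bb(t) \geq \beta_{k_0}\binom{t}{k_0-3} > 0$ whenever $t \geq k_0 - 3$, which covers the full range $0 \leq t \leq n-1$ on which $f_\bb$ is ever evaluated in the paper. The main obstacle here is conceptual rather than computational, and it is precisely the point flagged earlier: because $\binom{t}{-1} = 0$, the $\beta_2$ contribution to $\Delta^2 f_\bb$ is structurally zero, so the strengthening that actually forces strict convexity is the existence of some $\beta_k > 0$ with $k \geq 3$. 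The bare hypothesis $\beta_2 > 0$ instead upgrades the \emph{first} difference to $\Delta f_\bb(t) \geq \beta_2 > 0$ (via the one-step Pascal identity), recovering the strict monotonicity already recorded in Lemma~\ref{DioMajale}; we therefore read the $\beta_2>0$ clause as a typographical stand-in for a positive higher-order weight, and under that reading the displayed lower bound on $\Delta^2 f_\bb$ closes the argument.
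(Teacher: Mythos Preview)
Your second-difference computation is correct and amounts to a streamlined version of the paper's argument: the paper compares $f_\bb(l+h)-f_\bb(l)$ against $f_\bb(m+h)-f_\bb(m)$ for general $l\geq m,\ h\geq0$ and reduces via Pascal to $\sum_{j\geq1}\beta_{j+1}\sum_{k=m}^{m+h-1}\bigl(\binom{k+l-m}{j-1}-\binom{k}{j-1}\bigr)\geq0$, whereas you specialize to $h=1$, $l=m+1$ and iterate Pascal once more to obtain the closed form $\Delta^2 f_\bb(t)=\sum_{k\geq3}\beta_k\binom{t}{k-3}\geq0$. The content is identical; your display is cleaner.

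On the strict half you have caught a genuine slip in the paper. Its proof ends with ``If $\beta_2>0$, then the second part of the claim follows from analogous steps,'' but in the paper's own expression the $j=1$ term is $\beta_2\sum_k\bigl(\binom{k+l-m}{0}-\binom{k}{0}\bigr)=0$, exactly matching your observation that $\beta_2\binom{t}{-1}=0$. The concrete counterexample $\bb=(1,0,\ldots,0)$, $f_\bb(t)=t$, has constant (not strictly increasing) increments, so the stated hypothesis is insufficient. Your proposed repair is in the right direction, though note that your claim ``$t\geq k_0-3$ covers the full range $0\leq t\leq n-1$'' fails when $k_0>3$: e.g.\ if only $\beta_5>0$ then $\Delta^2 f_\bb(0)=\Delta^2 f_\bb(1)=0$. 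The clean hypothesis that gives strict convexity at every $t\geq0$ is $\beta_3>0$. Fortunately, only the non-strict half of this lemma is ever invoked elsewhere in the paper, so the error is cosmetic.
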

\begin{proof} Let $l\geq m$ and $h\geq0$, then
\begin{align*}
  (f_{\bb}(l+h)-f_{\bb}(l))-(f_{\bb}(m+h)-f_{\bb}(m))&=\sum_{j=1}^{n-1}\beta_{j+1}\left(\binom{l+h}{j}-\binom{l}{j}-\binom{m+h}{j}+\binom{m}{j}\right)\\
  &=\sum_{j=1}^{n-1}\beta_{j+1}\left(\sum_{k=l}^{l+h-1}\binom{k}{j-1}-\sum_{k=m}^{m+h-1}\binom{k}{j-1}\right)\\
  &=\sum_{j=1}^{n-1}\beta_{j+1}\left(\sum_{k=m}^{m+h-1}\binom{k+l-m}{j-1}-\binom{k}{j-1}\right)\geq 0.
\end{align*}
Therefore, $f_{\bb}$ has non-decreasing increments. If $\beta_2>0$, then the second part of the claim follows from analogous steps.
\end{proof}


The following three lemmas are used in the proof of Theorem \ref{PTAS}. First, by exploiting the fact that $\mu$ is a bounded measure, we directly get the following upper bound.

\begin{lemma}\label{lemmino per upper boundino}
Let $\mu \in \mathcal{M}^+_n$, with $\mu \leq U$, $\bb \in \mathbb{R}_{+}^{n-1}$ and $b,t \in \mathbb{N}$ such that $1\leq b \leq t \leq n$. Then, for all $\sigma \in \mathbb{S}_{n}$ and $V \in \V$,
\begin{equation}\label{scampanatameritata} \ddd(\sigma,V;b,t) \leq 2 |V| U \sum_{i = b}^{t} f_\bb(n-i). \end{equation}
\end{lemma}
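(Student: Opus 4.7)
The plan is to bound the truncated distance term by term, dropping the non-positive correction piece and using the uniform bound $\mu \le U$, then sum over voters.

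First I would recall the definition
\[
\ddd(\sigma,\pi;b,t) = \sum_{i=b}^{t}\Bigl[ f_\bb(n-i)\bigl(\mu(\sigma_{i}) + \mu(\pi_{i})\bigr) - 2 f_\bb\bigl(|\sigma_{i}^{\downarrow,\sigma} \cap \sigma_{i}^{\downarrow,\pi}|\bigr)\mu(\sigma_{i})\Bigr].
\]
Since $\bb \in \R_+^{n-1}$, Lemma \ref{DioMajale} gives $f_\bb \geq 0$, and $\mu(\sigma_i) \geq 0$ by hypothesis. Hence the subtracted term $2 f_\bb(|\sigma_{i}^{\downarrow,\sigma} \cap \sigma_{i}^{\downarrow,\pi}|)\mu(\sigma_{i})$ is non-negative, so dropping it preserves the upper bound:
\[
\ddd(\sigma,\pi;b,t) \leq \sum_{i=b}^{t} f_\bb(n-i)\bigl(\mu(\sigma_{i}) + \mu(\pi_{i})\bigr).
\]

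Next I would apply the pointwise bound $\mu(x) \leq U$ to each of the two measure evaluations, giving $\mu(\sigma_i) + \mu(\pi_i) \leq 2U$. This yields
\[
\ddd(\sigma,\pi;b,t) \leq 2U \sum_{i=b}^{t} f_\bb(n-i).
\]
Finally, summing over the $|V|$ voters produces the desired inequality
\[
\ddd(\sigma,V;b,t) = \sum_{j=1}^{|V|} \ddd(\sigma,v^j;b,t) \leq 2|V|\, U \sum_{i=b}^{t} f_\bb(n-i).
\]

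There is no genuine obstacle here: the argument only relies on the sign of $f_\bb$ (guaranteed by $\bb \in \R_+^{n-1}$ via Lemma \ref{DioMajale}), the non-negativity and upper bound on $\mu$, and linearity of the sum over voters. The main point to flag, more than a difficulty, is simply that the dropped correction term must be non-negative, which is exactly why we need $\bb \in \R_+^{n-1}$ and $\mu \in \mathcal{M}_n^+$ in the hypotheses.
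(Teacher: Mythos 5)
Your proof is correct and matches the paper's intent: the paper dismisses this lemma with ``it follows immediately from the definition,'' and the argument you spell out (drop the non-negative subtracted term using $f_\bb\geq 0$ and $\mu\geq 0$, bound $\mu(\sigma_i)+\mu(v^j_i)\leq 2U$, sum over the $|V|$ voters) is exactly the intended one-line verification.
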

\begin{proof}
It follows immediately from the definition.
\end{proof}
Notice that when $\mu\equiv U$, $V=(\pi,\pi,\ldots,\pi)$ for some $\pi\in \LL_n$, and $\sigma$ is \textit{antipodal} to $\pi$, then \eqref{scampanatameritata} holds as an equality. Secondly, we provide a lower bound. 

\begin{lemma}\label{lemmino per lower boundino}
Let $\mu \in \mathcal{M}_n^{++}$, with $\mu_i \geq u > 0$, $\bb \in \mathbb{R}_{+}^{n-1}$, $\sigma \in \mathbb{S}_{n}$, and $V=(v^j)_{j=1}^{|V|} \in \V$. Suppose there exists~$k \in \{0,\ldots,n-2\}$ such that 
\[ | \left\{ j \in [|V|] \mid {\sigma_{k+1}} >_{v^j} a, \; \textnormal{for all } a \in C \setminus \{{\sigma_{1}},\ldots,{\sigma_{k+1}}\} \right\}| \leq \frac{|V|}{2}.\]
Then, $\ddd(\sigma,V) \geq \frac{u |V| }{2} \phi_{k+1},$ where we recall that $\phi_j :=f_\bb(n-j)-f_\bb(n-j-1)$.
\end{lemma}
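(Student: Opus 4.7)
The plan is to lower bound $\ddd(\sigma, v^j)$ for each voter by retaining only the single summand of the identity \eqref{DioMerda} that corresponds to $x = \sigma_{k+1}$, and then to sum this contribution over all voters whose preferences disagree with $\sigma$ in the sense made precise by the hypothesis.

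First, I would observe that \eqref{DioMerda} expresses $\ddd(\sigma, v^j)$ as a sum over $x \in [n]$ whose individual summands are all non-negative: the factor $\mu(x)$ is non-negative because $\mu \in \mathcal{M}_n^{++}$, and the bracket $f_\bb(|x^{\downarrow,\sigma}|) + f_\bb(|x^{\downarrow,v^j}|) - 2 f_\bb(|x^{\downarrow,\sigma} \cap x^{\downarrow,v^j}|)$ is non-negative because $f_\bb$ is monotone by Lemma \ref{DioMajale} and $|x^{\downarrow,\sigma}\cap x^{\downarrow,v^j}| \leq \min\{|x^{\downarrow,\sigma}|,|x^{\downarrow,v^j}|\}$. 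Consequently, for each $j$,
\[
\ddd(\sigma, v^j) \;\geq\; u\bigl(f_\bb(A) + f_\bb(B_j) - 2 f_\bb(C_j)\bigr),
\]
where I set $A := |\sigma_{k+1}^{\downarrow,\sigma}| = n-k-1$, $B_j := |\sigma_{k+1}^{\downarrow, v^j}|$, $C_j := |\sigma_{k+1}^{\downarrow,\sigma}\cap \sigma_{k+1}^{\downarrow,v^j}|$, and used $\mu(\sigma_{k+1}) \geq u$. Note as well that $\phi_{k+1} = f_\bb(A) - f_\bb(A-1)$.

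Next I would call voter $j$ \emph{adverse} when the statement ``$\sigma_{k+1} >_{v^j} a$ for every $a \in \{\sigma_{k+2},\ldots,\sigma_n\}$'' fails, i.e.\ when some $a \in \{\sigma_{k+2},\ldots,\sigma_n\}$ satisfies $a >_{v^j} \sigma_{k+1}$. By hypothesis, at most $|V|/2$ voters are non-adverse, so there are at least $|V|/2$ adverse voters. For every such adverse $j$, the witnessing element $a$ belongs to $\sigma_{k+1}^{\downarrow,\sigma}\setminus\sigma_{k+1}^{\downarrow,v^j}$, forcing $C_j \leq A - 1$. A brief case split then shows $f_\bb(A) + f_\bb(B_j) - 2f_\bb(C_j) \geq \phi_{k+1}$: if $B_j \geq A$, monotonicity gives $f_\bb(B_j) \geq f_\bb(A)$ and $f_\bb(C_j) \leq f_\bb(A-1)$, whence the bracket is at least $2\phi_{k+1}$; if $B_j \leq A - 1$, then $C_j \leq B_j$ implies $f_\bb(B_j) - 2 f_\bb(C_j) \geq -f_\bb(C_j)$, so the bracket is at least $f_\bb(A) - f_\bb(B_j) \geq f_\bb(A) - f_\bb(A-1) = \phi_{k+1}$.

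Summing $u\phi_{k+1}$ over the (at least) $|V|/2$ adverse voters yields $\ddd(\sigma, V) \geq \frac{u|V|}{2}\phi_{k+1}$. No substantive obstacle is anticipated: the hypothesis is precisely what supplies the strict containment $C_j \leq A - 1$, which produces the single increment $\phi_{k+1}$, and the remaining steps are elementary monotonicity arguments combined with the non-negativity of the summands in \eqref{DioMerda}.
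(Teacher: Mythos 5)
Your proof is correct. It differs in mechanics from the paper's: the paper first reduces to the neutral case via $\ddd(\sigma,V)\ge u\,\dd_\bb(\sigma,V)$ and then compares $V$ to a synthetic profile $\tilde V$ in which each non-adverse voter is replaced by $\sigma$ and each adverse voter by $\sigma t_{k+1,k+2}$, asserting $\dd_\bb(\sigma,V)\ge\dd_\bb(\sigma,\tilde V)\ge \tfrac{|V|}{2}\cdot 2\phi_{k+1}$ ``by construction.'' You instead work termwise in the identity \eqref{DioMerda}, discard all summands except $x=\sigma_{k+1}$ (legitimate, since each summand is non-negative by Lemma \ref{DioMajale} and $|x^{\downarrow,\sigma}\cap x^{\downarrow,v^j}|\le\min(|x^{\downarrow,\sigma}|,|x^{\downarrow,v^j}|)$), and show via the case split on $B_j$ versus $A=n-k-1$ that the hypothesis $C_j\le A-1$ forces the retained bracket to be at least $\phi_{k+1}$ for every adverse voter, of which there are at least $|V|/2$. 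The substance is the same — both arguments ultimately charge each adverse voter one adjacent-swap's worth of distance at position $k+1$ — but your version makes explicit exactly the monotonicity facts that justify the paper's comparison step, so it is the more self-contained of the two; the paper's version is shorter because it leans on the already-established formula $\ddd(\sigma,\sigma t_{k+1,k+2})=(\mu(\sigma_{k+1})+\mu(\sigma_{k+2}))\phi_{k+1}$ and leaves the inequality $\dd_\bb(\sigma,v^j)\ge\dd_\bb(\sigma,\sigma t_{k+1,k+2})$ for adverse $j$ implicit.
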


\begin{proof}
First of all, observe that $\ddd(\sigma,V) \geq u \dd_\bb(\sigma,V)$. Then, consider the following profile ~$\tilde{V}$ for $j \in \{1,\ldots, \lfloor |V|/2 \rfloor\}$, $\omega^{j} := \sigma$, while, for $j \in \{ \lfloor |V|/2 \rfloor +1,\ldots,|V|\}$, 
$\omega^{j}=\sigma t_{k+1,k+2}$. By construction, $\dd_\bb(\sigma,V) \geq \dd_\bb(\sigma,\tilde{V})$ and $\dd_\bb(\sigma,\tilde{V}) \geq \frac{u |V|}{2} \phi_{k+1}$.
\end{proof}
Suppose we have a consensus ranking $\sigma$ and remove the first $k$ alternatives. 
The following lemma ensures that for any consensus ranking $\pi$ on $C\setminus \sigma([k])$, the concatenation 
\[
(\sigma_1,\ldots,\sigma_k,\pi_1,\ldots,\pi_{n-k})
\]
is a consensus ranking on $C$. For each $A \subset [n]$, we denote by $\sigma^{-A}$ the linear order we obtain by removing all candidates in $A$; for example, suppose $\sigma = (1,\ldots,{n})$ and $A = \{n\}$, then $\sigma^{-A} = (1,\ldots,{n-1})$.

\begin{lemma}\label{split optimum}
Let $\bb \in \mathbb{R}_{+}^{n-1}$, $\mu \in \mathcal{M}_n^{++}$, $V=(v^j)_{j=1}^m \in \V$, $\sigma \in P^{\mu}_{\bb}(V)$, and $1 \leq k \leq n-1$. Define $C_{1:k} := \{{\sigma_{1}},\ldots,{\sigma_{k}}\}$. Let $\pi: \{k+1,\ldots,n\} \to C_{k+1:n}$ be a bijective map minimizing 
{\small
\begin{equation}\label{porco dio gli indici}
\omega\mapsto \sum_{j=1}^{m} \sum_{i=k+1}^{n}\left(f_{\bb_{2:n-k}}\left(\left| {\sigma_{i}}^{\downarrow,\omega}  \right|\right)+f_{\bb_{2:n-k}}(| {\sigma_{i}}^{\downarrow,\tilde{v}^j}  |)-2f_{\bb_{2:n-k}} (| {\sigma_{i}}^{\downarrow,\omega} \cap {\sigma_{i}}^{\downarrow,\tilde{v}^j}  |)\right)\mu({\sigma_{i}}),\footnote{For some bijection $\omega:\left\lbrace k+1,\ldots,n\right\rbrace\to C_{k+1:n}$, and $r,s \in C_{k+1:n}$, we define $r>_{\omega}s \iff \omega^{-1}_r < \omega^{-1}_{s}$. Accordingly, 
\[
\sigma_i^{\downarrow,\omega} := \{j\in C_{k+1:n}:\sigma_i>_{\omega}j\}.
\]}
\end{equation}
}
where, for each $j \in [m]$, $\tilde{v}^j = (v^j)^{-C_{1:k}}$. Then, $\sigma^{*} := (\sigma_{1},\ldots,\sigma_{k}, \pi_{k+1},\ldots,\pi_{n}) \in P^{\mu}_{\bb}(V)$ .
\end{lemma}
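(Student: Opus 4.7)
The plan is to decompose $\dd^\mu_\bb(\sigma^\ast, V)$ using the closed-form identity \eqref{DioMerda}, separating the sum over $x\in[n]$ into contributions from $x\in C_{1:k}$ and from $x\in C_{k+1:n}$, and to show that every $\pi$-dependent part of the resulting expression coincides, up to a $\pi$-independent additive constant, with the restricted objective \eqref{porco dio gli indici}. Once this is done, optimality of $\pi$ for the restricted problem forces $\sigma^\ast$ to minimize $\dd^\mu_\bb(\cdot,V)$ over the set of linear orders whose top $k$ positions agree with $(\sigma_1,\ldots,\sigma_k)$; since $\sigma$ itself lies in this set and is a global minimizer, we obtain $\sigma^\ast\in P^\mu_\bb(V)$.

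For the first block, fix $x=\sigma_i\in C_{1:k}$ with $i\le k$. Regardless of the arrangement of $C_{k+1:n}$ in the last $n-k$ positions of $\sigma^\ast$, one has $x^{\downarrow,\sigma^\ast}=\{\sigma_{i+1},\ldots,\sigma_k\}\cup C_{k+1:n}$ \emph{as a set}, so $|x^{\downarrow,\sigma^\ast}|$ and $|x^{\downarrow,\sigma^\ast}\cap x^{\downarrow,v^j}|$ are the same as the corresponding quantities for $\sigma$. Hence the contribution $\sum_{j=1}^{m}\sum_{x\in C_{1:k}}(f_\bb(|x^{\downarrow,\sigma^\ast}|)+f_\bb(|x^{\downarrow,v^j}|)-2f_\bb(|x^{\downarrow,\sigma^\ast}\cap x^{\downarrow,v^j}|))\mu(x)$ equals the analogous quantity with $\sigma$ in place of $\sigma^\ast$, and is in particular independent of the choice of $\pi$.

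For the second block, fix $x\in C_{k+1:n}$. Since every element of $C_{1:k}$ lies above $C_{k+1:n}$ in $\sigma^\ast$, both $x^{\downarrow,\sigma^\ast}$ and $x^{\downarrow,\sigma^\ast}\cap x^{\downarrow,v^j}$ are subsets of $C_{k+1:n}$, with $|x^{\downarrow,\sigma^\ast}|\le n-k-1$ and $|x^{\downarrow,\sigma^\ast}\cap x^{\downarrow,v^j}|\le n-k-1$; moreover, viewing $\pi$ as a bijection from $\{k+1,\ldots,n\}$ onto $C_{k+1:n}$, one has $x^{\downarrow,\sigma^\ast}=x^{\downarrow,\pi}$ and $x^{\downarrow,\sigma^\ast}\cap x^{\downarrow,v^j}=x^{\downarrow,\pi}\cap x^{\downarrow,\tilde v^j}$ because intersecting with $C_{k+1:n}$ replaces $v^j$ by $\tilde v^j$. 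Using $\binom{t}{j-1}=0$ when $t\le n-k-1$ and $j>n-k$, the $f_\bb$ terms whose arguments are bounded by $n-k-1$ agree with the corresponding $f_{\bb_{2:n-k}}$ terms. The only piece that does not fit the restricted objective is $f_\bb(|x^{\downarrow,v^j}|)\mu(x)$, which depends only on $(v^j,x)$ and is therefore $\pi$-independent; adding and subtracting $f_{\bb_{2:n-k}}(|x^{\downarrow,\tilde v^j}|)\mu(x)$ converts the second block, summed over $j$ and $x\in C_{k+1:n}$, into \eqref{porco dio gli indici} evaluated at $\omega=\pi$ plus a constant independent of $\pi$.

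Putting the two blocks together, $\dd^\mu_\bb(\sigma^\ast,V)$ equals the objective \eqref{porco dio gli indici} evaluated at $\pi$ plus a constant $K(\sigma,V)$ that does not depend on how $C_{k+1:n}$ is arranged in the bottom positions. Since $\pi$ minimizes \eqref{porco dio gli indici} and $\sigma$ corresponds to a feasible choice (simply take $\omega=(\sigma_{k+1},\ldots,\sigma_n)$), we obtain $\dd^\mu_\bb(\sigma^\ast,V)\le \dd^\mu_\bb(\sigma,V)$; combined with $\sigma\in P^\mu_\bb(V)$, this yields $\sigma^\ast\in P^\mu_\bb(V)$. The only delicate point in the plan is the bookkeeping of indices and the justification that $f_\bb$ and $f_{\bb_{2:n-k}}$ coincide on the integers arising in the second block; both reduce to the vanishing of $\binom{t}{j-1}$ for $t<j-1$, so no genuine difficulty arises beyond careful notation.
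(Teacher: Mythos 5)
Your proposal is correct and follows essentially the same route as the paper's proof: both decompose the distance via the identity \eqref{DioMerda}, split the sum at position $k$, observe that the top block is unaffected by the arrangement of $C_{k+1:n}$, identify the bottom block with the restricted objective \eqref{porco dio gli indici} (replacing $v^j$ by $\tilde v^j$ and $f_\bb$ by $f_{\bb_{2:n-k}}$ since the relevant cardinalities are at most $n-k-1$), and conclude from the minimality of $\pi$ together with the feasibility of $(\sigma_{k+1},\ldots,\sigma_n)$. The bookkeeping you flag as the only delicate point is handled exactly as in the paper.
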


\begin{proof}
The proof follows by observing that
\begin{equation*}
  \begin{aligned}
  \dd^\mu_\bb(\sigma,V) - \sum_{j=1}^{m} \sum_{i=1}^{n}f_\bb\left(\left| {i}^{\downarrow,v^j}\right|\right)\mu(i) &= \sum_{j=1}^{m} \sum_{i=1}^{n}\left(f_\bb\left(\left| {i}^{\downarrow,\sigma}\right|\right) -2f_\bb \left( \left| {i}^{\downarrow,\sigma} \cap {i}^{\downarrow,v^j}\right| \right)\right)\mu({i}) \\
  &= \sum_{j=1}^{m} \sum_{i=1}^{n}\left(f_\bb\left(\left| {\sigma_{i}}^{\downarrow,\sigma}\right|\right)  -2f_\bb \left( \left| {\sigma_{i}}^{\downarrow,\sigma} \cap {\sigma_{i}}^{\downarrow,v^j}\right| \right)\right)\mu({\sigma_{i}}) \\
  &= \sum_{j=1}^{m} \sum_{i=1}^{k}\left(f_\bb\left(\left| {\sigma_{i}}^{\downarrow,\sigma}\right|\right) -2f_\bb \left( \left| {\sigma_{i}}^{\downarrow,\sigma} \cap {\sigma_{i}}^{\downarrow,v^j}\right| \right)\right)\mu({\sigma_{i}}) \\
  &+ \sum_{j=1}^{m} \sum_{i=k+1}^{n} \left(f_\bb\left(\left| {\sigma_{i}}^{\downarrow,\sigma}\right|\right) -2f_\bb \left( \left| {\sigma_{i}}^{\downarrow,\sigma} \cap {\sigma_{i}}^{\downarrow,v^j}\right| \right)\right)\mu({\sigma_{i}}) \\
  &\overset{(a)}{=} \sum_{j=1}^{m} \sum_{i=1}^{k}\left(f_\bb\left(\left| {\sigma_{i}}^{\downarrow,\sigma}\right|\right) -2f_\bb \left( \left| {\sigma_{i}}^{\downarrow,\sigma} \cap {\sigma_{i}}^{\downarrow,v^j}\right| \right)\right)\mu({\sigma_{i}}) \\
  &+ \sum_{j=1}^{m} \sum_{i=k+1}^{n} \left(f_{\bb_{2:n-k}}\left(\left| {\sigma_{i}}^{\downarrow,\sigma}\right|\right) -2f_{\bb_{2:n-k}} \left( \left| {\sigma_{i}}^{\downarrow,\sigma} \cap {\sigma_{i}}^{\downarrow,\tilde{v}^j}\right| \right)\right)\mu({\sigma_{i}}) \\
  &\overset{(b)}{\geq} \sum_{j=1}^{m} \sum_{i=1}^{k}\left(f_\bb\left(\left| {\sigma_{i}}^{\downarrow,\sigma}\right|\right) -2f_\bb \left( \left| {\sigma_{i}}^{\downarrow,\sigma} \cap {\sigma_{i}}^{\downarrow,v^j}\right| \right)\right)\mu({\sigma_{i}}) \\
  &+ \sum_{j=1}^{m} \sum_{i=k+1}^{n} \left(f_{\bb_{2:n-k}}\left(\left| {\sigma_{i}}^{\downarrow,\pi}\right|\right) -2_{\bb_{2:n-k}} \left( \left| {\sigma_{i}}^{\downarrow,\pi} \cap {\sigma_{i}}^{\downarrow,\tilde{v}^j}\right| \right)\right)\mu({\sigma_{i}}) \\
  &\overset{(c)}{=} \ddd(\sigma^{*},V)  - \sum_{j=1}^{m} \sum_{i=1}^{n}f_\bb\left(\left| {i}^{\downarrow,v^j}\right|\right)\mu(i),
  \end{aligned}
\end{equation*}

where ($a$) holds because, for $i \in [k+1,n]$, we have that $\sigma_{j} \notin {\sigma_{i}}^{\downarrow,\sigma}$, for all $j \leq k$. Therefore, for all~$i \in [k+1,n]$, we can replace $v^j$ with $\tilde{v}^j$, as ${\sigma_{i}}^{\downarrow,\sigma} \cap {\sigma_{i}}^{\downarrow,v^j} = {\sigma_{i}}^{\downarrow,\sigma} \cap {\sigma_{i}}^{\downarrow,\tilde{v}^j}$. Furthermore, we have $f_{\bb_{2:n-k}}\left(\left| {\sigma_{i}}^{\downarrow,\sigma}\right|\right) = f_{\bb}\left(\left| {\sigma_{i}}^{\downarrow,\sigma}\right|\right)$, as $\left| {\sigma_{i}}^{\downarrow,\sigma}\right| \leq n-k-1$; $(b)$ holds because $\pi$ is a minimizer of (\ref{porco dio gli indici}) and that the central term of ($15$) does not depend on $\pi$; ($c$) holds because~$\sigma^{*}$ agrees with $\sigma$ on the first $k$ positions and with $\pi$ in the remaining $n-k$, and with the same arguments underlying ($a$).
\end{proof}

\begin{remark}\label{remark for majority}
Observe that minimizing (\ref{porco dio gli indici}) is equivalent (up to appropriate relabeling) to minimize the cumulative sum of some (appropriately defined) top-difference distances. Specifically, let $A \subset [n]$, and define $V^{-A} := ((v^{j})^{-A})_{j=1}^{|V|}$. Then, let  
\begin{itemize}
\item $\hat{\bb} := \bb_{2:n-k}$
\item $z: C_{k+1:n} \to [n-k]$ any bijective map to relabel the indices
\item $\hat{\mu}_{z(i)} := \mu_{i}$, for all $i \in C_{k+1:n}$
\item $\hat{V} := \{v \in \mathbb{S}_{n-(k+1)} \mid \exists \tilde{v} \in V^{-C_{1:k}} \; \textnormal{s.t.} \; v_{z(i)} = \tilde{v}_i, \; \textnormal{for all } i \in C_{k+1:n}\}$
\end{itemize}
Then, minimizing (\ref{porco dio gli indici}) is clearly equivalent to 
\[ \min_{\sigma \in \mathbb{S}_{n-k}} \dd^{\hat{\mu}}_{\hat{\bb}}(\sigma,\hat{V}).\]
\end{remark}

\subsection{Proofs in §\ref{sect:semimetricdiscu}}
\begin{lemma}\label{lemmadelcazzo}
If $n>2$, then $\mathcal{M}^{\geq}_n\cap-\mathcal{M}^{\geq}_n= \{\mathbf{0}\}$.
\end{lemma}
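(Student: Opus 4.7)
The plan is to unfold the definition of $\mathcal{M}^{\geq}_n$ and observe that membership in the intersection forces all pairwise sums $\mu_i+\mu_j$ (with $i\neq j$) to vanish, and then to exploit the hypothesis $n\geq 3$ so that for any index $j$ we can pick two other distinct indices $i,k$ and combine the three equations $\mu_i+\mu_j=0$, $\mu_j+\mu_k=0$, $\mu_i+\mu_k=0$ to deduce $\mu_j=0$.

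\medskip

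Concretely, I would take $\mu\in\mathcal{M}^\geq_n\cap -\mathcal{M}^\geq_n$ and note that for every pair $i\neq j$ we have $\mu_i+\mu_j\geq 0$ (from $\mu\in\mathcal{M}^\geq_n$) and $-\mu_i-\mu_j\geq 0$ (from $\mu\in -\mathcal{M}^\geq_n$), so $\mu_i+\mu_j=0$. Fix any $j\in[n]$; since $n\geq 3$ there exist distinct $i,k\in[n]\setminus\{j\}$. Adding $\mu_i+\mu_j=0$ and $\mu_k+\mu_j=0$ and subtracting $\mu_i+\mu_k=0$ yields $2\mu_j=0$, hence $\mu_j=0$. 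As $j$ was arbitrary, $\mu=\mathbf{0}$. The reverse inclusion $\{\mathbf{0}\}\subseteq \mathcal{M}^\geq_n\cap -\mathcal{M}^\geq_n$ is immediate.

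\medskip

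No step is really an obstacle here; the only thing worth highlighting is that the argument genuinely requires $n\geq 3$, since for $n=2$ the single constraint $\mu_1+\mu_2=0$ leaves a one-dimensional family of nonzero solutions (namely $\mu=(t,-t)$), which is exactly why the $n=2$ case of Proposition~\ref{prop:semimetric} had to be treated separately.
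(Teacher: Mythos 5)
Your proof is correct and follows essentially the same route as the paper: both reduce membership in the intersection to the system of equations $\mu_i+\mu_j=0$ for all distinct pairs and then combine three such equations (using $n\geq 3$) to force each coordinate to vanish. The only cosmetic difference is that you add and subtract the equations to get $2\mu_j=0$ directly, while the paper chains them as $\mu_i=-\mu_j=\mu_k=-\mu_i$; your closing remark about why $n=2$ fails is a nice bonus but not part of the paper's argument.
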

\begin{proof}
Let $\mu\in\mathcal{M}^{\geq}_n\cap-\mathcal{M}^{\geq}_n$. For all distinct $i,j,k\in [n]$,
\[
\mu_i+\mu_j=0,\ \mu_i+\mu_k=0, \ \textnormal{and}\, \mu_j+\mu_k=0
\]
which yields $\mu_i=-\mu_j=\mu_k=-\mu_i$ which implies $\mu_i=0$, and hence $\mu\equiv 0$.
\end{proof}
\begin{lemma}\label{lemmadelculoslavato} Suppose $n>2$.
Given $\bb\in \R^{n-1}$,
\begin{itemize}
\item If $\bb\in (B^\geq_n)^{\mathsf c}\cap(-B^\geq_n)^{\mathsf c}$, then $\ddd$ is a semimetric iff $\mu\equiv0$.
\item If $\bb\in B^\geq_n\cap(-B^\geq_n)=R_n^+\cap R_n^-$, then $\ddd$ is a semimetric iff $\mu\in \R^n$.
\item If $\bb\in R_n^+\setminus R_n^-$, then $\ddd$ is a semimetric iff $\mu\in \mathcal M_n^\geq$.
\item If $\bb\in R_n^-\setminus R_n^+$, then $\ddd$ is a semimetric iff $\mu\in -\mathcal M_n^\geq$.
\item If $\bb\in B^\geq_n\setminus R_n^+$, then $\ddd$ is a semimetric iff $\mu\in \mathcal M_n^+$.
\item If $\bb\in(-B^\geq_n)\setminus R_n^-$, then $\ddd$ is a semimetric iff $\mu\in -\mathcal M_n^+$.
\end{itemize}
\end{lemma}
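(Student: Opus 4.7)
The approach is to derive from the two semimetric axioms tight constraints on $\mu$ as a function of $\bb$, by exploiting the precise value of $\ddd$ on pairs that differ by an adjacent swap and by constructing specific triples that witness the triangle inequality's behaviour. Sufficiency is then supplied by Lemma~\ref{Ehvoleeevi}, the binary reduction of $\ddd$ when $\bb \in R_n^+$, and the reflection $\ddd = \dd^{-\mu}_{-\bb}$.

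The key computation is the value of $\ddd$ on adjacent position swaps. Specializing identity~\eqref{DioMerda} to $\pi = \sigma t_{a,a+1}$ yields
\[
\ddd(\sigma, \sigma t_{a,a+1}) = \phi_a \bigl( \mu(\sigma_a) + \mu(\sigma_{a+1}) \bigr), \qquad \phi_a := f_\bb(n-a) - f_\bb(n-a-1),
\]
and, by Theorem~\ref{axiomatic_characterization_THM}, $F : \bb \mapsto (\phi_a)_{a=1}^{n-1}$ is a linear bijection of $\R^{n-1}$. Requiring $\ddd \geq 0$ on every adjacent swap and varying $(\sigma_a, \sigma_{a+1})$ over all ordered pairs forces $\phi_a(\mu_i + \mu_j) \geq 0$ for every $a \in [n-1]$ and distinct $i, j \in [n]$. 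By the sign pattern of $(\phi_a)$ this partitions into four mutually exclusive regimes: $\bb = \mathbf{0}$ imposes no restriction; all $\phi_a \geq 0$ with at least one positive forces $\mu \in \mathcal{M}_n^\geq$; all $\phi_a \leq 0$ with at least one negative forces $\mu \in -\mathcal{M}_n^\geq$; and mixed signs in $(\phi_a)$, which via Lemma~\ref{lemmadelcazzo} force $\mu = \mathbf{0}$.

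The triangle inequality then interacts with individual $\beta_k$ for $k \geq 3$. For any candidate $i_0 \in [n]$, I plan to exhibit a triple $\sigma, \omega, \pi$ (constructed by shuffling $i_0$ across a contiguous block of $k-1$ alternatives, with $\omega$ lying between $\sigma$ and $\pi$ in the sense of Axiom~\ref{A1}) whose pairwise distances, computed via~\eqref{DioMerda}, reduce the triangle inequality to an inequality of the form $\beta_k \mu_{i_0} \geq 0$. Consequently, a non-zero $\beta_k$ of sign $\epsilon \in \{\pm 1\}$ forces $\epsilon \mu_{i_0} \geq 0$ for every $i_0$; this upgrades $\mu \in \pm \mathcal{M}_n^\geq$ to $\mu \in \epsilon \mathcal{M}^+_n$. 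If $(\beta_k)_{k \geq 3}$ carries both signs, applying the upgrade in both directions and invoking Lemma~\ref{lemmadelcazzo} forces $\mu = \mathbf{0}$.

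With these two steps the six cases assemble cleanly. Case~2 collapses to $\bb = \mathbf{0}$ (so $\ddd \equiv 0$ for every $\mu$) by applying the non-negativity step to both $\bb$ and $-\bb$. Cases~3 and~4 use the binary form $\ddd = \beta_2 \sum_{(i,j)\in I(\sigma,\pi)} (\mu_i + \mu_j)$ supplied by Theorem~\ref{roar} to identify the semimetric property with $\mu \in \pm \mathcal{M}_n^\geq$. Cases~5 and~6 combine the non-negativity step with the one-signed specialization of the triangle-inequality step for necessity, while sufficiency comes from Lemma~\ref{Ehvoleeevi}. Case~1 is residual: for $\bb \notin B^\geq_n \cup (-B^\geq_n)$, either $(\phi_a)$ already has mixed signs (Step~1 alone gives $\mu = \mathbf{0}$) or $(\phi_a)$ is one-signed while $(\beta_k)_{k \geq 3}$ carries a coefficient of the opposite sign, in which case the Step~2 upgrade together with Step~1 again collapses $\mu$ to $\mathbf{0}$. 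The main obstacle is the construction in Step~2: producing an explicit $\sigma - \omega - \pi$ whose discrepancy in~\eqref{def:dbetamu} survives only on the menus of size $k$ containing $i_0$, so that the triangle inequality isolates a single coefficient $\beta_k$ and a single coordinate $\mu_{i_0}$; the combinatorial bookkeeping required is routine but must be done with care.
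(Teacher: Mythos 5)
Your Step 1 (non-negativity on adjacent swaps forcing $\phi_a(\mu_i+\mu_j)\geq 0$, hence the four sign regimes) and your use of Lemma~\ref{Ehvoleeevi} plus the reflection $\ddd=\dd^{-\mu}_{-\bb}$ for sufficiency match the paper. The gap is in Step 2. You want a triple $\sigma-\omega-\pi$ whose triangle-inequality deficit is supported only on menus of a single size $k\geq 3$, so as to extract $\beta_k\mu_{i_0}\geq 0$. Such a triple cannot exist: if a menu $A$ with $|A|=k\geq 4$ has three distinct maxima under $\sigma,\omega,\pi$, then $A$ contains a non-maximal element $x$, and $A\setminus\{x\}$ has the same three distinct maxima, so the deficit also receives a nonzero contribution at size $k-1$ (the analogous obstruction applies to the two-distinct-maxima case for $k\geq 3$). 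More decisively, the conclusion you want to draw from Step 2 is false. The set $B_n^\geq=F^{-1}(\R_+^{n-1})$ contains vectors whose tail $(\beta_k)_{k\geq 3}$ has both signs — e.g.\ $\pp=(0,\dots,0,1,0)$ gives $\beta_a=(-1)^{a+1}(a-2)$, so $\beta_3=1$, $\beta_4=-2$, and this $\bb$ lies in $B_n^\geq\setminus R_n^+$. For such $\bb$ the lemma asserts that every $\mu\in\mathcal M_n^+$ yields a semimetric, whereas your "upgrade in both directions'' would collapse $\mu$ to $\mathbf 0$. Your residual analysis of Case 1 shows the same confusion: $(B^\geq_n)^{\mathsf c}\cap(-B^\geq_n)^{\mathsf c}$ is exactly the set where $(\phi_a)$ has mixed signs, so your second branch ("$(\phi_a)$ one-signed but some $\beta_k$ of opposite sign'') is vacuous there.

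The quantity that actually controls the triangle inequality is not the sign of any individual $\beta_k$ but the sign of the aggregates $\sum_{k=3}^{n+2-a}\binom{n-a}{k-2}\beta_k$. The paper's construction (take $\pi$ antipodal to $\sigma$ and $\omega=\sigma t_{a-1,a}$) produces a deficit equal to $2\mu(\sigma_a)\sum_{k=3}^{n+2-a}\binom{n-a}{k-2}\beta_k$, with contributions spread over all sizes $3\leq k\leq n+2-a$. It then determines the signs of these aggregates not from the $\beta_k$ themselves but from the hypothesis $\bb\in B_n^\geq$ applied to the counting measure (so each aggregate is $\geq 0$), together with the invertibility of the linear map $(\beta_k)_{k=3}^n\mapsto\bigl(\sum_{k}\binom{n-a}{k-2}\beta_k\bigr)_a$ to guarantee that at least one aggregate is strictly positive when $\bb\notin R^+_n\cup R^-_n$. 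You need some version of this argument; the single-$\beta_k$ isolation you propose is not repairable.
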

\begin{proof}
Let $\bb\in (B^\geq_n)^{\mathsf c}\cap(-B^\geq_n)^{\mathsf c}$. By Theorem \ref{axiomatic_characterization_THM}, and by the fact that $F$ is a bijective linear map, we have that $(B^\geq_n)^{\mathsf c}\cap(-B^\geq_n)^{\mathsf c}=F^{-1}((\R_+^{n-1})^{\mathsf c}\cap (\R_-^{n-1})^{\mathsf c})$. Thus, there are $a,b\in[n-1]$ such that $\phi_a>0$ and $\phi_b<0$. If $\mu$ is not identically zero, then by Lemma \ref{lemmadelcazzo} there are distinct $i,j\in[n]$ such that $\mu_i+\mu_j\neq0$. We can now find $\sigma,\pi\in\LL_n$ such that $\sigma_a=i=\pi_b,\sigma_{a+1}=j=\pi_{b+1}$ and thus
$$\min(\ddd(\sigma,\sigma t_{a,a+1}),\ddd(\pi,\pi t_{b,b+1}))=\min(\phi_a(\mu_i+\mu_j),\phi_b(\mu_i+\mu_j))<0$$
which is impossible if $\ddd$ is a semimetric. Thus, if $\bb\in (B^\geq_n)^{\mathsf c}\cap(-B^\geq_n)^{\mathsf c}$, then $\mu\equiv0$.\\
If $\bb\in B_n^\geq\cap (-B_n^\geq)=F^{-1}(\R_+^{n-1}\cap \R_-^{n-1})=F^{-1}(\{\mathbf0\})=\mathbf\{0\}$, then $\ddd\equiv0$ for all $\mu\in \R^n$, and hence it is always a semimetric.\\
In all the remaining case we always have that $\bb\neq\mathbf0$. On the one hand if $\bb\in R_+\setminus R_-\sqcup B^\geq_n\setminus R_+$, the non-negativity of $\ddd$ requires $\mu_i+\mu_j\geq0$ for all distinct $i,j\in [n]$, that is $\mu\in \mathcal{M}^{\geq}_n$. Indeed, since $\bb\neq \mathbf{0}$, there exists $a\in [n-1]$ such that $\phi_a>0$, by Theorem \ref{axiomatic_characterization_THM}. For all distinct $i,j\in [n]$, there exists $\sigma\in \LL_n$ such that $\sigma_a=i,\sigma_{a+1}=j$, and hence
\[
\ddd(\sigma,\sigma t_{a,a+1})=(\mu(\sigma_a)+\mu(\sigma_{a+1}))\phi_a.
\]
On the other hand, by an analogous argument, if $\bb\in R_-\setminus R_+\sqcup (-B^\geq_n)\setminus R_-$, the non-negativity of $\ddd$ requires $\mu_i+\mu_j\leq0$ for all distinct $i,j\in [n]$, that is $\mu\in -\mathcal{M}^+_n$. 
Therefore in the last four cases of the Lemma, $\ddd$ is non-negative if and only if $\mu\in \mathcal{M}^{\geq}_n$. From now on we assume that $\mu\in \mathcal{M}^{\geq}_n$. Next we consider triangle inequality. Since $n>2$, we have that for all $\sigma,\omega,\pi\in \LL_n$,
\begin{align*}
&\ddd(\sigma,\omega)+\ddd(\omega,\pi)-\ddd(\sigma,\pi)\\
    &=\sum_{A\in\mathcal P_n}\beta_{|A|}\left(\mu(\bigtriangleup_A(\sigma,\omega))+\mu(\bigtriangleup_A(\omega,\pi))-\mu(\bigtriangleup_A(\sigma,\pi))\right)\\
    &=\sum_{A\in\mathcal P_n}\beta_{|A|}\begin{cases}
        2\mu(M(A,\omega))&\textrm{ if }|\{M(A,\sigma),M(A,\omega),M(A,\pi)\}|=3\\
        2\left(\mu(M(A,\omega))+\mu(M(A,\sigma))\right)&\textrm{ if }M(A,\sigma)=M(A,\pi)\neq M(A,\omega)\\
        0&\textrm{ otherwise. }
    \end{cases}
\end{align*}
If $\bb\in R_n^+\setminus R^-_n\sqcup R_n^-\setminus R^+_n$, then any non-negative $\ddd$ satisfies triangle inequality because condition $|\{M(A,\sigma),M(A,\omega),M(A,\pi)\}|=3$ cannot be satisfied by sets $A$ of cardinality $2$. Therefore, the third and fourth bullet points of the lemma are fullfilled. 
\\
Assume now that $\bb\in B_n^\geq\setminus  R^+_n\sqcup (-B_n^\geq)\setminus R^-_n$. Let $\sigma\in \LL_n$ and $\pi$ be its antipodal permutation. Then, for any $A\in\mathcal P_n$, we have $M(A,\sigma)\neq M(A,\pi)$, thus $$\left(\mu(\bigtriangleup_A(\sigma,\omega))+\mu(\bigtriangleup_A(\omega,\pi))-\mu(\bigtriangleup_A(\sigma,\pi))\right)\neq 0$$ iff $M(A,\sigma), M(A,\pi)$, and $M(A,\omega)$ are all distinct. Let $a\in [2,n]$. If $\omega=\sigma t_{a-1,a}$, then $M(A,\sigma)=M(A,\omega)$ for all $A\in \mathcal{P}_n$, except those $A$ containing both $\sigma_{a-1}$ and $\sigma_{a}$ but no $\sigma_{j}$ for $j<a-1$. 
Overall we have $\binom{n-a}{k-2}$ sets of size $k$ with this property. Moreover, for all these sets we have that $M(A,\omega)\neq M(A,\pi)$ except for the case in which $A=\{\sigma_{a-1},\sigma_a\}$. Thus, we have that
\begin{equation}\label{eq:dioassassinostronzoinculatodaunarinoceronte}
\ddd(\sigma,\omega)+\ddd(\omega,\pi)-\ddd(\sigma,\pi)=2\mu(\sigma_{a})\sum_{k=3}^{n+2-a}\binom{n-a}{k-2}\beta_k.
\end{equation}
If $\bb\in B^{\geq}_n\setminus R^+_n$, $\dd_\bb$ is a semimetric and thus, by triangle inequality,
$$
\dd_\bb(\sigma,\omega)+\dd_\bb(\omega,\pi)-\dd_\bb(\sigma,\pi)=2\sum_{k=3}^{n+2-a}\binom{n-a}{k-2}\beta_k\geq 0
$$
and hence $\sum_{k=3}^{n+2-a}\binom{n-a}{k-2}\beta_k\geq 0$. Notice that the map $(\beta_k)_{k=3}^n\mapsto \left(\sum_{k=3}^{n+2-a}\binom{n-a}{k-2}\beta_k\right)_{a=2}^{n-1}$ is bijective and thus the quantities $\sum_{k=3}^{n+2-a}\binom{n-a}{k-2}\beta_k$ are all zero iff $(\beta_k)_{k=3}^n\equiv 0$. However, this is impossible since we are assuming $\bb\in B_n^\geq\setminus R^+_n=B_n^\geq\setminus (R^+_n\cup R^-_n)$. Thus, there is some $a\in[2,n-1]$ such that $\sum_{k=3}^{n+2-a}\binom{n-a}{k-2}\beta_k>0$.
\\
Specularly, since $\bb\in (-B^{\geq}_n)\setminus R^-_n$, then $-\bb\in B^{\geq}_n\setminus R^+_n$ and $\dd_{-\bb}=-\dd_\bb$ is a semimetric, we conclude that there is some $a\in[2,n-1]$ such that
$\sum_{k=3}^{n+2-a}\binom{n-a}{k-2}\beta_k<0$. \\
In conclusion, if $\bb\in B^{\geq}_n\setminus R^+_n $ (or $(-B^{\geq}_n)\setminus R^-_n$), by \eqref{eq:dioassassinostronzoinculatodaunarinoceronte}, and the arbitrariety of $\sigma$, we have that $\ddd$ satisfies the triangle inequality if and only if $\mu\in \mathcal{M}^+_n$ (or $\mathcal{M}^+_n$).
\end{proof}
\color{black}
\begin{lemma}\label{dioassassinodicazzullo}
   Suppose $n>2$. Given $\mu\in\R^n$,
    \begin{itemize}
        \item If $\mu\in (\mathcal{M}^{\geq}_n)^{\mathsf c}\cap(-\mathcal{M}^{\geq}_n)^{\mathsf c}$, then $\ddd$ is a semimetric iff $\bb=\mathbf0$.
        \item If $\mu\in \mathcal{M}^{\geq}_n\cap(-\mathcal{M}^{\geq}_n)$, then $\ddd$ is a semimetric iff $\bb\in\R^{n-1}$.
        \item If $\mu\in \mathcal{M}^{+}_n\setminus(-\mathcal{M}^{+}_n)$, then $\ddd$ is a semimetric iff $\bb\in B^\geq_n$.
        \item If $\mu\in (-\mathcal{M}^{+}_n)\setminus\mathcal{M}^{+}_n$, then $\ddd$ is a semimetric iff $\bb\in -B^\geq_n$.
        \item If $\mu\in \mathcal{M}^{\geq}_n\setminus\mathcal{M}^{+}_n$, then $\ddd$ is a semimetric iff $\bb\in R^+_n$.
        \item If $\mu\in (-\mathcal{M}^{\geq}_n)\setminus(-\mathcal{M}^{+}_n)$, then $\ddd$ is a semimetric iff $\bb\in R^-_n$.
    \end{itemize}
\end{lemma}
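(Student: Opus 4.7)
The plan is to derive this lemma from its $\bb$-indexed counterpart, Lemma~\ref{lemmadelculoslavato}, by swapping the roles of $\bb$ and $\mu$. First I would observe that the six bullets of Lemma~\ref{lemmadelculoslavato} genuinely partition $\R^{n-1}$. Disjointness rests on $B^\geq_n\cap (-B^\geq_n)=\{\mathbf 0\}$ (if both $\ddd$ and $-\ddd$ are semimetrics, then $\ddd\equiv 0$, forcing $\bb=\mathbf 0$) and $R^+_n\cap R^-_n=\{\mathbf 0\}$; the union covers $\R^{n-1}$ because $R^+_n\subseteq B^\geq_n$ and $R^-_n\subseteq -B^\geq_n$ (indeed $\dd_{(\beta_2,0,\ldots,0)}=2\beta_2\dd_K$, which is a semimetric whenever $\beta_2\geq 0$).

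Given this partition, for each of the six cases of $\mu$ I would run through the partition classes of $\bb$ and check, one by one, whether the constraint that Lemma~\ref{lemmadelculoslavato} places on $\mu$ is satisfied. If so, that class contributes to the set of admissible $\bb$; if not, it is discarded. The combinatorics then collapses to the claimed answers as follows.

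For $\mu\in(\mathcal M^\geq_n)^{\mathsf c}\cap(-\mathcal M^\geq_n)^{\mathsf c}$, every non-degenerate constraint in Lemma~\ref{lemmadelculoslavato} requires $\mu\in\mathcal M^\geq_n\cup(-\mathcal M^\geq_n)$ or $\mu\equiv\mathbf 0$, all of which fail, so only $\bb=\mathbf 0$ survives. For $\mu\in\mathcal M^\geq_n\cap(-\mathcal M^\geq_n)$, Lemma~\ref{lemmadelcazzo} yields $\mu\equiv 0$, so $\ddd\equiv 0$ is trivially a semimetric for every $\bb$. For $\mu\in\mathcal M^+_n\setminus(-\mathcal M^+_n)$, the compatible classes are $\{\mathbf 0\}$, $R^+_n\setminus\{\mathbf 0\}$, and $B^\geq_n\setminus R^+_n$ (since $\mathcal M^+_n\subseteq\mathcal M^\geq_n$); the existence of some $\mu_{i_0}>0$ rules out $R^-_n\setminus\{\mathbf 0\}$, $(-B^\geq_n)\setminus R^-_n$, and the ``neither'' class, and the union of the surviving ones is $B^\geq_n$. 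For $\mu\in\mathcal M^\geq_n\setminus\mathcal M^+_n$, the existence of an entry $\mu_{k_0}<0$ combined with $\mu_{k_0}+\mu_j\geq 0$ forces $\mu_j>0$ for all $j\neq k_0$; hence $\mu$ has entries of both signs, which eliminates $\mathcal M^+_n$, $-\mathcal M^+_n$, and $-\mathcal M^\geq_n$ (the latter again via Lemma~\ref{lemmadelcazzo}), leaving only $\{\mathbf 0\}\cup(R^+_n\setminus\{\mathbf 0\})=R^+_n$. The two remaining bullets follow at once by negation, using $\dd^{-\mu}_{-\bb}=\ddd$.

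I do not expect any substantive obstacle: all analytic content (the adjacent-swap argument for non-negativity, the antipodal triangle-inequality identity~\eqref{eq:dioassassinostronzoinculatodaunarinoceronte}, and the bijectivity of $F$) has already been carried out in Lemma~\ref{lemmadelculoslavato}. The only points requiring care are the verification that the six regions of Lemma~\ref{lemmadelculoslavato} partition $\R^{n-1}$ and the tabulation that, in each case of $\mu$, the correct union of partition classes is recovered; the latter is pure bookkeeping.
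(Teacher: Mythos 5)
Your proposal is correct, and it reaches the statement by a route that differs from the paper's in one substantive respect. The paper's proof establishes the \emph{necessary} conditions on $\bb$ directly: for the first bullet it picks $a\in\argmax_c\phi_c$ and $b\in\argmin_c\phi_c$ and uses adjacent transpositions placing a positive-sum pair and a negative-sum pair of $\mu$-weights at those positions to force $\pp\equiv\mathbf 0$, and for bullets 3--6 it shows via the identity $\ddd(\sigma,\sigma t_{a,a+1})=\phi_a(\mu(\sigma_a)+\mu(\sigma_{a+1}))$ that non-negativity forces $\bb\in B^\geq_n$ (resp.\ $-B^\geq_n$); only then does it invoke Lemma~\ref{lemmadelculoslavato} for sufficiency and for the refinement to $R^+_n$ in case 5. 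You instead obtain necessity purely formally, by observing that the six hypotheses of Lemma~\ref{lemmadelculoslavato} partition $\R^{n-1}$ and discarding every cell whose associated condition on $\mu$ fails; no new computation with transpositions is needed. This is a genuine simplification in bookkeeping terms, and your verification of the partition (via $B^\geq_n\cap(-B^\geq_n)=\{\mathbf 0\}$, which needs $\dd_\bb\equiv 0\Rightarrow\bb=\mathbf 0$, available from Proposition~\ref{hoilmercedes} or Theorem~\ref{scorpions}, and via $R^+_n\subseteq B^\geq_n$, $R^-_n\subseteq -B^\geq_n$) is sound. The case analyses you sketch for $\mu$ all check out, including the observation that $\mu\in\mathcal M^\geq_n\setminus\mathcal M^+_n$ forces entries of both signs, and the reduction of bullets 4 and 6 via $\dd^{-\mu}_{-\bb}=\ddd$ matches the paper's own symmetry argument. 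What the paper's approach buys is self-containedness of the necessity step; what yours buys is that all the analytic content is quarantined in Lemma~\ref{lemmadelculoslavato} and the present lemma becomes its exact dual under a covering argument.
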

\begin{proof}
Suppose that $\ddd$ is a semimetric. If $\mu\notin (\mathcal{M}^{\geq}_n\cup-\mathcal{M}^{\geq}_n)$, then there exist distinct $i,j\in [n]$ such that $\mu_i+\mu_j<0$ and distinct $k,\ell\in [n]$ such that $\mu_k+\mu_\ell>0$. Let now $a\in\argmax_{c\in[n-1]}\phi_c$ and $b\in\argmin_{c\in[n-1]}\phi_c$.
Let now $\sigma\in \LL_n$ be such that $\sigma_a=i$ and $\sigma_{a+1}=j$. Then,
\[
\ddd(\sigma,\sigma t_{a,a+1})=\phi_a(\mu_i+\mu_j)
\]
is non-negative if and only if $\phi_a\leq 0$. Similarly, if $\sigma\in \LL_n$ is such that $\sigma_b=k$ and $\sigma_{b+1}=\ell$. Then,
$$
\ddd(\sigma,\sigma t_{b,b+1})=\phi_b(\mu_k+\mu_\ell)
$$
is non-negative iff $\phi_b\geq 0$. Thus, if $\mu\notin (\mathcal{M}^{\geq}_n\cup-\mathcal{M}^{\geq}_n)$, then $\phi\equiv0$ and, by Theorem \ref{axiomatic_characterization_THM}, $\bb=\mathbf{0}$.\\ 
Suppose now that $\mu\in\mathcal{M}^{\geq}_n\cap-\mathcal{M}^{\geq}_n$, then, by Lemma \ref{lemmadelcazzo}, $\mu\equiv 0$ and thus any $\bb\in\R^{n-1}$ will make $\ddd$ the trivial semimetric.\\
If $\mu\in\mathcal M^{\geq}_n\setminus\{\mathbf 0\}=\mathcal M^+_n\setminus(-\mathcal M^+_n)\sqcup\mathcal M^\geq_n\setminus\mathcal M^+_n$, then for any distinct $i,j\in[n]$, $\mu_i+\mu_j\geq0$ and there are some distinct $i,j\in [n]$ such that $\mu_i+\mu_j>0$. Now, for any $a\in[n-1]$, there is some $\sigma\in\LL_n$ such that $\sigma_a=i$ and $\sigma_{a+1}=j$. Therefore,
$$\ddd(\sigma,\sigma t_{a,a+1})=\phi_a(\mu_i+\mu_j)\geq0,$$
implies that $\phi_a\geq0$ for any $a\in[n-1]$ and thus, $\bb$ must be in $B^\geq_n$.
Thus, we have shown that in cases 3 and 5, if $\ddd$ is a semimetric, then $\bb$ must be in $B^\geq_n$. Applying Lemma \ref{lemmadelculoslavato}, we conclude that this is sufficient in case 3, and that under case 5, $\ddd$ is a semimetric iff $\bb\in R^+_n$. \\
Similarly, if $\mu\in(-\mathcal M^{\geq}_n)\setminus\{\mathbf 0\}=(-\mathcal M^{+}_n)\setminus\mathcal M^{+}_n\sqcup (-\mathcal M^{\geq}_n)\setminus(-\mathcal M^{+}_n)$, we conclude that under cases 4 and 6, if $\ddd$ is a semimetric, then $\bb$ must be in $-B^\geq_n$. By Lemma \ref{lemmadelculoslavato}, we complete these two cases.
\end{proof}

\begin{proof}[Proof of Proposition \ref{prop:semimetric}]
Suppose first that $n=2$. If $\ddd$ is a semimetric, then we have three cases. If $\beta_2>0$, then we must have that $\mu_1+\mu_2\geq 0$, and hence $(\bb,\mu)\in R_2^{++}\tim \mathcal{M}^{\geq}_2$. If $\beta_2<0$, then $\mu_1+\mu_2\leq 0$, and hence $(\bb,\mu)\in R^{--}_2\tim (-\mathcal{M}^{\geq}_2)$. If $\beta_2=0$, then $\ddd\equiv 0$ for all $\mu\in \R^2$. The converse inclusion readily follows.
\par\medskip
Suppose $n>2$. The claim follows by Lemmas \ref{lemmadelculoslavato} and \ref{dioassassinodicazzullo}.
\end{proof}

\subsection{Proofs in §\ref{properties of the metric}}
\begin{proof}[Proof of Proposition \ref{hoilmercedes}]
One implication is straightforward. Suppose that $\ddd=\dd_\ga^\mu$ for some $\bb,\ga\in \mathbb{R}^{n-1}$. Then, for all~$a\in[n-1]$ and $\sigma\in\LL_n$,
$$
\phi^\bb_a(\mu(\sigma_a)+\mu(\sigma_{a+1}))=\ddd(\sigma,\sigma t_{a,a+1})=\dd_\ga^\mu(\sigma,\sigma t_{a,a+1})=\phi^\ga_a(\mu(\sigma_{a})+\mu(\sigma_{a+1})).
$$
Since, for every $a\in[n-1]$ there is some $\sigma$ such that $\mu(\sigma_a)+\mu(\sigma_{a+1})\neq0$, we conclude that $\phi_a^\bb=\phi_a^\ga$ for all $a\in[n-1]$. By Theorem \ref{scorpions} this implies that $\bb=\ga$.
\end{proof}

\begin{proof}[Proof of Proposition \ref{bigblackcockcomepiaceadandreino}]
  One implication is straightforward. For any $a\in[n-1]$,
  $$\phi_a(\mu(\sigma_a)+\mu(\sigma_{a+1}))=\ddd(\sigma,\sigma t_{a,a+1})=\dd^\nu_\bb(\sigma,\sigma t_{a,a+1})=\phi_a(\nu(\sigma_a)+\nu(\sigma_{a+1})).$$
  Since $\dd_{\bb}$ or $\dd_{-\bb}$ is a metric, $|\phi_a|=|\dd_{\bb}(\sigma,\sigma t_{a,a+1})|/2>0$ for all $a\in [n-1]$. Therefore $\mu(i)+\mu(j)=\nu(i)+\nu(j)$ for any distinct $i,j\in[n]$. Thus, in particular, for any distinct~$i,j,k\in[n]$,
  \begin{align*}
    \mu(i)+\mu(j)=\nu(i)+\nu(j),\,\,\,
    \mu(i)+\mu(k)=\nu(i)+\nu(k),\,\,\,
    \mu(j)+\mu(k)=\nu(j)+\nu(k)
  \end{align*}
  Summing the first two equations and subtracting the last, we get that $\mu(i)=\nu(i)$. By the arbitrariness of $i,j,k$, the proof is concluded.
\end{proof}


\begin{proof}[Proof of Proposition \ref{Galletta}]
  Let $\ddd$ be neutral, then for any $a\in[n-2]$, we have that
  \begin{equation}\label{Crisostomo}
  \ddd(\Id,t_{a+1,a+2})=\ddd(t_{a,a+1},t_{a,a+1}t_{a+1,a+2})=\ddd(t_{a,a+2},t_{a,a+2}t_{a+1,a+2})
  \end{equation}
  But
  \begin{align*}
    \ddd(\Id,t_{a+1,a+2})&=\phi_{a+1}(\mu(a+1)+\mu(a+2))\\
    \ddd(t_{a,a+1},t_{a,a+1}t_{a+1,a+2})&=\phi_{a+1}(\mu(a)+\mu(a+2))\\
    \ddd(t_{a,a+2},t_{a,a+2}t_{a+1,a+2})&=\phi_{a+1}(\mu(a)+\mu(a+1))
  \end{align*}
  Since $\dd_\bb$ or $\dd_{-\bb}$ is a metric, we have that $|\phi_{a+1}|=|\dd_\bb(\Id,t_{a+1,a+2})|/2>0$. Therefore, by \eqref{Crisostomo} it follows that  $\mu(a)=\mu(a+1)=\mu(a+2)$. By the arbitrariness of $a$, we obtain that $\mu$ is constant. To show the converse, fix $S\subseteq [n]$, $\sigma,\pi,\omega\in \LL_n$. We have that, $M(\sigma(S), \sigma\pi)=\sigma(M(S,\pi))$. Indeed, 
  \begin{align*}
    M(\sigma(S),\sigma\pi)&=\{x\in \sigma(S): 
 \not\exists y\in\sigma(S) \textrm{ s.t. } y>_{\sigma\pi} x\}\\
 &=\{\sigma_z\in \sigma(S):\not\exists \sigma_w\in \sigma(S) \textrm{ s.t. } \sigma_w>_{\sigma\pi} \sigma_z \}\\
 &=\{\sigma_z\in \sigma(S):\not\exists w\in S \textrm{ s.t. } w>_{\pi} z \}\\
 &=\sigma\left(\{z\in S:\not\exists w\in S \textrm{ s.t. } w>_{\pi} z \}\right)\\
 &=\sigma\left(M(S,\pi)\right).
  \end{align*}
  Thus, 
  \begin{align*}
  \bigtriangleup_{\sigma(S)}(\sigma\pi,\sigma\omega)&= M(\sigma(S),\sigma\pi)\bigtriangleup M(\sigma(S),\sigma\omega)
  =\sigma(M(S,\pi))\bigtriangleup\sigma(M(S,\omega))\\
  &=\sigma(M(S,\pi)\bigtriangleup M(S,\omega))
  =\sigma(\bigtriangleup_{S}(\pi,\omega)).
  \end{align*}
  This implies that
  \begin{align*}
  \dd_\bb(\sigma\pi,\sigma\omega)&=\sum_{S\in \mathcal{P}_n}\beta_{|S|}|\bigtriangleup_S(\sigma\pi,\sigma\omega)|=\sum_{S\in \mathcal{P}_n}\beta_{|S|}|\bigtriangleup_{\sigma(\sigma^{-1}(S))}(\sigma\pi,\sigma\omega)|\\
  &=\sum_{S\in \mathcal{P}_n}\beta_{|\sigma^{-1}(S)|}|\sigma(\bigtriangleup_{\sigma^{-1}(S)}(\sigma,\pi))|=\sum_{S'\in \mathcal{P}_n}\beta_{|S'|}|\bigtriangleup_{S'}(\pi,\omega)|=\dd_\bb(\pi,\omega).
  \end{align*}
\end{proof}

\subsection{Proofs in §\ref{Sect_Main_Axiomatiz}}
The following two auxiliary lemmas are immediate observations.
\begin{lemma}\label{DioPernice}
For all $\sigma,\pi\in \mathbb{S}_n$ and $\mathbf{t}\in T(\sigma,\pi)$,
  $$\{\{i_k(\mathbf t),j_k(\mathbf t)\}:k\in[\dd_K(\sigma,\pi)]\}=\{\{i,j\}:i<_\sigma j\textrm{ and }i>_\pi j\}.$$
\end{lemma}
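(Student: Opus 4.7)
The plan is to exploit how pairs' relative orders evolve under right-multiplication by adjacent transpositions. At step $k$, moving from the current permutation $\tau$ to $\tau t_{a_k,a_k+1}$ amounts to swapping the entries in positions $a_k$ and $a_k+1$, namely $\tau_{a_k}$ and $\tau_{a_k+1}$. Since these positions are adjacent, the only unordered pair of elements whose relative ranking changes is exactly $\{i_k(\mathbf t), j_k(\mathbf t)\}$; all other pairs keep the same relative order. Making this observation precise (from the preamble's remark that $\sigma t_{i,j}$ is obtained from $\sigma$ by exchanging $\sigma_i$ with $\sigma_j$) is the main, albeit minor, obstacle.

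With this at hand, for any unordered pair $\{i,j\}\subseteq[n]$ I would define $N_{i,j}$ as the number of indices $k\in[\dd_K(\sigma,\pi)]$ with $\{i_k(\mathbf t), j_k(\mathbf t)\}=\{i,j\}$. The parity of $N_{i,j}$ dictates whether the relative order of $i$ and $j$ agrees between $\sigma$ and $\pi$: since each occurrence flips the order of that single pair (and no other step changes it), $N_{i,j}$ is odd on the disagreeing pairs and even on the agreeing pairs. In particular, $N_{i,j}\geq 1$ whenever $\sigma$ and $\pi$ disagree on $\{i,j\}$.

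The conclusion is then a one-line counting argument. On the one hand, $\sum_{\{i,j\}} N_{i,j}$ equals the length of the path, namely $\dd_K(\sigma,\pi)$. On the other hand, $\dd_K(\sigma,\pi)=|I(\sigma,\pi)|$ equals the number of disagreeing unordered pairs. Since each disagreeing pair already contributes at least $1$ to the total, there is no room left for agreeing pairs to contribute, nor for any disagreeing pair to be flipped more than once. Hence $N_{i,j}=1$ for every disagreeing pair and $N_{i,j}=0$ otherwise, which is exactly the claimed set equality; as a side benefit the multiset $\{\{i_k(\mathbf t), j_k(\mathbf t)\}\}_{k}$ has no repetitions, a fact that will be useful elsewhere.
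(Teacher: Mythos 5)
Your argument is correct: each right-multiplication by an adjacent transposition flips the relative order of exactly one unordered pair, and since the path has length $\dd_K(\sigma,\pi)=|I(\sigma,\pi)|$ while every disagreeing pair must be flipped an odd (hence positive) number of times, the counting forces each disagreeing pair to be flipped exactly once and no agreeing pair at all. The paper offers no proof (it labels the lemma an ``immediate observation''), and your write-up is precisely the standard argument being left implicit, so there is nothing to compare beyond noting that you have filled the gap correctly.
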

\begin{lemma}\label{DioPavone}
  For all distinct $i,j\in[n]$, the number of couples $(\sigma,\pi)\in \mathbb{S}_n^2$ with $I(\sigma,\pi)=\{(i,j)\}$ is $(n-1)!$. Moreover, $|\sigma^{-1}_j-\sigma^{-1}_i|=1$ and $\pi=\sigma t_{i,j}= t_{\sigma^{-1}_i,\sigma^{-1}_j}\sigma$.
\end{lemma}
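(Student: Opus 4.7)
Let $(\sigma,\pi)$ satisfy $I(\sigma,\pi) = \{(i,j)\}$; by definition we have $i >_\sigma j$ and $i <_\pi j$. I would first show that $i$ and $j$ are adjacent in $\sigma$. Suppose for contradiction there existed $k \in [n]\setminus\{i,j\}$ with $i >_\sigma k >_\sigma j$. Since $(i,k)$ and $(k,j)$ are not in $I(\sigma,\pi)$ and $\sigma$ ranks $i$ above $k$ and $k$ above $j$, we must have $i >_\pi k$ and $k >_\pi j$, so by transitivity $i >_\pi j$, contradicting $(i,j) \in I(\sigma,\pi)$. Hence no such $k$ exists, which gives $|\sigma^{-1}_i - \sigma^{-1}_j| = 1$.

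Next, I would identify $\pi$. Define $\pi' := t_{i,j}\sigma$, obtained from $\sigma$ by swapping the values $i$ and $j$. Because $i$ and $j$ occupy adjacent positions in $\sigma$, swapping them changes the relative order only of the pair $(i,j)$ itself, so $I(\sigma,\pi') = \{(i,j)\} = I(\sigma,\pi)$. Since a permutation $\pi$ is determined by $\sigma$ together with the set of pairs on which $\pi$ disagrees with $\sigma$ (this set tells us, for each pair, whether to invert $\sigma$'s order or not), we conclude $\pi = \pi' = t_{i,j}\sigma$. Invoking the identity $t_{i,j}\sigma = \sigma t_{\sigma^{-1}_i,\sigma^{-1}_j}$ recalled in the preliminaries then yields the second chain of equalities claimed in the statement.

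Finally, I would count. By the two steps above, the couples $(\sigma,\pi)$ with $I(\sigma,\pi) = \{(i,j)\}$ are in bijection with permutations $\sigma \in \LL_n$ in which $i$ occupies a position $a \in [n-1]$ and $j$ occupies position $a+1$ (the map $\sigma \mapsto (\sigma, t_{i,j}\sigma)$ being the bijection). To enumerate such $\sigma$, choose the position $a \in \{1,\ldots,n-1\}$ of the ordered block $(i,j)$, then arrange the remaining $n-2$ elements in the remaining $n-2$ slots: the total is $(n-1)\cdot(n-2)! = (n-1)!$, as claimed.

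There is no real obstacle here; the only mild care required is bookkeeping of the two conventions $\sigma t_{a,b}$ (acting on positions) versus $t_{i,j}\sigma$ (acting on values) so that the final algebraic identities match those recorded in §\ref{section:math_prelim}.
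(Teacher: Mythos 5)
Your proof is correct and complete; the paper itself offers no argument for this lemma (it is dismissed as an ``immediate observation''), and your adjacency argument, identification of $\pi$, and block-counting are exactly the details one would supply. One remark worth recording: what you actually prove is $\pi=t_{i,j}\sigma=\sigma t_{\sigma^{-1}_i,\sigma^{-1}_j}$, which is the correct formula under the paper's conventions ($t_{i,j}\sigma$ swaps the \emph{values} $i,j$; $\sigma t_{a,b}$ swaps the entries in \emph{positions} $a,b$); the displayed formula in the lemma, $\pi=\sigma t_{i,j}=t_{\sigma^{-1}_i,\sigma^{-1}_j}\sigma$, has the two products on the wrong sides (e.g.\ for $\sigma=(3,1,2)$, $i=1$, $j=2$ one gets $\pi=(3,2,1)=t_{1,2}\sigma=\sigma t_{2,3}$, whereas $\sigma t_{1,2}=(1,3,2)$), so your version is the one that should stand.
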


The following lemma is essentially due to \cite{Hassandez} (Lemma 2) and it can be proved analogously.
\begin{lemma}\label{lemmaesplosivo}
For all $\mathbf t\in T(\sigma,\pi)$ we have that 
$$\left(\sigma\prod_{j=1}^kt_{a_j(\mathbf t),a_j(\mathbf t)+1}\right)_{k=0}^{\dd_K(\sigma,\pi)}$$
are on a line.
\end{lemma}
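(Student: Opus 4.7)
The plan is to reduce the line condition to the combinatorial fact that, along a minimal path of adjacent transpositions from $\sigma$ to $\pi$, the relative order of every pair of elements flips \emph{at most once}. Write $\pi^k := \sigma \prod_{j=1}^{k} t_{a_j(\mathbf t), a_j(\mathbf t)+1}$ for $k \in \{0, 1, \ldots, \dd_K(\sigma,\pi)\}$, so $\pi^0 = \sigma$, $\pi^{\dd_K(\sigma,\pi)} = \pi$, and $\pi^k$ differs from $\pi^{k-1}$ by the adjacent swap of the two elements $\{i_k(\mathbf t), j_k(\mathbf t)\}$.

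The heart of the argument is the ``each pair is swapped at most once'' claim, which I would deduce from Lemma~\ref{DioPernice}. That lemma gives
\[
\bigl\{\{i_k(\mathbf t), j_k(\mathbf t)\} : k \in [\dd_K(\sigma,\pi)]\bigr\} = \bigl\{\{i,j\} : i <_\sigma j \text{ and } i >_\pi j\bigr\} = I(\sigma,\pi)\text{ (as unordered pairs).}
\]
The right-hand set has cardinality $\dd_K(\sigma,\pi)$ by definition of the Kendall distance, matching the number of terms in the sequence on the left. Hence the sequence $(\{i_k(\mathbf t), j_k(\mathbf t)\})_{k}$ consists of \emph{pairwise distinct} unordered pairs; equivalently, no pair is transposed more than once along $\mathbf t$. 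Consequently, a pair $\{a,b\} \notin I(\sigma,\pi)$ retains its relative order throughout the entire path, while a pair $\{a,b\} \in I(\sigma,\pi)$ has its relative order flipped exactly once and then held.

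To conclude, fix $0 \leq i < j < k \leq \dd_K(\sigma,\pi)$ and a pair $\{a,b\} \subseteq [n]$ with $M(\{a,b\},\pi^i) = M(\{a,b\},\pi^k)$. By the previous paragraph, the function $\ell \mapsto M(\{a,b\},\pi^\ell)$ changes value on $\{0,\ldots,\dd_K(\sigma,\pi)\}$ at most once. If it changed value strictly between indices $i$ and $k$, it could not return to its original value, contradicting the assumed equality at $i$ and $k$. Therefore $M(\{a,b\},\pi^j)$ agrees with $M(\{a,b\},\pi^i)$, verifying the betweenness condition $\pi^i - \pi^j - \pi^k$ and hence the line property.

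The main obstacle I anticipate is the bookkeeping in translating Lemma~\ref{DioPernice}'s set-level equality into the sequence-level ``each pair appears exactly once'' statement; the cardinality argument is what bridges this gap and unlocks the monotone-order observation. Once that is in hand, the betweenness condition reduces to a one-line contradiction.
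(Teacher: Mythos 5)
Your proof is correct. The paper does not actually write out a proof of this lemma (it only remarks that it is ``essentially due to \cite{Hassandez}, Lemma 2''), and your argument is the standard one that reference supplies: the cardinality count turning Lemma~\ref{DioPernice}'s set equality into ``each pair is transposed exactly once,'' hence the relative order of any pair changes at most once along the path, which immediately gives the betweenness condition for every triple of indices.
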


\begin{lemma}\label{andrealosveglio}
  Let $\dd$ be a semimetric satisfying axiom \ref{A1}.
  For any $\sigma,\sigma',\pi,\pi'\in\mathbb S_n$ with, $I(\sigma,\sigma')=I(\pi,\pi')=\{(i,j)\}$ we have that $\dd(\sigma,\sigma')=\dd(\pi,\pi')$.
\end{lemma}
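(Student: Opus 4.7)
The plan is to show that the edge length $\dd(\sigma, \sigma t_{a, a+1})$ depends only on the unordered pair $\{i,j\} = \{\sigma_a, \sigma_{a+1}\}$. By Lemma~\ref{DioPavone}, the hypothesis $I(\sigma, \sigma')=\{(i,j)\}$ forces $\sigma' = \sigma t_{a, a+1}$ with $\sigma_a = i$, $\sigma_{a+1} = j$, and analogously for $\pi, \pi'$; so this reduction is equivalent to the lemma. Let $F(\{i,j\}, a, T)$ denote this edge length, where $T := \{\sigma_1, \dots, \sigma_{a-1}\}$. The strategy uses two classes of reduced paths on the permutahedron, to which axiom~\ref{A1} applies via Lemma~\ref{lemmaesplosivo}.

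First, rectangular faces. For every $c \notin \{a-1, a, a+1\}$, the transpositions $t_{a, a+1}$ and $t_{c, c+1}$ commute, so the four vertices $\sigma,\,\sigma t_{a,a+1},\,\sigma t_{c,c+1},\,\sigma t_{a,a+1}t_{c,c+1}$ form a rectangular face. Its two diagonal length-$2$ paths are reduced, and hence on lines. Applying axiom~\ref{A1} along both and combining algebraically yields $\dd(\sigma, \sigma t_{a, a+1}) = \dd(\sigma t_{c, c+1},\,\sigma t_{c, c+1} t_{a, a+1})$. Iterating shows that the edge length depends on the values at positions $\{1, \dots, a-1\}$ only through the set $T$ (not through their arrangement, and symmetrically for the bottom positions), so $F(\{i,j\}, a, T)$ is well-defined.

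Second, hexagonal faces. Fix $\ell := \sigma_{a-1}$ and set $T_0 := T \setminus \{\ell\}$. The six permutations obtained by permuting $\{\ell, i, j\}$ at positions $(a-1,a,a+1)$ (with the remaining entries of $\sigma$ fixed) form a hexagonal face. It has three antipodal pairs of vertices, each connected by two reduced length-$3$ paths around the hexagon and therefore on lines. Axiom~\ref{A1} then produces one linear equation per antipodal pair, in which each term is an $F$-value for one of the pairs $\{\ell,i\}$, $\{\ell,j\}$, $\{i,j\}$ at position $a-1$ or $a$, with an appropriate top block in $\{T_0,\,T_0\cup\{i\},\,T_0\cup\{j\},\,T_0\cup\{\ell\}\}$. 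I will explicitly write out two such equations---for the antipodal pairs $\{(\ell,i,j),(j,i,\ell)\}$ and $\{(\ell,j,i),(i,j,\ell)\}$---and observe that upon adding them every term involving $\{\ell,i\}$ or $\{\ell,j\}$ cancels in pairs, leaving
\[
F(\{i,j\},\,a-1,\,T_0) \;=\; F(\{i,j\},\,a,\,T_0 \cup \{\ell\}).
\]
Geometrically, this says that sliding the block $(i,j)$ from position $a$ to $a-1$ (with $\ell$ moving from $a-1$ to $a+1$) does not affect the edge length. Varying $\ell$ over the elements of $T$ shows $F(\{i,j\}, a-1, T \setminus \{\ell\})$ is the same for every $\ell\in T$, and iterating the single-element-swap argument connects any two $(a{-}2)$-subsets of $[n]\setminus\{i,j\}$, so $F(\{i,j\}, a, T)$ is independent of both $T$ and (by the identity above) $a$.

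The main obstacle is the careful bookkeeping of the top block attached to each of the six edges of each hexagonal equation---the top block at positions $\{1,\dots,c-1\}$ changes as we walk around the hexagon, since elements of $\{\ell,i,j\}$ migrate in and out of position $a-1$. Once this is correctly tracked, the cancellation in the sum of the two hex equations is essentially automatic, and the lemma follows by combining with the rectangular-face identity to connect any two edges in the permutahedron sharing the same pair label $\{i,j\}$.
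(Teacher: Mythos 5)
Your proposal is correct and follows essentially the same route as the paper's proof: both exploit the square and hexagonal faces of the permutahedron together with Axiom \ref{A1} applied along reduced paths (via Lemma \ref{lemmaesplosivo}). The only organizational difference is that the paper's square step uses value-transpositions $t_{k,l}\sigma$ and thereby settles the equal-positions case in one stroke, whereas you use position-transpositions $\sigma t_{c,c+1}$ and shift the remaining work (changing which alternatives lie above the swapped pair) onto the hexagon identity plus a Johnson-graph connectivity argument; your sum of two antipodal-pair equations and the paper's $3\times3$ linear system produce the same cancellation.
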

\begin{proof}
If $n\leq 2$, the claim is trivial. Without loss of generality, by Lemma \ref{DioPavone}, there are $a,b\in[n-1]$ such that 
\[\sigma_a=i,\ \sigma_{a+1}=j,\ \textnormal{and}\ \pi_b=i, \pi_{b+1}=j.\]
\noindent First assume that $n>3$ and that $\pi=t_{k,l}\sigma$, 
for some distinct $k,l$ both different from $i,j$. Then, we have the following betweenness relations
    $$\begin{matrix}
      &&\sigma&&\\
      &\color{red}{\diagup} &&\color{blue}{\diagdown} &\\
      t_{i,j}\sigma&&&&t_{k,l}\sigma\\
      &\color{blue}{\diagdown} &&\color{red}{\diagup} &\\
      &&t_{i,j}t_{k,l}\sigma&&
    \end{matrix}$$
    Thus, by Axiom \ref{A1} we have that,
    \begin{align*}
    \dd(\sigma,t_{i,j}t_{k,l}\sigma)&=\dd(\sigma,t_{i,j}\sigma)+\dd(t_{i,j}\sigma,t_{i,j}t_{k,l}\sigma)=\dd(\sigma,t_{k,l}\sigma)+\dd(t_{k,l}\sigma,t_{i,j}t_{k,l}\sigma),\\
    \dd(t_{i,j}\sigma,t_{k,l}\sigma)&=\dd(t_{i,j}\sigma,\sigma)+\dd(\sigma,t_{k,l}\sigma)=\dd(t_{i,j}\sigma,t_{i,j}t_{k,l}\sigma)+\dd(t_{i,j}t_{k,l}\sigma,t_{k,l}\sigma).
    \end{align*}
    This system of equations implies that $\dd(\sigma,t_{i,j}\sigma)=\dd(\pi,t_{i,j}\pi)$.\\ Iteratively, we have that if $\pi=\left(\coprod_{k=1}^{m}t_{i_k,j_k}\right)\sigma$ where $i_k,j_k\notin\{i,j\}$ for all $k\in[m]$, then
    $$\dd(\sigma,t_{i,j}\sigma)=\dd(t_{i_1,j_1}\sigma,t_{i,j}t_{i_1,j_1}\sigma)=\ldots=\dd(\pi,t_{i,j}\pi).$$
    This shows that if $\pi$ is a permutation with $\sigma_a=\pi_{a}=i$ and $\sigma_{a+1}=\pi_{a+1}=j$, then $\dd(\sigma,t_{i,j}\sigma)=\dd(\pi,t_{i,j}\pi)$. That is, the claim holds with $a=b$. Indeed, in such case $\pi$ can be obtained transforming $\sigma$ by finitely many transpositions $(t_{i_k,j_k})_{k=1}^m$ with $i_k,j_k\notin \left\lbrace i,j\right\rbrace$.\\

Now we consider $a\neq b$ and allow for $n\geq 3$. We start assuming that $b=a+1$. Let $i=\sigma_a$, $j=\sigma_{a+1}$, and $k=\sigma_{a+2}$. From the definition of betweenness, we have that 
$$\begin{matrix}
  &&\underset{ijk}{\sigma}&&\\
  &\color{red}{\diagup}&&\color{blue}{\diagdown}\\
  \underset{jik}{t_{ij}\sigma}&&&&\underset{ikj}{t_{jk}\sigma}\\
  \color{teal}|&&&&\color{teal}|\\
  \underset{jki}{t_{ik}t_{ij}\sigma}&&&&\underset{kij}{t_{ik}t_{jk}\sigma}\\
  &\color{blue}{\diagdown}&&\color{red}{\diagup}\\
  &&\underset{kji}{t_{ij}t_{ik}t_{jk}\sigma}&&
\end{matrix}$$
Thus, by Axiom \ref{A1} we have that,
\begin{align*}
  \dd(\sigma,t_{ij}t_{ik}t_{jk}\sigma)&=\dd(\sigma,t_{jk}\sigma)+\dd(t_{jk}\sigma,t_{ik}t_{jk}\sigma)+\dd(t_{ik}t_{jk}\sigma,t_{ij}t_{ik}t_{jk}\sigma)\\
  &=\dd(t_{ik}t_{ij}\sigma,t_{ij}t_{ik}t_{jk}\sigma)+\dd(t_{ik}t_{ij}\sigma,t_{ij}\sigma)+\dd(\sigma,t_{ij}\sigma),\\
  \dd(t_{ij}\sigma,t_{ik}t_{jk}\sigma)&=\dd(t_{ij}\sigma,\sigma)+\dd(\sigma,t_{jk}\sigma)+\dd(t_{jk}\sigma,t_{ik}t_{jk}\sigma)\\
  &=\dd(t_{ij}\sigma,t_{ik}t_{ij}\sigma)+\dd(t_{ik}t_{ij}\sigma,t_{ij}t_{ik}t_{jk}\sigma)+\dd(t_{ij}t_{ik}t_{jk}\sigma,t_{ik}t_{jk}\sigma),\\
  \dd(t_{jk}\sigma,t_{ik}t_{ij}\sigma)&=\dd(t_{jk}\sigma,\sigma)+\dd(\sigma,t_{ij}\sigma)+\dd(t_{ij}\sigma,t_{ik}t_{ij}\sigma)\\
  &=\dd(t_{jk}\sigma,t_{ik}t_{jk}\sigma)+\dd(t_{ik}t_{jk}\sigma,t_{ij}t_{ik}t_{jk}\sigma)+\dd(t_{ij}t_{ik}t_{jk}\sigma,t_{ik}t_{ij}\sigma).
\end{align*}
From this linear system, we conclude that
$$\dd(\sigma,t_{ij}\sigma)=\dd(t_{ik}t_{jk}\sigma,t_{ij}t_{ik}t_{jk}\sigma).$$
If $n=3$, then this equality yields the claim. Instead, whenever $n>3$, if $\sigma$ is any permutation with $\sigma_a=i$, $\sigma_{a+1}=j$, and if $\pi$ is any permutation with $\pi_b=i$, $\pi_{b+1}=j$, then by "hexagonal moves" we can obtain recursively $\tau$ such that $\tau_b=i$ and $\tau_{b+1}=j$ and such that~$\dd(\sigma,t_{ij}\sigma)=\dd(\tau,t_{ij}\tau)$.\footnote{More specifically assuming wlog that $b>a$, we can take
$$\tau=\left(\coprod_{s=1}^{b-a}t_{i,\sigma_{a+1+s}}t_{j,\sigma_{a+1+s}}\right)\sigma$$
which has
$\tau_b=i$ and $\tau_{b+1}=j$.} Now, by the previous argument (case $a=b$ above), we have that $\dd(\tau,t_{ij}\tau)=\dd(\pi,t_{ij}\pi)$.
\end{proof}

\begin{proof}[Proof of Theorem \ref{ronniecolemanleggero}]
For all distinct $i,j\in \left[n\right]$, let $g(i,j):=\dd(\sigma,\pi)$ for any $\sigma,\pi$ with $I(\sigma,\pi)=\{(i,j)\}$. We let $g(i,i):=0$ otherwise. By Lemma \ref{andrealosveglio}, this $g:[n]^2\to \mathbb{R}_{+}$ is a well-defined symmetric function. Finally, by Axiom \ref{A1}, Lemmas \ref{DioPernice}, and \ref{lemmaesplosivo}, we obtain that for all $\sigma,\pi\in \mathbb S_n$, and $\mathbf t\in T(\sigma,\pi)$, 
$$\dd(\sigma,\pi)=\sum_{k=1}^{\dd_K(\sigma,\pi)}g(i_k(\mathbf t),j_k(\mathbf t))=\sum_{(i,j)\in I(\sigma,\pi)}g(i,j).$$
Conversely, suppose that $\sigma-\omega-\pi$ for some $\sigma,\omega,\pi\in \LL_n$. Then, there is a $\mathbf t\in T(\sigma,\pi)$ and a $\ell\leq \dd_K(\sigma,\pi)$ such that $\omega=\sigma\prod_{j=1}^\ell t_{a_j(\mathbf{t}),a_j(\mathbf{t})+1}$ and $\pi=\omega\prod_{j=\ell+1}^{\dd_K(\sigma,\pi)}t_{a_j(\mathbf{t}),a_j(\mathbf{t})+1}$. Thus, 
\small$$\dd(\sigma,\pi)=\sum_{k=1}^{\dd_K(\sigma,\pi)}g(i_k(\mathbf t),j_k(\mathbf t))=\sum_{k=1}^{\ell}g(i_k(\mathbf t),j_k(\mathbf t))+\sum_{k=\ell+1}^{\dd_K(\sigma,\pi)}g(i_k(\mathbf t),j_k(\mathbf t))=\dd(\sigma,\omega)+\dd(\omega,\pi).$$\normalsize
\end{proof}
\begin{proof}[Expansion on Remark \ref{remarkmetrico}]
By Theorem \ref{ronniecolemanleggero} it is sufficient to show that if 
$\dd(\sigma,\omega)+\dd(\omega,\pi)=\dd(\sigma,\pi)$ for some $\sigma,\omega,\pi\in \LL_n$, then $\sigma-\omega-\pi$. Given that,
$$\dd(\sigma,\omega)=\sum_{(i,j)\in I(\sigma,\omega)} g(i,j)\,\,\,\,\,\,\,\,\,\,\,\,\,\,\dd(\omega,\pi)=\sum_{(i,j)\in I(\omega,\pi)} g(i,j)$$
were $\omega$ not in between $\sigma$ and $\pi$, we would have $I(\sigma,\pi)\subsetneq I(\sigma,\omega) \cup I(\omega,\pi)$.
But this is impossible, since $g$ only takes positive values.
\end{proof}

\begin{proof}[Proof of Theorem \ref{roar}]
By Theorem \ref{ronniecolemanleggero} it is sufficient to show that $\dd$ satisfies Axiom \ref{A2} if and only if there exists a $\mu\in \mathcal{M}^{\geq}_n$ such that $g(i,j)=\mu(i)+\mu(j)$ for all distinct $i,j\in [n]$. If $n<4$, then Axiom \ref{A2} holds vacuously and indeed $g(i,j)$ can be written as a sum. In particular, if $n=3$, taking \small$$\mu(1)=\frac{g(1,2)+g(1,3)-g(2,3)}2\,\,\,\,\,\,\mu(2)=\frac{g(1,2)-g(1,3)+g(2,3)}2\,\,\,\,\,\,\mu(3)=\frac{-g(1,2)+g(1,3)+g(2,3)}2.$$\normalsize
we have that $g(i,j)=\mu(i)+\mu(j)$ for all $i,j\in [n]$. If $n\geq 4$, we define $\mu(i):=(g(i,j)-g(k,j)+g(i,k))/2$ for some distinct $j,k$ both different from $i$. Under Axiom \ref{A2}, for all $i,j,k,\ell$ with $i\neq j$, $k\neq\ell$, $i\neq k$, and $j\neq \ell$,
$$g(i,j)+g(k,\ell)=g(i,k)+g(j,\ell).$$
This implies that $\mu$ is well defined. Indeed, if $k',j'$ are distinct and different from $i$, then
  \begin{align*}
  &\left(g(i,j)-g(k,j)+g(i,k)\right)-\left(g(i,j')-g(k',j')+g(i,k')\right)\\
&=\left(g(i,j)+g(k',j')\right)-\left(g(i,j')+g(k,j)\right)+g(i,k)-g(i,k')\\  &=\left(g(i,k')+g(j,j')\right)-\left(g(i,k)+g(j,j')\right)+g(i,k)-g(i,k')=0
  \end{align*}
  Moreover, we have that for all $i\neq j$ \begin{align*}
g(i,j)&=\frac12\left(g(i,j)-g(k,j)+g(i,k)\right)+\frac12\left(g(j,i)-g(k,i)+g(j,k)\right)=\mu(i)+\mu(j).
  \end{align*} 
  The converse is straightforward.
\end{proof}

\begin{proposition}\label{pazzia}
  A semimetric $\dd$ satisfies Axiom \ref{A3} if and only if, for all $\sigma,\pi\in\LL_n$,
  \begin{equation}\label{graph}
  \dd(\sigma,\pi)
  =\min_{\mathbf t\in T(\sigma,\pi)}\sum_{j=1}^{\dd_K(\sigma,\pi)}\dd\left(\sigma\prod_{k=1}^{j-1}t_{a_k(\mathbf t),a_k(\mathbf t)+1},\sigma\prod_{k=1}^{j}t_{a_k(\mathbf t),a_k(\mathbf t)+1}\right).
  \end{equation}
\end{proposition}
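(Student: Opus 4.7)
The plan is to prove both implications by relating Axiom \ref{A3} to a geodesic/graphic property on the permutahedron, using induction on the Kendall distance. The identity $\dd_K(\sigma,\pi)=\dd_K(\sigma,\omega)+\dd_K(\omega,\pi)$ whenever $\sigma-\omega-\pi$ (a standard fact about the weak order on $\LL_n$) will be a recurring tool, together with Lemma \ref{lemmaesplosivo}.

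For the direction \ref{A3} $\Rightarrow$ \eqref{graph}, I would induct on $k:=\dd_K(\sigma,\pi)$. The base cases $k=0$ (both sides equal to $0$) and $k=1$ (the set $T(\sigma,\pi)$ is a singleton, and the single-term sum on the right equals $\dd(\sigma,\pi)$) are immediate. For the inductive step with $k\geq 2$, invoke Axiom \ref{A3} to produce $\omega\in\LL_n\setminus\{\sigma,\pi\}$ with $\sigma-\omega-\pi$ and $\dd(\sigma,\pi)=\dd(\sigma,\omega)+\dd(\omega,\pi)$. Since $\omega\notin\{\sigma,\pi\}$, both $\dd_K(\sigma,\omega)$ and $\dd_K(\omega,\pi)$ are strictly smaller than $k$, and by the betweenness property of the permutahedron their sum equals $k$. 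Apply the induction hypothesis to both sub-distances, pick minimizing paths $\mathbf t'\in T(\sigma,\omega)$ and $\mathbf t''\in T(\omega,\pi)$, and concatenate: the concatenation lies in $T(\sigma,\pi)$, so its associated sum bounds the minimum on the right from above, giving
\[
\dd(\sigma,\pi)=\dd(\sigma,\omega)+\dd(\omega,\pi)\geq \min_{\mathbf t\in T(\sigma,\pi)}\sum_{j=1}^{k}\dd\bigl(\sigma\textstyle\prod_{k<j}t_{a_k,a_k+1},\sigma\textstyle\prod_{k\leq j}t_{a_k,a_k+1}\bigr).
\]
The reverse inequality is a pure triangle-inequality telescoping along any fixed $\mathbf t\in T(\sigma,\pi)$ (so it does not use \ref{A3}), yielding equality.

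For the direction \eqref{graph} $\Rightarrow$ \ref{A3}, let $\sigma,\pi$ disagree on more than one pair, so $k\geq 2$. Select a minimizer $\mathbf t^\ast\in T(\sigma,\pi)$ in the right-hand side and set $\omega:=\sigma\, t_{a_1(\mathbf t^\ast),a_1(\mathbf t^\ast)+1}$. Clearly $\omega\neq\sigma$; and $\omega\neq\pi$ because otherwise $\dd_K(\sigma,\pi)=1$. By Lemma \ref{lemmaesplosivo}, $\sigma-\omega-\pi$. Splitting the sum of $\mathbf t^\ast$ into its first term and its tail, the first term is $\dd(\sigma,\omega)$, and the tail is a valid element of $T(\omega,\pi)$, so the formula (applied to $\dd(\omega,\pi)$) gives $\dd(\omega,\pi)\leq$ tail sum. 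Hence $\dd(\sigma,\omega)+\dd(\omega,\pi)\leq \dd(\sigma,\pi)$, which combined with triangle inequality forces equality.

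The main technical obstacle is a clean verification of the concatenation argument: one must check that if $\mathbf t'\in T(\sigma,\omega)$ and $\mathbf t''\in T(\omega,\pi)$, then their concatenation is genuinely in $T(\sigma,\pi)$ (i.e., has length exactly $\dd_K(\sigma,\pi)$ and transforms $\sigma$ into $\pi$). This reduces to the additivity $\dd_K(\sigma,\pi)=\dd_K(\sigma,\omega)+\dd_K(\omega,\pi)$ on the permutahedron whenever $\sigma-\omega-\pi$, which follows from the fact that betweenness $\sigma-\omega-\pi$ implies $I(\sigma,\pi)=I(\sigma,\omega)\sqcup I(\omega,\pi)$, together with the identity $\dd_K=|I(\cdot,\cdot)|$. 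Everything else is induction and triangle-inequality bookkeeping.
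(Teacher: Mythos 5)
Your proposal is correct and follows essentially the same route as the paper: the forward direction is the same induction on $\dd_K(\sigma,\pi)$, using Axiom \ref{A3} to produce $\omega$, concatenating the two minimizing paths (justified by $I(\sigma,\pi)=I(\sigma,\omega)\sqcup I(\omega,\pi)$ under betweenness), and closing the gap with the triangle inequality. Your converse is a slight streamlining of the paper's: where the paper argues by contradiction that both restrictions of the minimizer $\mathbf t^*$ must themselves be minimizers of the respective subproblems, you only need the weaker observation that the tail of $\mathbf t^*$ is \emph{some} element of $T(\omega,\pi)$, so \eqref{graph} gives $\dd(\omega,\pi)\leq$ the tail sum and the triangle inequality does the rest — a valid and marginally cleaner argument for establishing Axiom \ref{A3} at the single point $\omega=\sigma t_{a_1(\mathbf t^*),a_1(\mathbf t^*)+1}$.
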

\begin{proof}
  If $\sigma,\pi$ have Kendall distance lower or equal than one, the claim is trivial. Suppose that the claim holds for all couples $\sigma,\pi$ with $\dd_K(\sigma,\pi)<m$ and let $\sigma,\pi$ have distance $m$. By the axiom \ref{A3} there is some $\omega$ lying between $\sigma,\pi$ such that $\dd(\sigma,\omega)+\dd(\omega,\pi)=\dd(\sigma,\pi)$. But clearly,~$\dd_K(\sigma,\omega),\dd_K(\omega,\pi)<m$, thus by inductive hypothesis there are some $\mathbf t'\in T(\sigma,\omega)$ and~$\mathbf t'' \in T(\omega,\pi)$, such that
  \begin{align*}
  \ddd(\sigma,\omega)&=\sum_{k=1}^{\dd_K(\sigma,\omega)}\dd\left(\sigma\prod_{k=1}^{j-1}t_{a_k(\mathbf t'),a_k(\mathbf t')+1},\sigma\prod_{k=1}^{j}t_{a_k(\mathbf t'),a_k(\mathbf t')+1}\right)\\
  \ddd(\omega,\pi)&=\sum_{k=1}^{\dd_K(\omega,\pi)}\dd\left(\sigma\prod_{k=1}^{j-1}t_{a_k(\mathbf t''),a_k(\mathbf t'')+1},\sigma\prod_{k=1}^{j}t_{a_k(\mathbf t''),a_k(\mathbf t'')+1}\right).
  \end{align*}
  Denote by $\mathbf{\tilde t}$ the concatenation of $\mathbf t'$ and $\mathbf t''$. Since $\dd_K$ is a metric and $\sigma-\omega-\pi$, we have that $\tilde{t}\in T(\sigma,\pi)$. Therefore,
  \begin{align*}
  \ddd(\sigma,\pi)&=\sum_{k=1}^{\dd_K(\sigma,\pi)}\dd\left(\sigma\prod_{k=1}^{j-1}t_{a_k(\mathbf{\tilde t}),a_k(\mathbf{\tilde t})+1},\sigma\prod_{k=1}^{j}t_{a_k(\mathbf{\tilde t}),a_k(\mathbf{\tilde t})+1}\right)\\
  &\geq \min_{\mathbf t\in T(\sigma,\pi)}\sum_{k=1}^{\dd_K(\sigma,\pi)}\dd\left(\sigma\prod_{k=1}^{j-1}t_{a_k(\mathbf{ t}),a_k(\mathbf{ t})+1},\sigma\prod_{k=1}^{j}t_{a_k(\mathbf{t}),a_k(\mathbf{ t})+1}\right).
  \end{align*}
  The reverse inequality follows from triangle inequality. Conversely, if \eqref{graph} holds and $$\mathbf t^*\in\argmin_{\mathbf t\in T(\sigma,\pi)}\sum_{k=1}^{\dd_K(\sigma,\pi)}\dd\left(\sigma\prod_{k=1}^{j-1}t_{a_k(\mathbf{ t}),a_k(\mathbf{ t})+1},\sigma\prod_{k=1}^{j}t_{a_k(\mathbf{t}),a_k(\mathbf{ t})+1}\right),$$
  then for any $j\in\{1,\ldots,\dd_K(\sigma,\pi)\}$ we have that by Lemma \ref{lemmaesplosivo}
  $$\omega^j:=\sigma\prod_{k=1}^{j}t_{a_k(\mathbf{t^*}),a_k(\mathbf{t}^*)+1}$$
  lies between $\sigma$ and $\pi$. We need to prove that $\dd(\sigma,\omega^j)+\dd(\omega^j,\pi)=\dd(\sigma,\pi)$. To prove this, we need to show that $\mathbf t^{'j}:= \mathbf{t}^*|_{\{1,\ldots,j\}}$ and $\mathbf t^{''j}:=\mathbf t^*|_{\{j+1,\ldots,\dd_K(\sigma,\pi)\}}$ are such that
  \begin{align*}
  \mathbf t^{'j}&\in\argmin_{t\in T(\sigma,\omega^j)}\sum_{k=1}^{j}\dd\left(\sigma\prod_{k=1}^{j-1}t_{a_k(\mathbf{ t}),a_k(\mathbf{ t})+1},\sigma\prod_{k=1}^{j}t_{a_k(\mathbf{t}),a_k(\mathbf{ t})+1}\right),\\
  \mathbf{t^{''}}^j&\in\argmin_{t\in T(\omega^j,\pi)}\sum_{k=1}^{\dd_K(\sigma,\pi)-j}\dd\left(\sigma\prod_{k=1}^{j-1}t_{a_k(\mathbf{ t}),a_k(\mathbf{ t})+1},\sigma\prod_{k=1}^{j}t_{a_k(\mathbf{t}),a_k(\mathbf{ t})+1}\right).
  \end{align*}
  Indeed, assume they are not and $\mathbf u^{'j}$ is such that
  \small$$\sum_{k=1}^{j}\dd\left(\sigma\prod_{k=1}^{j-1}t_{a_k(\mathbf{ u'}^j),a_k(\mathbf{ u'}^j)+1},\sigma\prod_{k=1}^{j}t_{a_k(\mathbf{u'}^j),a_k(\mathbf{ u'}^j)+1}\right)< \sum_{k=1}^{j}\dd\left(\sigma\prod_{k=1}^{j-1}t_{a_k(\mathbf{ t'}^j),a_k(\mathbf{ t'}^j)+1},\sigma\prod_{k=1}^{j}t_{a_k(\mathbf{t'}^j),a_k(\mathbf{ t'}^j)+1}\right),$$\normalsize
  then by concatenating $\mathbf u'^j$ with $\mathbf t''^j$, we would get a new element $\mathbf u^*$ of $T(\sigma,\pi)$ with
  \small$$\sum_{k=1}^{\dd_K(\sigma,\pi)}\dd\left(\sigma\prod_{k=1}^{j-1}t_{a_k(\mathbf{ u^*}),a_k(\mathbf{ u^*})+1},\sigma\prod_{k=1}^{j}t_{a_k(\mathbf{u^*}),a_k(\mathbf{ u^*})+1}\right)<\sum_{k=1}^{\dd_K(\sigma,\pi)}\dd\left(\sigma\prod_{k=1}^{j-1}t_{a_k(\mathbf{ t^*}),a_k(\mathbf{ t^*})+1},\sigma\prod_{k=1}^{j}t_{a_k(\mathbf{t^*}),a_k(\mathbf{ t^*})+1}\right)$$\normalsize
  contradicting the minimality of $\mathbf{t}^*$.
\end{proof}

\begin{proof}[Proof of Theorem \ref{main_g_thm}]
  By Proposition \ref{pazzia}, we have that Axiom \ref{A3} yields
  $$\dd:(\sigma,\pi)\mapsto\min_{\mathbf t\in T(\sigma,\pi)}\sum_{k=1}^{\dd_K(\sigma,\pi)}\dd\left(\sigma\prod_{k=1}^{j-1}t_{a_k(\mathbf t),a_k(\mathbf t)+1},\sigma\prod_{k=1}^{j}t_{a_k(\mathbf t),a_k(\mathbf t)+1}\right).$$
  By Axiom \ref{A4}, for all $\omega,\omega'\in \mathbb{S}_n$, $a\in \left[n-1\right]$ and $i,j\in \left[n\right]$, if $\omega'=\omega t_{a,a+1}=t_{i,j}\omega$, we can define~$g_a(i,j):=\dd(\omega,\omega')$. Therefore, we can rewrite the previous expression for $\dd(\sigma,\pi)$ as in \eqref{LuanaCerini2}. Now we prove the converse. Notice that Axiom \ref{A3} holds by Proposition \ref{pazzia}. Passing to Axiom \ref{A4} let $\sigma,\pi\in \mathbb{S}_n$ be such that
  \[
  I(\sigma,\sigma t_{a,a+1})=I(\pi,\pi_{a,a+1})=\left\lbrace (i,j) \right\rbrace.
  \]
  Then,
\[
 \dd(\sigma,\sigma t_{a,a+1})=g_{a}(i,j)= \dd(\pi,\pi t_{a,a+1}).
\]
\end{proof}

\begin{theorem}\label{coro_separazioneperianale}
 A semimetric $\dd$ satisfies Axioms \ref{A3}, \ref{A2ter}, and \ref{A2'} if and only if there are $\pp\in \R^{n-1}_+$ and $\mu\in \mathcal{M}^{\geq}_n$ such that for all $\sigma,\pi\in \LL_n$,
    \begin{equation}\label{LuanaCerini1}
    \dd(\sigma,\pi)=\min_{\mathbf t\in T(\sigma,\pi)}\sum_{k=1}^{\dd_K(\sigma,\pi)}\phi_{a_k(\mathbf t)}\left(\mu(i_k(\mathbf t))+\mu(j_k(\mathbf t))\right).
    \end{equation}
    Moreover, $\mu$ can be chosen so that $\sum_i\mu(i)=1$.    
 \end{theorem}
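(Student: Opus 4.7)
The plan is to build on Theorem \ref{main_g_thm}: I first derive its hypothesis (Axiom \ref{A4}) from the present hypotheses, and then use Axioms \ref{A2ter} and \ref{A2'} to multiplicatively and additively separate the edge weights $g_a(i,j)$ produced by that theorem.

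First, I would show that Axiom \ref{A2'} implies Axiom \ref{A4}: given $\sigma,\pi$ with $I(\sigma,\sigma t_{a,a+1}) = I(\pi,\pi t_{a,a+1})$, setting $\rho=\tau=\pi$ in \ref{A2'} makes its disjoint-union hypothesis tautological, and its conclusion collapses to $\dd(\sigma,\sigma t_{a,a+1}) = \dd(\pi,\pi t_{a,a+1})$. With Axioms \ref{A3} and \ref{A4} in force, Theorem \ref{main_g_thm} supplies a symmetric $g_a:[n]^2\to\R_+$ with $g_a(i,i)=0$ representing $\dd$ as in \eqref{LuanaCerini2}. Translating Axiom \ref{A2ter} into the language of $g$ yields
\[
g_a(i,j)\,g_b(k,l) \;=\; g_b(i,j)\,g_a(k,l)
\]
for all distinct $a,b\in[n-1]$ and all distinct pairs. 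If $\dd\equiv 0$ the theorem is trivial; otherwise I fix $a_0$ and $(i_0,j_0)$ with $g_{a_0}(i_0,j_0)>0$, set $\phi_a := g_a(i_0,j_0)/g_{a_0}(i_0,j_0)$ and $\psi:=g_{a_0}$, and use the proportionality to conclude $g_a(i,j)=\phi_a\psi(i,j)$ with $\phi\in\R_+^{n-1}$ and $\psi:[n]^2\to\R_+$ symmetric with zero diagonal (consistency at indices where $\phi_a=0$ follows because then $g_a\equiv0$ by the relation with $g_{a_0}$).

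Next, I would read the $\sqcup$ in Axiom \ref{A2'} as equality of the multisets of elements of $[n]$ involved in the two adjacent transpositions; this is the interpretation that makes the axiom substantive. Choosing $\sigma,\pi,\rho,\tau$ that realize prescribed pairs at positions $(a,a+1)$ for some $a$ with $\phi_a>0$, the axiom delivers the Cauchy identity
\[
\psi(i,j)+\psi(k,l) \;=\; \psi(i,k)+\psi(j,l)
\]
for any four distinct $i,j,k,l\in[n]$. This is precisely the functional equation solved in the proof of Theorem \ref{roar}, which produces $\mu\in\R^n$ with $\psi(i,j)=\mu(i)+\mu(j)$ for $n\geq 4$; the cases $n\leq 3$ are handled directly as in that proof. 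Non-negativity of $\psi$ together with $\phi\geq 0$ then places $\mu$ in $\mathcal{M}_n^{\geq}$, yielding the representation \eqref{LuanaCerini1}.

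For the normalization, summing the inequalities $\mu_i+\mu_j\geq 0$ over all unordered pairs gives $(n-1)\sum_i\mu_i\geq 0$, which by Lemma \ref{lemmadelcazzo} is strict whenever $\mu\not\equiv 0$; I can then rescale $\mu$ to satisfy $\sum_i\mu_i=1$ and absorb the reciprocal factor into $\pp$ (if instead $\mu\equiv 0$ then $\dd\equiv 0$ and any normalized $\mu$ with $\pp=\mathbf 0$ works). The converse direction is routine: the representation \eqref{LuanaCerini1} satisfies Axiom \ref{A3} via Proposition \ref{pazzia}, and Axioms \ref{A2ter}, \ref{A2'} follow directly from the multiplicative-and-additive structure of each summand $\phi_{a_k}(\mu(i_k)+\mu(j_k))$. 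The principal obstacles will be the correct reading of the disjoint-union notation in \ref{A2'} that activates the Cauchy relation, careful handling of the degenerate indices where $\phi_a$ vanishes, and the verification of the small-$n$ cases where the four-element argument is unavailable.
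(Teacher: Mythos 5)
Your proposal is correct and follows essentially the same route as the paper's proof: pass to the edge weights $g_a(i,j)$ supplied by Theorem \ref{main_g_thm}, use Axiom \ref{A2ter} to separate the position dependence multiplicatively, and use Axiom \ref{A2'} to separate the pair dependence additively via the Cauchy-type identity already solved in the proof of Theorem \ref{roar}; the paper merely performs the two separations in the opposite order (additive within each position $a$ first, then proportionality across positions), which is an immaterial difference. The one point to watch is your derivation of Axiom \ref{A4} from Axiom \ref{A2'} by setting $\rho=\tau=\pi$: that instantiation is licensed only under your multiset reading of $\sqcup$, whereas the paper's own proof treats $\sqcup$ as a genuine disjoint union and therefore chooses $\rho$ whose inverted pair is disjoint from that of $\sigma$ and $\pi$ — which is exactly why it restricts this step to $n\geq 4$ and disposes of $n\leq 3$ separately, where Axioms \ref{A4} and \ref{A2'} are vacuous. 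Since the disjoint-$\rho$ patch is available whenever the degenerate instantiation is not, this is a question of how to read the axiom's notation rather than a mathematical gap.
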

\begin{proof} If $\dd\equiv 0$ there is nothing to show. Therefore we assume that $\dd$ is not identically zero.\\
If $n\geq 4$, notice that \ref{A2'} implies \ref{A4}. Indeed, let $\sigma,\pi\in \LL_n$ and $a\in [n-1]$ such that
\[
I(\sigma,\sigma t_{a,a+1})=I(\pi,\pi t_{a,a+1}).
\]
Let $\rho\in \LL_n$ be such that $I(\sigma,\sigma t_{a,a+1})\cap I(\rho,\rho t_{a,a+1})=\emptyset$ and $I(\pi,\pi t_{a,a+1})\cap I(\rho,\rho t_{a,a+1})=\emptyset$. Then, by Axiom \ref{A2'}, we have that
\[
\dd(\sigma,\sigma t_{a,a+1})+\dd(\rho,\rho t_{a,a+1})=\dd(\pi,\pi t_{a,a+1})+\dd(\rho,\rho t_{a,a+1})
\]
and hence Axiom \ref{A4} holds. 
\noindent Therefore, by Theorem \ref{main_g_thm} it is sufficient to prove that $\dd$ satisfies axioms \ref{A2ter} and \ref{A2'} if and only if there are $\pp\in \R^{n-1}_+$ and $\mu\in \mathcal{M}^{\geq}_n$ such that for all $a\in [n-1]$ and distinct $i,j\in [n]$,
 $$g_{a}(i,j)=(\mu(i)+\mu(j))\phi_{a}.$$
\noindent From Axiom \ref{A2'}, we have that for all $a\in[n-1]$, and all distinct $i,j,k,\ell\in[n]$,
  $$g_a(i,j)+g_a(k,\ell)=g_a(i,k)+g_a(j,\ell).$$
  We now define
  $\mu_a(i)=(g_a(i,j)-g_a(k,j)+g_a(i,k))/2$ for some distinct $j,k$ both different from $i$. This is well defined because if $k',j'$ are also distinct and different from $i$, then
  \begin{align*}
  &\left(g_a(i,j)-g_a(k,j)+g_a(i,k)\right)-\left(g_a(i,j')-g_a(k',j')+g_a(i,k')\right)\\
  &=\left(g_a(i,j)+g_a(k',j')\right)-\left(g_a(i,j')+g_a(k,j)\right)+g_a(i,k)-g_a(i,k')\\
  &=\left(g_a(i,k')+g_a(j,j')\right)-\left(g_a(i,k)+g_a(j,j')\right)+g_a(i,k)-g_a(i,k')=0
  \end{align*}
  But then, we have that for distinct $i,j\in [n]$,
  \begin{align*}
g_a(i,j)&=\frac12\left(g_a(i,j)-g_a(k,j)+g_a(i,k)\right)+\frac12\left(g_a(j,i)-g_a(k,i)+g_a(j,k)\right)\\
  &=\mu_a(i)+\mu_a(j).
  \end{align*}
  Now we define $\mu:i\mapsto\mu_a(i)/\sum_{j=1}^n\mu_a(j)$. This is well defined because of Axiom \ref{A2ter}. Indeed, Axiom \ref{A2ter} yields that
  $$g_a(i,j)g_b(k,\ell)=(\mu_a(i)+\mu_a(j))(\mu_b(k)+\mu_b(\ell))=(\mu_b(i)+\mu_b(j))(\mu_a(k)+\mu_a(\ell))=g_a(k,\ell)g_b(i,j).$$
  Thus, since $\dd$ is not identically zero, there exist $x\in[n-1]$ and distinct $p,q\in[n]$ such that $g_x(p,q)>0$. Moreover, in this case we must also have that $\mu_x([n])=\frac1{n-1}\sum_{(i,j)}g_x(i,j)\geq \frac{g_x(p,q)}{n-1}>0$. 
  Thus, we define
  $$c_a:=\frac{g_a(p,q)}{g_x(p,q)},$$
  then
  $g_a(i,j)=\mu_a(i)+\mu_a(j)=c_a(\mu_x(i)+\mu_x(j))$. Now if $i,j,k$ are all distinct, we get the system
  \begin{align*}\mu_a(i)+\mu_a(j)&=c_a(\mu_x(i)+\mu_x(j))\\
  \mu_a(i)+\mu_a(k)&=c_a(\mu_x(i)+\mu_x(k))\\
  \mu_a(j)+\mu_a(k)&=c_a(\mu_x(j)+\mu_x(k))
  \end{align*}
  Summing the first two and subtracting the last equation, we get $\mu_a(i)=c_a\mu_x(i)$. This shows that $\mu$ is well defined independently of $a$ and that it sums up to 1. Finally, we can define $\phi_a:=\mu_a([n])/\mu([n])$. So that we conclude that $g_a(i,j)=\mu_a(i)+\mu_a(j)=\phi_a(\mu(i)+\mu(j))$.
  
    If $n<3$, then all the Axioms hold vacuously and such does the claim. Instead if $n=3$, then Axioms \ref{A4} and \ref{A2'} still hold vacuously, and by Axiom \ref{A3} there is some $g$ such that $\dd(\sigma,\sigma t_{a,a+1})=g_a(\sigma_a,\sigma_{a+1})$ for all $\sigma\in \LL_n$ and $a\in [n-1]$. We define $\mu_x$ as
\begin{align*}\mu_x(1)&=\frac{g_x(1,2)+g_x(1,3)-g_x(2,3)}2,\\\mu_x(2)&=\frac{g_x(1,2)-g_x(1,3)+g_x(2,3)}2,\\\mu_x(3)&=\frac{-g_x(1,2)+g_x(1,3)+g_x(2,3)}2.\end{align*}
Since $\dd$ is not trivial, there exist $x\in [n-1]$ and distinct $p,q\in [n]$, such that $g_x(p,q)>0$. This implies that $\mu_x([3])>0$, thus we can define $\mu:=\mu_x/\mu_x([3])$. Finally we define
    $\phi_x:=\mu_x([3])$ and $\phi_y:=\phi_xg_y(p,q)/g_x(p,q)$ where $y\in[1,2]\setminus\{x\}$. Then we have that
    $g_x(i,j)=\phi_x(\mu(i)+\mu(j))$ for any distinct $i,j\in[3]$. Moreover, by Axiom \ref{A2ter},
    $$g_y(i,j)=\frac{g_y(p,q)}{g_x(p,q)}g_x(i,j)=\frac{g_y(p,q)}{g_x(p,q)}\phi_x(\mu(i)+\mu(j))=\phi_y(\mu(i)+\mu(j)).$$
    Therefore, the proof is concluded.
\end{proof}

\begin{lemma}\label{dsatisfiesA3}
  If $\ddd$ is a semimetric, then it satisfies Axiom \ref{A3}.
\end{lemma}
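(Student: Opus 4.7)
The plan is to construct $\omega$ explicitly as a neighbor of $\sigma$ in the Cayley graph under adjacent transpositions, with the transposition carefully chosen so that the additivity $\ddd(\sigma,\pi)=\ddd(\sigma,\omega)+\ddd(\omega,\pi)$ holds \emph{termwise} in the defining sum over menus, and therefore regardless of the signs of $\bb$ and $\mu$.

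First, since $\sigma\neq\pi$, I would let $k:=\min\{j\in[n]:\sigma_j\neq\pi_j\}$ and $s:=\sigma^{-1}(\pi_k)$. Because $\sigma_i=\pi_i$ for all $i<k$, one has $\sigma([k,n])=\pi([k,n])$ and $\pi_k\in\sigma([k,n])$, so $s\geq k$; combined with $\sigma_k\neq\pi_k$ this forces $s\geq k+1$. Set $a^*:=s-1\in[k,n-1]$, so $\sigma_{a^*+1}=\pi_k$. The defining property of this $a^*$ is that $\sigma_{a^*+1}$ is the $\pi$-maximum of $\sigma([a^*,n])$: indeed $\sigma([a^*,n])\subseteq\sigma([k,n])=\pi([k,n])$, and $\pi_k$ is the $\pi$-maximum of $\pi([k,n])$.

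Next I would set $\omega:=\sigma t_{a^*,a^*+1}$ and check the three basic conditions. Distinctness: $\omega\neq\sigma$ by construction, and $\omega\neq\pi$ because $\dd_K(\sigma,\omega)=1$ while $\dd_K(\sigma,\pi)\geq 2$ by hypothesis. Betweenness $\sigma-\omega-\pi$: $\omega$ agrees with $\sigma$ on every unordered pair apart from $\{\sigma_{a^*},\sigma_{a^*+1}\}$, and applying the key property of $a^*$ to $S=\{\sigma_{a^*},\sigma_{a^*+1}\}$ gives $\sigma_{a^*+1}>_\pi\sigma_{a^*}$, so $\sigma$ and $\pi$ themselves disagree on that pair and the betweenness condition imposes no constraint there.

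The crux is the additivity. Using the definition of $\ddd$ it suffices to show, for every $S\in\mathcal P_n$,
\[
\mu(\triangle_S(\sigma,\pi))=\mu(\triangle_S(\sigma,\omega))+\mu(\triangle_S(\omega,\pi)).
\]
Because $\omega$ is obtained from $\sigma$ by swapping the elements at positions $a^*$ and $a^*+1$, one has $M(S,\sigma)\neq M(S,\omega)$ only when $\{\sigma_{a^*},\sigma_{a^*+1}\}\subseteq S\subseteq\sigma([a^*,n])$. On every other $S$ one has $\triangle_S(\sigma,\omega)=\emptyset$ and $\triangle_S(\sigma,\pi)=\triangle_S(\omega,\pi)$, so the identity is immediate. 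On the critical $S$, the defining property of $a^*$ gives $M(S,\pi)=\sigma_{a^*+1}=M(S,\omega)$, hence $\triangle_S(\omega,\pi)=\emptyset$ and $\triangle_S(\sigma,\pi)=\triangle_S(\sigma,\omega)=\{\sigma_{a^*},\sigma_{a^*+1}\}$, and again the identity holds; summing over $S$ with weights $\beta_{|S|}$ completes the proof.

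The main obstacle is precisely the design of $a^*$. A naive choice of adjacent transposition (e.g.\ the first inversion $a$ with $\sigma_a<_\pi\sigma_{a+1}$) typically lands, for suitable larger menus $S$, in the bad cases $M(S,\sigma)=M(S,\pi)\neq M(S,\omega)$ or $M(S,\sigma),M(S,\omega),M(S,\pi)$ all distinct, where the termwise identity fails because an uncancelled $\mu(M(S,\omega))$ appears on the right. The recipe $a^*=\sigma^{-1}(\pi_k)-1$ is engineered exactly to collapse the triple $(M(S,\sigma),M(S,\omega),M(S,\pi))$ to at most two distinct values for every $S$, making the identity hold at the level of each summand and thereby sidestepping any positivity assumption on $\bb$ or $\mu$.
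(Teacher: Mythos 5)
Your proof is correct and rests on the same key idea as the paper's: construct $\omega=\sigma t_{a,a+1}$ for an adjacent position $a$ chosen so that $\sigma_{a+1}$ is the $\pi$-maximum of $\sigma([a,n])$ (the paper takes $a=\max\{i:\sigma_{i+1}>_\pi\sigma_i\}$, you take $a=\sigma^{-1}(\pi_k)-1$; both indices satisfy the same defining property). The only difference is in the verification of additivity — you check the identity menu-by-menu directly from the definition of $\triangle_S$, whereas the paper computes $\chi(i^*)=0$ via the closed-form expression \eqref{DioMerda} — but this is a presentational variant of the same argument.
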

\begin{proof}
Fix $\sigma,\pi\in \LL_n$. If $\dd_K(\sigma,\pi)\leq 1$, there is nothing to prove. Therefore, suppose that $\dd_K(\sigma,\pi)\geq 2$ and let $X:=\{i\in [n-1]: \sigma_{i+1}>_{\pi}\sigma_{i}\}$. Let ~$\sigma^i=\sigma t_{i,i+1}$ and define $\chi:i\mapsto \ddd(\sigma,\sigma^i)+\ddd(\sigma^i,\pi)-\ddd(\sigma,\pi)$. We want to find $i\in [n-1]$ such that $\chi(i)=0$.
Here we display the three orders that we are considering
$$\begin{matrix}
\sigma&=&\sigma_1&\ldots&\sigma_{i-1}&\sigma_i&\sigma_{i+1}&\sigma_{i+2}&\ldots&\sigma_{n}\\
  \sigma^i&=&\sigma_1&\ldots&\sigma_{i-1}&\sigma_{i+1}&\sigma_i&\sigma_{i+2}&\ldots&\sigma_{n}\\
  \pi&=&\pi_1&\ldots&\pi_{i-1}&\pi_i&\pi_{i+1}&\pi_{i+2}&\ldots&\pi_{n}\\
\end{matrix}$$
From this, it is easy to verify that for all $i\in [n-1]$,
\begin{align*}
  \ddd(\sigma,\sigma^i)=&(\mu(\sigma_i)+\mu(\sigma_{i+1}))\left(f_\bb(n-i)-f_\bb(n-i-1)\right)\\
  \ddd(\sigma^i,\pi)=&\sum_{j\neq i,i+1}\mu(\sigma_j)\left(f_\bb(n-j)+f_\bb\left(\left|\sigma_j^{\downarrow,\pi}\right|\right)-2f_\bb\left(\left|\sigma_{(j,n]}\cap \sigma_j^{\downarrow,\pi}\right|\right)\right)\\
  &+\mu(\sigma_i)\left(f_\bb(n-i-1)+f_\bb\left(\left|\sigma_i^{\downarrow \pi}\right|\right)-2f_\bb\left(\left|\sigma_{(i+1,n]}\cap \sigma_i^{\downarrow \pi}\right|\right)\right)\\
  &+\mu(\sigma_{i+1})\left(f_\bb(n-i)+f_\bb\left(\left|\sigma_{i+1}^{\downarrow \pi}\right|\right)-2f_\bb\left(\left|\sigma_{\{i\}\cup(i+1,n]}\cap\sigma_{i+1}^{\downarrow \pi}\right|\right)\right)\\
  \ddd(\sigma,\pi)=&\sum_{j=1}^n\mu(\sigma_j)\left(f_\bb(n-j)+f_\bb \left(\left|\sigma_j^{\downarrow,\pi}\right|\right)-2f_\bb\left(\left|\sigma_{(j,n]}\cap \sigma_j^{\downarrow,\pi}\right|\right)\right).
\end{align*}
We have that $\chi$ is given by
\begin{align*}
\chi(i)=&\mu(\sigma_i)\left(f_\bb(n-i)-f_\bb(n-i-1)\right)+\\
  &\mu(\sigma_i)\left(f_\bb(n-i-1)+f_\bb\left(\left|\sigma_i^{\downarrow \pi}\right|\right)-2f_\bb\left(\left|\sigma_{(i+1,n]}\cap \sigma_i^{\downarrow \pi}\right|\right)\right)+\\
  &\mu(\sigma_i)\left(2f_\bb\left(\left|\sigma_{(i,n]}\cap\sigma_i^{\downarrow,\pi}\right|\right)-f_\bb(n-i)-f_\bb\left(\left|\sigma_i^{\downarrow,\pi}\right|\right)\right)+\\
  &\mu(\sigma_{i+1})\left(f_\bb(n-i)-f_\bb(n-i-1)\right)+\\
  &\mu(\sigma_{i+1})\left(f_\bb(n-i)+f_\bb\left(\left|\sigma_{i+1}^{\downarrow \pi}\right|\right)-2f_\bb\left(\left|\sigma_{\{i\}\cup(i+1,n]}\cap\sigma_{i+1}^{\downarrow \pi}\right|\right)\right)+\\
  &\mu(\sigma_{i+1})\left(2f_\bb\left(\left|\sigma_{(i+1,n]}\cap\sigma_{i+1}^{\downarrow,\pi}\right|\right)-f_\bb(n-i-1)-f_\bb\left(\left|\sigma_{i+1}^{\downarrow,\pi}\right|\right)\right)\\
  =&2\mu(\sigma_i)\left(f_\bb\left(\left|\sigma_{(i,n]}\cap\sigma_i^{\downarrow,\pi}\right|\right)-f_\bb\left(\left|\sigma_{(i+1,n]}\cap \sigma_i^{\downarrow \pi}\right|\right)\right)+\\
  &2\mu(\sigma_{i+1})\left(f_\bb(n-i)-f_\bb(n-i-1)+f_\bb\left(\left|\sigma_{(i+1,n]}\cap\sigma_{i+1}^{\downarrow,\pi}\right|\right)-f_\bb\left(\left|\sigma_{\{i\}\cup(i+1,n]}\cap\sigma_{i+1}^{\downarrow \pi}\right|\right)\right).
\end{align*}
\normalsize
We need to find $i^*\in X$ such that $\sigma_{i^*+1}>_\pi\sigma_{j}$ for all $j>i^*+1$ and for $j=i^*$. Indeed, if this is the case, we have that \begin{align*}
  \sigma_{(i^*+1,n]}&\subseteq \sigma_{\{i^*\}\cup(i^*+1,n]}\subseteq \sigma_{i^*+1}^{\downarrow \pi},
\end{align*}
which implies $$\left|\sigma_{(i+1,n]}\cap\sigma_{i+1}^{\downarrow \pi}\right|=n-i-1,\,\,\,\,\,\,\textrm{ and }\,\,\,\,\,\,\,\left|\sigma_{\{i\}\cup(i+1,n]}\cap\sigma_{i+1}^{\downarrow \pi}\right|=n-i.$$
Moreover, if $\sigma_{i^*+1}>_\pi\sigma_{i^*}$ we also have that 
$$\left|\sigma_{(i^*,n]}\cap\sigma_{i^*}^{\downarrow,\pi}\right|=\left|\sigma_{(i^*+1,n]}\cap \sigma_{i^*}^{\downarrow, \pi}\right|.$$
Thus, finding such $i^*$ would conclude the proof. In particular, for $i^*=\max X$, we have indeed that~$\sigma_{i^*+1}>_\pi\sigma_{i^*}$ because $i^*\in X$. Finally, because of maximality, for all $j\geq i^*+1$, we must have that $\sigma_{j}>_\pi\sigma_{j+1}$ and thus,
$$\sigma_{i^*+1}>_\pi\sigma_{i^*+2}>_\pi\ldots>_\pi\sigma_{n-1}>_\pi\sigma_{n}.$$
Therefore, $\chi(i^*)=0$ and hence we have that there exists $\omega=\sigma^{i^*}$ such that $\sigma-\omega-\pi$ and
\[
\ddd(\sigma,\pi)=\ddd(\sigma,\omega)+\ddd(\omega,\pi)
\]
proving that $\ddd$ satisfies Axiom \ref{A3}.
\end{proof}

Given a vector $\pp\in \R^{n-1}$ and $\mu\in \R^n$, we define the map $D_{\pp}^\mu:\LL_n\tim \LL_n\to \R$ as
\[
D_{\pp}^\mu:(\sigma,\pi)\mapsto\min_{\mathbf t\in T(\sigma,\pi)}\sum^{\dd_K(\sigma,\pi)}_{k=1}(\mu(i_k(\mathbf t))+\mu(j_k(\mathbf t)))\phi_{a_k(\mathbf t)}.
\]
\begin{theorem}\label{scorpions}
If $\ddd$ is a non-trivial semimetric, then there is a unique $\pp\in \R^{n-1}$ such that
\begin{equation}\label{popo}   \forall\sigma,\pi\in\LL_n ,\,\,\,\,\,\,\ddd(\sigma,\pi)=D_{\pp}^\mu(\sigma,\pi).
   \end{equation}
   Conversely, if $D^{\mu}_\pp$ is a non-trivial semimetric, there is a unique $\bb\in \R^{n-1}$ such that holds \eqref{popo}.
 \end{theorem}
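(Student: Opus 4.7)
The strategy is to view both representations $\ddd$ and $D^\mu_\pp$ as graphic distances on the permutahedron and identify their edge lengths. First I would appeal to Lemma \ref{dsatisfiesA3}, which shows that every non-trivial $\ddd$ satisfying the semimetric property also fulfills Axiom \ref{A3}; hence by Proposition \ref{pazzia}, $\ddd$ coincides with the shortest-path metric obtained by minimizing sums of edge lengths over $\mathbf t\in T(\sigma,\pi)$. Next I would compute the edge length $\ddd(\sigma,\sigma t_{a,a+1})$ directly from the closed-form formula \eqref{DioMerda}: since $\sigma$ and $\sigma t_{a,a+1}$ differ only at positions $a,a+1$, all terms with $x\notin\{\sigma_a,\sigma_{a+1}\}$ cancel, while the remaining two terms telescope to
\[
\ddd(\sigma,\sigma t_{a,a+1})=\bigl(\mu(\sigma_a)+\mu(\sigma_{a+1})\bigr)\bigl(f_\bb(n-a)-f_\bb(n-a-1)\bigr).
\]
Setting $\phi_a:=f_\bb(n-a)-f_\bb(n-a-1)$ and applying Pascal's identity gives $\phi_a=\sum_{k=2}^n\beta_k\binom{n-a-1}{k-2}$, which is precisely the map $F:\bb\mapsto\pp$ anticipated in the paper. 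Combining this with the graphic-distance representation yields $\ddd(\sigma,\pi)=D^\mu_\pp(\sigma,\pi)$ for $\pp=F(\bb)$.

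For the uniqueness of $\pp$ in the forward direction, suppose $D^\mu_\pp=D^\mu_{\pp'}$. Evaluating on each edge $(\sigma,\sigma t_{a,a+1})$ gives $\phi_a(\mu(\sigma_a)+\mu(\sigma_{a+1}))=\phi'_a(\mu(\sigma_a)+\mu(\sigma_{a+1}))$. The delicate point is showing that for every $a\in[n-1]$ one can arrange $\mu(\sigma_a)+\mu(\sigma_{a+1})\neq 0$: by non-triviality of $\ddd$, at least one edge length is nonzero, so there exist some $i\neq j$ with $\mu_i+\mu_j\neq 0$ (if all such sums vanished, Lemma \ref{lemmadelcazzo} would give $\mu\equiv 0$ for $n\geq 3$, or direct inspection for $n=2$, forcing $\ddd\equiv 0$). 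Then, for each $a$, choose $\sigma$ with $\sigma_a=i,\ \sigma_{a+1}=j$; this pins down $\phi_a=\phi'_a$, giving $\pp=\pp'$.

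For the converse, the plan is to argue that $F:\R^{n-1}\to\R^{n-1}$ is a linear bijection. Ordering the rows by $a$ running from $n-1$ down to $1$ and the columns by $k$ running from $2$ up to $n$, the matrix of $F$ has entries $\binom{n-a-1}{k-2}$, which vanish when $k-2>n-a-1$ and equal $1$ when $k-2=n-a-1$; this is a unit lower-triangular Pascal-type matrix and is therefore invertible. Thus, given any $\pp$, the vector $\bb:=F^{-1}(\pp)$ is the unique preimage, and the forward direction yields $\ddd=D^\mu_{F(\bb)}=D^\mu_\pp$; uniqueness of $\bb$ then follows from the injectivity of $F$ together with the already-established uniqueness of $\pp$. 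The only subtle step in the entire argument is the non-triviality bookkeeping in the uniqueness portion, since $\mu$ is allowed to be signed and could have $\mu_i+\mu_j=0$ for individual pairs without vanishing identically; everything else is the routine combination of the graphic-distance representation with the explicit edge-length formula.
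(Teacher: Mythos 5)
Your forward direction and the uniqueness of $\pp$ are correct and follow essentially the paper's own route: the paper inlines the induction on the Kendall distance (citing Lemma \ref{dsatisfiesA3} at the inductive step) where you delegate it to Proposition \ref{pazzia}, and it verifies the invertibility of $F$ by exhibiting the explicit inverse $\beta_a=\sum_{k=0}^{a-2}(-1)^{a+k}\binom{a-2}{k}\phi_{n-1-k}$ where you argue by triangularity of the Pascal-type matrix; these are cosmetic differences. Your handling of the non-triviality bookkeeping (finding one pair with $\mu_i+\mu_j\neq0$ and transporting it to every position $a$) is exactly the paper's argument.

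The existence half of the converse, however, has a genuine gap, and it is not one you can close by the route you propose. You write that ``the forward direction yields $\ddd=D^\mu_{F(\bb)}=D^\mu_\pp$'' for $\bb:=F^{-1}(\pp)$, but the forward direction is proved under the hypothesis that $\dd^\mu_\bb$ is a non-trivial \emph{semimetric}: the inequality $\dd^\mu_\bb\leq D^\mu_{F(\bb)}$ comes from iterating the triangle inequality along a geodesic, while only the reverse inequality $\dd^\mu_\bb\geq D^\mu_{F(\bb)}$ is unconditional (the decomposition in Lemma \ref{dsatisfiesA3} is purely algebraic). Nothing in the converse's hypotheses guarantees that $\dd^\mu_{F^{-1}(\pp)}$ is a semimetric, and in fact it can fail while $D^\mu_\pp$ is a perfectly good metric. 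Take $n=3$, $\mu$ the counting measure and $\pp=(1,2)$: one checks directly that $D^\mu_\pp$ is a non-trivial metric on $\LL_3$ (it is the shortest-path metric on the weighted hexagon, and all triangle inequalities hold since the edge weights are non-negative), yet $F^{-1}(\pp)=(2,-1)$ gives $\dd^\mu_{(2,-1)}(\Id,(3,2,1))=6\beta_2+2\beta_3=10$ whereas $D^\mu_\pp(\Id,(3,2,1))=\min(2\phi_1+2\phi_2+2\phi_1,\,2\phi_2+2\phi_1+2\phi_2)=8$; since comparison of edge lengths forces $F(\bb)=\pp$ for any candidate $\bb$, no $\bb$ at all represents this $D^\mu_\pp$. (Morally: a weighted Kendall distance whose position weights \emph{increase} toward the bottom cannot be a top-difference distance.) To be fair, the paper's own proof is equally elliptical at this point (``this readily implies that \ldots \eqref{popo} holds''), so you have inherited the gap rather than introduced it; what survives of the converse is the unconditional inequality $\dd^\mu_{F^{-1}(\pp)}\geq D^\mu_\pp$ and the \emph{uniqueness} of any $\bb$ that does satisfy \eqref{popo}. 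A correct existence statement needs the additional hypothesis that $F^{-1}(\pp)$ lies in the parameter region where $\dd^\mu_{F^{-1}(\pp)}$ is itself a semimetric.
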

\begin{proof}
We define the function $a\mapsto \phi_a$, by
\begin{equation}\label{PnuozzoArmato}
  \phi_a:=f_\bb(n-a)-f_\bb(n-a-1)=\sum_{k=2}^n\beta_k\binom{n-a-1}{k-2}.
  \end{equation}
  Assume that $\dd_K(\sigma,\pi)=1$, then $\pi=\sigma t_{a,a+1}=t_{i,j}\sigma$. In this case,
  $$\ddd(\sigma,\pi)=(\mu(i)+\mu(j))(f_\bb(n-a)-f_\bb(n-a-1))=(\mu(i)+\mu(j))\phi_a.$$
As $\ddd$ is not identically zero, there exist $x\in [n-1]$ and $\sigma,\pi\in \LL_n$ with $\pi=\sigma t_{x,x+1}$ such that 
\[
(\mu(\sigma_{x})+\mu(\sigma_{x+1}))\phi_x=\ddd(\sigma,\pi)>0.
\]
For all $a\in [n-1]$, there exist $\rho,\tau\in \LL_n$ such that $\tau=\rho t_{a,a+1}$ and $\rho_a=\sigma_x$ and $\rho_{a+1}=\sigma_{x+1}$
and then, as $\mu(\sigma_x)+\mu(\sigma_{x+1})\neq 0$. Therefore, for all $a\in [n-1]$, $\phi_a$ is pinned down uniquely. Indeed, suppose that there exist $\tilde{\pp}$ such that \eqref{popo} holds. Then, for all $a\in [n-1]$,
\[
(\mu(\sigma_x)+\mu(\sigma_{x+1}))\phi_a=(\mu(\sigma_x)+\mu(\sigma_{x+1}))\tilde{\phi}_a
\]
and hence, $\phi_a=\tilde{\phi}_a$. This shows that \eqref{popo} holds for $\dd_K(\sigma,\pi)=1$. Now suppose that $\dd_K(\sigma,\pi)>1$ and let $\mathbf t\in T(\sigma,\pi)$. Then
  \begin{align*}
  \ddd(\pi,\sigma)&\leq \sum_{k=1}^{\dd_K(\sigma,\pi)}\ddd\left(\sigma\prod_{h=1}^{k-1}t_{a_k(\mathbf t),a_k(\mathbf t)+1},\sigma\prod_{h=1}^{k}t_{a_k(\mathbf t),a_k(\mathbf t)+1}\right)\\ &=\sum_{k=1}^{\dd_K(\sigma,\pi)}(\mu(i_k(\mathbf t))+\mu(j_k(\mathbf t)))\phi_{a_k(\mathbf t)}
  \end{align*}
  which yields that
  $$\ddd(\sigma,\pi)\leq\min_{\mathbf t\in T(\sigma,\pi)}\sum_{k=1}^{\dd_K(\sigma,\pi)}(\mu(i_k(\mathbf t))+\mu(j_k(\mathbf t)))\phi_{a_k(\mathbf t)}.$$
   Now assume that \eqref{popo} holds for any couple of permutations with Kendall distance less than $m$, and suppose that $\dd_K(\sigma,\pi)=m$. By Lemma \ref{dsatisfiesA3}, there is some $\omega$ in between $\sigma$ and $\pi$ such that $\ddd(\sigma,\omega)+\ddd(\omega,\pi)=\ddd(\sigma,\pi)$. Clearly, $\dd_K(\sigma,\omega),\dd_K(\omega,\pi)<m$, thus by inductive hypothesis, there exist $\mathbf t'\in T(\sigma,\omega)$ and $\mathbf t'' \in T(\omega,\pi)$, such that
  \begin{align*}
  \ddd(\sigma,\omega)&=\sum_{k=1}^{\dd_K(\sigma,\omega)}(\mu(i_k(\mathbf t'))+\mu(j_k(\mathbf t')))\phi_{a_k(\mathbf t')}\\
  \ddd(\omega,\pi)&=\sum_{k=1}^{\dd_K(\omega,\pi)}(\mu(i_k(\mathbf t''))+\mu(j_k(\mathbf t'')))\phi_{a_k(\mathbf t'')}
  \end{align*}
  which yields, by denoting $\mathbf{\tilde t}$ the concatenation of $\mathbf t'$ and $\mathbf t''$,
  \[
\ddd(\sigma,\pi)=\sum_{k=1}^{\dd_K(\sigma,\pi)}(\mu(i_k(\mathbf{\tilde t}))+\mu(j_k(\mathbf{\tilde t})))\phi_{a_k(\mathbf{\tilde t})}\geq \min_{\mathbf{t}\in T(\sigma,\pi)}\sum_{k=1}(\mu(i_k(\mathbf{t}))+\mu(j_k(\mathbf{t})))\phi_{a_k(\mathbf{t})}
  \]
  as $\mathbf{\tilde{t}}\in T(\sigma,\pi)$. Thus, from $\bb$ we obtained a unique $\boldsymbol\phi$ satisfying \eqref{popo}. Conversely, notice that the relation  \eqref{PnuozzoArmato} is bijective with inverse given by
$$\forall a\geq 2,\ \beta_a=\sum_{k=0}^{a-2}(-1)^{a+k}\binom{a-2}{k}\phi_{n-1-k}$$
  Indeed,
  \begin{align*}
\beta_a&=\sum_{k=0}^{a-2}(-1)^{a+k}\binom{a-2}{k}\phi_{n-1-k}=\sum_{k=0}^{a-2}(-1)^{a+k}\binom{a-2}{k}\sum_{j=2}^n\beta_j\binom{k}{j-2}\\&=\sum_{j=2}^n(-1)^a\beta_j\sum_{k=j-2}^{a-2}(-1)^{k}\binom{a-2}k\binom k{j-2}=\sum_{j=2}^n(-1)^a\binom{a-2}{j-2}\beta_j\sum_{k=j-2}^{a-2}(-1)^{k}\binom {a-j}{k-j+2}\\
&=\sum_{j=2}^n(-1)^{a+j}\binom{a-2}{j-2}\beta_j\sum_{k=0}^{a-j}(-1)^{k}\binom {a-j}{k}=\sum_{j=2}^n(-1)^{a+j}\binom{a-2}{j-2}\beta_j\delta_{a=j}\\
    &=(-1)^{2a}\binom{a-2}{a-2}\beta_a=\beta_a.
  \end{align*}
  And,
  \begin{align*}
\phi_a&=\sum_{k=2}^n\binom{n-a-1}{k-2}\beta_k=\sum_{k=2}^n\binom{n-a-1}{k-2}\sum_{j=0}^{k-2}(-1)^{k+j}\binom{k-2}{j}\phi_{n-1-j}\\
&=\sum_{j=0}^n(-1)^j\phi_{n-1-j}\sum_{k=2}^{n}(-1)^k\binom{n-a-1}{k-2}\binom{k-2}{j}\\
&=\sum_{j=0}^n(-1)^j\binom{n-a-1}{j}\phi_{n-1-j}\sum_{k=2+j}^{n-a+1}(-1)^k\binom{n-a-1-j}{k-2-j}\\
&=\sum_{j=0}^n\binom{n-a-1}{j}\phi_{n-1-j}\sum_{k=0}^{n-a-1-j}(-1)^k\binom{n-a-1-j}{k}\\
    &=\sum_{j=0}^n\binom{n-a-1}{j}\phi_{n-1-j}\delta_{j=n-a-1}=\binom{n-a-1}{n-a-1}\phi_{a}=\phi_a.
  \end{align*}
  Notice this readily implies that through such $\bb$ given $\pp$, we obtain that \eqref{popo} holds.
  \end{proof}

\begin{proof}[Proof of Theorem \ref{axiomatic_characterization_THM}]
It follows from Theorems \ref{coro_separazioneperianale}, \ref{scorpions}, and Proposition \ref{prop:semimetric}, together with the observation that since $\dd^{-\mu}_{-\bb}=\ddd$, it is possible to cover all the possible semimetrics by restricting to $B_n^\geq\tim\mathcal M_n^\geq$.
\end{proof}

 \begin{proof}[Proof of Proposition \ref{vintoalfantamorto2017GrazieTotoRiiiina}]
For all $j\in[2,n]$ we have that $\beta_j=(-1)^j(\Delta^{j-2} \boldsymbol\phi)_{n+1-j}\geq0$. Indeed,
\begin{align*}
(-1)^a(\Delta^{a-2}\pp)_{n+1-a}&=(-1)^a\sum_{i=0}^{a-2}(-1)^{a-2+i}\binom{a-2}{i}\phi_{n+1-a+i}=\sum_{i=0}^{a-2}(-1)^{i}\binom{a-2}{i}\phi_{n+1-a+i}\\
&=\sum_{i=0}^{a-2}(-1)^{i}\binom{a-2}{a-2-i}\phi_{n+1-a+i}=\sum_{j=0}^{a-2}(-1)^{a-2-j}\binom{a-2}{j}\phi_{n+1-a+a-2-j}\\
&=(-1)^{a}\sum_{j=0}^{a-2}(-1)^{j}\binom{a-2}{j}\phi_{n-1-j}=\beta_a.
\end{align*}
Now we prove that $(-1)^j(\Delta^{j-2}\boldsymbol\phi)_{n+1-j-k}\geq0$ for any $k\in[0,n-j]$. We have already verified the basis case with $k=0$, thus now we assume this to hold for every $k\leq m$ and we prove it for $m+1$. By a known fact\footnote{\tiny\begin{align*}
  (\Delta^k f)_{j+1}-(\Delta^k f)_j&=\sum_{i=0}^k(-1)^{i+k}\binom kif(j+i+1)-\sum_{i=0}^k(-1)^{i+k}\binom kif(j+i)=-\sum_{i=1}^{k+1}(-1)^{i+k}\binom k{i-1}f(j+i)-\sum_{i=0}^k(-1)^{i+k}\binom kif(j+i)\\
  &=-\sum_{i=0}^{k+1}(-1)^{i+k}\left(\binom ki+\binom k{i-1}\right)f(j+i)=\sum_{i=0}^{k+1}(-1)^{i+k+1}\binom{k+1}{i}f(j+i)=(\Delta^{k+1} f)_j
  \end{align*}\normalsize}, we have that
    \begin{align*}
    (\Delta^{j-2} \pp)_{n+1-j-m}-(\Delta^{j-2} \pp)_{n+1-j-(m+1)}&=(\Delta^{j+1-2} \pp)_{n+1-j-(m+1)}\\
    &=(\Delta^{j+1-2} \pp)_{n+1-(j+1)-m}.
    \end{align*}
    Thus,
    \begin{gather*}
    (\Delta^{j-2} \pp)_{n+1-j-(m+1)}=(\Delta^{j-2} \pp)_{n+1-j-m}-(\Delta^{j+1-2} \pp)_{n+1-(j+1)-m}\\
    (-1)^j(\Delta^{j-2} \pp)_{n+1-j-(m+1)}=(-1)^j(\Delta^{j-2} \pp)_{n+1-j-m}+(-1)^{j+1}(\Delta^{j+1-2} \pp)_{n+1-(j+1)-m}.
    \end{gather*}
    The latter terms are both non-negative by inductive hypothesis.
    Conversely, if $\pp\in\Phi_n$ is totally monotone, then (if $\bb:=F^{-1}(\pp)$) $\beta_j=(-1)^j(\Delta^{j-2} \boldsymbol\phi)_{n+1-j}\geq0$. Moreover, since $\beta_2=\phi_{n-1}>0$, we have that $\bb\in B_n^*$.
  \end{proof}
  \subsubsection{A characterization of Axiom A.5 and A.6}
  \begin{itemize}
\item[\textbf{A.7}.]    There is a function $c: [n-1]^2\to\mathbb{R}_{++}$ such that for all $\sigma, \pi \in \mathbb{S}_n$ and $a,b \in [n-1]$ 
    \begin{equation}\label{A5A6_c}
    I(\sigma,\sigma t_{a,a+1})=I(\pi,\pi t_{b,b+1}) \implies \frac{\mathrm{d}(\sigma,\sigma t_{a,a+1})}{ \mathrm{d}(\pi,\pi t_{b,b+1})} = c(a,b).
    \end{equation}
\end{itemize}
\begin{proposition}
A metric $\mathrm{d}$ satisfies Axioms A.5 and A.6 if and only if it satisfies Axiom A.7.
\end{proposition}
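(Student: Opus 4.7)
The plan is to prove both directions separately, with the reverse direction being essentially immediate and the forward direction requiring a careful construction of $c$.

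For the reverse direction (A.7 $\Rightarrow$ A.5 and A.6), I would first observe that setting $\sigma = \pi$ and $a = b$ in A.7 forces $c(a,a) = 1$ for every $a \in [n-1]$. Under the hypothesis of A.5, A.7 then gives $\dd(\sigma, \sigma t_{a,a+1}) = c(a,b)\,\dd(\pi, \pi t_{b,b+1})$ and $\dd(\rho, \rho t_{a,a+1}) = c(a,b)\,\dd(\tau, \tau t_{b,b+1})$; cross-multiplying yields the product identity. For A.6, since $c(a,a) = 1$, the quantity $\dd(\sigma, \sigma t_{a,a+1})$ depends only on the inversion set $I(\sigma, \sigma t_{a,a+1})$ when $a$ is fixed. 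The multiset equality in A.6's hypothesis therefore forces one of two pairings between the pairs of inversion sets appearing on the two sides, and in either pairing the two sums of distances agree term-by-term.

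For the forward direction (A.5 and A.6 $\Rightarrow$ A.7), I would define $c(a,b)$ in two stages. First, set $c(a,a) := 1$, justified by applying A.6 with $\rho = \tau = \sigma$ under the hypothesis $I(\sigma, \sigma t_{a,a+1}) = I(\pi, \pi t_{a,a+1})$: the multiset condition is trivially satisfied and the additive conclusion gives $\dd(\sigma, \sigma t_{a,a+1}) + \dd(\pi, \pi t_{a,a+1}) = 2\,\dd(\sigma, \sigma t_{a,a+1})$, whence the two distances coincide. Second, for distinct $a \neq b$, fix reference permutations $\sigma_0, \pi_0$ with $I(\sigma_0, \sigma_0 t_{a,a+1}) = I(\pi_0, \pi_0 t_{b,b+1})$ (e.g., placing $1$ and $2$ at positions $a, a+1$ and $b, b+1$ respectively) and set $c(a,b) := \dd(\sigma_0, \sigma_0 t_{a,a+1})/\dd(\pi_0, \pi_0 t_{b,b+1})$. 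Invoking A.5 with the quadruple $(\sigma, \pi, \sigma_0, \pi_0)$ shows that the ratio $\dd(\sigma, \sigma t_{a,a+1})/\dd(\pi, \pi t_{b,b+1})$ equals $c(a,b)$ whenever $I(\sigma, \sigma t_{a,a+1}) = I(\pi, \pi t_{b,b+1})$, independent of the particular $(\sigma, \pi)$.

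Positivity of $c$ is automatic: since $\dd$ is a metric and $\sigma \neq \sigma t_{a,a+1}$, each distance $\dd(\sigma, \sigma t_{a,a+1})$ is strictly positive, so every ratio defining $c(a,b)$ lands in $\mathbb{R}_{++}$. The only subtle point---really the sole obstacle---is the diagonal case $a = b$: because A.5 is stated only for distinct indices, the equality $c(a,a) = 1$ cannot be read off from A.5 and must be extracted from A.6 via the $\rho = \tau = \sigma$ trick; symmetrically, in the reverse direction this same equality is what powers the recovery of A.6 from A.7.
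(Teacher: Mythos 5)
Your overall architecture coincides with the paper's: the reverse direction follows from $c(a,a)=1$ together with cross-multiplication and a pairing of the singleton inversion sets, and the forward direction defines $c(a,b)$ for $a\neq b$ via reference pairs, with A.5 guaranteeing that the ratio is independent of the particular permutations (and of the content of the inversion set), and metricity giving positivity. All of that is correct and is exactly what the paper does. The one place you diverge is the point you yourself single out as the sole obstacle, and there your argument has a genuine problem. You justify $c(a,a)=1$ by instantiating A.6 with $\rho=\tau=\sigma$ under the hypothesis $I(\sigma,\sigma t_{a,a+1})=I(\pi,\pi t_{a,a+1})$, claiming the multiset condition is trivially satisfied. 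But the hypothesis of A.6 is written with $\sqcup$, which throughout the paper denotes a union of \emph{disjoint} sets; in your instance the two sets on the left-hand side are equal singletons, so the hypothesis is not met on that reading and A.6 yields nothing. That this reading is the intended one is visible in the paper's own uses of A.6: in the proof of Theorem \ref{coro_separazioneperianale} a permutation $\rho$ with inversion set disjoint from the others is deliberately constructed before invoking A.6, and the same care is taken in the proof of the present proposition. Without $c(a,a)=1$ the forward direction does not close, since A.7 with $a=b$ forces all transpositions at a fixed position with the same inversion set to have equal length.

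The repair is the paper's argument for this diagonal claim. For $n\leq 3$, $I(\sigma,\sigma t_{a,a+1})=I(\rho,\rho t_{a,a+1})$ already forces $\sigma=\rho$ (the singleton inversion set pins down positions $a$ and $a+1$, hence the whole permutation), so there is nothing to prove. For $n\geq 4$, suppose toward a contradiction that $\mathrm{d}(\sigma,\sigma t_{a,a+1})\neq\mathrm{d}(\rho,\rho t_{a,a+1})$; then $\sigma\neq\rho$, they agree at positions $a$ and $a+1$, so they differ at some position $b\notin\{a,a+1\}$. Taking $\pi=\tau$ with $\pi_a=\sigma_b$ and $\pi_{a+1}=\rho_b$ makes $I(\pi,\pi t_{a,a+1})=\{(\sigma_b,\rho_b)\}$ genuinely disjoint from $I(\sigma,\sigma t_{a,a+1})=I(\rho,\rho t_{a,a+1})$, so the hypothesis of A.6 holds as a bona fide disjoint union, and cancelling $\mathrm{d}(\pi,\pi t_{a,a+1})$ from both sides gives the contradiction. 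With this substitution your proposal becomes a complete proof along the same lines as the paper's.
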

Notice that whenever such $c$ exists by \eqref{A5A6_c}, we have $c(a,b)c(b,a)=1$ for all $a,b\in [n-1]$, and in particular $c(a,a)=1$.
\begin{proof}

Suppose $\dd$ satisfies Axiom A.7. Let $ \sigma, \pi, \rho, \tau$ and $a$ be such that $I(\sigma,\sigma t_{a,a+1}) \sqcup I(\pi,\pi t_{a,a+1}) = I(\rho,\rho t_{a,a+1}) \sqcup I(\tau,\tau t_{a,a+1})$. Without loss of generality, we can assume that $I(\sigma,\sigma t_{a,a+1}) = I(\rho,\rho t_{a,a+1})$, and $I(\pi,\pi t_{a,a+1}) = I(\tau,\tau t_{a,a+1})$. Since $c(a,a)=1$, we have $\mathrm{d}(\sigma,\sigma t_{a,a+1}) = \mathrm{d}(\rho,\rho t_{a,a+1})$, and $\mathrm{d}(\pi,\pi t_{a,a+1}) = \mathrm{d}(\tau,\tau t_{a,a+1})$. Therefore, it holds $\mathrm{d}(\sigma,\sigma t_{a,a+1}) + \mathrm{d}(\pi,\pi t_{a,a+1}) = \mathrm{d}(\rho,\rho t_{a,a+1}) + \mathrm{d}(\tau,\tau t_{a,a+1})$. Thus, $\dd$ satisfies Axiom A.6.\\

We now show that $\dd$ satisfies A.5. Let $ \sigma, \pi, \rho, \tau$ and distinct $a, b \in [n-1]$ be s.t. $I(\sigma,\sigma t_{a,a+1}) = I(\pi,\pi t_{b,b+1})$ and $I(\rho,\rho t_{a,a+1}) = I(\tau,\tau t_{b,b+1})$. By \eqref{A5A6_c}, 
\[ \frac{\mathrm{d}(\sigma,\sigma t_{a,a+1})}{\mathrm{d}(\pi,\pi t_{b,b+1})} = c(a,b) = \frac{\mathrm{d}(\rho,\rho t_{a,a+1})}{\mathrm{d}(\tau,\tau t_{b,b+1})}.\]

Assume now that $\mathrm{d}$ satisfies both A.5 and A.6. We show that is satisfies Axiom A.7. Let $\rho, \tau$ and distinct $a, b \in [n-1]$ be s.t. $I(\rho,\rho t_{a,a+1}) = I(\tau,\tau t_{b,b+1})$. Define 
$c(a,b) := \mathrm{d}(\rho,\rho t_{a,a+1})/\mathrm{d}(\tau,\tau t_{b,b+1})$. Now we show that $c$ is well-defined. By Axiom A.5, for all $\pi,\sigma$ such that $I(\pi,\pi t_{a,a+1}) = I(\sigma,\sigma t_{b,b+1})$, we have 
\[ \frac{\mathrm{d}(\sigma,\sigma t_{a,a+1})}{\mathrm{d}(\pi,\pi t_{b,b+1})} = \frac{\mathrm{d}(\rho,\rho t_{a,a+1})}{\mathrm{d}(\tau,\tau t_{b,b+1})},\]
and hence $\mathrm{d}(\sigma,\sigma t_{a,a+1})/\mathrm{d}(\pi,\pi t_{b,b+1}) = c(a,b)$.

We claim that A.6 implies that if $\sigma, \rho \in \mathbb{S}_n$, and $a \in [n-1]$ are such that $I(\sigma,\sigma t_{a,a+1}) = I(\rho,\rho t_{a,a+1})$, then $\mathrm{d}(\sigma,\sigma t_{a,a+1}) = \mathrm{d}(\rho,\rho t_{a,a+1})$. Suppose $n \leq 3$. Then, $I(\sigma,\sigma t_{a,a+1}) = I(\rho,\rho t_{a,a+1})$ implies that $\sigma = \rho$; therefore, the result holds. So, assume $n \geq 4$. By contradiction, suppose, that $I(\sigma,\sigma t_{a,a+1}) = I(\rho,\rho t_{a,a+1})$ but $\mathrm{d}(\sigma,\sigma t_{a,a+1})\neq \mathrm{d}(\rho,\rho t_{a,a+1})$. As $\mathrm{d}$ is a metric, there exists $b \in [n-1]$ such that $b \neq a$ and $b \neq a+1$, and that $\sigma_{b} \neq \rho_{b}$. Clearly, $(\sigma_{b},\rho_{b}) \notin I(\sigma,\sigma t_{a,a+1})$. Let $\pi\in \LL_n$ be such that $\pi_a = \sigma_b$, $\pi_{a+1} = \rho_b$, and $\tau = \pi$. 
By construction, we have 
\[ I(\sigma,\sigma t_{a,a+1}) \sqcup I(\pi,\pi t_{a,a+1}) = I(\rho,\rho t_{a,a+1}) \sqcup I(\tau,\tau t_{a,a+1}).\]

\noindent Therefore, by A.6, $\mathrm{d}(\sigma,\sigma t_{a,a+1}) + \mathrm{d}(\pi,\pi t_{a,a+1}) = \mathrm{d}(\rho,\rho t_{a,a+1}) + \mathrm{d}(\tau,\tau t_{a,a+1}).$ However, notice that, by construction of $\pi$, $\mathrm{d}(\pi,\pi t_{a,a+1}) = \mathrm{d}(\tau,\tau t_{a,a+1})$, which, in turn, implies $\mathrm{d}(\sigma,\sigma t_{a,a+1}) = \mathrm{d}(\rho,\rho t_{a,a+1})$, a contradiction. Therefore, \eqref{A5A6_c} holds.
\end{proof}

  \subsubsection{Isometric embeddability}

\begin{proof}[Proof of Proposition \ref{propo:iso_embedding}]
We consider the following orders on $C=[n]$, 
\begin{align*}
    o_{123}&=(1,2,3,4,\ldots,n),\,\,\,\,\ \ \ \  o_{132}=(1,3,2,4,\ldots,n)\\
    o_{213}&=(2,1,3,4,\ldots,n),\,\,\,\,\ \ \ \  o_{231}=(2,3,1,4,\ldots,n)\\
    o_{312}&=(3,1,2,4,\ldots,n),\,\,\,\,\ \ \ \  o_{321}=(3,2,1,4,\ldots,n)
\end{align*}
Assume there exists an isometric embedding $f:\LL_n\to E$, we denote the images of these rankings under the isometry $f$ as $f_{ijk}=f(o_{ijk})$ and we assume without loss of generality that $f_{123}=\mathbf0_E$.
We have the following betweenness relations
$$o_{123}-o_{213}-o_{231}-o_{321}-o_{312}-o_{132}-o_{123}-o_{213}.$$
Also notice that $o_{213}$ is the only element between~$o_{123}$ and $o_{231}$; $o_{231}$ is the only element between $o_{213}$ and $o_{321}$; $o_{321}$ is the only element between $o_{231}$ and $o_{312}$; $o_{312}$ is the only element between $o_{321}$ and $o_{132}$; $o_{132}$ is the only element between $o_{312}$ and $o_{123}$; $o_{123}$ is the only element between $o_{132}$ and $o_{213}$. Thus, by Axiom $\ref{A3}$, we must have that
\begin{subequations}
\begin{align}
    \dd(o_{123},o_{213})+\dd(o_{213},o_{231})&=\dd(o_{123},o_{231}),\label{a1}\\
    \dd(o_{213},o_{231})+\dd(o_{231},o_{321})&=\dd(o_{213},o_{321}),\label{a2}\\
    \dd(o_{231},o_{321})+\dd(o_{321},o_{312})&=\dd(o_{231},o_{312}),\label{a3}\\
    \dd(o_{321},o_{312})+\dd(o_{312},o_{132})&=\dd(o_{321},o_{132}),\label{a4}\\
    \dd(o_{312},o_{132})+\dd(o_{132},o_{123})&=\dd(o_{312},o_{123}),\label{a5}\\
    \dd(o_{132},o_{123})+\dd(o_{123},o_{213})&=\dd(o_{132},o_{213})\label{a6}.
\end{align}
\end{subequations}
Since $E$ is strictly convex and because $o_{213}$ satisfies \eqref{a1}, we can find a $\lambda_1\in(0,1)$ such that
$$f_{213}=\lambda_1f_{231}.$$
In particular, we must have that $\lambda_1=\dd(o_{123},o_{213})/\dd(o_{123},o_{231})$.
Similarly, since $o_{231}$ satisfies \eqref{a2}, we must have some $\lambda_2\in(0,1)$ such that
$$f_{231}=f_{213}+\lambda_2(f_{321}-f_{213}).$$
In particular, we must have
$\lambda_2=\dd(o_{231},o_{213})/\dd(o_{213},o_{321})$.
Together with the previous result,
$$f_{231}=\lambda_1f_{231}+\lambda_2(f_{321}-\lambda_1f_{231})$$
and thus
$$f_{231}=\frac{\lambda_2}{1-\lambda_1+\lambda_1\lambda_2}f_{321}=:\nu_1f_{321},\,\,\,\,\,\,\,f_{213}=\frac{\lambda_1\lambda_2}{1-\lambda_1+\lambda_1\lambda_2}f_{321}=:\nu_2f_{321},$$
where $\nu_1,\nu_2\in(0,1)$. 
Now, since $o_{321}$ satisfies \eqref{a3}, there is some $\lambda_3\in(0,1)$ such that
$$f_{321}=f_{231}+\lambda_3(f_{312}-f_{231}).$$
Combining this with the previous relations,
$$f_{321}=(1-\lambda_3)\nu_1f_{321}+\lambda_3f_{312}$$
\begin{align*}f_{321}&=\frac{\lambda_3}{1-(1-\lambda_3)\nu_1}f_{312}&&&f_{231}&=\frac{\lambda_3\nu_1}{1-(1-\lambda_3)\nu_1}f_{312}&&&f_{213}&=\frac{\lambda_3\nu_2}{1-(1-\lambda_3)\nu_1}f_{312}\\
&=:\upsilon_{1}f_{312} &&&&=:\upsilon_{2}f_{312}&&&&=:\upsilon_{3}f_{312}
\end{align*}
Notice that $\upsilon_1,\upsilon_2,\upsilon_3\in(0,1)$. Again, since $o_{312}$ satisfies \eqref{a4}, there is some $\lambda_4\in(0,1)$ such that
$$f_{312}=f_{321}+\lambda_4(f_{132}-f_{321})$$
Combining this with the previous relations,
$$f_{312}=(1-\lambda_4)\upsilon_1f_{312}+\lambda_4f_{132}$$
\small\begin{align*}
f_{312}&=\frac{\lambda_4}{1-(1-\lambda_4)\upsilon_1}f_{132}=:\theta_{1}f_{132}&&&f_{321}&=\frac{\lambda_4\upsilon_1}{1-(1-\lambda_4)\upsilon_1}f_{132}=:\theta_{2}f_{132}\\
f_{231}&=\frac{\lambda_4\upsilon_2}{1-(1-\lambda_4)\upsilon_1}f_{132}=:\theta_{3}f_{132}&&&f_{213}&=\frac{\lambda_4\upsilon_3}{1-(1-\lambda_4)\upsilon_1}f_{132}=:\theta_{4}f_{132}
\end{align*}\normalsize
Notice that $\theta_1,\theta_2,\theta_3,\theta_4\in(0,1)$.
Finally, since $o_{132}$ satisfies $\eqref{a5}$, there is some $\lambda_5\in(0,1)$ such that
$$f_{132}=f_{312}+\lambda_5(f_{123}-f_{312})$$
Combining this with the previous relations,
$$f_{132}=(1-\lambda_5)\theta_1f_{132}+\lambda_5f_{123},\ \ \ \ f_{132}=\frac{\lambda_5}{1-(1-\lambda_5)\theta_1}f_{123}$$
But, $f_{123}=\mathbf0_E$ yields $f_{132}=f_{312}=f_{321}=f_{231}=f_{213}=\mathbf0_E$ contradicting the injectivity of $f$.
\end{proof}

\subsection{Proofs in §\ref{section:voting}}


\begin{proof}[Computations of Example \ref{ex:neutrality}]
  Let $n=3$, $V=((1,2,3),(3,1,2),(2,3,1))$, $\bb=(1,0)$ and $\mu_3>\mu_1,\mu_2>0$. Let define $m:=2(\mu_1+\mu_2+\mu_3)$, then
\begin{align*}
  \dd_\beta^\mu((1,2,3),V)&=m+\mu_1+\mu_3 &&& \dd_\beta^\mu((3,2,1),V)&=m+\mu_1+2\mu_2+\mu_3\\
  \dd_\beta^\mu((3,1,2),V)&=m+\mu_2+\mu_3 &&& \dd_\beta^\mu((2,1,3),V)&=m+2\mu_1+\mu_2+\mu_3\\
  \dd_\beta^\mu((2,3,1),V)&=m+\mu_1+\mu_2 &&& \dd_\beta^\mu((1,3,2),V)&=m+\mu_1+\mu_2+2\mu_3
\end{align*}\normalsize
Thus, the median is $P^\mu_\bb(V)=\{(2,3,1)\}$.\\
Now let $\sigma=(3,2,1)$, then $\sigma V=((3,2,1),(1,3,2),(2,1,3))$ and 
\begin{align*}
  \dd_\beta^\mu((1,2,3),\sigma V)&=m+\mu_1+2\mu_2+\mu_3 &&& \dd_\beta^\mu((3,2,1),\sigma V)&=m+\mu_1+\mu_3\\
  \dd_\beta^\mu((3,1,2),\sigma V)&=m+2\mu_1+\mu_2+\mu_3 &&& \dd_\beta^\mu((2,1,3),\sigma V)&=m+\mu_2+\mu_3\\
  \dd_\beta^\mu((2,3,1),\sigma V)&=m+\mu_1+\mu_2+2\mu_3 &&& \dd_\beta^\mu((1,3,2),\sigma V)&=m+\mu_1+\mu_2
\end{align*}\normalsize
In this case the median is $P^\mu_\bb(\sigma V)=\{(1,3,2)\}\neq \{(2,1,3)\}=\sigma\{(2,3,1)\}=\sigma P^\mu_\bb(V)$. Also notice that in this case $W^\mu_\bb(\sigma V)=\{1\}\neq \{2\}=\sigma\{2\}=\sigma W^\mu_\bb(V)$.
\end{proof}

\begin{proof}[Proof of Proposition \ref{reinf_rank}]
Let $V_1,V_2\in \V$ be such that $P^\mu_\bb(V_1)\cap P^\mu_\bb(V_2)\neq \emptyset$. We denote by $V$ the concatenation $V:=V_1\oplus V_2$. First of all notice that, 
\begin{align*}
\min\limits_{\sigma\in \LL_n}\ddd(\sigma,V)=\min\limits_{\sigma\in \LL_n}\left[\ddd(\sigma,V_1)+\ddd(\sigma,V_2)\right]\geq \min\limits_{\sigma\in \LL_n}\ddd(\sigma,V_1)+\min\limits_{\pi\in \LL_n}\ddd(\pi,V_2).
\end{align*}
Therefore, we have that $P^\mu_\bb(V_1)\cap P^\mu_\bb(V_2)\subseteq P^\mu_\bb(V).$ Conversely, suppose that $\sigma \notin P^\mu_\bb(V_1)\cap P^\mu_\bb(V_2)$. Then, it holds 
\begin{equation}\label{scopazzo}
\ddd(\sigma,V)=\ddd(\sigma,V_1)+\ddd(\sigma,V_2)> \min\limits_{\omega \in \LL_n}\ddd(\omega,V_1)+\min\limits_{\pi\in \LL_n}\ddd(\pi,V_2).  
\end{equation}
Let $\pi\in P^\mu_\bb(V_1)\cap P^\mu_\bb(V_2)\neq \emptyset$. By \eqref{scopazzo}, we have that $\ddd(\sigma,V)>\ddd(\pi,V)$. Therefore, $\sigma\notin P^\mu_\bb(V)$. By contrapositive, this yields that $P^\mu_\bb(V_1)\cap P^\mu_\bb(V_2)=P^\mu_\bb(V)$.
\end{proof}

\begin{proof}[Proof of Proposition \ref{prop:reinforcingwinners}]
Suppose first that $\beta_k = 0$, for all $2 \leq k<n$. Fix two profiles of voters~$V_1,V_2\in \mathbf{V}_n$ with $W^\mu_\bb(V_1)\cap W^\mu_\bb(V_2)\neq \emptyset$. Let $c\in W^\mu_\bb(V_1)\cap W^\mu_\bb(V_2)$. Then, for all $\sigma\in \LL_n$ with $c\in M(C,\sigma)$, we have $\sigma\in P^\mu_\bb(V_1)\cap P^\mu_\bb(V_2)$. Indeed note that, as $c\in W^\mu_\bb(V_1)\cap W^\mu_\bb(V_2)$, there exist preferences $\pi^1\in P^\mu_\bb(V_1)$ and $\pi^2\in P^\mu_\bb(V_2)$ such that $c\in M(C,\pi^1)$ and $c\in M(C,\pi^2)$. Given that~$\beta_2,\ldots,\beta_{n-1}=0$, we obtain, for $m = 1,2$,
\begin{align*}
\ddd\left(\pi^1,V_m\right)&=\sum_{j=1}^{|V_m|}\beta_n\mu\left(\{c \}\bigtriangleup M(C,v^j)\right)=\ddd(\pi^2,V_m).
\end{align*}
Thus, $\sigma\in P^\mu_\bb(V_1)\cap P^\mu_\bb(V_2)$ and by Proposition \ref{reinf_rank} we have that $P^\mu_\bb(V_1)\cap P^\mu_\bb(V_2)=P^\mu_\bb(V)$. This yields
\[
W^\mu_\bb(V_1)\cap W^\mu_\bb(V_2)=W^\mu_\bb(V).
\]

Conversely, without loss of generality we can assume that $\mu$ is the counting measure. We want to show that if $W_{\bb}$ satisfies reinforcing for winners, then $\beta_{k} = 0$, for all $k \in [2,n-1]$. To this end suppose that $\beta_k > 0$, for some $k \in [2,n-1]$. If $n = 3$, we can directly skip to Paragraph \emph{Final Computations} below. So, assume for the moment that $n \geq 4$. Consider the following two electorates $V_1,V_2\in \V$,
\begin{align*}
  V_1&=((1,2,3,4,\ldots,n),(2,3,1,4,\ldots,n),(3,1,2,4,\ldots,n))\\
  V_2&=((1,2,3,4,\ldots,n),(2,1,3,4,\ldots,n)),
\end{align*}
with $V$ denoting their concatenation, i.e. $V = V_1 \oplus V_2$. We first show that if $\sigma \in P_{\bb}(V_1)$ then it is without loss of generality to assume that $\{\sigma_1,\sigma_2,\sigma_3\} = \{1,2,3\}$.
 Let $\sigma \in \mathbb{S}_{n}$ and suppose~$\sigma_1 \notin \{1,2,3\}$. Let $k^{*} = \min \{i \in [n] \mid \sigma_{i} \in \{1,2,3\}\}$. We have $\sigma = (\sigma_1,\ldots,\sigma_{k^{*}-1},\sigma_{k^{*}},\sigma_{k^{*}+1},\ldots,\sigma_{n})$, and that~$\sigma_{i} \in \{1,2,3\}$ for some $i > k^{*}$. Let $\pi \in \mathbb{S}_n$ be defined as
$$\pi := (\sigma_{k^{*}},\sigma_1,\ldots,\sigma_{k^{*}-1},\sigma_{k^{*}+1},\ldots,\sigma_{n}).$$
Let $B \subseteq [n]$ be such that $\{\sigma_1,\sigma_{k^{*}}\} \subseteq B$ and $|B| = k$. Define the following collections of sets
\begin{enumerate}
    \item $\mathcal{S}_1 := \{A \subseteq [n] \mid \sigma_{k^{*}},\sigma_1 \in A\}$
    \item $\mathcal{S}_2 := \{A \subseteq [n] \mid \sigma_{k^{*}} \in A, \sigma_1 \notin A, \exists i \in [2,k^{*}-1],\; \textnormal{s.t.} \; \sigma_i \in A \},$
\end{enumerate}
and observe that $A \in 2^{[n]} \setminus (\mathcal{S}_1 \sqcup \mathcal{S}_2)$ if and only if $M(A,\sigma) = M(A,\pi)$. We have
\begin{equation}\label{DIO PORCONE GIULIO C'HA IL PITONE}
    \begin{aligned}
    \dd_{\bb}(\sigma,V_1) - \dd_{\bb}(\pi,V_1) &\overset{(a)}{=} \sum_{j=1}^{|V_1|} \sum_{A \in \mathcal{S}_1 \sqcup \mathcal{S}_2} \beta_{|A|}\left( |M(A,\sigma) \triangle M(A,v^{j})| - |M(A,\pi) \triangle M(A,v^{j})| \right) \\
    &\overset{(b)}{\geq} \sum_{j=1}^{|V_1|} \sum_{A \in \mathcal{S}_1} \beta_{|A|}\left( |M(A,\sigma) \triangle M(A,v^{j})| - |M(A,\pi) \triangle M(A,v^{j})| \right) \\
    &\overset{(c)}{=} \sum_{j=1}^{|V_1|} \sum_{A \in \mathcal{S}_1} \beta_{|A|}\left( 2 - |M(A,\pi) \triangle M(A,v^{j})| \right) \\
    &\overset{(d)}{\geq} \sum_{j=1}^{|V_1|} \beta_{k}\left( 2 - |M(B,\pi) \triangle M(B,v^{j})| \right) \overset{(e)}{\geq}  \beta_k (2-0) = 2\beta_k > 0,
    \end{aligned}
\end{equation}

where ($a$) holds because if $A \in 2^{[n]} \setminus (\mathcal{S}_1 \sqcup \mathcal{S}_2)$, then $M(A,\sigma) = M(A,\pi)$. ($b$) holds because, if $A \in \mathcal{S}_2$, then $M(A,\sigma) > 3 $, and $M(A,v^{j}) \leq 3$, for all $j \in [|V_1|]$. This implies that 
$|M(A,\sigma) \triangle M(A,v^{j})| - |M(A,\pi) \triangle M(A,v^{j})| = 2 - |M(A,\pi) \triangle M(A,v^{j})| \geq 0$, for all $j \in [|V_1|]$.
($c$) holds because for all subsets $A \in \mathcal{S}_1$, it holds $M(\sigma,A) = \sigma_1$ while $M(A,v^{j}) \neq \sigma_1$, for all $j \in [|V_1|]$, which, in turn, implies that $|M(A,\sigma) \triangle M(A,v^{j})| = 2$; ($d$) holds by definition of $B$; ($e$) holds because there is at least one voter $j$ such that $M(B,v^{j}) = M(B,\pi) = \sigma_{k^{*}}$. 

This shows that any optimal permutation is such that $\sigma_1 \in \{1,2,3\}$. There are two cases to be discussed. Let $\mathcal{K} := \{k \in [2,n-1] : \beta_k > 0\} $ be the set of $\beta$ that are positive.
\noindent
\paragraph{\emph{Case 1.}}{Suppose there exists $k \in \mathcal{K}$ such that $k \leq n-2$. One can remove $\sigma_1$ from the alternative set, restrict $v^j$ to $[n]\setminus\{\sigma_1\}$, and $\bb$ to $(\beta_2,\ldots,\beta_{n-1})$ (see Lemma \ref{split optimum}). By applying Lemma \ref{split optimum}, we conclude that any optimal permutation $\sigma \in \mathbb{S}_n$ is such that~$\{\sigma_1,\sigma_2\} \subseteq \{1,2,3\}$.

Furthermore, we can repeat the process by restricting $v^j$ to $[n]\setminus\{\sigma_1,\sigma_2\}$, and by truncating $\bb$ as $(\beta_2,\ldots,\beta_{n-2})$. By applying again Lemma \ref{split optimum} and repeating a third time the same arguments, we obtain that if $\sigma \in P_{\bb}(V_1)$ then~$\{\sigma_1,\sigma_2,\sigma_3\} = \{1,2,3\}$. Now, fix $\sigma \in P^{\mu}_{\bb}(V_1)$. Construct $\pi$ in the following way: $\pi_1 = \sigma_1$, $\pi_2 = \sigma_2$, $\pi_3 = \sigma_3$, and $\pi_i = i$, for $i \geq 4$. Then, we have that that~$\dd_{\bb}(\sigma,V_1) \geq \dd_{\bb}(\pi,V_1)$. Indeed, suppose that $A\in 2^{[n]}$, if $A\subseteq [4,n]$, for all $j\in [|V_1|]$
\[
|M(A,\sigma)\bigtriangleup M(A,v^j)|\geq |M(A,\pi)\bigtriangleup M(A,v^j)|=0.
\]
If $A\cap \{1,2,3\}\neq \emptyset$, then $M(A,\sigma)=M(A,\pi)$ and hence for all $j\in [|V_1|]$,
\[
|M(A,\sigma)\bigtriangleup M(A,v^j)|= |M(A,\pi)\bigtriangleup M(A,v^j)|.
\]
Therefore, there exists $\sigma^*\in P_\bb(V_1)$ such that $\sigma^{*}_i = i$ for $i \geq 4$ and $\sigma^*([3])=\left\lbrace 1,2,3\right\rbrace$.}
\noindent
\paragraph{\emph{Case 2.}}{Suppose that $\beta_{n-1} > 0$ and that $\beta_k = 0$ for $k \leq n-2$. The same steps underlying (\ref{DIO PORCONE GIULIO C'HA IL PITONE}) plus an application of Lemma \ref{split optimum} imply that if $\sigma \in P_{\bb}(V_1)$ then $\{\sigma_1,\sigma_2\} \subseteq \{1,2,3\}$. As all sets of size~$n-1$ contain at least one between $\sigma_1$ and $\sigma_2$, alternatives in positions from $3$ to $n$ are irrelevant when comparing $\sigma$ with $v^j$, for $j \in V_1$. Therefore, there exists $\sigma^*\in P_\bb(V_1)$ such that $\sigma^{*}_i = i$ for $i \geq 4$ and $\sigma^*([3])=\left\lbrace 1,2,3\right\rbrace$.}

Similar arguments apply to the other electorates $V_2$ and $V$. These steps imply that $W_\bb(V_1) \subseteq \{1,2,3\}$, $W_\bb(V_2) \subseteq \{1,2\}$, and $W_\bb(V)\subseteq \{1,2,3\}$. 

\paragraph{\emph{Final Computations}}\label{par: final computations}{ Simple algebra yields that $W_\bb(V_1)=\{1,2,3\}$, $W_\bb(V_2)=\{1,2\}$, and hence $W_\bb(V_1) \cap W_\bb(V_2) = \{1, 2\}$.
For $V$, we obtain 
 \begin{equation}\label{algebrina a pecorina}
\begin{aligned}
\dd_\bb((1,2,3,4,\ldots,n),V) &= 6\sum_{k=3}^{n} {\beta}_k \binom{n-3}{k-3} + 10\sum_{k=2}^{n-1} {\beta}_k \binom{n-3}{k-2} \\
\dd_\bb((2,1,3,4,\ldots,n),V) &= 6\sum_{k=3}^{n} {\beta}_k \binom{n-3}{k-3} + 12\sum_{k=2}^{n-1} {\beta}_k \binom{n-3}{k-2} \\
\dd_\bb((2,3,1,4,\ldots,n),V) &= 6\sum_{k=3}^{n} {\beta}_k \binom{n-3}{k-3} + 14\sum_{k=2}^{n-1} {\beta}_k \binom{n-3}{k-2}.
\end{aligned}
\end{equation}
Finally, as $\beta_k > 0$, for some $k \geq 2$, we conclude that $2 \notin W_\bb(V)$, which shows that $W_\bb$ does not satisfy reinforcing for winners.}
\end{proof}

\begin{proof}[Proof of Proposition \ref{prop:monotonicity}]
Let $V\in\mathbf{V}_n$ and $\sigma^*\in P^\mu_\bb(V)$ with $k\in M(C,\sigma^*)$. If all voters in the profile $V$ rank $k$ as their most preferred alternative, there is nothing to prove. Up to relabeling, we can assume that $k>1$ and that voter $v$ of the profile $V$ takes the form
\[
1>_v 2>_v \ldots >_v k-2>_v k-1>_v k >_v k+1>_v \ldots >_v n.
\]
Without loss of generality we can take an upranking $V^k$ obtained by substituting $v$ with $w:=t_{k-1,k} v$. If we show that $k\in W_\bb^\mu(V^k)$, then the proof is concluded by a straightforward induction argument. For all linear orders $\sigma\in \LL_n$, we have
\begin{itemize}
\item for all $i\in \left[1,k-2\right]\cup \left[k+1,n\right]$, $i^{\downarrow,\sigma}\cap i^{\downarrow,v}=i^{\downarrow,\sigma}\cap i^{\downarrow,w}$,
\item for $i=k-1$, $(k-1)^{\downarrow,\sigma}\cap (k-1)^{\downarrow,v}=(k-1)^{\downarrow,\sigma}\cap \left((k-1)^{\downarrow,w}\cup \left\lbrace k\right\rbrace\right)$ which in turn yields:
\[
\left| (k-1)^{\downarrow,\sigma}\cap (k-1)^{\downarrow,v} \right|=\left| (k-1)^{\downarrow,\sigma}\cap (k-1)^{\downarrow,w} \right|+\mathbf{1}_{(k-1)^{\downarrow,\sigma}}(k),
\]
\item for $i=k$, ${k}^{\downarrow,\sigma}\cap {k}^{\downarrow,v}={k}^{\downarrow,\sigma}\cap \left({k}^{\downarrow,w}\setminus\left\lbrace k-1\right\rbrace\right)$ which in turn yields:
\[
\left| {k}^{\downarrow,\sigma}\cap {k}^{\downarrow,v} \right|=\left| {k}^{\downarrow,\sigma}\cap {k}^{\downarrow,w} \right|-\mathbf{1}_{k^{\downarrow,\sigma}}(k-1).
\]
\end{itemize}
Fix a linear order $\sigma\in \LL_n$. The previous observations yield that
\begin{align*}
\sum_{i=1}^nf_\bb\left(\left| i^{\downarrow,\sigma}\cap i^{\downarrow,w} \right|\right)\mu(i)=&f_\bb\left(\left| k^{\downarrow,\sigma}\cap k^{\downarrow,w} \right|\right)\mu(k)+f_\bb\left(\left| (k-1)^{\downarrow,\sigma}\cap (k-1)^{\downarrow,w} \right|\right)\mu(k-1)\\
&+\sum_{i\in \left[1,k-2\right]\cup \left[k+1,n\right]}f_\bb\left(\left| i^{\downarrow,\sigma}\cap i^{\downarrow,v} \right|\right)\mu(i)\\
\leq& f_\bb\left(\left| k^{\downarrow,\sigma}\cap k^{\downarrow,w} \right|\right)\mu(k)+f_\bb\left(\left| (k-1)^{\downarrow,\sigma}\cap (k-1)^{\downarrow,v} \right|\right)\mu(k-1)\\
&+ \sum_{i\in \left[1,k-2\right]\cup \left[k+1,n\right]}f_\bb\left(\left| i^{\downarrow,\sigma}\cap i^{\downarrow,v} \right|\right)\mu(i)\\
=&f_\bb\left(\left| k^{\downarrow,\sigma}\cap k^{\downarrow,v} \right|\right)\mu(k)-f_\bb\left(\left| k^{\downarrow,\sigma}\cap k^{\downarrow,v} \right|\right)\mu(k)+f_\bb\left(\left| k^{\downarrow,\sigma}\cap k^{\downarrow,w} \right|\right)\mu(k)\\
&+\sum_{i\in \left[1,k-1\right]\cup \left[k+1,n\right]}f_\bb\left(\left| i^{\downarrow,\sigma}\cap i^{\downarrow,v} \right|\right)\mu(i)\\
=&\left(f_\bb\left(\left| k^{\downarrow,\sigma}\cap k^{\downarrow,w} \right|\right)-f_\bb\left(\left| k^{\downarrow,\sigma}\cap k^{\downarrow,v} \right|\right)\right)\mu(k)\\
&+\sum_{i=1}^nf_\bb\left(\left| i^{\downarrow,\sigma}\cap i^{\downarrow,v} \right|\right)\mu(i).
\end{align*}
The inequality above follows from the observations in the bullet points and Lemma \ref{DioCinghiale}. It is important to notice that whenever $k\in M(C,\sigma)$, the above inequality holds as equality. To ease the notation we let, for all linear orders $\sigma\in \LL_n$ and profile $U\in \V$,
\[
h^\mu_\bb(\sigma,U):=\sum_{j=1}^{|U|} \sum_{i = 1}^{n} \left[f_\bb\left( {i}^{\downarrow,\sigma} \right) - 2 f_\bb\left(\left|{i}^{\downarrow,\sigma} \cap {i}^{\downarrow,u^j}\right| \right) \right]\mu(i).\footnote{Observe that $\argmin_{\sigma \in \LL_n} \ddd(\sigma,V) = \argmin_{\sigma \in \LL_n} h^\mu_\bb(\sigma,V)$, and similarly for the new electorate we obtain after the upranking.}
\]
For all $\sigma\in \LL_n$, define $g(\sigma) := 2\left(f_\bb\left(\left| {k}^{\downarrow,\sigma} \cap {k}^{\downarrow,w}\right| \right) - f_\bb\left(\left| {k}^{\downarrow,\sigma} \cap {k}^{\downarrow,v}\right| \right) \right) \mu({k})$. The inequality provided above yields
\begin{equation}\label{TDalpha inequality}
h^\mu_\bb(\sigma,V^k)\geq h^\mu_\bb(\sigma,{V}) - g(\sigma)
\end{equation}
for all $\sigma\in \LL_n$. Consequently, it follows that
\begin{equation}\label{inequalitybestemmia}
\begin{aligned}
\min_{\sigma\in \LL_n} h^\mu_\bb(\sigma,V^k)&\geq \min_{\sigma\in \LL_n}\left[h^\mu_\bb(\sigma,{V}) -g(\sigma)\right]\geq \min_{\sigma\in \LL_n} h^\mu_\bb(\sigma,{V}) - \max_{\sigma\in \LL_n}g(\sigma).
\end{aligned}
\end{equation}

\begin{claim}\label{simple claim}
Let $\pi \in \argmin_\sigma h^\mu_\bb(\sigma,V) - g(\sigma) $ be such that $k\in M(C,\pi)$. Then, $\pi \in \argmin_{\sigma} h^\mu_\bb(\sigma,V^k) $.
\end{claim}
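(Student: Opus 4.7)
The proof should be a short and direct consequence of the inequality \eqref{TDalpha inequality} derived immediately before the claim, together with the observation (noted in the paragraph preceding the claim) that the inequality holds as an equality whenever $k\in M(C,\sigma)$.

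My plan is the following. First, I would record the two key facts from the preceding analysis: (i) for every $\sigma\in\LL_n$, $h^\mu_\bb(\sigma,V^k)\geq h^\mu_\bb(\sigma,V)-g(\sigma)$; and (ii) if $k\in M(C,\sigma)$, then $h^\mu_\bb(\sigma,V^k)=h^\mu_\bb(\sigma,V)-g(\sigma)$. Applying (ii) to the given $\pi$ (which by hypothesis satisfies $k\in M(C,\pi)$) gives $h^\mu_\bb(\pi,V^k)=h^\mu_\bb(\pi,V)-g(\pi)$.

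Next, I would chain the inequalities: for any $\sigma\in\LL_n$,
\[
h^\mu_\bb(\pi,V^k)=h^\mu_\bb(\pi,V)-g(\pi)\leq h^\mu_\bb(\sigma,V)-g(\sigma)\leq h^\mu_\bb(\sigma,V^k),
\]
where the first inequality uses the defining property of $\pi$ as a minimizer of $\sigma\mapsto h^\mu_\bb(\sigma,V)-g(\sigma)$, and the second uses (i). Since $\sigma$ was arbitrary, this proves $\pi\in\argmin_{\sigma\in\LL_n}h^\mu_\bb(\sigma,V^k)$.

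There is no real obstacle here: the claim is essentially a bookkeeping consequence of \eqref{TDalpha inequality} and the equality case noted for permutations ranking $k$ first. The only thing to be careful about is to explicitly invoke the equality case for $\pi$ (which requires the hypothesis $k\in M(C,\pi)$) before chaining with the inequality for arbitrary $\sigma$. The claim will then be used in the main proof of Proposition \ref{prop:monotonicity} by constructing such a $\pi$ from $\sigma^*\in P^\mu_\bb(V)$ with $k\in M(C,\sigma^*)$, leveraging \eqref{inequalitybestemmia} to conclude that $k\in W^\mu_\bb(V^k)$.
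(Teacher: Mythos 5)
Your proposal is correct and is essentially identical to the paper's own proof: both chain the minimality of $\pi$ for $h^\mu_\bb(\cdot,V)-g(\cdot)$ with inequality \eqref{TDalpha inequality}, and then invoke the equality case of that inequality for permutations ranking $k$ first to identify $h^\mu_\bb(\pi,V)-g(\pi)$ with $h^\mu_\bb(\pi,V^k)$. No gaps.
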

\begin{proof}
For all $\sigma\in \LL_n$, by \eqref{TDalpha inequality},
\[
h^\mu_\bb(\pi,V) - g(\pi) \leq h^\mu_\bb(\sigma,V) - g(\sigma) \leq h^\mu_\bb(\sigma, V^k).
\]
Given that $k >_{\pi} {j}$, for all $j\neq k$, we have $h^\mu_\bb(\pi,V) - g(\pi) = h^\mu_\bb(\pi,V^k)$.
\end{proof}

Now, observe that, by \eqref{inequalitybestemmia}, whenever the intersection of $\argmin_{\sigma} h^\mu_\bb(\sigma,V) $ and $\argmax_{\sigma} g(\sigma) $ is non-empty, any element in this intersection minimizes $h^\mu_\bb(\sigma,V) - g(\sigma)$. Further, we have the following
\begin{equation*}
\begin{aligned}
\left|{k}^{\downarrow,\sigma} \cap k^{\downarrow,w}\right| &\overset{(a)}{=} \left| {k}^{\downarrow,\sigma} \cap \left({k}^{\downarrow,v} \cup \{k-1\} \right) \right| \\
&= \left| \left( {k}^{\downarrow,\sigma} \cap {k}^{\downarrow,v} \right) \cup \left( {k}^{\downarrow,\sigma} \cap \{k-1\} \right) \right| \\
&\overset{(b)}{=} \left|  {k}^{\downarrow,\sigma} \cap {k}^{\downarrow,v}  \right| + \left| {k}^{\downarrow,\sigma} \cap \{k-1\} \right|,
\end{aligned}
\end{equation*}
\noindent
where ($a$) holds by construction of $w$ and ($b$) holds because ${k}^{\downarrow,v} \cap \{k-1\} = \emptyset$. Thus, we obtain
\begin{equation*}
\begin{aligned}
f_\bb\left(\left|{k}^{\downarrow,\sigma} \cap {k}^{\downarrow,w}\right| \right) - f_\bb\left(\left|{k}^{\downarrow,\sigma} \cap {k}^{\downarrow,v}\right| \right) &= f_\bb\left(\left|{k}^{\downarrow,\sigma} \cap {k}^{\downarrow,v}\right| + \left|{k}^{\downarrow,\sigma} \cap \{k-1\} \right| \right) \\
&-f_\bb\left(\left|{k}^{\downarrow,\sigma} \cap {k}^{\downarrow,v}\right| \right) \\
&\overset{(c)}{\leq} f_\bb\left(\left|{k}^{\downarrow,\sigma} \cap {k}^{\downarrow,v}\right| + 1 \right) - f_\bb\left(\left|{k}^{\downarrow,\sigma} \cap {k}^{\downarrow,v}\right| \right) \\
&\overset{(d)}{\leq} f_\bb\left(\left| {k}^{\downarrow,v}\right| + 1 \right) - f_\bb\left(\left| {k}^{\downarrow,v}\right| \right),
\end{aligned}
\end{equation*}
where ($c$) and ($d$) hold because of Lemmas \ref{DioMajale} and \ref{DioCinghiale}, respectively. Observe that the upper bound is achieved by any linear order $\sigma$ that ranks $k$ as first alternative. Thus, any linear order that ranks $k$ as its most preferred alternative maximizes $g$.

By assumption, $\sigma^{*}$ minimizes $\ddd(\cdot,V)$ (and so $h^\mu_\bb(\cdot,V)$) and it ranks $k$ as its most preferred alternative. Therefore, it minimizes $h^\mu_\bb(\cdot,V) - g(\cdot)$ (and so $\ddd(\cdot,V) - g(\cdot)$). By invoking Claim \ref{simple claim}, we conclude that $\sigma^{*}$ is a minimizer of $h^\mu_\bb(\cdot,V^k)$ and, in turn, of $\ddd(\cdot,V^k)$.
\end{proof}

\begin{proof}[Proof of Proposition \ref{prop:majority}]
Let $V=(v^1,\ldots,v^m)\in \V$ and $c\in [n]$, be such that $n(V,c)\geq 0$ and that $c\notin W^\mu_\bb(V)$. Without loss of generality we set $c=1$. Consider following sets
\[
S=\left\lbrace i\in [m]:c\in M\left([n],v^i\right) \right\rbrace\ \textnormal{and}\ T=[m]\setminus S.
\]
By assumption $| S|\geq | T|$. We have that for all $\omega \in P^\mu_\bb(V)$, candidate $c$ is not ranked first. Let~$\omega\in P^\mu_\bb(V)$, we can, without loss of generality, write it as
\[ 2 >_{\omega} \ldots >_{\omega} i >_{\omega} 1 >_{\omega} i+1 \ldots >_{\omega} n.\]
We aim to show that $\Id\in P^\mu_\bb(V)$ to reach a contradiction. Fix $A\subseteq [n]$ and $j\in S$. Suppose $1\in A$, it follows that $M(A,\Id)=M(A,v^j)$. Thus, we have that $\bigtriangleup_A(\Id,v^j)=\emptyset$. This yields,
\[
\bigtriangleup_A(\omega,\Id)\sqcup \bigtriangleup_A(\Id,v^j)=\bigtriangleup_A(\omega,v^j).
\]
If $1\notin A$, then $M(A,\Id)=M(A,\omega)$ and hence $\bigtriangleup_A(\omega,\Id)=\emptyset$. This yields,
\[
\bigtriangleup_A(\omega,\Id)\sqcup \bigtriangleup_A(\Id,v^j)=\bigtriangleup_A(\omega,v^j).
\]
By definition of $\ddd$, we have that
\begin{equation}\label{bisattino}
\ddd\left(\omega,v^j\right)=\ddd\left(\omega,\Id\right)+\ddd\left(\Id,v^j\right).
\end{equation}
By \eqref{bisattino} and the triangle inequality we have that
\begin{align*}
\sum_{j=1}^m\ddd\left(\omega,v^j\right)&=\sum_{j\in S}\ddd\left(\omega,v^j\right)+\sum_{j\in T}\ddd\left(\omega,v^j\right)\\
&\geq \sum_{j\in S}\left[\ddd\left(\omega,\Id\right)+\ddd\left(\Id,v^j\right)\right]+\sum_{j\in T}\left[\ddd\left(\Id,v^j\right)-\ddd\left(\omega,\Id\right)\right]\\
&\geq \left(| S|-| T|\right)\ddd\left(\omega,\Id\right)+\sum_{j=1}^m\ddd\left(\Id,v^j\right) \geq \sum_{j=1}^m\ddd\left(\Id,v^j\right),
\end{align*}
where the last inequality follows from observing that $| S|\geq | T|$. Since $\omega\in P^\mu_\bb(V)$ it follows that~$\Id\in P^\mu_\bb(V)$, but then we get $1\in W^\mu_\bb(V)$, a contradiction.
\end{proof}

\begin{proof}[Proof of Proposition \ref{prop:blockpareto}]
 Fix $k\leq n$ and $V=(v^1,\ldots,v^m)\in \V$ with $v^i(\left[k\right])=\left[k\right]$ for all~$i\in \left[m\right]$. Suppose there exists $v\in P_\bb(V)$ such that $v([k])\neq[k]$. We have that $|[k]\cap v((k,n])|=|(k,n]\cap v([k])|\neq0$. We now consider the transposition $t_{a,b}$ of $$a := M([k]\cap v((k,n]),v)\ \ \  \textnormal{and}\ \ \ b := m((k,n]\cap v([k]),v),$$
where we observe that, by definition, $a < b$, but $b >_v a$. Let $\tilde v:=t_{a,b}v$. We prove that for all $i\in[m]$, 
  $$\dd_\bb(\tilde v,v^i)< \dd_\bb(v,v^i).$$
  Notice that, by possibly relabeling the candidates, we can assume wlog that $v^i=\Id$. Let now $a'\in[k]$ be such that $v(a')=b$ and $b'\in(k,n]$ be such that $v(b')=a$. Clearly, we have $a' < b'$. To ease the notation we define $j^{\downarrow =} := j^{\downarrow} \cup \{j\}$, and, similarly, $j^{\uparrow =} := j^{\uparrow} \cup \{j\}$, for all $j\in [n]$. To better grasp the subsequent steps of the proof the reader can refer to the following illustration.
    \begin{figure}[ht!]
    \centering
    \scalebox{.7}{
    $
    \begin{array}{ccccccccccccc|ccccccccccc}
        \Id&=&1&\ldots&a-1&a&a+1&\ldots&a'-1&a'&a'+1&\ldots&k&k+1&\ldots&b'-1&b'&b'+1&\ldots&b-1&b&b+1&\ldots&n\\
        \tilde v&=&v_1&\ldots&v_{a-1}&v_a&v_{a+1}&\ldots&v_{a'-1}&a&v_{a'+1}&\ldots&v_k&v_{k+1}&\ldots&v_{b'-1}&b&v_{b'+1}&\ldots&v_{b-1}&v_b&v_{b+1}&\ldots&v_n\\
        v&=&v_1&\ldots&v_{a-1}&v_a&v_{a+1}&\ldots&v_{a'-1}&b&v_{a'+1}&\ldots&v_k&v_{k+1}&\ldots&v_{b'-1}&a&v_{b'+1}&\ldots&v_{b-1}&v_b&v_{b+1}&\ldots&v_n\\
    \end{array}
    $}
    \caption{Here we illustrate the three permutations in the case with $a<a'$ and $b>b'$.}
\end{figure}

 \noindent We can see that the following facts hold true.
\begin{enumerate}[label=(\roman*)]
  \item \label{itm:1} If $j\in [1,a)$, then $j^\downarrow$ contains both $a$ and $b$ and thus $|j^\downarrow\cap j^{\downarrow, v}|=|j^\downarrow\cap j^{\downarrow, \tilde v}|$.
  \item \label{itm:2} If $j\in(b,n]$, then $j^\downarrow$ contains neither $a$ nor $b$ and thus $|j^\downarrow\cap j^{\downarrow, v}|=|j^\downarrow\cap j^{\downarrow, \tilde v}|$.
  \item \label{itm:3} If $j=a$, then $a^{\downarrow,\tilde v}=(b^{\downarrow,v}\setminus\{a\})\cup\{b\}$. Notice that $a^\downarrow\cap a^{\downarrow, v}\subseteq a^\downarrow\cap a^{\downarrow,\tilde v}$. Moreover, 
  \begin{align*}
    \lvert (a^\downarrow\cap a^{\downarrow,\tilde v})\setminus(a^\downarrow\cap a^{\downarrow, v})\rvert&=\lvert(a^\downarrow\cap ((b^{\downarrow,v}\setminus\{a\})\cup\{b\}))\setminus(a^\downarrow\cap a^{\downarrow, v})\rvert\\
    &=\lvert(a^\downarrow\cap b^{\downarrow =,v})\setminus(a^\downarrow\cap a^{\downarrow, v})\rvert\\
    &=|a^\downarrow\cap b^{\downarrow =,v}\cap a^{\uparrow =,v}|\geq|\{b\}|=1.
  \end{align*}
  \item \label{itm:4} If $j\in(a,b)$, $j^\downarrow$ contains $b$ but not $a$. Moreover, $j^{\downarrow,v}=j^{\downarrow, \tilde v}$ iff $v^{-1}(j)<a'$ (i.e., $j >_v b$) or $v^{-1}(j)>b'$ (i.e., $a >_v j$). Notice that $v^{-1}(j)$ cannot be neither $a'$ nor $b'$. On the other hand, if $v^{-1}(j)\in (a',b')$, then $j^{\downarrow,\tilde{v}}=\{b\}\cup(j^{\downarrow, v}\setminus\{a\})$, and $$|j^\downarrow\cap j^{\downarrow\tilde v}|=|(j^\downarrow\cap j^{\downarrow v})\sqcup\{b\}|=1+|j^\downarrow\cap j^{\downarrow v}|.$$
  \item \label{itm:5} If $j=b$, $b^{\downarrow,\tilde v}=a^{\downarrow, v}$. Notice that $b^\downarrow\cap a^{\downarrow, v}\subseteq b^\downarrow\cap b^{\downarrow,v}$. Moreover, $$\lvert (b^\downarrow\cap b^{\downarrow,v})\setminus(b^\downarrow\cap a^{\downarrow, v})\rvert=|b^\downarrow\cap b^{\downarrow,v}\cap a^{\uparrow =,v}|.$$
  Notice that $b^\downarrow\cap b^{\downarrow,v}\cap a^{\uparrow =,v}\subsetneq a^\downarrow\cap b^{\downarrow =,v}\cap a^{\uparrow =,v}$ as $b$ belongs to the second but not to the first set and $b^{\downarrow}\subseteq a^{\downarrow}$.
\end{enumerate}
  Finally, we evaluate the difference between the two distances: 
  \begin{align*}
    \frac{\dd_\bb(v,\Id)-\dd_\bb(\tilde v,\Id)}2=&\sum_{j=1}^nf_\bb(|j^{\downarrow}\cap j^{\downarrow,\tilde v}|)-f_\bb(|j^{\downarrow}\cap j^{\downarrow,v}|)\\
    =&f_\bb(|a^\downarrow\cap a^{\downarrow,\tilde v}|)-f_\bb(|a^\downarrow\cap a^{\downarrow, v}|)-\left(f_\bb(|b^\downarrow\cap b^{\downarrow,v}|)-f_\bb(|b^\downarrow\cap b^{\downarrow,\tilde{v}}|)\right)\\
    &+\sum_{j\in (a,b)}(f_\bb(|j^\downarrow\cap j^{\downarrow,v}|+1)-f_\bb(|j^\downarrow\cap j^{\downarrow, v}|))\mathbf 1_{v^{-1}(j)\in (a',b')}\\
    {\geq}& f_\bb(|a^\downarrow\cap a^{\downarrow,\tilde v}|)-f_\bb(|a^\downarrow\cap a^{\downarrow, v}|)-\left(f_\bb(|b^\downarrow\cap b^{\downarrow,v}|)-f_\bb(|b^\downarrow\cap a^{\downarrow,v}|)\right)\\
    =&f_\bb(|a^\downarrow\cap a^{\downarrow, v}|+|a^\downarrow\cap b^{\downarrow =,v}\cap a^{\uparrow =,v}|)-f_\bb(|a^\downarrow\cap a^{\downarrow, v}|)\\
    &-\left(f_\bb(|b^\downarrow\cap a^{\downarrow,v}|+|b^\downarrow\cap b^{\downarrow,v}\cap a^{\uparrow =,v}|)-f_\bb(|b^\downarrow\cap a^{\downarrow,v}|)\right)\\
    \overset{(a)}{>}&f_\bb(|a^\downarrow\cap a^{\downarrow, v}|+|b^\downarrow\cap b^{\downarrow,v}\cap a^{\uparrow =,v}|)-f_\bb(|a^\downarrow\cap a^{\downarrow, v}|)\\
    &-\left(f_\bb(|b^\downarrow\cap a^{\downarrow,v}|+|b^\downarrow\cap b^{\downarrow,v}\cap a^{\uparrow =,v}|)-f_\bb(|b^\downarrow\cap a^{\downarrow,v}|)\right)\overset{(b)}{\geq}0,
  \end{align*}\normalsize
where ($a$) follows from $b^\downarrow\cap b^{\downarrow,v}\cap a^{\uparrow =,v}\subsetneq a^\downarrow\cap b^{\downarrow =,v}\cap a^{\uparrow =,v}$ and Lemma \ref{DioMajale} while ($b$) holds by combining Lemma \ref{DioCinghiale} and $|a^\downarrow\cap a^{\downarrow, v}|\geq |b^\downarrow\cap a^{\downarrow,v}|$.


\end{proof}

  \begin{proof}[Proof of Corollary \ref{coro:partitionwisepareto}]
Let $J=\{j\in [n]:\forall i,i'\in[m], v^i([j])=v^{i'}([j])\}$ this is non empty since~$n\in J$. Since $P_\bb$ has the blockwise Pareto winner property, for any $j\in J$, $A_j:=v^1([j])$ is such that for any $v\in P_\bb(V)$, $v([j])=A_j$. Moreover, for all $j\in J$ and ~$v\in P_\bb(V)$, $v((j,n])=A_j^{\mathsf{c}}$.

Now, given any sequence $(k_j)_{j=0}^\ell$ as in the definition of the partitionwise Pareto property, we have that for all $j\in[\ell]$, $k_j\in J$ and, by setting $A_0=\emptyset$, we have that $v^i((k_{j-1},k_j])=A_{k_j}\setminus A_{k_{j-1}}$ for all $i\in [m]$. But then, 
for any $v\in P_\bb(V)$, we have that
$$v((k_{j-1},k_j])=v((0,k_j]\setminus (0,k_{j-1}])=v((0,k_j])\setminus v((0,k_{j-1}])=A_{k_j}\setminus A_{k_{j-1}}.$$ 
\end{proof}

\begin{proof}[Proof of Proposition \ref{Condorcet_characterization_PW}]

\par\medskip
Here we prove that both \ref{it1:condch} and  \ref{it2:condch} imply \ref{it3:condch}. We start by showing that \ref{it2:condch} implies \ref{it3:condch}.

Assume that $n \geq 3$. Furthermore, without loss of generality, we can assume $\mu_1 \leq \mu_2 \leq \cdots \leq \mu_n$ with $\mu_i + \mu_ j > 0$, for all $i \neq j$.\footnote{Notice that this implies that while $\mu_1$ can be smaller or equal to zero, necessarily $\mu_j > 0$, for all $j \neq 1$.}
Consider the following profile of voters
\begin{equation}\label{condorcet_profile}V = (5(1,2,3,4,\ldots,n),4(3,2,1,4,\ldots,n),1(2,3,1,4,\ldots,n)).\footnote{When $n=3$ we understand that the corresponding permutations refer only to the first three positions.}
\end{equation}
Notice that $n_{1,2}(V)=0$. We show that unless ${\beta}_k = 0$, for all~$ k \geq 3$, $\mathrm{I} \in P^{\mu}_{\bb}(V)$ while any permutation~$\pi$ with $(2,1) \in \mathrm{adj}(\pi)$ does not belong to $P^{\mu}_{\bb}(V)$. We have 
\begin{equation}\label{right lower bound PORCA MADONNA}
\underset{\sigma \in \mathbb{S}_n}{\min} \sum_{j=1}^{|V|} \sum_{A\in \mathcal{P}_n} \beta_{|A|} \mu(M(A,\sigma) \triangle M(A,v^j)) \geq  \sum_{A\in \mathcal{P}_n} \beta_{|A|} \min\limits_{\sigma \in \mathbb{S}_n} \sum_{j=1}^{|V|} \mu(M(A,\sigma) \triangle M(A,v^j)).
\end{equation}

We first argue that, for all $A \in \mathcal{P}_n$, $\mathrm{I} \in \argmin_{\sigma}  \sum_{j=1}^{|V|}  \mu(M(A,\sigma) \triangle M(A,v^j)).$ Let $A \in \mathcal{P}_n$, we denote by $a = \min A=M(A,\Id)$. 

\noindent
Suppose $a \geq 3$, then, by construction, $M(A,v^j) = a$ for all $j\in [|V|]$. Therefore, any permutation $\sigma$ such that $M(A,\sigma) = a$ is optimal. Clearly, $\mathrm{I}$ is one of these permutations.

\noindent
Suppose for the rest of the argument that $a \leq 2$. It holds

\begin{table}[H]
    \centering
    \begin{tabular}{c|ccc}
        $M(A,v^j)$ & $\mathrm{I}$ & $(3,2,1,4,\ldots,n)$ & $(2,3,1,4,\ldots,n)$\\
        \hline
        $a = 1$ & $1$ & $1 + 2\mathbf{1}_{3 \in A} + \mathbf{1}_{2 \in A, 3 \notin A}$  & $1 + \mathbf{1}_{2 \in A} + 2\mathbf{1}_{3 \in A, 2 \notin A}$\\
        $a = 2$ & $2$ & $2 + \mathbf{1}_{3 \in A}$ & $2$
    \end{tabular}
    \label{tab:my_label}
\end{table}
\noindent
Suppose now that $a = 2$. If $3 \notin A$, then $M(A,v^j) = 2$ for all $j \in [|V|]$, so that $\mathrm{I}$ is optimal. On the other hand, if $3 \in A$, then $M(A,(3,2,1,4,\ldots,n)) = 3$. Notice that for this $A$, the value of~$\sum_{j=1}^{|V|}  \mu(M(A,\sigma) \triangle M(A,v^j))$ for each $\sigma\in \LL_n$, only depends on $x:=M(A,\sigma)$. In particular,
\begin{align*}
 \sum_{j=1}^{|V|}  \mu(M(A,\sigma) \triangle M(A,v^j))= 5 (\mu_x + \mu_2) \mathbf{1}_{x \neq 2 } + 4 (\mu_x + \mu_3) \mathbf{1}_{x \neq 3} + (\mu_x + \mu_2) \mathbf{1}_{x \neq 2}.
\end{align*}

\noindent As $\mu_1 \leq \mu_2 \leq \cdots \leq \mu_n$, and $\mu_i + \mu_j > 0$ for $i \neq j$, any $\sigma\in \LL_n$ such that $M(A,\sigma) = 2$ is optimal. 
Therefore, $\mathrm{I}$ is optimal.

For the remaining case, assume that $a = 1$. If $2 \notin A$ and $3 \notin A$, then $M(A,v^j) = 1$ for all $j \in [|V|]$. Therefore, $\mathrm{I}$ is optimal. Suppose, $2 \in A$ and $3 \notin A$. Then, the minimization problem becomes 
$$\underset{x \in A}{\min} \; 5 (\mu_x + \mu_1) \mathbf{1}_{x \neq 1 } + 4 (\mu_x + \mu_2) \mathbf{1}_{x \neq 2} + (\mu_x + \mu_2) \mathbf{1}_{x \neq 2}.$$

\noindent As $\mu_1 \leq \mu_2 \leq \cdots \leq \mu_n$ and $\mu_i + \mu_j > 0$, for $i \neq j$,  any permutation $\sigma$ such that $M(A,\sigma) = 1$ or $M(A,\sigma) = 2$ is optimal. Clearly, $\mathrm{I}$ is one of these permutations. Now, suppose $2 \notin A$ and $3 \in A$. Then, the minimization problem is
$$\underset{x \in A}{\min} \; 5 (\mu_x + \mu_1) \mathbf{1}_{x \neq 1 } + 4 (\mu_x + \mu_3) \mathbf{1}_{x \neq 3} + (\mu_x + \mu_3) \mathbf{1}_{x \neq 3}.$$

As $\mu_1 \leq \mu_2 \leq \cdots \leq \mu_n$ and $\mu_i + \mu_j > 0$, for $i \neq j$, any permutation $\sigma$ such that $M(A,\sigma) = 1$ or $M(A,\sigma) = 3$ is optimal. Clearly, $\mathrm{I}$ is one of these permutations. Finally, suppose $2 \in A$ and $3 \in A$. The minimization problem is
$$\underset{x \in A}{\min} \; 5 (\mu_x + \mu_1) \mathbf{1}_{x \neq 1 } + 4 (\mu_x + \mu_3) \mathbf{1}_{x \neq 3} + (\mu_x + \mu_2) \mathbf{1}_{x \neq 2}.$$

By using $\mu_1 \leq \mu_2 \leq \cdots \leq \mu_n$ and $\mu_i + \mu_j > 0$, for $i \neq j$, clearly no $x \geq 4$ (if any) can be optimal. Observe

\begin{equation}\label{CAZZO}
    \begin{aligned}
      &M(A,\sigma) = 1 \implies \sum_{j = 1}^{|V|} \mu\left(M(A,\sigma) \triangle M(C,v)\right) = 5\mu_1+\mu_2+4\mu_3  \\
      &M(A,\sigma) = 2 \implies \sum_{j = 1}^{|V|}  \mu\left(M(A,\sigma) \triangle M(A,v)\right) =  5\mu_1+9\mu_2+4\mu_3 \\
      &M(A,\sigma) = 3 \implies \sum_{j = 1}^{|V|}  \mu\left(M(A,\sigma) \triangle M(A,v)\right) = 5\mu_1+\mu_2+6\mu_3.
    \end{aligned}
\end{equation}
\noindent
Recall that, under our assumption that $\mu_1 \leq \mu_2 \leq \cdots \leq \mu_n$ and $\mu_i + \mu_j > 0$, for $i \neq j$, we have that, for $i \geq 2$, $\mu_i > 0$; in particular, $\mu_2 > 0$ and $\mu_3 > 0$. Therefore, any permutation such that~$M(A,\sigma) = 1$ is optimal. Clearly, $\mathrm{I}$ is optimal. By using Inequality (\ref{right lower bound PORCA MADONNA}), we conclude that $\mathrm{I} \in P^{\mu}_{\bb}(V)$. 

Finally, suppose that $\beta_k > 0$ for some $k \geq 3$ and let $\pi \in \mathbb{S}_n$ be any permutation such that~$2>_{\pi}1$. Let $A$ be a set size $k$ that contains $1$, $2$ and $3$. By construction, over $A$, $\pi$ does not choose $1$, and so, by (\ref{CAZZO}) cannot be optimal, i.e., $\pi \notin P^{\mu}_{\bb}(V)$. In particular, for $\pi$, with $(2,1)\in \textnormal{adj}(\pi)$, we have that $\pi\notin P^{\mu}_\bb(V)$.
\par\medskip
Using the same profile of voters \eqref{condorcet_profile} we can show that \ref{it1:condch} implies \ref{it3:condch}. We have that $1,2\in \mathcal{C}(V)$. Indeed, $n_{1,j}(V)\geq 0$ and $n_{2,j}(V)\geq 0$ for all $j\in [n]$. Though, following exactly the same steps as in the previous implication we have that unless ${\beta}_k = 0$, for all~$ k \geq 3$, $1 \in W^{\mu}_{\bb}(V)$ while $2\notin W^{\mu}_\bb(V)$.

\par\medskip
Now we prove that \ref{it3:condch} implies both \ref{it1:condch} and \ref{it2:condch}.
 Fix $c,d\in [n]$ and $\sigma,\pi \in \LL_n$ such that they differ only on comparing $c,d$, without loss of generality say $d >_{\sigma} c$ and $c >_{\pi} d$. Notice that for all $V\in \mathbf{V}_n$, given that $\beta_k=0$ for all $k\geq 3$,
\begin{align}
\ddd\left(\sigma,V\right)-\ddd\left(\pi,V\right)&=\sum_{j=1}^{|V|}\ddd\left(\sigma,v^j \right)-\ddd\left(\pi,v^j\right)\nonumber\\
  &=\beta_2\sum_{j=1}^{|V|}\mu\left(\bigtriangleup_{\left\lbrace c,d\right\rbrace}(\sigma,v^j)\right)-\mu\left(\bigtriangleup_{\left\lbrace c,d\right\rbrace}(\pi,v^j)\right)\nonumber\\
  &=\beta_2\left[\sum_{j:c >_{v^j} d}\mu\left(\bigtriangleup_{\left\lbrace c,d\right\rbrace}(\sigma,v^j)\right)-\sum_{j:d >_{v^j} c}\mu\left(\bigtriangleup_{\left\lbrace c,d\right\rbrace}(\pi,v^j)\right)\right]\nonumber\\
  &=\beta_2\left[\sum_{j:c >_{v^j} d}\mu\left(\left\lbrace c,d \right\rbrace\right)-\sum_{j:d >_{v^j} c}\mu\left(\left\lbrace c,d \right\rbrace\right)\right]\nonumber\\
  &=\beta_2n_{c,d}(V)\mu\left(\left\lbrace c,d \right\rbrace\right).  \label{culo2}
\end{align}

First we show that \ref{it3:condch} implies \ref{it1:condch}. Suppose that $W_{\bb}^{\mu}$ does not satisfy the Condorcet principle. Then, there exists $V\in \V$ and $c\in \mathcal{C}(V)$ such that for all $\sigma\in P^{\mu}_{\bb}(V)$, $c\notin M(C,\sigma)$. Therefore, for all $\sigma\in P^{\mu}_\bb(V)$ there exists $d\in C$ such that~$(d,c)\in \textnormal{adj}(\sigma)$. Let $\pi:=t_{d,c}\sigma$. Since $c\in \mathcal{C}(V)$, the previous equalities yield that
\[
\ddd\left(\sigma,V\right)-\ddd\left(\pi,V\right)=\beta_2n_{c,d}(V)\mu\left(\left\lbrace c,d \right\rbrace\right)\geq 0,
\]
and hence, $\pi \in P^{\mu}_{\bb}$. Repeating the same steps finitely many times, we have that there must exist some~$\omega\in P^{\mu}_{\bb}$ such that $c\in M(C,\omega)$, a contradiction. Therefore, $W^{\mu}_\bb$ satisfies the Condorcet principle.
\par\medskip
Now we show that \ref{it3:condch} implies \ref{it2:condch}. For the sake of contradiction, suppose that $P^\mu_\bb$ does not satisfy the Condorcet principle. Then, only two cases can happen. 
\begin{enumerate}
\item There exist $c,d\in C$ and $V\in \mathbf{V}_n$ such that $n_{c,d}(V)>0$ and some $\sigma\in P^\mu_\bb(V)$ with $(d,c)\in\mathrm{adj}(\sigma)$. By \eqref{culo2}, since $\beta_2>0$ and $\mu\in \mathcal{M}^{++}_n$, $\ddd(t_{d,c}\sigma,V)<\ddd(\sigma,V)$ contradicting the minimality of $\sigma$.
\item There exist $c,d\in C$ and $V\in \mathbf{V}_n$ such that $n_{c,d}(V)=0$, there is some $\sigma\in P^\mu_\bb(V)$ such that $(c,d)\in\mathrm{adj}(\sigma)$, but there is no $\pi\in P^\mu_\bb(V)$ such that $(d,c)\in\mathrm{adj}(\pi)$.
Though, by \eqref{culo2}, $\pi:=t_{c,d}\sigma$ satisfies $\ddd\left(\sigma,V\right)=\ddd\left(\pi,V\right)$ and hence $\pi\in P^\mu_\bb(V)$, a contradiction.
\end{enumerate}
Thus, $P^\mu_\bb$ satisfies the Condorcet principle.
\end{proof}

\subsection{Proofs in §\ref{section:computational}}

{
\begin{proof}[Proof of Theorem \ref{general approximation thm}]
Let $\sigma,\pi\in \LL_n$. Because of neutrality we can assume, without loss of generality, that $\pi = \Id$. We conveniently define $\lVert \sigma \rVert_{\bb} := \dd_{\bb}(\sigma,\Id)/2$ and $\lVert \sigma \rVert^{\textnormal{apx}}_{\bb} := \dd^{\textnormal{apx}}_{\bb}(\sigma,\Id)/2$. We start showing that $\dd_{\bb}^{\textnormal{apx}}(\sigma, \pi) \leq \dd_{\bb}(\sigma, \pi)$. First of all, notice that
\begin{equation*}
\dd_{\bb}^{\textnormal{apx}}(\sigma, \pi)=\sum_{j=1}^n f_\bb(|{j}^{\downarrow,\sigma}|)+f_\bb(|{j}^{\downarrow,\pi}|)-2\min\left( f_\bb(|{j}^{\downarrow,\sigma}|),f_\bb(|{j}^{\downarrow,\pi}|) \right).
\end{equation*}
Therefore, since $|A\cap B|\leq \min (|A|,|B|)$ and $f_\bb$ is increasing (Lemma \ref{DioMajale}), it holds that
  \begin{align*}
  \lVert \sigma \rVert_{\bb}-\lVert \sigma \rVert_{\bb}^{\textnormal{apx}}&=\sum_{j=1}^n\left[f_\bb(\min (|{j}^{\downarrow,\sigma}|),n-j)-f_\bb(|{j}^{\downarrow,\sigma}\cap\ \{{j+1},\ldots,{n}\}|) \right]\\
  &\geq\sum_{j=1}^n \left[f_\bb(\min(|{j}^{\downarrow,\sigma}|),n-j)-f_\bb(\min(|{j}^{\downarrow,\sigma}|,n-j)) \right]=0,
  \end{align*}
and this implies that $\dd_{\bb}(\sigma,
\pi)\geq \dd^{\textnormal{apx}}_{\bb}(\sigma,\pi).$ \\

\noindent
Now we show that $\dd_{\bb}(\sigma, \pi) \leq \gamma_{\bb} \dd_{\bb}^{\textnormal{apx}}(\sigma, \pi)$. If $\sigma=\pi$ the claim is trivial. Therefore suppose that $\sigma\neq \Id$. Denote $\sigma^0:=\sigma$, and consider the following procedure to transform $\sigma^{0}$ into the identity $\Id$.
\vspace{0.5cm}

\begin{algorithm}[H]\label{optimus prime}
{\small
\SetKwInput{Init}{Inputs}
\SetKwInput{Par}{Parameters}
 \Init{Permutation $\sigma^0$}
 Set $k = 0$\\
 \For{$i \in \{n,n-1,\ldots,1 \}$}{
 \While{$\sigma^{k}_{i} \neq {i}$}{
 \vspace{0.1cm}
$\sigma^{k+1} \leftarrow \sigma^k$\\
 ${\sigma^{k+1}_{\sigma^{k}_{i}}} \leftarrow {\sigma^{k}_{i}}$ and
 ${\sigma^{k+1}_{i}} \leftarrow {\sigma^{k}_{\sigma^{k}_{i}}}$\\
 $k \leftarrow k+1$

 }
 }\caption{Bottom-focus sorting}
 }
\end{algorithm}


Notice that Algorithm \ref{optimus prime} reaches the identity in $O(n^2)$ steps. Denote by $K$ the number of transformations necessary to convert $\sigma^{0}$ into $\Id$. Now, consider the transition from $\sigma^{k}$ to $\sigma^{k+1}$. Let $s$ be the greatest integer in $[n]$ such that $\sigma^k_s\neq s$. We have 

\begin{equation*}
\begin{aligned}
\sigma^{k}&=\left({\sigma^{k}_{1}},\cdots, {\sigma^{k}_{r-1}},{\sigma^{k}_{r}},\cdots,{\sigma^{k}_{s}},{\sigma^{k}_{s+1}}, \cdots,{\sigma^{k}_{n}}\right) \\
\sigma^{k+1}&=\left({\sigma^{k}_{1}},\cdots, {\sigma^{k}_{r-1}}, {\sigma^{k}_{s}},\cdots,{\sigma^{k}_{r}},{\sigma^{k}_{s+1}},\cdots,{\sigma^{k}_{n}}\right)
\end{aligned}
\end{equation*}

As, by design, the procedure works from the bottom to the top, and, by assumption, we are up-ranking ${\sigma^{k}_{s}}$, necessarily all alternatives in positions from $s+1$ to $n$ have already been matched. That is ${\sigma^{k}_{i}} = i$, for all $i \geq s+1$.\footnote{We understand that when $s = n$, no element has been previously matched.} As all elements from position $s+1$ to $n$ have been matched, it holds that, for all $i \leq s$, $\sigma^{k}_{i} \leq s$, and, in particular, $\sigma^{k}_{r} \leq s$. Furthermore, as we are swapping ${\sigma^{k}_{s}}$ with~${\sigma^{k}_{r}}$, we have $r = \sigma^{k}_{s}$ and $s > \sigma^{k}_{s}$. Finally, notice that $\sigma^{k}_{r} \neq r = \sigma^{k}_{s}$. 







Let $\Delta_k := \lVert\sigma^{k} \rVert_{\bb} - \lVert\sigma^{k+1} \rVert_{\bb}$ and $\Delta_{k}^{\textnormal{apx}} := \lVert\sigma^{k} \rVert_{\bb}^{\textnormal{apx}} - \lVert \sigma^{k+1} \rVert_{\bb}^{\textnormal{apx}}$. Then, we have $\lVert \sigma^{0} \rVert_{\bb} = \sum_{k=0}^{K-1} \Delta_k$ and $\lVert \sigma^{0} \rVert_{\bb}^{\textnormal{apx}} = \sum_{k=0}^{K-1} \Delta_{k}^{\textnormal{apx}}$, as $\lVert \Id \rVert_{\bb}^{\textnormal{apx}} = \lVert \Id \rVert_{\bb} = 0$. The following holds
\begin{equation}\label{apx equality}
\begin{aligned}
\Delta_{k}^{\textnormal{apx}} &= \sum_{j=1}^{n} \left[ f_\bb \left( \min \left( n-j, |{j}^{\downarrow,\sigma^{k+1}}| \right) \right) - f_\bb \left( \min \left( n-j, |{j}^{\downarrow,\sigma^{k}}| \right) \right) \right] \\
&\overset{(a)}{=} \sum_{j = r,\sigma^{k}_{r}} \left[ f_\bb \left( \min \left( n-j, |{j}^{\downarrow,\sigma^{k+1}}| \right) \right) - f_\bb \left( \min \left( n-j, |{j}^{\downarrow,\sigma^{k}}| \right) \right) \right] \\
&{=} f_\bb \left( \min \left( n-r, n-r \right) \right) - f_\bb \left( \min \left( n-r, n-s \right) \right) \\
&+ f_\bb \left( \min \left( n-\sigma^{k}_{r}, n-s \right) \right) - f_\bb \left( \min \left( n-\sigma^{k}_{r}, n-r \right) \right) \\
&\overset{(b)}{=} f_\bb \left( n-r \right) - f_\bb \left( n - \max \left(r,\sigma^{k}_{r} \right) \right) \geq 0,
\end{aligned}
\end{equation}
where ($a$) holds because $\sigma^{k}$ and $\sigma^{k+1}$ disagree only on candidates $r$ and ${\sigma^{k}_{r}}$, and ($b$) holds because, by construction, $s > r$ and, as argued above, $\sigma^{k}_{r} \leq s$. Further, note that, by applying Lemma \ref{DioMajale}, we conclude that $\Delta_{k}^{\textnormal{apx}} > 0$ if and only if $\sigma^{k}_{r} > r$.
For the non-approximated version the analysis is more delicate. We have
\begin{equation*}
\begin{aligned}
\Delta_{k} &= \sum_{j=1}^{n} \left[ f_\bb\left( \left| {j}^{\downarrow,\sigma^{k+1}} \cap \{ {j+1},\ldots,{n} \} \right| \right) - f_\bb\left( \left| {j}^{\downarrow,\sigma^{k}} \cap \{ {j+1},\ldots,{n} \} \right| \right) \right] \\
&\overset{(a)}{=} \sum_{j =r}^{s} \left[ f_\bb\left( \left| {\sigma^{k}_{j}}^{\downarrow,\sigma^{k+1}} \cap \{ {\sigma^{k}_{j}+1},\ldots,{n} \} \right| \right) - f_\bb\left( \left| {\sigma^{k}_{j}}^{\downarrow,\sigma^{k}} \cap \{ {\sigma^{k}_{j}+1},\ldots,{n} \} \right| \right) \right],
\end{aligned}
\end{equation*}
where ($a$) holds because only the alternatives in positions from $r$ to $s$ are affected by the transformation.
We handle the cases separately. 

\noindent
Case $ j = r$.
\begin{equation*}
\begin{aligned}
&f_\bb\left( \left| {\sigma^{k}_{r}}^{\downarrow,\sigma^{k+1}} \cap \{ {\sigma^{k}_{r}+1},\ldots,{n} \} \right| \right) - f_\bb\left( \left| {\sigma^{k}_{r}}^{\downarrow,\sigma^{k}} \cap \{ {\sigma^{k}_{r}+1},\ldots,{n} \} \right| \right) \\
&= f_\bb(|\{ {s+1},\ldots,{n}\}|) \\
&- f_\bb\left(\left| \left( \{{\sigma^{k}_{r+1}},\ldots,{\sigma^{k}_{s-1}}\} \cup \{ {\sigma^{k}_{s}}\} \cup \{ {s+1},\ldots,{n} \} \right) \cap \left( \{{\sigma^{k}_{r}+1},\ldots,s\} \cup \{{s+1},\ldots,{n} \} \right) \right| \right) \\
&= f_\bb(n-s) - f_\bb\left(n-s + \left| \{{\sigma^{k}_{r+1}},\ldots,{\sigma^{k}_{s-1}}, {\sigma^{k}_{s}}\} \cap \{{\sigma^{k}_{r}+1},\ldots,s\} \right| \right) \\
&\overset{(a)}{=} f_\bb(n-s) - f_\bb\left(n-s + \left| \{{\sigma^{k}_{r+1}},\ldots,{\sigma^{k}_{s-1}}, {r}\} \cap \{{\sigma^{k}_{r}+1},\ldots,s\} \right| \right) \\
&= f_\bb(n-s) - f_\bb\left(n-s + \left| \{{\sigma^{k}_{r+1}},\ldots,{\sigma^{k}_{s-1}}\} \cap \{{\sigma^{k}_{r}+1},\ldots,s\} \right| + \mathbf{1}(r \in \{{\sigma^{k}_{r}+1},\ldots,s\})\right)
\end{aligned}
\end{equation*}
where ($a$) holds because $\sigma^{k}_{s} = r$.

\noindent
Case $ j = s$. 
\begin{equation*}
\begin{aligned}
&f_\bb\left( \left| {\sigma^{k}_{s}}^{\downarrow,\sigma^{k+1}} \cap \{ {\sigma^{k}_{s}+1},\ldots,{n} \} \right| \right) - f_\bb\left( \left| {\sigma^{k}_{s}}^{\downarrow,\sigma^{k}} \cap \{ {\sigma^{k}_{s}+1},\ldots,{n} \} \right| \right) \\
&= f_\bb\left(\left| \left( \{{\sigma^{k}_{r+1}},\ldots,{\sigma^{k}_{s-1}}\} \cup \{ {\sigma^{k}_{r}}\} \cup \{ {s+1},\ldots,{n} \} \right) \cap \left( \{{\sigma^{k}_{s}+1},\ldots,s\} \cup \{{s+1},\ldots,{n} \} \right) \right| \right) \\
&- f_\bb(|\{ {s+1},\ldots,{n}\}|) \\
&= f_\bb\left(n-s + \left| \{{\sigma^{k}_{r+1}},\ldots,{\sigma^{k}_{s-1}}, {\sigma^{k}_{r}}\} \cap \{{\sigma^{k}_{s}+1},\ldots,s\} \right| \right) - f_\bb(n-s) \\
&\overset{(a)}{=} f_\bb\left(n-s + \left| \{{\sigma^{k}_{r+1}},\ldots,{\sigma^{k}_{s-1}}\} \cap \{{r+1},\ldots,s\} \right| + \mathbf{1}( 
 {\sigma^{k}_{r}} \in \{{r+1},\ldots,s\})\right) - f_\bb(n-s), 
\end{aligned}
\end{equation*}
where ($a$) holds because $\sigma^{k}_{s} = r$.

\noindent
Cases $ r < j < s$. 

\begin{equation*}
\begin{aligned}
&f_\bb\left( \left| {\sigma^{k}_{j}}^{\downarrow,\sigma^{k+1}} \cap \{ {\sigma^{k}_{j}+1},\ldots,{n} \} \right| \right) - f_\bb\left( \left| {\sigma^{k}_{j}}^{\downarrow,\sigma^{k}} \cap \{ {\sigma^{k}_{j}+1},\ldots,{n} \} \right| \right) \\
=& f_\bb\left(\left| \left( \{{\sigma^{k}_{j+1}},\ldots,{\sigma^{k}_{s-1}}\} \cup \{ {\sigma^{k}_{r}}\} \cup \{ {s+1},\ldots,{n} \} \right) \cap \left( \{{\sigma^{k}_{j}+1},\ldots,s\} \cup \{{s+1},\ldots,{n} \} \right)\right| \right) \\
&- f_\bb\left(\left| \left( \{{\sigma^{k}_{j+1}},\ldots,{\sigma^{k}_{s-1}}\} \cup \{ {\sigma^{k}_{s}}\} \cup \{ {s+1},\ldots,{n} \} \right) \cap \left( \{{\sigma^{k}_{j}+1},\ldots,s\} \cup \{{s+1},\ldots,{n} \} \right) \right| \right) \\
=& f_\bb\left(n-s + \left| \{{\sigma^{k}_{j+1}},\ldots,{\sigma^{k}_{s-1}}, {\sigma^{k}_{r}}\} \cap \{{\sigma^{k}_{j}+1},\ldots,s\} \right| \right) \\
&- f_\bb\left(n-s + \left| \{{\sigma^{k}_{j+1}},\ldots,{\sigma^{k}_{s-1}}, {\sigma^{k}_{s}}\} \cap \{{\sigma^{k}_{j}+1},\ldots,s\} \right| \right) \\
\overset{(a)}{=}& f_\bb\left(n-s + \left| \{{\sigma^{k}_{j+1}},\ldots,{\sigma^{k}_{s-1}}\} \cap \{{\sigma^{k}_{j}+1},\ldots,s\} \right| + \mathbf{1}({\sigma^{k}_{r}} \in \{{\sigma^{k}_{j}+1},\ldots,s\}) \right) \\
&- f_\bb\left(n-s + \left| \{{\sigma^{k}_{j+1}},\ldots,{\sigma^{k}_{s-1}}\} \cap \{{\sigma^{k}_{j}+1},\ldots,s\} \right| + \mathbf{1}({r} \in \{{\sigma^{k}_{j}+1},\ldots,s\}) \right),
\end{aligned}
\end{equation*}
where ($a$) holds because $\sigma^{k}_{s} = r.$
Taking into account all cases, we have established that

\begin{equation}\label{exact equality}
\begin{aligned}
\Delta_{k} &= \Delta_{k}^{r}+\sum_{j=r+1}^{s-1} \Delta_{k}^{j},
\end{aligned}
\end{equation}

where, to ease notation, we define for all $r<j<s$
\begin{equation*}
\begin{aligned}
 \Delta_{k}^{j} &:= f_\bb\left(n-s + | \{{\sigma^{k}_{j+1}},\ldots,{\sigma^{k}_{s-1}}\} \cap \{{\sigma^{k}_{j}+1},\ldots,s\} | + \mathbf{1}({\sigma^{k}_{r}} \in \{{\sigma^{k}_{j}+1},\ldots,s\}) \right) \\ 
 &- f_\bb\left(n-s + | \{{\sigma^{k}_{j+1}},\ldots,{\sigma^{k}_{s-1}}\} \cap \{{\sigma^{k}_{j}+1},\ldots,s\} | + \mathbf{1}({r} \in \{{\sigma^{k}_{j}+1},\ldots,s\}) 
\right) \\
 \Delta_{k}^{r} &:= f_\bb\left(n-s + | \{{\sigma^{k}_{r+1}},\ldots,{\sigma^{k}_{s-1}}\} \cap \{{r+1},\ldots,s\} | + \mathbf{1}({\sigma^{k}_{r}} \in \{{r+1},\ldots,s\}) \right) \\ 
 &- f_\bb\left(n-s + | \{{\sigma^{k}_{r+1}},\ldots,{\sigma^{k}_{s-1}}\} \cap \{{\sigma^{k}_{r}+1},\ldots,s\} | + \mathbf{1}({r} \in \{{\sigma^{k}_{r}+1},\ldots,s\}) \right).
\end{aligned}
\end{equation*}

\begin{claim}\label{sigma less than r}
If $\sigma^{k}_{r} < r$ then, for all $j \in \{r+1,\ldots,s-1\} \cup \{r\}$, $\Delta_k^{j} \leq 0$, and $\Delta_k \leq 0.$
\end{claim}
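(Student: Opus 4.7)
The strategy is to invoke the decomposition \eqref{exact equality}, namely $\Delta_k = \Delta_k^r + \sum_{j=r+1}^{s-1} \Delta_k^j$, and show each summand is non-positive; summing then gives $\Delta_k \leq 0$. Each summand has the shape $f_\bb(u_j) - f_\bb(v_j)$ for explicit non-negative integer arguments, and since $f_\bb$ is increasing by Lemma \ref{DioMajale}, it suffices to verify $u_j \leq v_j$ under the hypothesis $\sigma^k_r < r$.

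For the inner cases $r < j < s$, the two arguments of $f_\bb$ share the common term $n - s + |\{\sigma^k_{j+1},\ldots,\sigma^k_{s-1}\} \cap \{\sigma^k_j+1,\ldots,s\}|$ and differ only in the indicators $\mathbf{1}(\sigma^k_r \in \{\sigma^k_j+1,\ldots,s\})$ versus $\mathbf{1}(r \in \{\sigma^k_j+1,\ldots,s\})$. The key step is to show the first indicator is dominated by the second: if $\sigma^k_r \in \{\sigma^k_j+1,\ldots,s\}$, then $\sigma^k_j < \sigma^k_r$, and combining with $\sigma^k_r < r \leq s$ yields $\sigma^k_j < r \leq s$, so $r \in \{\sigma^k_j+1,\ldots,s\}$ as well. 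Hence $u_j \leq v_j$ and $\Delta_k^j \leq 0$. For the boundary case $j = r$, the hypothesis $\sigma^k_r < r$ forces two things simultaneously: the inclusion $\{r+1,\ldots,s\} \subseteq \{\sigma^k_r+1,\ldots,s\}$ (so the intersection-cardinality in the first argument is bounded above by that in the second), and $\mathbf{1}(\sigma^k_r \in \{r+1,\ldots,s\}) = 0 < 1 = \mathbf{1}(r \in \{\sigma^k_r+1,\ldots,s\})$. Adding these comparisons gives $u_r < v_r$, whence $\Delta_k^r \leq 0$ by monotonicity.

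I expect the main obstacle to be purely notational: one must cleanly track the two distinct roles played by positions $j$ versus the values $\sigma^k_j$ sitting in them, inside each intersection and indicator, and in particular avoid confusing $\{\sigma^k_j+1,\ldots,s\}$ with $\{j+1,\ldots,s\}$. Once the arguments of $f_\bb$ are correctly lined up, the claim reduces entirely to monotonicity of $f_\bb$, with no further analytic input required beyond the stated hypothesis $\sigma^k_r < r$.
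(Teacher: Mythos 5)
Your proposal is correct and follows essentially the same route as the paper: decompose $\Delta_k$ via \eqref{exact equality}, show the indicator $\mathbf{1}(\sigma^k_r\in\{\sigma^k_j+1,\ldots,s\})$ is dominated by $\mathbf{1}(r\in\{\sigma^k_j+1,\ldots,s\})$ for the inner terms, and for $j=r$ combine the inclusion $\{r+1,\ldots,s\}\subseteq\{\sigma^k_r+1,\ldots,s\}$ with the indicator comparison before applying monotonicity of $f_\bb$ (Lemma \ref{DioMajale}). If anything, your conclusion $\Delta_k^j\leq 0$ for the inner terms is stated more accurately than the paper's ``$\Delta_k^j=0$'', since the case where $r$ but not $\sigma^k_r$ lies in $\{\sigma^k_j+1,\ldots,s\}$ can make $\Delta_k^j$ strictly negative.
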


\begin{proof}
First of all, observe that, for all $j \in \{r+1,\ldots,s-1\}$, if ${\sigma^{k}_{r}} \in \{{\sigma^{k}_{j}+1},\ldots,s\}$, then also ${r} \in \{{\sigma^{k}_{j}+1},\ldots,s\}$. Therefore, $\Delta_{k}^{j}=0$. Further, as $\sigma^{k}_{r} < r$, we have that~$\mathbf{1}({\sigma^{k}_{r}} \in \{{r+1},\ldots,s\}) = 0$, and $\mathbf{1}({r} \in \{{\sigma^{k}_{r}+1},\ldots,s\}) = 1$. Therefore, it holds
\begin{equation*}
  \begin{aligned}
    \Delta_{k}^{r} &\leq f_\bb\left(n-s + \left| \{{\sigma^{k}_{r+1}},\ldots,{\sigma^{k}_{s-1}}\} \cap \{{\sigma^{k}_{r}+1},\ldots,s\} \right| \right) \\ 
 &- f_\bb\left(n-s + \left| \{{\sigma^{k}_{r+1}},\ldots,{\sigma^{k}_{s-1}}\} \cap \{{\sigma^{k}_{r}+1},\ldots,s\} \right| + 1 \right) \\
 &\overset{(a)}{\leq} 0,
  \end{aligned}
\end{equation*}
where ($a$) holds because $f_\bb$ is non-decreasing (Lemma \ref{DioMajale}) 
\end{proof}

By using Claim \ref{sigma less than r}, we have established that, whenever $\sigma^{k}_{r} < r$, $\Delta_k \leq 0$.

In the following, we discuss the case $\sigma^{k}_{r} > r$. The same arguments underlying Claim \ref{sigma less than r} yield

\begin{claim}\label{sigma larger than r}
If $\sigma^{k}_{r} > r$ then, for all $j \in \{r+1,\ldots,s-1\} \cup \{r\}$, $\Delta_k^{j} \geq 0$, and $\Delta_k \geq 0.$
\end{claim}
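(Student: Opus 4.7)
The plan is to mirror the structure of the proof of Claim \ref{sigma less than r}, flipping the direction of every inequality, and then appeal to the decomposition \eqref{exact equality} for $\Delta_k$. The key observation to set up is that, because Algorithm \ref{optimus prime} proceeds from the bottom and has already matched positions $s+1,\ldots,n$ at step $k$, we have $\sigma^k_r\in\{1,\ldots,s\}$; in particular, when $\sigma^k_r>r$ it automatically sits inside the interval $\{r+1,\ldots,s\}$. This tiny bookkeeping fact is what drives everything, so I would state it up front.

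First I would handle the middle terms $j\in\{r+1,\ldots,s-1\}$. The only piece of $\Delta_k^j$ that depends on whether we look at $\sigma^k$ or $\sigma^{k+1}$ is the indicator $\mathbf 1(x\in\{\sigma^k_j+1,\ldots,s\})$ with $x=\sigma^k_r$ versus $x=r$. Under $\sigma^k_r>r$: if $r\in\{\sigma^k_j+1,\ldots,s\}$, i.e., $r>\sigma^k_j$, then $\sigma^k_r>r>\sigma^k_j$ so $\sigma^k_r\geq\sigma^k_j+1$, and $\sigma^k_r\leq s$ by the preliminary observation, hence $\sigma^k_r\in\{\sigma^k_j+1,\ldots,s\}$. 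Therefore $\mathbf 1(\sigma^k_r\in\{\sigma^k_j+1,\ldots,s\})\geq\mathbf 1(r\in\{\sigma^k_j+1,\ldots,s\})$, and by monotonicity of $f_\bb$ (Lemma \ref{DioMajale}) we conclude $\Delta_k^j\geq 0$.

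Next I would treat the boundary term $\Delta_k^r$. The two ingredients to compare are the intersection sizes with the intervals $\{r+1,\ldots,s\}$ vs.\ $\{\sigma^k_r+1,\ldots,s\}$, and the indicators $\mathbf 1(\sigma^k_r\in\{r+1,\ldots,s\})$ vs.\ $\mathbf 1(r\in\{\sigma^k_r+1,\ldots,s\})$. When $\sigma^k_r>r$ we have $\{r+1,\ldots,s\}\supseteq\{\sigma^k_r+1,\ldots,s\}$, so the intersection with $\{\sigma^k_{r+1},\ldots,\sigma^k_{s-1}\}$ is (weakly) larger in the first term. Moreover, using $\sigma^k_r\leq s$ we get $\mathbf 1(\sigma^k_r\in\{r+1,\ldots,s\})=1$, while $\mathbf 1(r\in\{\sigma^k_r+1,\ldots,s\})=0$ because $r<\sigma^k_r+1$. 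Hence the argument of $f_\bb$ in the first summand of $\Delta_k^r$ dominates that of the second, and monotonicity of $f_\bb$ yields $\Delta_k^r\geq 0$. Combining via \eqref{exact equality} gives $\Delta_k\geq 0$, completing the plan.

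I do not expect a real obstacle here: the argument is a clean mirror of the previous claim, and the only subtle point worth flagging is the use of $\sigma^k_r\leq s$, which is not an \emph{a priori} property of $r$ but follows from the invariant of Algorithm \ref{optimus prime} that positions $s+1,\ldots,n$ have already been sorted at the time of the swap. Once that observation is in place, both $\Delta_k^j\geq 0$ and $\Delta_k^r\geq 0$ reduce to monotonicity of $f_\bb$, for which Lemma \ref{DioMajale} is the only tool needed.
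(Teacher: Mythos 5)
Your proposal is correct and matches the paper's intent exactly: the paper proves this claim simply by asserting that ``the same arguments underlying Claim \ref{sigma less than r} yield'' the result, and your write-up is precisely that mirrored argument, including the one genuinely needed observation that $\sigma^k_r\leq s$ (which the paper establishes just before the claims from the invariant that positions $s+1,\ldots,n$ are already sorted). Both the indicator-dominance step for $j\in\{r+1,\ldots,s-1\}$ and the interval-inclusion step for $\Delta_k^r$ are handled as the paper does, via monotonicity of $f_{\bb}$ and the decomposition \eqref{exact equality}.
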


\noindent
Furthermore, if $\sigma^k_r>r$, for all $j \in \{r+1,\ldots,s-1\}$, it holds that 
\[ \neg \left( r < \sigma^{k}_{j}+1 \leq \sigma^{k}_{r} \right) \Longleftrightarrow \Delta_{k}^{j} = 0.\]

\noindent Further, by recalling that $\sigma^{k}_{s} = r$ (and so $\sigma^{k}_{j} \neq r$), we have
\[ r < \sigma^{k}_{j}+1 \leq \sigma^{k}_{r} \iff r+1 \leq \sigma^{k}_{j} \leq \sigma^{k}_{r}-1.\]
\noindent
Let $\mathcal J_{\sigma^k} := \{ j \in \{r+1,\ldots,s-1\} \mid r+1 \leq \sigma^{k}_{j} \leq \sigma^k_r-1\}$. To ease the notation we write $\mathcal J_k:=\mathcal J_{\sigma^k}$. We have shown that

\begin{equation}\label{what we accomplished}
  \begin{aligned}
  \Delta_k &= \Delta_{k}^{r}+\sum_{j \in \mathcal{J}_{k} } \Delta_{k}^{j}, \; \textnormal{with} \; \Delta_k^{j} \geq 0, \; \textnormal{for all } j \in \mathcal{J}_{k} \cup \{r\}.
  \end{aligned}
\end{equation}

\paragraph{\emph{Case 1.}}{Suppose $r < \sigma^{k}_r < s$. In such case we consider the permutation $\omega$ defined as follows 

\begin{equation}\label{conditions 1}
\begin{aligned}
\omega_{r} &= \sigma^{k}_{r} \; \textnormal{and} \; \omega_{s} = \sigma^{k}_s = r \\
\omega_{\sigma^{k}_{r}} &= s \\
\omega_{i} &= i, \; \textnormal{otherwise.} 
\end{aligned}
\end{equation}

}

\paragraph{\emph{Case 2.}}{Suppose $r < \sigma^{k}_r = s$. In such case we consider the permutation $\omega$ defined as follows 

\begin{equation}\label{conditions 2}
\begin{aligned}
\omega_{r} &= \sigma^{k}_{r} = s \; \textnormal{and} \; \omega_{s} = \sigma^{k}_s = r \\
\omega_{i} &= i, \; \textnormal{otherwise.} 
\end{aligned}
\end{equation}

}

Observe that the \emph{only} conditions we used to establish (\ref{what we accomplished}) are:

\begin{itemize}
\item $r < s$
\item $\sigma^{k}_i = i$, for all $i \geq s+1$
\item $\sigma^{k}_s = r$ 
\item $\sigma^{k}_{r} > r$,
\end{itemize}

and that for the rest $\sigma^{k}$ was arbitrary. In particular, notice that, regardless of whether we are under \emph{Case} $1$ or \emph{Case} $2$, we have that $\omega_i = i$, for all $i \geq s+1$, $\omega_s = r$ and $\omega_r = \sigma^{k}_r$. Therefore, under the assumption $\sigma^{k}_r > r$,  regardless of whether we are in \emph{Case} $1$ or \emph{Case} $2$, the same steps used to show (\ref{what we accomplished}) apply and yield 

\[ \Delta_{\omega} := \|\omega\|_{\bb}-\|\omega t_{r,s}\|_{\bb} = \Delta^{r}_{\omega} + \sum_{j \in \mathcal{J}_{\omega}} \Delta^{j}_{\omega},\]
\noindent
with $\Delta^{j}_{\omega} \geq 0$, for all $j \in \mathcal{J}_{\omega} \cup \{r\},$ where $\mathcal J_{\omega} := \{ j \in \{r+1,\ldots,s-1\} \mid r+1 \leq \omega_{j} \leq \omega_r-1\}$.

\begin{claim}\label{Diomaialino}
Under both \emph{Case} $1$ and \emph{Case} $2$, 
    \[\mathcal{J}_{\omega} = \{r+1,\ldots,\sigma^{k}_r-1\} \; \textnormal{and} \;
 |\mathcal{J}_{\omega}| \geq |\mathcal{J}_k|.\]
\end{claim}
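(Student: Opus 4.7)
The plan is to verify the two assertions by a direct case analysis of the definition of $\omega$ followed by a counting (pigeonhole-type) argument for the cardinality bound. Notice first that in both \emph{Case} $1$ and \emph{Case} $2$ we have $\omega_r=\sigma^k_r$, so the defining condition for membership in $\mathcal{J}_\omega$ reduces to
$$
j\in\{r+1,\ldots,s-1\}\ \text{and}\ r+1\le \omega_j\le \sigma^k_r-1.
$$
This reformulation is the key observation that allows the two cases to be handled almost identically.

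For the identity $\mathcal{J}_\omega=\{r+1,\ldots,\sigma^k_r-1\}$, I would simply plug in the definition of $\omega$. In \emph{Case} $1$ ($r<\sigma^k_r<s$), for $j\in\{r+1,\ldots,s-1\}$ we have $\omega_j=j$ except for $j=\sigma^k_r$, where $\omega_{\sigma^k_r}=s$. Since $s>\sigma^k_r-1$, the position $j=\sigma^k_r$ fails the upper bound, and so $\mathcal{J}_\omega$ is exactly the set of $j\in\{r+1,\ldots,s-1\}\setminus\{\sigma^k_r\}$ with $r+1\le j\le\sigma^k_r-1$; since $\sigma^k_r\notin\{r+1,\ldots,\sigma^k_r-1\}$ and $\sigma^k_r-1\le s-1$, this simplifies to $\{r+1,\ldots,\sigma^k_r-1\}$. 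In \emph{Case} $2$ ($\sigma^k_r=s$), every $\omega_j$ with $j\in\{r+1,\ldots,s-1\}$ equals $j$, and the condition becomes $r+1\le j\le s-1=\sigma^k_r-1$, again giving $\mathcal{J}_\omega=\{r+1,\ldots,\sigma^k_r-1\}$.

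For the cardinality bound, I would argue as follows. The computation above yields $|\mathcal{J}_\omega|=\sigma^k_r-1-r$. On the other hand, $\mathcal{J}_k$ is the set of positions $j\in\{r+1,\ldots,s-1\}$ with $\sigma^k_j$ lying in the interval $\{r+1,\ldots,\sigma^k_r-1\}$. Since $\sigma^k$ is a bijection on $[n]$, the preimage of any set has the same cardinality as the set itself, so the number of $j$'s (over all of $[n]$) with $\sigma^k_j\in\{r+1,\ldots,\sigma^k_r-1\}$ equals $\sigma^k_r-1-r$. Restricting to $j\in\{r+1,\ldots,s-1\}$ can only reduce this count, so $|\mathcal{J}_k|\le \sigma^k_r-1-r=|\mathcal{J}_\omega|$.

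The proof is essentially bookkeeping and I do not anticipate a real obstacle. The only thing to watch out for is in \emph{Case} $1$: one must explicitly check that the position $j=\sigma^k_r$, where $\omega$ does not act as the identity, is properly excluded from $\mathcal{J}_\omega$ (this is guaranteed by $\omega_{\sigma^k_r}=s>\sigma^k_r-1$). Merging the two cases via the unified upper bound $\omega_r-1=\sigma^k_r-1$ makes the argument uniform.
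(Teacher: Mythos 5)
Your proposal is correct and follows essentially the same route as the paper: compute $\mathcal{J}_\omega$ explicitly from the definition of $\omega$ in each case (taking care to exclude the position $\sigma^k_r$ in \emph{Case} $1$), and bound $|\mathcal{J}_k|$ by the cardinality $\sigma^k_r-r-1$ of the target interval via injectivity of $\sigma^k$. The paper phrases the latter step as "any $\tau$ with $\tau_r=\sigma^k_r$ has $|\mathcal{J}_\tau|\leq\sigma^k_r-r-1$," which is the same pigeonhole observation you make.
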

\begin{proof}
First of all, under \emph{Case} $2$, it holds
\begin{equation*}
    \begin{aligned}
    |\mathcal{J}_{\omega}| &= |\{  j \in \{r+1,\ldots,s-1\} \mid r+1 \leq \omega_j \leq \omega_r - 1\}| \\
        &\overset{(a)}{=} |\{  j \in \{r+1,\ldots,s-1\} \mid r+1 \leq j \leq \sigma^{k}_r - 1\}| \\
        &\overset{(b)}{=} |\{ j \in \{r+1,\ldots,\sigma^{k}_r-1\} \mid r+1 \leq j \leq \sigma^{k}_r - 1\}| 
        = \sigma^{k}_r-r-1 \overset{(c)}{\geq} |\mathcal{J}_k|,
    \end{aligned}
\end{equation*}
where ($a$) holds because $\omega_r = \sigma^{k}_r$, and, for all $j \in \{r+1,\ldots,s-1\}$, $\omega_j = j$; ($b$) holds because $\sigma^{k}_r = s$; ($c$) holds because, for any $\tau \in \mathbb{S}_n$ such that $\tau_r = \sigma^{k}_r$,
$\mathcal{J}_{\tau} := \{j \in \{r+1,\ldots,s-1\} \mid r+1 \leq \delta_j \leq \sigma^{k}_r - 1\}$ has cardinality at most $\sigma^{k}_r-r-1$.

Under \emph{Case} $1$, we have
\begin{equation*}
    \begin{aligned}
    |\mathcal{J}_{\omega}| &= |\{  j \in \{r+1,\ldots,s-1\} \mid r+1 \leq \omega_j \leq \omega_r - 1\}| \\
        &\overset{(a)}{=} |\{  j \in \{r+1,\ldots,s-1\} \setminus \{\sigma^{k}_r\} \mid r+1 \leq j \leq \sigma^{k}_r - 1\}| + \mathbf{1}_{r+1 \leq \omega_{\sigma^{k}_r} \leq \sigma^{k}_r - 1}\\ 
        &\overset{(b)}{=} |\{  j \in \{r+1,\ldots,s-1\} \setminus \{\sigma^{k}_r\} \mid r+1 \leq j \leq \sigma^{k}_r - 1\}| + \mathbf{1}_{r+1 \leq s \leq \sigma^{k}_r - 1}\\
        &\overset{(c)}{=} |\{  j \in \{r+1,\ldots,s-1\} \setminus \{\sigma^{k}_r\} \mid r+1 \leq j \leq \sigma^{k}_r - 1\}| \\
        &\overset{(d)}{=} |\{  j \in \{r+1,\ldots,\sigma^{k}_r-1\} \mid r+1 \leq j \leq \sigma^{k}_r - 1\}| = \sigma^{k}_r-r-1 \geq |\mathcal{J}_k|,
    \end{aligned}
\end{equation*}

where ($a$) holds because $\omega_r = \sigma^{k}_r$, and, for $j \in \{r+1,\ldots,s-1\} \setminus \{\sigma^{k}_r\}$, $\omega_j = j$; ($b$) holds because~$\omega_{\sigma^{k}_r} = s$; ($c$) holds because $\sigma^{k}_r < s$; ($d$) holds because if $j \in [r+1,\sigma^{k}_r-1]$, then $j \in [r+1,s-1] \setminus \{\sigma^{k}_r\}$, as $r < \sigma^{k}_r < s$.
\end{proof}

\begin{claim}\label{Dioporcellino}
Under both \emph{Case} 1 and \emph{Case} 2, $\Delta_{k} \leq \Delta_{\omega}$.
\end{claim}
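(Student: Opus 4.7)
My plan is to express both $\Delta_k$ and $\Delta_\omega$ as telescoping sums of the one-step increments $g(x):=f_\bb(x+1)-f_\bb(x)$ of $f_\bb$, and then to invoke Lemma \ref{DioCinghiale}, which guarantees that $g$ is non-decreasing. The comparison $\Delta_k\leq\Delta_\omega$ will then reduce to a matching argument between two multisets of ``increment-locations''.

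First, I would compute $\Delta_\omega$ in closed form. The key simplification is that $\omega$ agrees with the identity outside two positions (Case 2) or three positions (Case 1); this collapses the intersections $\{\omega_{j+1},\ldots,\omega_{s-1}\}\cap\{\omega_j+1,\ldots,s\}$ that appear in the defining formulas of $\Delta_\omega^r$ and $\Delta_\omega^j$. A short direct calculation in both cases yields
\[
\Delta_\omega^r=f_\bb(n-r)-f_\bb(n-\sigma^k_r),\qquad \sum_{j\in\mathcal{J}_\omega}\Delta_\omega^j=f_\bb(n-r-1)-f_\bb(n-\sigma^k_r),
\]
hence $\Delta_\omega=f_\bb(n-r)+f_\bb(n-r-1)-2f_\bb(n-\sigma^k_r)$. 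Telescoping, this decomposes into $2|\mathcal{J}_\omega|+1=2(\sigma^k_r-r)-1$ one-step increments of $g$: one at the argument $n-r-1$ and two at each argument in $\{n-\sigma^k_r,\ldots,n-r-2\}$.

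Next, I would rewrite $\Delta_k$ similarly. A first observation (straightforward from the definitions of $A$ and $A'$ in $\Delta_k^r$) is that $A-A'=|\mathcal{J}_k|$, so $\Delta_k^r=f_\bb(n-s+A+1)-f_\bb(n-s+A')$ telescopes into $|\mathcal{J}_k|+1$ consecutive unit-increments of $g$ starting at $n-s+A'$. Together with the $|\mathcal{J}_k|$ unit-increments $g(x_j)$ arising from $\sum_{j\in\mathcal{J}_k}\Delta_k^j$, this expresses $\Delta_k$ as a sum of exactly $2|\mathcal{J}_k|+1$ evaluations of $g$.

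The crux is to construct an injection from the multiset of $\sigma^k$-increment arguments into the multiset of $\omega$-increment arguments that maps each element to a value at least as large. By monotonicity of $g$ (Lemma \ref{DioCinghiale}), such an injection dominates $\Delta_k$ term-by-term by the matched $\omega$-terms, and the unmatched $\omega$-terms are non-negative by Lemma \ref{DioMajale}, giving $\Delta_k\leq\Delta_\omega$. To build the injection I would show, first, that every $x_j$ with $j\in\mathcal{J}_k$ lies in $[0,n-r-2]$ (since $\sigma^k_r>\sigma^k_j\geq r+1$ forces $\sigma^k_r$ to contribute to the count that defines $x_j$, so $x_j\leq n-\sigma^k_j-1\leq n-r-2$), and, second, that the consecutive run $\{n-s+A',\ldots,n-s+A\}$ fits within the top of $\omega$'s multiset, which terminates at $n-r-1$ (using $A\leq s-r-1$). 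The main obstacle is the bookkeeping for the matching: it requires sorting both multisets in decreasing order and checking the element-wise domination inductively, combined with a pigeonhole-style argument exploiting that the values $\{\sigma^k_j:j\in\mathcal{J}_k\}$ are distinct elements of $(r,\sigma^k_r)$ and that $|\mathcal{J}_\omega|\geq|\mathcal{J}_k|$ (Claim \ref{Diomaialino}). This pigeonhole step, together with the convexity-driven monotonicity of $g$, is what closes the estimate.
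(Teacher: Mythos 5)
Your proposal is correct, and it reorganizes the argument in a genuinely different way from the paper, even though both ultimately rest on the same two facts (Lemma \ref{DioCinghiale}: the increments $g(x)=f_\bb(x+1)-f_\bb(x)$ are non-decreasing; Lemma \ref{DioMajale}: $g\geq 0$) and on Claim \ref{Diomaialino}. The paper compares \emph{blocks}: it proves $\Delta_k^r\leq\Delta_\omega^r$ by a two-stage estimate (inequalities \eqref{la madonna puzza 1}--\eqref{la madonna puzza 2} fed into \eqref{caso Delta r}), proves $\Delta_k^j\leq\Delta_\omega^j$ for $j\in\mathcal J_k\cap\mathcal J_\omega$, and only for the leftover indices $\mathcal J_k\setminus\mathcal J_\omega$ (which can occur in Case 1) builds an injection $\rho$ into $\mathcal J_\omega\setminus\mathcal J_k$ and shows $\Delta_k^j\leq\Delta_\omega^{\rho(j)}$. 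You instead atomize \emph{everything} into unit increments of $f_\bb$ — using the identity $A-A'=|\mathcal J_k|$ to telescope $\Delta_k^r$ into $|\mathcal J_k|+1$ consecutive increments, and the closed form $\Delta_\omega=f_\bb(n-r)+f_\bb(n-r-1)-2f_\bb(n-\sigma^k_r)$ to read off $\Delta_\omega$'s multiset (one copy of $n-r-1$, two copies of each of $n-\sigma^k_r,\dots,n-r-2$) — and then run a single global dominating matching: the consecutive run slides under the ``first copies'' (using $A\leq s-r-1$ and $|\mathcal J_k|\leq|\mathcal J_\omega|$), while each $x_j$ maps to the second copy of $n-\sigma^k_j-1$, injectively because the $\sigma^k_j$ are distinct elements of $(r,\sigma^k_r)$. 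What this buys is a uniform treatment of Cases 1 and 2 and of $\mathcal J_k\cap\mathcal J_\omega$ versus $\mathcal J_k\setminus\mathcal J_\omega$; the price is that you need the sharper bound $x_j\leq n-\sigma^k_j-1$ (the paper only uses the weaker $n-j-1$-type bound). That sharper bound is correct, but your justification is phrased backwards: the point is that $\sigma^k_r$ lies in $\{\sigma^k_j+1,\dots,s\}$ yet \emph{cannot} lie in $\{\sigma^k_{j+1},\dots,\sigma^k_{s-1}\}$ (position $r$ is outside $[j+1,s-1]$), so the intersection loses at least one element — it does not ``contribute to the count.'' With that wording fixed, the argument is complete.
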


\begin{proof}
First of all, recall that we are assuming $\sigma^{k}_r > r$. 
We first deal with $\Delta^{r}_{k}$. On the one hand, notice that under both \emph{Case} $1$ and \emph{Case} $2$, we have $\Delta^{r}_{\omega} = f_{\bb}(n-r)-f_{\bb}(n-\sigma^{k}_r)$. On the other hand,  we have
\begin{equation}\label{la madonna puzza 1}
\begin{aligned}
 | \{{\sigma^{k}_{r+1}},\ldots,{\sigma^{k}_{s-1}}\} \cap \{{\sigma^{k}_r+1},\ldots,s\} | &\leq \min \{ |\{{\sigma^{k}_{r+1}},\ldots,{\sigma^{k}_{s-1}}\}|, |\{\sigma^{k}_{r}+1,\ldots,s\}| \} \\
 &\leq s-\sigma^{k}_r, 
\end{aligned}
\end{equation}

and 

\begin{equation}\label{la madonna puzza 2}
\begin{aligned}
 |\{\sigma^{k}_{r+1},\ldots,\sigma^{k}_{s-1}\} \cap \{r+1,\ldots,\sigma^{k}_r\}| &\overset{(a)}{=} |\{\sigma^{k}_{r+1},\ldots,\sigma^{k}_{s-1}\} \cap \{r+1,\ldots,\sigma^{k}_r-1\}| \\
 &\leq \min \{ |\{{\sigma^{k}_{r+1}},\ldots,{\sigma^{k}_{s-1}}\}|, |\{r+1,\ldots,\sigma^{k}_r-1\}| \} \\
 &\leq \sigma^{k}_r-r-1, 
\end{aligned}
\end{equation}

where ($a$) holds because $\sigma^{k}_r \notin \{\sigma^{k}_{r+1},\ldots,\sigma^{k}_{s-1}\}$. Together these inequalities yield

\begin{equation}\label{caso Delta r}
    \begin{aligned}
    \Delta^{r}_k &= f_{\bb}(n-s+|\{\sigma^{k}_{r+1},\ldots,\sigma^{k}_{s-1}\} \cap \{r+1,\ldots,s\}|+1) \\
    &- f_{\bb}(n-s+|\{\sigma^{k}_{r+1},\ldots,\sigma^{k}_{s-1}\} \cap \{\sigma^{k}_r+1,\ldots,s\}|) \\
    &= f_{\bb}(n-s+|\{\sigma^{k}_{r+1},\ldots,\sigma^{k}_{s-1}\} \cap \{\sigma^{k}_r+1,\ldots,s\}|+|\{\sigma^{k}_{r+1},\ldots,\sigma^{k}_{s-1}\} \cap \{r+1,\ldots,\sigma^{k}_r\}|+1) \\
    &- f_{\bb}(n-s+|\{\sigma^{k}_{r+1},\ldots,\sigma^{k}_{s-1}\} \cap \{\sigma^{k}_r+1,\ldots,s\}|) \\
    &\overset{(a)}{\leq} f_{\bb}(n-s+(s-\sigma^{k}_r) + |\{\sigma^{k}_{r+1},\ldots,\sigma^{k}_{s-1}\} \cap \{r+1,\ldots,\sigma^{k}_r\}|+1) \\
    &- f_{\bb}(n-s+(s-\sigma^{k}_r)) \\
    &\overset{(b)}{\leq} f_{\bb}(n-\sigma^{k}_r + (\sigma^{k}_r-r-1)+1) - f_{\bb}(n-\sigma^{k}_r) \\
    &= f_{\bb}(n-r)-f_{\bb}(n-\sigma^{k}_r) = \Delta^{r}_{\omega},
    \end{aligned}
\end{equation}
where ($a$) holds by applying Lemma \ref{DioCinghiale} and Inequality (\ref{la madonna puzza 1}), and ($b$) holds by applying Lemma \ref{DioMajale} and Inequality (\ref{la madonna puzza 2}). Thus, under both \emph{Case} $1$ and \emph{Case} $2$, $\Delta_{k}^{r} \leq \Delta_{\omega}^{r}$. If $\mathcal{J}_k=\emptyset$, then since $\Delta^j_{\omega}\geq 0$ for all $j\in \mathcal{J}_{\omega}$, inequality \eqref{caso Delta r} yields $\Delta_k=\Delta^r_{k}\leq \Delta^r_{\omega}\leq \Delta_{\omega}$.

Therefore, in the remaining part of the argument, we assume that~$\mathcal{J}_{k} \neq \emptyset$. For all $j \in \mathcal{J}_{k}$, we have

\begin{equation}\label{j upper bound}
    \begin{aligned}
    |\{\sigma^{k}_{j+1},\ldots,\sigma^{k}_{s-1}\} \cap \{\sigma^{k}_{j}+1,\ldots,s\}| &\leq \min(|\{\sigma^{k}_{j+1},\ldots,\sigma^{k}_{s-1}\}|,|\{\sigma^{k}_{j}+1,\ldots,s\}|) \\
    &= \min(s-(j+1),s-\sigma^{k}_j) \leq s-j-1.
    \end{aligned}
\end{equation}

We now show that under both \emph{Case} $1$ and \emph{Case} $2$,
\begin{equation}\label{easy cases}
   \forall j \in \mathcal{J}_{k} \cap \mathcal{J}_{\omega}, \Delta^{j}_{k} \leq \Delta^{j}_{\omega}. 
\end{equation}
If $\mathcal{J}_{k} \cap \mathcal{J}_{\omega} = \emptyset$, then there is nothing to show, so suppose the intersection is non-empty. Let $j \in \mathcal{J}_{k} \cap \mathcal{J}_{\omega}$. On the one hand, notice that, under both \emph{Case} $1$ and \emph{Case}~$2$, we have $\Delta^{j}_{\omega} = f_{\bb}(n-j)-f_{\bb}(n-(j+1))$. Therefore, we obtain

\begin{equation}\label{j good case}
    \begin{aligned}
    \Delta^{j}_{k} &= f_{\bb}(n-s+|\{{\sigma^{k}_{j+1}},\ldots,{\sigma^{k}_{s-1}}\} \cap \{{\sigma^{k}_{j}+1},\ldots,s\}| + 1) \\
    &- f_{\bb}(n-s +|\{{\sigma^{k}_{j+1}},\ldots,{\sigma^{k}_{s-1}}\} \cap \{{\sigma^{k}_{j}+1},\ldots,s\}|) \\
    &\overset{(a)}{\leq} f_{\bb}(n-s+(s-j-1) + 1) - f_{\bb}(n-s + (s-j-1) ) \\
    &= f_{\bb}(n-j) - f_{\bb}(n-j-1) = \Delta^{j}_{\omega},
    \end{aligned}
\end{equation}
where ($a$) holds by Lemma \ref{DioCinghiale} and Inequality (\ref{j upper bound}).

By combining (\ref{easy cases}) and Inequality (\ref{caso Delta r}), we obtain that for both \emph{Case} $1$ and \emph{Case}~$2$, if $\mathcal{J}_{k} \subseteq \mathcal{J}_{\omega}$ then $\Delta_{k} \leq \Delta_{\omega}$.

We conclude the analysis of \emph{Case} $2$, by showing that $\mathcal{J}_{k} \subseteq \mathcal{J}_{\omega}$. Indeed, suppose that $j\in \mathcal{J}_k$, but~$\omega_j\notin \left[r+1,\omega_r-1\right]$. If $\omega_j<r+1$, then we have two possibilities: either $\omega_j=r$ which implies $j=s$, or $\omega_j<r$, which implies $\omega_j=j<r+1$. In both cases we have a contradiction, as~$j\in [r+1,s-1]$. If $\omega_j>\omega_r-1$, then $\omega_j>\sigma^k_r-1$ and we have two possibilities: either~$\omega_j=\sigma^k_r$ which implies $j=r$, or $\omega_j>\sigma^k_r=s$ which implies $j=\omega_j>s$. In both cases we obtain a contradiction. Thus, under \emph{
Case} $2$, it holds $\mathcal{J}_{k} \subseteq \mathcal{J}_{\omega}$.

Under \emph{Case} $1$, it might happen that $\mathcal{J}_{k} \setminus \mathcal{J}_{\omega} \neq \emptyset$. Suppose $\mathcal{J}_{k} \setminus \mathcal{J}_{\omega} \neq \emptyset$. Since by Claim \ref{Diomaialino} we have that $|\mathcal{J}_{\omega}| \geq |\mathcal{J}_{k}|$, there exists an injective map $\rho: \mathcal{J}_{k} \setminus \mathcal{J}_{\omega} \to \mathcal{J}_{\omega} \setminus \mathcal{J}_k$. Indeed, notice that 
$$|\mathcal{J}_{\omega} \setminus \mathcal{J}_{k}| = |\mathcal{J}_{\omega} \setminus (\mathcal{J}_{k} \cap \mathcal{J}_{\omega})| \overset{(a)}{\geq} |\mathcal{J}_{k} \setminus (\mathcal{J}_{k} \cap \mathcal{J}_{\omega})| = |\mathcal{J}_{k} \setminus \mathcal{J}_{\omega}| \overset{(b)}{>} 0, $$

where ($a$) holds because $|\mathcal{J}_{\omega}| \geq |\mathcal{J}_{k}|$ and we are subtracting from both the same set $\mathcal{J}_{k} \cap \mathcal{J}_{\omega}$; ($b$) holds under our assumption $\mathcal{J}_{k} \setminus \mathcal{J}_{\omega} \neq \emptyset$.
Furthermore, recall that by Claim \ref{Diomaialino}, we have $\mathcal{J}_{\omega} = [r+1,\sigma^{k}_r-1]$,  which, in turn, implies that $\mathcal{J}_{\omega} \setminus \mathcal{J}_{k} \subseteq [r+1,\sigma^{k}_r-1]$.

As $\mathcal{J}_k \subseteq [r+1,s-1]$ and $\sigma^{k}_r < s$, if $j \in \mathcal{J}_{k} \setminus \mathcal{J}_{\omega}$ necessarily $j \geq \sigma^{k}_r > \rho(j)$. Let~$j \in \mathcal{J}_{k} \setminus \mathcal{J}_{\omega}$ and $z:= \rho(j)$. We show that $\Delta^{j}_{k} \leq \Delta^{z}_{\omega}$. Notice that $\Delta^{z}_{\omega} = f_{\bb}(n-z)-f_{\bb}(n-(z+1))$. On the one hand, we have 
\begin{equation*}
    \begin{aligned}
    |\{\omega_{z+1},\ldots,\omega_{s-1}\} \cap \{\omega_{z}+1,\ldots,s\}| &= |\{{z+1},\ldots,{\sigma^{k}_{r}-1}, s, {\sigma^{k}_{r}+1},\ldots,{s-1}\} \cap \{{z+1},\ldots,s\}|  \\
    &= s-z-1 > s-j-1,
    \end{aligned}
\end{equation*}
where we understand that if $z = \sigma^{k}_r - 1$ then $$|\{\omega_{z+1},\ldots,\omega_{s-1}\} \cap \{\omega_{z}+1,\ldots,s\}| = |\{s, {\sigma^{k}_{r}+1},\ldots,{s-1}\} \cap \{\sigma^{k}_r,\ldots,s\}| = s - \sigma^{k}_r.$$ On the other hand, we obtain 
\begin{equation}\label{troietta}
    \Delta^{j}_{k} \overset{(a)}{\leq} f_{\bb}(n-j) - f_{\bb}(n-j-1) \overset{(b)}{\leq} f_{\bb}(n-z)-f_{\bb}(n-z-1) = \Delta^{z}_{\omega},
\end{equation}

where ($a$) holds by using Inequality (\ref{j upper bound}), and ($b$) holds by applying Lemma \ref{DioCinghiale} and the fact that~$z < j$. By combining (\ref{caso Delta r}), (\ref{easy cases}), and (\ref{troietta}), we conclude that, also under \emph{Case} $1$, $\Delta_{k} \leq \Delta_{\omega}$.
\end{proof}
We can now proceed to compute an upper bound. The following hold
\begin{equation}\label{a mirimeo piaxe l'oseo}
  \begin{aligned}
  \Delta_k &\overset{(a)}{\leq} \Delta_{\omega} \\
  &=\Delta_{\omega}^{r}+\sum_{j \in \mathcal{J}_{\omega}} \Delta_{\omega}^{j} \\
  &\overset{(b)}{=} f_\bb(n-r) - f_\bb(n-\sigma^{k}_{r}) +\sum_{j=r+1}^{\sigma^{k}_{r}-1} \left[ f_\bb(n-j) - f_\bb(n-j-1) \right] \\
  &\overset{(c)}{=}f_\bb(n-r-1) - f_\bb(n-\sigma^{k}_{r})+ \sum_{j=r}^{\sigma^{k}_{r}-1} \left[ f_\bb(n-j) - f_\bb(n-j-1) \right]  \\
  &= f_\bb(n-r)-f_\bb(n-\sigma^{k}_{r}) + f_\bb(n-r-1) - f_\bb(n-\sigma^{k}_{r}), 
  \end{aligned}
\end{equation}

where ($a$) holds by applying Claim (\ref{Dioporcellino}); ($b$) holds by applying Claim (\ref{Diomaialino}), and ($c$) holds by adding and subtracting $f_\bb(n-r-1)$. Since $\sigma^k_r>r$, we have that $\Delta^{\textnormal{apx}}_k>0$ and 



\begin{equation}\label{DIOCANEEEEEEEEE}
\begin{aligned}
  \frac{\Delta_k}{\Delta_{k}^{\textnormal{apx}}}
  &\overset{(a)}{\leq} \frac{f_\bb(n-r)-f_\bb(n-\sigma^{k}_{r}) + f_\bb(n-(r+1)) - f_\bb(n-\sigma^{k}_{r})}{\Delta_{k}^{\textnormal{apx}}} \\
  &= \frac{f_\bb(n-r)-f_\bb(n-\sigma^{k}_{r}) + f_\bb(n-(r+1)) - f_\bb(n-\sigma^{k}_{r})}{f_\bb(n-r)-f_\bb(n-\sigma^{k}_{r})} \\
  &= 1 + \frac{f_\bb(n-(r+1)) - f_\bb(n-\sigma^{k}_{r})}{f_\bb(n-r)-f_\bb(n-\sigma^{k}_{r})} \\
  &\overset{(b)}{\leq} 1 + \frac{f_\bb(n-(r+1))}{f_\bb(n-r)} \leq \gamma_\bb,
\end{aligned}
\end{equation}
where ($a$) holds by using (\ref{a mirimeo piaxe l'oseo}), and ($b$) follows observing that $x\mapsto \frac{a-x}{b-x}$ is decreasing in $[0,b)$ for all $a<b$. Finally, we conclude the proof considering all the cases analyzed above. In particular, we define 
$$\mathcal{K} := \left\{ k \in \{0,\ldots,K-1\} \mid \sigma^{k+1} = \sigma^{k} t_{r,s}, \; \textnormal{for some indices} \; s > r \; \textnormal{and } \sigma^{k}_{r} > r\right\}.$$
When $k\in \mathcal{K}$, by Claim \ref{sigma larger than r}, we have that $\Delta_k\geq 0$ and $\Delta_k^{\textnormal{apx}}>0$. If $k\notin \mathcal{K}$, then by Claim \ref{sigma less than r} we have that $\Delta_k\leq 0$ and $\Delta_{k}^{\textnormal{apx}}=0$. Moreover notice that $\mathcal{K}\neq \emptyset$, otherwise, by invoking Claim \ref{sigma less than r}, we would obtain that $\Delta_{k} \leq 0$, for all $0 \leq k \leq K-1$. This, in turn, would imply that $\lVert \sigma^{0} \rVert_{\beta} = \sum_{k=0}^{K-1} \Delta_{k} = 0$, which is absurd given that $\sigma^{0} \neq \Id$ and $\beta_2 > 0.$ Thus, $\mathcal{K}\neq \emptyset$. Furthermore, notice that since $\beta_2>0$ and $\sigma^0\neq \Id$, we have $
\lVert \sigma^{0} \rVert_{\beta}^{\textnormal{apx}}>0.$ These observations yield

\begin{equation*}
\begin{aligned}
\frac{\lVert \sigma^{0} \rVert_{\beta}}{ \lVert \sigma^{0} \rVert_{\beta}^{\textnormal{apx}}} = \frac{\sum_{k=0}^{K-1} \Delta_k}{ \sum_{k=0}^{K-1} \Delta_k^{\textnormal{apx}} } \overset{(a)}{=} \frac{\sum_{k=0}^{K-1} \Delta_k}{\sum_{k \in \mathcal{K}} \Delta_k^{\textnormal{apx}}} \overset{(b)}{\leq} \frac{\sum_{k \in \mathcal{K}} \Delta_k}{ \sum_{k \in \mathcal{K}} \Delta_k^{\textnormal{apx}} } \overset{(c)}{\leq} \frac{\sum_{k \in \mathcal{K}} \gamma_\bb \Delta_k^{\textnormal{apx}}}{ \sum_{k \in \mathcal{K}} \Delta_k^{\textnormal{apx}} } \leq \gamma_\bb.
\end{aligned}
\end{equation*}
\noindent
where ($a$) holds because, when $k \in \mathcal{K}^{c}$, $\Delta_k^{\textnormal{apx}} = 0$; ($b$) holds because, when $k \in \mathcal{K}^{c}$, $\Delta_k \leq 0$, and ($c$) holds by Inequality (\ref{DIOCANEEEEEEEEE}).
\end{proof}
}

\begin{proof}[Proof of Corollary \ref{corollary approximation}]
Let $\sigma^{\textnormal{apx}} \in P^{\textnormal{apx}}_\bb(V)$ and $\sigma^{*} \in P_\bb(V)$. The following inequalities hold
\begin{equation*}
\begin{aligned}
\min\limits_{\sigma\in \LL_n}\dd_\bb(\sigma,V)&\leq \sum_{j = 1}^{|V|} \dd_\bb(\sigma^{\textnormal{apx}},v^j)\overset{(a)}{\leq} \gamma_\bb \sum_{j = 1}^{|V|} \dd_\bb^{\textnormal{apx}}(\sigma^{\textnormal{apx}},v^j)\\
&\overset{(b)}{\leq} \gamma_\bb \sum_{j = 1}^{|V|} \dd_\bb^{\textnormal{apx}}(\sigma^{*},v^j) \overset{(c)}{\leq} \gamma_\bb \sum_{j = 1}^{|V|} \dd_\bb(\sigma^{*},v^j)=\gamma_\bb\min\limits_{\sigma\in \LL_n}\dd_\bb(\sigma,V),
\end{aligned}
\end{equation*}
where ($a$) and ($c$) hold by using Theorem \ref{general approximation thm}, and ($b$) holds as, by construction, $\sigma^{\textnormal{apx}}$ is a minimizer of $\sum_{j = 1}^{|V|} \dd_\bb^{\textnormal{apx}}(\cdot,v^j)$.
\end{proof}

\begin{proof}[Proof of Lemma \ref{hungarian lemma}]
Let $(v^j)_{j=1}^{m}\in \V$.
Construct the following bipartite graph $G = ([n],[n],E)$, where each vertex $c \in [n]$ corresponds to an alternative in $[n]$, and each vertex $p \in [n]$ corresponds to a possible ranking. Edges~$e = \{c,p\} \in E$ are interpreted as ``alternative $c$ is in position $p$''. We weight each edge $e = \{c,p\} \in E$ by
\[ w_{c,p} = \sum_{j = 1}^{|V|} \big | f_\bb(n-p) - f_\bb(|c^{\downarrow,v^j}|) \big |\]
Observe that we can compute all weights $\{w_{c,p}\}_{c,p}$ in time $O(m n^3)$. Indeed, by using Pascal's formula, we can pre-compute $\{f_\bb(0),\ldots,f_\bb(n-1)\}$ in time $O(n^2)$. Clearly, the number of edges $\{c,p\}$ is $n^2$. Fix $\{c,p\}$ and recall that we have a profile of $m$ preferences. For each preference, we need to compute $|c^{\downarrow, v^j}|$ and this can be done in linear time. Therefore, in total, it takes $O(m n^3)$ to compute all weights.

Further, note that each perfect matching is associated with a linear order $\sigma$, such that $|c^{\downarrow, \sigma} | = n-p$, whenever the edge $\{c,p\}$ belongs to the matching. Particularly, by solving for a minimum weight perfect matching, we obtain $ \sigma^{\textnormal{apx}} \in \LL_n$ that minimizes $\omega \mapsto \sum_{j = 1}^{|V|} \dd^{\textnormal{apx}}_\bb(\omega,v^j)$. Finally, we can compute a minimum weight perfect matching by relying on the Hungarian algorithm whose running time is $O(n^3)$. Thus, the overall running time is $O(m n^3)$.
\end{proof}
\begin{proof}[Proof of Theorem \ref{theoremungulato}]
Combine Theorem \ref{general approximation thm} with Corollary \ref{corollary approximation} and Lemma \ref{hungarian lemma}.
\end{proof}

\begin{proof}[Proof of Corollary \ref{corollariocinghiale}]
For all $\sigma,\pi\in \LL_n$,
\begin{equation*}
\begin{aligned}
\dd^{\textnormal{apx}}_{\mu,\bb}(\sigma,\pi) \overset{(a)}{\leq} \ddd(\sigma,\pi) &\overset{(b)}{\leq} U \dd_\bb(\sigma,\pi) \\
&\overset{(c)}{\leq} \gamma_\bb U \dd^{\textnormal{apx}}_\bb (\sigma,\pi) \\
&\overset{(d)}{\leq} \gamma_\bb \frac{U}{u} \dd^{\textnormal{apx}}_{\mu,\bb} (\sigma,\pi),
\end{aligned}
\end{equation*}

where ($a$) holds by applying the same arguments underlying Theorem \ref{general approximation thm}; ($b$) holds because $\mu$ is bounded above, ($c$) holds by applying Theorem \ref{general approximation thm}, and ($d$) because $\mu$ is bounded below.
\end{proof}

\begin{proof}[Proof of Theorem \ref{PTAS}]
Fix $\epsilon > 0$. Let $n \in \mathbb{N}$, $V = (v^j)_{j=1}^{m}\in \V$, $\bb_{2:n}$ and $\mu^{(n)}$ satisfying the assumption of Theorem \ref{PTAS}. To ease notation, we denote $\bb = \bb_{2:n}$ and $\mu = \mu^{(n)}$.  Suppose there exist $q \in \{0,\ldots,n-2\}$ alternatives $({\sigma_{1}},\ldots,{\sigma_{q}})$ such that, for all $k \leq q$,
\[ \left| \left\{ j \in [m] \mid \forall a \in C \setminus \{{\sigma_{1}},\ldots,{\sigma_{k}}\},\ {\sigma_{k}} >_{v^j} a \right\}\right| > \frac{m}{2}.\footnote{We understand that, when $q = 0$, there is no majoritarian alternative.}\]
Then, by invoking Proposition \ref{prop:majority}, Lemma \ref{split optimum} and Remark \ref{remark for majority} repeatedly, we conclude that there exists an optimal ranking $\sigma^{*}$ that assigns alternatives $({\sigma_{1}},\ldots,{\sigma_{q}})$ to the first $q$ positions. Denote by $\bar{\sigma}$ the ranking produced by Algorithm \ref{MyopicTop}. Observe that, by construction, 
\begin{equation}\label{equazione demmerda}
  \ddd(\sigma^{*},V;1,q) = \ddd(\bar{\sigma},V;1,q) \; \textnormal{and} \; \ddd(\sigma^{*},V;q+1,q+K) \geq \ddd(\bar{\sigma},V;q+1,q+K).
\end{equation}
Suppose that $\ddd(\sigma^{*},V) = 0$. As $\beta_2 > 0$, this can happen if and only if $v^{j} = \sigma^*$, for all $j\in [m]$. The while-loop in Algorithm \ref{MyopicTop} ensures that, by construction, $\bar{\sigma} = \sigma^{*}$. Therefore, in the following, we can assume $\ddd(\sigma^{*},V) > 0$. It holds
\begin{equation*}
\begin{aligned}
\frac{\ddd(\bar{\sigma},V)}{\ddd(\sigma^{*},V)} &= \frac{\ddd(\bar{\sigma},V;1,q) + \ddd(\bar{\sigma},V;q+1,q+K) + \ddd(\bar{\sigma},V;q+K+1,n) }{\ddd(\sigma^{*},V)} \\
&\overset{(a)}{\leq} \frac{\ddd(\sigma^{*},V;1,q) + \ddd(\sigma^{*},V;q+1,q+K) + \ddd(\bar{\sigma},V;q+K+1,n) }{\ddd(\sigma^{*},V)} \\
&= \frac{\ddd(\sigma^{*},V;1,q+K) + \ddd(\bar{\sigma},V;q+K+1,n)}{\ddd(\sigma^{*},V)} \\
&= \frac{\ddd(\sigma^{*},V)-\ddd(\sigma^{*},V;q+K+1,n) + \ddd(\bar{\sigma},V;q+K+1,n)}{\ddd(\sigma^{*},V)} \\
&\leq 1 + \frac{|\ddd(\sigma^{*},V;q+K+1,n)|}{\ddd(\sigma^{*},V)} + \frac{ \ddd(\bar{\sigma},V;q+K+1,n)}{\ddd(\sigma^{*},V)} \\
&\overset{(b)}{\leq} 1 + \frac{|\ddd(\sigma^{*},V;q+K+1,n)|}{\ddd(\sigma^{*},V)} + \frac{2 m U\sum_{i = q+K+1}^{n} f_\bb(n-i)}{\ddd(\sigma^{*},V)} \\
&\overset{(c)}{\leq} 1 + \frac{|\ddd(\sigma^{*},V;q+K+1,n)|}{\frac{u m }{2} \phi_{q+1}} + \frac{2 m U\sum_{i = q+K+1}^{n} f_\bb(n-i)}{\frac{u m }{2} \phi_{q+1}} \\
&\overset{(d)}{\leq} 1 + \frac{4 m U\sum_{i = q+K+1}^{n} f_\bb(n-i)}{\frac{u m }{2} \phi_{q+1}}+\frac{2 m U\sum_{i = q+K+1}^{n} f_\bb(n-i)}{\frac{u m }{2} \phi_{q+1}} \\
&= 1 + \frac{12 U \sum_{i = q+K+1}^{n} f_\bb(n-i)}{u \phi_{q+1}} \\
&= 1 + \frac{12 U \sum_{i = q+K+1}^{n} \sum_{j=2}^{n-i+1}\beta_j\binom{n-i}{j-1}}{u \phi_{q+1}}= 1 + \frac{12 U \sum_{j=2}^{n}\beta_j\sum_{i=q+K+1}^{n-j+1}\binom{n-i}{j-1}}{u \phi_{q+1}}\\
&=1+\frac{12 U \sum_{j=2}^{n}\beta_j\sum_{r=j-1}^{n-K-q-1}\binom{r}{j-1}}{u \phi_{q+1}}=1+\frac{12 U \sum_{j=2}^{n}\beta_j\sum_{r=0}^{n-K-q-j}\binom{j-1+r}{j-1}}{u \phi_{q+1}}\\
&=1+\frac{12 U \sum_{j=2}^{n}\beta_j\binom{n-K-q}{n-K-q-j}}{u \phi_{q+1}}= 1+\frac{12 U \sum_{j=2}^{n}\beta_j\binom{n-K-q}{n-K-q-j}}{u \sum_{j=2}^n\beta_j\binom{n-(q+1)-1}{j-2}},\\
&=1 + \frac{12 U\sum_{j=2}^{n-(q+K)} \beta_{j} \binom{n-(q+K)}{j}}{u \sum_{j=0}^{n-(q+2)} \beta_{j+2} \binom{n-(q+2)}{j}}.
\end{aligned}
\end{equation*}
where ($a$) holds by using the relations in (\ref{equazione demmerda}); ($b$) holds by applying Lemma \ref{lemmino per upper boundino}; ($c$) holds by applying Lemma \ref{lemmino per lower boundino}, and ($d$) holds by applying the following argument:
\begin{equation*}
  \begin{aligned}
  |\ddd(\sigma^{*},V;q+K+1,n)| &= \bigg|\sum_{j \in V} \sum_{i=q+K+1}^{n} f_\bb(n-i)\mu(\sigma_{i}) + f_\bb(n-i) \mu(v^j_{i}) \\
  &- 2 f_\bb \left( \left| {\sigma_{i}}^{\downarrow,\sigma} \cap {\sigma_{i}}^{\downarrow,v^j}\right| \right) \mu(\sigma_{i} ) \bigg| \\
  &\overset{(d1)}{\leq} \sum_{j \in V} \sum_{i=q+K+1}^{n} f_\bb(n-i)\mu({\sigma_{i}}) + f_\bb(n-i) \mu(v^j_{i}) \\ 
  &+ 2 f_\bb \left( \left| {\sigma_{i}}^{\downarrow,\sigma} \cap {\sigma_{i}}^{\downarrow,v^j}\right| \right) \mu(\sigma_{i} ) \\
  &\overset{(d2)}{\leq} 4m U \sum_{i=q+K+1}^{n} f_\bb(n-i),
  \end{aligned}
\end{equation*}
where ($d1$) holds by applying twice the triangular inequality, and ($d2$) holds by observing that $f_\bb \left( \left| {\sigma_{i}}^{\downarrow,\sigma} \cap {\sigma_{i}}^{\downarrow,v^j}\right| \right) \leq f_\bb(n-i)$ and $\mu_i \leq U$.
We have established 
\begin{equation*}
\begin{aligned}
\frac{\ddd(\bar{\sigma},V)}{\ddd(\sigma^{*},V)} &\leq 1 + \frac{12 U\sum_{j=2}^{n-(q+K)} \beta_{j} \binom{n-(q+K)}{j}}{u \sum_{j=0}^{n-(q+2)} \beta_{j+2} \binom{n-(q+2)}{j}} \\
&\leq 1 + \frac{12U}{u}\sup_{t \geq K \vee 2} \frac{\sum_{j=0}^{t-K} \beta_{j} \binom{t-K}{j}}{ \sum_{j=0}^{t-2} \beta_{j+2} \binom{t-2}{j}} \\
&\leq 1 + \epsilon,
\end{aligned}
\end{equation*}
for $K \geq g \left(\frac{12 U }{u \epsilon} \right)$. Therefore, by choosing $K = g \left(\frac{12 U }{u \epsilon} \right)$, we have
\[ \ddd(\bar{\sigma},V) \leq (1+\epsilon)\ddd(\sigma^{*},V). \]


Finally, on the one hand, observe that the while loop and the computation of $\{f_\bb(0),\ldots,f_{\bb}(n-1)\}$ can be done in polynomial time. On the other hand, we can minimize $\ddd(\cdot,V;q+1,q+K+1)$ by searching over all possible cases; when there are no majority alternatives, there are $O\left( K! \binom{n}{K} \right) = O \left( n^K\right)$ cases to be checked. Evaluating each case can be done in time $O(m K n)$, and so the overall running time of the algorithm turns out to be $O(n^{K+1}m K)$.
\end{proof}

\subsection{Computations}\label{comps}
Suppose $\beta_j = (\alpha-1)^{j}$ for $\alpha >1$ and $j\in [2,n]$. Pick $K \in \mathbb{N}$. Then, we have
\begin{equation*}
\begin{aligned}
h(t) := \frac{\sum_{j=2}^{t-K} (\alpha-1)^{j} \binom{t-K}{j}}{\sum_{j=0}^{t-2} (\alpha - 1)^{j+2} \binom{t-2}{j}} &= \frac{\alpha^{t-K} -(\alpha-1)(t-K)-1}{(\alpha-1)^2 \alpha^{t-2}} \\
&= \frac{\alpha^{-K} -\alpha^{-t}(\alpha-1)(t-K)-\alpha^{-t}}{(\alpha-1)^2 \alpha^{-2}}
\end{aligned}
\end{equation*}

Now, $\sup_{t \geq K \vee 2} h(t) = \sup_{t \geq K \vee 2}\frac{\alpha^{-K} -\alpha^{-t}(\alpha-1)(t-K)-\alpha^{-t}}{(\alpha-1)^2 \alpha^{-2}} = \frac{\alpha^{-K}}{\left(\frac{\alpha-1}{\alpha} \right)^2}$, as $-\alpha^{-t}(\alpha-1)(t-K)-\alpha^{-t} \leq 0$. Therefore, $\frac{\alpha^{-K}}{\left(\frac{\alpha-1}{\alpha} \right)^2} \leq \epsilon$ if and only if $K \geq \log_{\alpha} \left( \frac{\alpha^2}{(\alpha-1)^2 \epsilon}\right)$.

Suppose $\beta_j = j+1$ and $j\in [2,n]$. Pick $K \in \mathbb{N}$. Then, we have
\begin{equation*}
\begin{aligned}
h(t) := \frac{\sum_{j=2}^{t-K} (j+1) \binom{t-K}{j}}{\sum_{j=0}^{t-2} (j+3) \binom{t-2}{j}} &= \frac{(2^{t-K-1}-2)(t-K) + 2^{t-K}-1}{2^{t-3}(t+4)} \\
&= \frac{(t-K)(2^{-K-1} - 2 \times 2^{-t}) + 2^{-K} - 2^{-t}}{2^{-3}(t+4)} \\
&\overset{(a)}{\leq} \frac{(t-K) 2^{-K-1} + 2^{-K}}{2^{-3}(t+4)} \overset{(b)}{\leq} 4 \times 2^{-K},
\end{aligned}
\end{equation*}

where ($a$) holds because $t \geq K$, and ($b$) holds because $K \geq 1$. In particular, note that $\lim_{t \to \infty} h(t) = 4 \times 2^{-K}$. Therefore, $\sup_{t \geq K \vee 2} h(t) = 4 \times 2^{-K}$, which, in turn, implies that $\sup_{t \geq K \vee 2} h(t) \leq \epsilon$ if and only if $K \geq \log_2(4/\epsilon)$.

Finally, suppose $\beta_j = 1+(-1)^{j}$ for all $j\in [2,n]$. Pick $K \in \mathbb{N}$. Then, we have
\begin{equation*}
\begin{aligned}
h(t) := \frac{\sum_{j=2}^{t-K} (1+(-1)^j) \binom{t-K}{j}}{\sum_{j=0}^{t-2} (1+(-1)^{j+2}) \binom{t-2}{j}} &= \frac{\sum_{j=2}^{t-K} (1+(-1)^j) \binom{t-K}{j}}{\sum_{j=0}^{t-2} (1+(-1)^{j}) \binom{t-2}{j}} = \frac{\sum_{j=2}^{t-K} \binom{t-K}{j} + \sum_{j=2}^{t-K} (-1)^j \binom{t-K}{j}}{\sum_{j=0}^{t-2} \binom{t-2}{j} + \sum_{j=0}^{t-2} (-1)^{j} \binom{t-2}{j}} \\
&= \frac{(2^{t-K}-(t-K)-1) + \delta_{t-K} - (1-(t-K))}{2^{t-2} + \delta_{t-2}} \\
&\leq \frac{2^{t-K}}{2^{t-2}} = 4 \times 2^{-K}.
\end{aligned}
\end{equation*}

In particular, note that $\lim_{t \to \infty} h(t) = 4 \times 2^{-K}$. Therefore, $\sup_{t \geq K \vee 2} h(t) = 4 \times 2^{-K}$, which, in turn, implies that $\sup_{t \geq K \vee 2} h(t) \leq \epsilon$ if and only if $K \geq \log_2(4/\epsilon)$.


\end{document}